\documentclass[10pt,reqno]{article}
\usepackage[margin=1.00in]{geometry}
\usepackage{authblk}
\usepackage{comment}    
\usepackage{tikz}
\usepackage{lmodern}
\usepackage{bm}         
\usepackage{dsfont}     
\usepackage{braket}     
\usepackage{lipsum}     

\usepackage{amsmath,amssymb,amsthm}  
\usepackage{mathtools}               
\usepackage{esvect}                  
\usepackage{makecell}
\numberwithin{equation}{section}
\usepackage[ruled,vlined]{algorithm2e}
\DontPrintSemicolon
\usepackage{xcolor, graphicx}
\usepackage[caption=false]{subfig}
\usepackage{tikz}
\usetikzlibrary{shapes.geometric, arrows.meta, positioning, quantikz2}

\definecolor{darkgreen}{rgb}{0.0,0.72,0.06}
\definecolor{cerisepink}{rgb}{0.93,0.23,0.51}
\definecolor{red-brown}{rgb}{0.65,0.16,0.16}
\definecolor{blue}{rgb}{0,0,1}

\usepackage[colorlinks]{hyperref}
\usepackage[capitalize,noabbrev]{cleveref}
\hypersetup{
    colorlinks=true,
    citecolor=darkgreen,
    linkcolor=blue,
    filecolor=magenta,
    urlcolor=magenta
}

\usepackage[backend=biber,style=numeric-comp,sorting=none,url=false,maxbibnames=99,doi=false]{biblatex}
\theoremstyle{plain}
\newtheorem{thm}{Theorem}[section]
\newtheorem{lem}[thm]{Lemma}
\newtheorem{prop}[thm]{Proposition}
\newtheorem{cor}[thm]{Corollary}

\theoremstyle{definition}
\newtheorem{defn}[thm]{Definition}
\newtheorem{assumption}[thm]{Assumption}

\newtheorem{example}[thm]{Example}
\newtheorem{remark}[thm]{Remark}
\newtheorem{result}[thm]{Result}

\crefname{thm}{Theorem}{Theorems}
\crefname{lem}{Lemma}{Lemmas}
\crefname{prop}{Proposition}{Propositions}
\crefname{cor}{Corollary}{Corollaries}
\crefname{defn}{Definition}{Definitions}
\crefname{assumption}{Assumption}{Assumptions}
\crefname{notation}{Notation}{Notations}
\crefname{remark}{Remark}{Remarks}
\crefname{example}{Example}{Examples}
\crefname{fact}{Fact}{Facts}
\crefname{result}{Result}{Results}
\crefname{ques}{Question}{Questions}

\newcommand{\C}{\mathbb{C}}
\newcommand{\R}{\mathbb{R}}
\newcommand{\Z}{\mathbb{Z}}

\begin{document}

\title{Linear Combination of Hamiltonian Simulation with Commutator Scaling} 


\author[1]{Junaid Aftab\thanks{\href{mailto:junaida@umd.edu}{junaida@umd.edu}}} \author[2,3]{Dong An} \author[1]{Konstantina Trivisa} \affil[1]{Department of Mathematics, University of Maryland, College Park, USA} \affil[2]{Beijing International Center for Mathematical Research, Peking University, Beijing, China} \affil[3]{Joint Center for Quantum Information and Computer Science, University of Maryland, College Park, USA}

\date{\today}

\maketitle

\begin{abstract}
The Linear Combination of Hamiltonian Simulation (LCHS) framework simulates dissipative linear dynamics by representing time evolution as an integral over unitary operators, which is discretized by quadrature and implemented via Hamiltonian simulation. While existing analyses achieve near-optimal scaling in time and precision using norm-based quantities of the dissipative generator, we show that implementing the Hamiltonian simulation steps with Multi-Product Formulas (MPFs) yields commutator-sensitive error and complexity bounds. We demonstrate that the quadrature rule affects not only discretization error but also commutator structure and query complexity. This dependence is quantified through post-quadrature analysis for abstract MPF error profiles and for general time-independent and local Hamiltonians using known commutator-sensitive MPF error estimates. We compare uniform trapezoidal and free-scale sinh--sinh quadrature, showing improved quadrature-cardinality scaling for the latter, and illustrate the framework with applications to fractional diffusion, advection--diffusion, and open quantum systems.
\end{abstract}


\setcounter{tocdepth}{2} \tableofcontents

\section{Introduction}
The quantum simulation problem~\cite{feynman1982,feynman1986quantum} is central to quantum computing, with applications in quantum chemistry, materials science, and condensed matter physics (see, e.g.,~\cite{jordan2012quantum,bauer2020quantum,cao2019quantum,babbush2018low}). An important special case is the Hamiltonian simulation problem, which concerns the time evolution of a finite-dimensional quantum system governed by the Schr\"odinger equation
\begin{equation}\label{eqn:HamSim}
    i \frac{d}{dt} u(t) = H(t) u(t), \quad t \in [0,T],
\end{equation}
where \(u(t) \in \mathbb{C}^N\) and \(H(t) \in \mathbb{C}^{N\times N}\) is Hermitian for each \(t \in [0,T]\). Since \cref{eqn:HamSim} describes a finite-dimensional closed quantum system, its propagator is a unitary operator. We therefore refer to \cref{eqn:HamSim} as describing \emph{unitary dynamics}. In this setting, efficient quantum algorithms have been developed for both time-independent~\cite{lloyd1996quantumsimulation,berry2007efficient,BerryChilds2012,BerryChildsCleveEtAl2014,BerryChildsKothari2015,BerryChildsCleveEtAl2015,LowChuang2017,low2019wellconditioned,Childs2021theorytrotter,aftab2024mpf,mizuta2025commutatorscalinghamiltoniansimulation} and time-dependent~\cite{wiebe2010higher,low2019hamiltoniansimulationinteractionpicture,berry2020time,AnFangLin2021,anFang2022time,mizuta2024MPFtimedep} cases. However, these methods do not directly apply to general linear non-homogeneous differential equations of the form
\begin{equation}\label{non-unitary-ode}
    \frac{d}{dt} u(t) = -A(t)u(t) + b(t),
\end{equation}
where \(A(t) \in \mathbb{C}^{N \times N}\) may be non-anti-Hermitian and \(b(t) \neq 0\). Such equations arise in broader quantum simulation settings, notably in the simulation of finite-dimensional open quantum systems. We therefore refer to \cref{non-unitary-ode} as describing \emph{non-unitary dynamics}.

Several quantum algorithms have been developed for linear differential equations of the form~\cref{non-unitary-ode} (see, e.g.,~\cite{Berry2014,BerryChildsOstranderEtAl2017,childs2020quantum,Krovi2022,fang2023time,berry2024quantum,jin2024quantum,LowSu2026}. Most of the existing approaches are based on the Quantum Linear System Algorithm (QLSA)~\cite{harrow2009quantum,childs2017QLSA,Costa2023QLSA,Dalzell2024}, typically embedding \(A(t)\) into a larger dilated matrix and solving an associated enlarged linear system with additional implementation and complexity overhead. Recently, An et al.~\cite{an2023LCHS,an2023LCHS-Optimal} and Low and Somma~\cite{low2025optimallchs} introduced and analyzed the Linear Combination of Hamiltonian Simulation (LCHS) framework, which expresses the solution of~\cref{non-unitary-ode} as an integral over unitary evolutions. This approach simulates \emph{non-unitary dynamics} by combining Hamiltonian simulation algorithms developed for \emph{unitary dynamics}. The resulting algorithms achieve optimal state-preparation cost and optimal scaling in both the evolution time and the target precision. 

From a broader perspective, LCHS is a special case of the Linear Combination of Unitaries (LCU) framework of Childs and Wiebe~\cite{Childs2012lcu}, as it represents non-unitary dynamics through linear combinations of unitary simulation primitives.  A related LCU-based construction is the Multi-Product Formula (MPF)~\cite{low2019wellconditioned,aftab2024mpf,mizuta2024MPFtimedep,mizuta2025commutatorscalinghamiltoniansimulation}, combining Lie--Trotter--Suzuki product formulas~\cite{lloyd1996quantumsimulation} to obtain high-order Hamiltonian simulation algorithms.  Specifically, MPFs construct these high-order approximations by linearly combining sequences of low-order Trotter steps with carefully chosen weights.  This systematic cancellation of algorithmic errors circumvents the deep exponential nesting required by standard high-order product formulas, achieving near-optimal time and precision dependence while maintaining favorable commutator scaling from product formulas~\cite{Childs2021theorytrotter}.  However, for general non-unitary dynamics, existing approaches, including LCHS and other advanced algorithms, cannot currently achieve both of these desirable properties simultaneously.  This limitation naturally motivates the central question of our work: can we design an efficient quantum algorithm for general non-unitary systems with both near-optimal scaling (in time and precision) and commutator scaling?

In this work, we answer this question affirmatively in a quadrature-dependent sense. We develop an LCHS--MPF algorithm for non-unitary dynamics that applies to both time-independent and time-dependent settings, and illustrate its use in applications arising from linear differential equations and open quantum systems. In this algorithm, LCHS reduces non-unitary simulation to a family of Hamiltonian simulations, while MPFs implement these simulations with commutator-sensitive complexity. Our key observation is that the outer LCHS quadrature does not merely control discretization error. Rather, the quadrature nodes also determine the Hamiltonians passed to the inner simulation routine, and therefore affect the commutator structure, step-size restrictions, LCU normalization, weighted error accumulation, and final query complexity. 
We track this dependence explicitly and obtain a quantum algorithm for non-unitary simulation whose complexity reflects both the quadrature structure of LCHS and the commutator scaling of MPFs.

\subsection{Summary of Results}
We briefly summarize our approach and main results in the time-independent homogeneous case. The non-homogeneous and time-dependent cases are discussed in \cref{sec:extensions}. Let \(A=L+iH\), where \(L\succeq0\) and \(L,H\) are Hermitian. The LCHS representation expresses the dissipative evolution \(e^{-AT}\) as an integral over unitary evolutions
\begin{equation}\label{lchs-intro}
e^{-AT}\approx \frac{1}{\sqrt{2\pi}}\int_{\mathbb{R}} \hat f(k)e^{-iG_k T}\,dk.
\end{equation}
Here \(k\in\mathbb{R}\) is the frequency variable, $\hat f$ is the kernel function and \(G_k:=H+kL\) is the Hamiltonian simulated at frequency \(k\).  After applying a quadrature rule \(Q=\{(k_i,w_i):i\in\mathcal I_Q\}\) with nodes \(k_i\) and weights \(w_i\), the integral is replaced by a finite sum of the form
\begin{equation}
W_Q^{\operatorname{ideal}} = \sum_{i\in\mathcal I_Q}v_i e^{-iG_{k_i}T}, \qquad v_i = \frac{w_i\hat f(k_i)}{\sqrt{2\pi}}.
\end{equation}
Thus, the quadrature rule determines not only the discretization error but also the Hamiltonians that must be simulated. We replace each \(e^{-iG_{k_i}T}\) with an MPF approximation. The post-quadrature error and complexity bounds are then determined by the quadrature radius \(R_Q := \max_{i\in\mathcal I_Q}|k_i|\), which controls the range of Hamiltonians \(H+kL\) appearing in the simulation, and by the LCU normalization factor \(\alpha_Q := \sum_{i\in\mathcal I_Q}|v_i|\), which controls the success probability and amplitude-amplification overhead of the outer LCU procedure. Since LCHS is combined with Hamiltonian simulation via  MPF, the cost also depends on a commutator scale \(\mu_Q\), which quantifies the MPF simulation difficulty of the quadrature-indexed Hamiltonian family \(\{G_{k_i}\}_{i\in \mathcal I_Q}\). Our first result is a post-quadrature complexity estimate that makes this quadrature-dependent commutator scaling explicit.
The following result is stated in a simplified informal form. More precise complexity estimates are given in \cref{abstract-fixed-order-arbitrary-quadrature} and \cref{abstract-optimized-order-arbitrary-quadrature}. 

\begin{result}[Informal Version of \cref{abstract-optimized-order-arbitrary-quadrature}]  \label{informal-result-1} Let $\epsilon > 0$. Suppose the kernel, quadrature rule, and MPF order are chosen so that the final implemented operator \(W_Q^{\mathrm{MPF}}\) satisfies $\left\|W_Q^{\mathrm{MPF}}-e^{-AT}\right\|\leq \epsilon$. The quadrature-indexed Hamiltonian family \(\{G_{k_i}\}_{i\in \mathcal I_Q}\) admits a commutator-sensitive MPF bound with scale \(\mu_Q\). Choosing MPF order $m=\mathcal O \left(\log(T/\epsilon)\right)$ gives block-encoding complexity
\begin{equation}
\widetilde{\mathcal O}\left( \mu_Q T F \left(\log(T/\epsilon)\right) \log^2(T/\epsilon) \right)
\end{equation}
and normalized-state preparation complexity
\begin{equation}
\widetilde{\mathcal{O}} \left( \alpha_Q \frac{\|u(0)\|}{\|u(T)\|} \mu_Q T F \left(\log(T/\epsilon)\right) \log^2(T/\epsilon) \right).
\end{equation}
Here \(F(m)\) records the order-dependence in the MPF estimate. 
\end{result}

The role of \(\alpha_Q\) in \cref{informal-result-1} is important. In previous LCHS algorithms, the kernel and quadrature rule are typically fixed, so the corresponding LCU normalization is either bounded explicitly or absorbed into constants. In contrast, we keep \(\alpha_Q\) explicit because the kernel and quadrature choices also affect the commutator scale. Consequently, its effect must be evaluated jointly with \(\alpha_Q\) and \(\mu_Q\). We remark that \(\alpha_Q\) and \(\mu_Q\) may also scale as \(\mathcal{O}(\operatorname{polylog}(1/\epsilon))\) for certain quadrature rules. We will state these scalings explicitly in the later discussions.

In \cref{informal-result-1}, \(F(m)\) and \(\mu_Q\) are auxiliary functions that upper bound \(\mu_{m,Q}\), which is defined in \cref{abstract-mu-mQ-def}, by \(\mu_{m,Q} \leq F(m) \mu_Q\). Here, we only note that \(\mu_{m,Q}\) depends on the quantities
\begin{equation} 
\Lambda_{m,Q} := \sum_{i\in\mathcal I_Q}|v_i|\Phi_m(k_i,N) \qquad \Phi_{m,Q}^{\ast} := \max_{i\in\mathcal I_Q}\Phi_m(k_i,N),
\end{equation} 
where \(m\) denotes the MPF order and \(\Phi_m(k,N)\) is an abstract MPF error profile for simulating \(G_k\).  Hence, \cref{informal-result-1} is abstract because it is phrased in terms of a general MPF error profile. In \cref{spec-aftab,spec-mizuta}, we specialize this result to two concrete settings, which we summarize below:

\begin{enumerate} 

\item In \cite{aftab2024mpf}, Aftab et al. provided the first complete error analysis for the time-independent MPF. Informally, their analysis associates to each pair \((H,L)\) a commutator scale, denoted here by \(\chi_J(H,L)\), which measures how rapidly the relevant nested commutators can grow. For a quadrature rule \(Q\), this scale must be compared with the quadrature radius \(R_Q\). Therefore, we choose \(\rho_Q\) satisfying 
\begin{equation}
0<\rho_Q<\frac{\chi_J(H,L)}{(J+1)^{1/J}\max\{1,R_Q\}}.
\end{equation}
In this case, we have 
\begin{align} \Phi_{m,\nu}(H,L) :={}& \sum_{\substack{j\in 2\mathbb Z_+\\ j\ge 2m}} \sum_{l=1}^{m} \frac{\rho_Q^{\,j+l-(2m+1)}}{l!} \sum_{\substack{j_1,\dots,j_l\in 2\mathbb Z_+\\ j_1+\cdots+j_l=j}} \sum_{\substack{\ell_1,\dots,\ell_l\ge0\\ 0\le \ell_\kappa\le j_\kappa+1\\ \ell_1+\cdots+\ell_l=\nu}} \prod_{\kappa=1}^{l} 
\mathcal{C}_{j_\kappa+1,\ell_\kappa}(H,L)
\end{align} 
for \(\nu\ge0\). Here, \(\mathcal{C}_{j_\kappa+1,\ell_\kappa}(H,L)\) is determined by nested commutators of length \(j_\kappa+1\) in \(H\) and \(L\) that contain exactly \(\ell_\kappa\) occurrences of \(L\). If the discrete moments are $M_\nu^Q := \sum_{i\in\mathcal I_Q}|v_i|\,|k_i|^\nu$ then $\Lambda_{m,Q} := \sum_{\nu\ge0}\Phi_{m,\nu}(H,L)M_\nu^Q$. In this case, $F(m) \equiv 1$.

\item In \cite{mizuta2025commutatorscalinghamiltoniansimulation}, Mizuta proved an improved MPF estimate of Aftab et al. for local and extensive Hamiltonians. We use this estimate when it applies to \(\{G_{k_i}\}_{i\in\mathcal I_Q}\). For \(p_0\in\mathbb N\), define \begin{equation} \mu_{m,p_0}(G_{k_i}) := \sup_{\substack{j,l\in\mathbb N\\ j\ge 2m,\ 1\le l\le \lfloor j/2\rfloor}} \left( \sum_{\substack{2\le j_1,\ldots,j_l\le p_0-1\\ j_1+\cdots+j_l=j}} \prod_{\kappa=1}^{l} \alpha_{\operatorname{comm},j_\kappa+1}(G_{k_i}) \right)^{\frac{1}{j+l}}, \end{equation} where \(\alpha_{\operatorname{comm},j}(G_{k_i})\) denotes the \(j\)-th order nested-commutator.  The locality-based estimate considers the quantities $\Lambda_{m,Q}(p_0) := \sum_{i\in\mathcal I_Q} |v_i|\, \mu_{m,p_0}(G_{k_i})^{2m+1}$ and  $\Phi_{m,Q}^{\ast}(p_0) := \max_{i\in\mathcal I_Q} \mu_{m,p_0}(G_{k_i})^{2m+1}$.  In this case, $F(m) = \mathcal{O}(\log^2 m)$. \end{enumerate}

\cref{tab:comparison} compares representative algorithms in the time-independent homogeneous setting. Compared with optimal LCHS, our algorithm replaces the norm-based Hamiltonian simulation cost with a commutator-sensitive MPF cost after quadrature. Thus, the method is useful when the family \(H+kL\) has favorable commutator structure. The complexity also depends on quadrature-induced quantities such as \(R_Q\) and \(\alpha_Q\). Hence, the method is not a black-box improvement over LCHS. Rather, it is a structure-sensitive refinement. The tradeoff is that the dependence on \(T\) and \(1/\epsilon\) is not linear and logarithmic, respectively, as in optimal LCHS. Therefore, our result should be viewed as near-optimal in regimes where the commutator scaling also enters the complexity analysis favorably. 

Our second result concerns the choice of quadrature rule. Because the quadrature rule determines the nodes \(k_i\) and weights \(v_i\), it affects both the discretization error and the MPF cost through the quantities \(R_Q\), \(\alpha_Q\), the discrete moments \(M_\nu^Q\), and the resulting weighted MPF profiles. We therefore compare quadrature rules in terms of these quantities, together with the quadrature error \(E_{\operatorname{quad}}(Q)\) and the node count \(|\mathcal I_Q|\).

\begin{result}[Informal Version of \cref{trap-budget} and \cref{sinh-budget}]
For the uniform trapezoidal rule, if the mesh and tail-error budgets are balanced as \(\epsilon_{\operatorname{mesh}} \asymp \epsilon_{\operatorname{tail}} \asymp \epsilon\), then the rule can be chosen so that \(E_{\operatorname{quad}}\!\left(Q^{\operatorname{trap}}\right)=\mathcal O(\epsilon)\) with
\begin{equation}
\left|\mathcal I_{Q^{\operatorname{trap}}}\right|
= \mathcal O\!\left(\log^{2}\!\left(\frac1{\epsilon}\right)\right),
\qquad
R_{Q^{\operatorname{trap}}}
= \mathcal O\!\left(\log\!\left(\frac1{\epsilon}\right)\right).
\end{equation}
For the sinh--sinh rule, the rule can be chosen so that \(E_{\operatorname{quad}}\!\left(Q^{\operatorname{sinh}}\right)=\mathcal O(\epsilon)\) with
\begin{equation}
\left|\mathcal I_{Q^{\operatorname{sinh}}}\right|
= \mathcal O\!\left(
\log\!\left(\frac1{\epsilon}\right)
\log\log\!\left(\frac1{\epsilon}\right)
\right)
\end{equation}
Both rules have the same leading quadrature-radius scaling and the same leading discrete-moment scaling. In particular, for the four-parameter \((a,b,c,d)\) kernel family used in
\cref{low-somma-kernel}, the LCU normalization satisfies
\begin{equation}
\alpha_Q
=
\begin{cases}
\mathcal O(1), & a>1,\\[1mm]
\mathcal O\!\left(\log\log(1/\epsilon)\right), & a=1,
\end{cases}
\end{equation}
provided that \(a,b,d=\mathcal O(1)\) and $c= \Omega  (\sqrt{\log(1/\epsilon)} )$. 
\end{result}

\begin{table}[t]
{
    \renewcommand{\arraystretch}{1.50}
    \centering
    \begin{tabular}{c|c|c}\hline\hline
        \textbf{Method} 
        & \textbf{Query Complexity} 
        & \textbf{Queries to $\ket{u(0)}$} 
        \\\hline 

        \makecell{Spectral method~\cite{childs2020quantum}} 
        & 
        $\widetilde{\mathcal{O}}\left( 
        \frac{\|u(0)\|}{\|u(T)\|} 
        \kappa_V \alpha_A T\,
        \operatorname{poly}\left(\log\left(\frac{1}{\epsilon}\right)\right)
        \right)$  
        & 
        $\widetilde{\mathcal{O}}\left( 
        \frac{\|u(0)\|}{\|u(T)\|} 
        \kappa_V \alpha_A T\,
        \operatorname{poly}\left(\log\left(\frac{1}{\epsilon}\right)\right)
        \right)$ 
        \\\hline

        \makecell{Truncated Dyson~\cite{berry2024quantum}}  
        & 
        $\widetilde{\mathcal{O}}\left( 
        \frac{\|u(0)\|}{\|u(T)\|} 
        \alpha_A T 
        \left(\log\left(\frac{1}{\epsilon}\right)\right)^2
        \right)$ 
        & 
        $\mathcal{O}\left( 
        \frac{\|u(0)\|}{\|u(T)\|} 
        \alpha_A T 
        \log\left(\frac{1}{\epsilon}\right) 
        \right)$ 
        \\\hline


        \makecell{Original LCHS~\cite{an2023LCHS}} 
        & 
        $\widetilde{\mathcal{O}}\left( 
        \left(\frac{\|u(0)\|}{\|u(T)\|}\right)^2 
        \alpha_A T /\epsilon 
        \right)$ 
        & 
        $\mathcal{O}\left( 
        \frac{\|u(0)\|}{\|u(T)\|} 
        \right)$ 
        \\\hline

        \makecell{Improved LCHS~\cite{an2023LCHS-Optimal}}
        & 
        $\widetilde{\mathcal{O}}\left( 
        \frac{\|u(0)\|}{\|u(T)\|} 
        \alpha_A T 
        \left(\log\left(\frac{1}{\epsilon}\right)\right)^{1+1/\beta} 
        \right)$  
        & 
        $\mathcal{O}\left( 
        \frac{\|u(0)\|}{\|u(T)\|} 
        \right)$  
        \\\hline

        \makecell{Optimal LCHS~\cite{low2025optimallchs}}
        & 
        $\mathcal{O}\left( 
        \frac{\|u(0)\|}{\|u(T)\|} 
        \alpha_A T 
        \log\left(\frac{1}{\epsilon}\right) 
        \right)$  
        & 
        $\mathcal{O}\left( 
        \frac{\|u(0)\|}{\|u(T)\|} 
        \right)$  
        \\\hline



        \makecell{LCHS--MPF} 
        & 
        $\widetilde{\mathcal{O}}\left( 
        \alpha_Q\frac{\|u(0)\|}{\|u(T)\|} 
        \mu_Q T 
        ~\text{poly}\log\left(\frac{1}{\epsilon}\right)
        \right)$  
        & 
        $\mathcal{O}\left( 
        \alpha_Q \frac{ \|u(0)\|}{\|u(T)\|} 
        \right)$  
        \\\hline\hline
    \end{tabular}
}
    \caption{Comparison of representative quantum algorithms for time-independent homogeneous linear differential equations. Here, \(\alpha_A\) denotes a norm or block-encoding normalization of \(A\), and \(\mu_Q\) is the quadrature-dependent commutator scale.
    }
    \label{tab:comparison}
\end{table}

Thus, the sinh--sinh quadrature rule improves the quadrature-cardinality scaling due to its double-exponential decay rate. This reduction in the number of quadrature nodes directly improves the LCU implementation cost of the algorithm. 

We further demonstrate the utility of the LCHS-MPF framework through three concrete classes of applications.

\begin{enumerate}
    \item Fractional diffusion equations with imaginary potentials, including fractional Bloch--Torrey-type dynamics, model anomalous diffusion in complex media and arise in transport and magnetic-resonance-type applications~\cite{Torrey1956,Magin2008,Yu2013,BuenoOrovioBurrage2017,Moutal2020}. After discretization, $A=L+iH$, where $L$ comes from the fractional diffusion operator and $H$ from the potential, and the LCHS--MPF analysis for $G_k=H+kL$ gives
    \begin{equation}
        \mu_Q=\mathcal{O} (A_V^{(N)} (1+R_QN^{\theta_s/d} ) ).
    \end{equation}
    Here $N$ is the number of grid points, $d$ is the spatial dimension, $0<s<1$ is the fractional order, $\theta_s=\max\{0,2s-1\}$, $R_Q$ is the quadrature radius, and $A_V^{(N)}$ measures the relevant discrete Fourier regularity of the potential. Hence, the complexity reflects the diffusion--potential commutator structure rather than only a worst-case norm of $A$.

    \item Advection--diffusion equations model heat transfer, mass transport, pollutant dispersion, and related transport phenomena~\cite{ParhiziEtAlCDRS,AgudAlbesaEtAlCDR,VillotaCadenaEtAlCDRSolv,an2026fast}. In the discretized system, diffusion gives the dissipative part and advection gives the Hamiltonian part, and for $G_k=H+kL$ the commutator scale is
    \begin{equation}
        \mu_Q=\mathcal{O} (|b|N^{1/d}+aR_QN^{2/d} ).
    \end{equation}
    Here $a>0$ is the diffusion coefficient, $b$ is the advection strength, $N$ is the number of grid points and $d$ is the dimension. This separates the effects of diffusion, advection, discretization, and quadrature, giving a sharper estimate than standard norm-based LCHS bounds.

    \item No-jump evolution in dissipative transverse-field Ising systems with local spontaneous emission arises in quantum-trajectory descriptions of open many-body systems~\cite{DalibardCastinMolmer1992,PlenioKnight1998,LeeChan2013,RobertsClerk2023}. The effective non-Hermitian Hamiltonian yields $\dot{\psi}=-(L+iH)\psi$, the quadrature Hamiltonians $G_k=H+kL$ remain $2$-local, and in the large-system regime the leading commutator scale is
    \begin{equation}
\mu_Q=\mathcal{O} \left( \alpha_Q^{3/2}|V|^{1/2}\left(|h|+D|J|+\frac{R_Q\gamma}{2}\right)^{3/2}\right).
    \end{equation}
    Here $|V|$ is the number of spins, $h$ is the transverse-field strength, $J$ is the Ising coupling, $D$ is the maximum graph degree and  $\gamma$ is the local emission rate. Thus, the complexity is controlled by Pauli commutator structure rather than only by a generic norm or block-encoding normalization, exposing local many-body structure via commutator scaling.
\end{enumerate}

\subsection{Related Works}
Several recent works study when non-unitary quantum simulation can inherit structural advantages from unitary Hamiltonian simulation. Wang et al.~\cite{WangZhouWangZhengZhangLi2026Lindbladian} recently derived commutator-based Trotter error bounds for Lindbladian simulation. Their work is close in motivation to ours. Both analyses ask whether the favorable commutator scaling of Hamiltonian product formulas persists in dissipative dynamics. The technical settings, however, are distinct. Their bounds apply directly to product-formula approximations of Lindblad generators and use Richardson extrapolation for observable estimation, whereas our approach first uses the LCHS representation to reduce dissipative linear evolution to a quadrature-indexed family of Hamiltonian simulations. Thus, the commutators appearing in our analysis are those of the Hamiltonians generated by the LCHS quadrature, and the main issue is to track how this quadrature dependence enters the MPF error and complexity estimates.

A second related direction concerns quantum algorithms for non-unitary dynamics based on matrix-function and eigenvalue-transformation representations. An et al.~\cite{AnChildsLinYing2024LaplaceLCHS} extended the LCHS approach from matrix exponentials to broader matrix eigenvalue transformations represented through Laplace transforms. Low and Su~\cite{LowSu2024QEP} introduced quantum eigenvalue processing, an eigenvalue transformation framework for non-normal matrices that addresses related non-Hermitian phenomena from a different algorithmic perspective. These works complement the present paper by broadening the class of matrix functions that can be implemented on quantum computers. By contrast, the focus of the present work is not to enlarge the class of admissible matrix functions, but to analyze how the post-quadrature LCHS representation interacts with a commutator-sensitive Hamiltonian simulation primitive.

Contour-integral methods provide another point of comparison. Takahira et al.~\cite{TakahiraOhashiSogabeUsuda2021Contour} used Cauchy's integral formula and block encoding to implement matrix functions as linear combinations of shifted matrix inverses. More recently, Jiang and An~\cite{JiangAn2026ContourQET} analyzed the complexity of contour-integral-based quantum eigenvalue transformations, with applications to Hamiltonian simulation, matrix polynomials, and linear differential equations. Wang et al.~\cite{WangLiuXueZhuangDouChenGuo2025CBMD} developed a contour-based matrix-decomposition framework for non-unitary dynamics using a finite decomposition derived from Cauchy's residue theorem. These approaches are related to LCHS through analytic matrix-function representations, However, their elementary operations are typically shifted resolvents or Hermitian transformations, rather than Hamiltonian evolutions of the form \(e^{-i(H+kL)t}\) whose commutator structure is tracked in the present work.

Other transform-based approaches have also been developed in the literature. \emph{Schr\"odingerization}~\cite{jin2023quantum,jin2024quantum,JinLiuMaYu2025Schrodingerization,jin2025quantumalgorithmsstochasticdifferential} converts non-unitary linear dynamics into unitary dynamics in one higher dimension, achieving optimal or near-optimal matrix-query dependence after suitable smooth initialization. Jin et al.~\cite{JinMaZuazua2026Transmutation} proposed a transmutation-based method for dissipative diffusion generated by positive semi-definite operators, expressing the diffusion semigroup via the Kannai transform as a Gaussian-weighted superposition of unitary wave propagators. A recent Poisson-summation framework by Wang et al.~\cite{WangZhuangDouChenGuo2026PSF} further emphasizes the role of spectral aliasing and discretization in non-unitary matrix transformations. These works illustrate that non-unitary simulation can often be reduced to superpositions of unitary or resolvent-type primitives.

\subsection{Discussion and Open Questions}
Our main result shows that the query complexity is polylogarithmic in time and inverse precision, and at the same time achieves commutator scaling. This matches our previous analysis of commutator scaling for MPF. It remains an open question whether one can obtain optimal scaling in both time and inverse precision while preserving commutator scaling.

Another natural question is whether the quadrature dependence can be further optimized beyond the quadrature rules analyzed here. Although the present analysis accommodates fairly general quadrature rules, the resulting complexity bounds still depend on quantities such as \(\Lambda_{m,Q}\), \(R_Q\), \(\alpha_Q\), and the discrete moments \(M^Q_\nu\). It would be interesting to determine whether there exist quadrature rules that simultaneously improve discretization accuracy, LCU normalization, moment growth, and the post-quadrature MPF cost. A related question is how to optimize the LCHS kernel profile jointly with the quadrature rule. The kernel parameters influence the approximation error, truncation error, quadrature radius, LCU normalization, and the discrete moment bounds entering the MPF error profiles. A more systematic kernel--quadrature optimization may therefore improve the overall complexity of the algorithm.

The LCU implementation introduces an overhead governed by the normalization of the linear combination and by the norm of the desired output state. Although amplitude amplification can boost the success probability, the resulting state-preparation cost still contains normalization-dependent factors. It would be interesting to determine whether alternative block-encoding constructions, quantum signal processing techniques, or dissipative embeddings can reduce this overhead while preserving the commutator-sensitive advantages of the LCHS--MPF framework.

Several extensions of the time-dependent framework remain open. The analysis in \cref{sec:time-dependent-extension} assumes sufficient regularity of the time-dependent generator on each subinterval. It would be interesting to determine whether similar complexity guarantees can be obtained under weaker assumptions, such as piecewise-smooth, discontinuous, or rapidly varying dissipative generators within the LCHS-compatible setting.

Finally, it would be useful to better understand the practical performance of MPF-based dissipative simulation. The present work focuses primarily on asymptotic complexity bounds, but concrete implementations may exhibit additional structure that is not visible at the worst-case level. Numerical studies could clarify which parameter regimes are most favorable for near-term or fault-tolerant quantum architectures.

\subsection{Organization}
The remainder of the paper is organized as follows.
\cref{sec:prelims} reviews the notation and algorithmic preliminaries. \cref{sec:alg-err-analysis} analyzes the pre-quadrature approximation, truncation, and inner-simulation errors for the time-independent homogeneous case. \cref{complexity-analysis} develops the post-quadrature LCHS--MPF error and complexity analysis, including the abstract theorem, commutator-scaling specializations, and comparison of quadrature rules.  \cref{applications} discusses various applications.

\subsection*{Acknowledgments} 
Junaid Aftab acknowledges the support by the National Science Foundation under the grant DMS-2231533.
Dong An acknowledges funding from Quantum Science and Technology - National Science and Technology Major Project via Project 2024ZD0301900, and the support by The Fundamental Research Funds for the Central Universities, Peking University.
Konstantina Trivisa acknowledges support from the National Science Foundation under grants DMS-2008568 and DMS-2231533.

\section{Preliminaries}\label{sec:prelims}
This section collects the preliminary details used throughout the paper. \cref{sec:notation} introduces the notation, \cref{sec:product} reviews product formulas, \cref{sec:lcu} recalls the Linear Combination of Unitaries (LCU) framework, \cref{sec:mpf} describes Multi-Product Formulas (MPFs), and \cref{sec:lchs} introduces the Linear Combination of Hamiltonian Simulation (LCHS) framework.

\subsection{Notation}\label{sec:notation}
We briefly discuss the main notation used throughout this work.

\subsubsection{Standard Notation} 
Let $\mathbb{R}$, $\mathbb{C}$, and $\mathbb{N}$ denote the sets of real, complex and natural numbers, respectively. We write \(\mathbb{Z}\) for the integers and \(\mathbb{Z}_{+}\) for the positive integers. For \(M \in \mathbb{N}\), let \(\Z^d_M := \{0,\ldots,M-1\}^d\) be the $d$-fold Cartesian product of the cyclic group, $\Z_M$. All logarithms are natural logarithms unless otherwise stated. If $f, g : \mathbb{N} \to \mathbb{R}^+$ are non-negative functions, we use the following standard notation from complexity theory:
\begin{enumerate}
\item $f(n) = \mathcal{O}(g(n))$ if and only if there exists a constant $C > 0$ and an integer $M \in \mathbb{N}$ such that $f(n) \leq C g(n)$ for all $n \geq M$. The notation $\widetilde{\mathcal O}$ hides polylogarithmic factors.
\item $f(n) = \Omega(g(n))$ if and only if there exists a constant $C > 0$ and an integer $M \in \mathbb{N}$ such that $f(n) \geq C g(n)$ for all $n \geq M$.
\end{enumerate}
The symbol \(\overleftarrow{\prod_{\gamma}}\) denotes a product in which the elements are arranged with increasing indices from right to left. Similarly, the notation \(\overrightarrow{\prod_{\gamma}}\) is defined to denote a product with elements ordered with increasing indices from left to right. Specifically, we have
\begin{equation}
\overset{\longleftarrow}{\prod_{i=1,\cdots,k}} A_i = A_k \cdots A_1, \quad \overset{\longrightarrow}{\prod_{i=1,\cdots,k}} A_i = A_1 \cdots A_k.
\end{equation}
$\| f \|_{L^1[a,b]}$ denotes the $L^1$-norm, defined as $\|f\|_{L^1[a,b]} = \int_a^b |f(s)| \, ds$, on $[a,b]$. We use the convention $\hat{f}(\xi) = \frac{1}{\sqrt{2\pi}} \int_{\mathbb{R}} f(x) e^{-i \xi x} dx$ and $f(x) = \frac{1}{\sqrt{2\pi}} \int_{\mathbb{R}} \hat{f}(\xi) e^{i \xi x} d\xi$ for the Fourier transform and the inverse Fourier transform, respectively.

\subsubsection{Linear Algebra Notation}
All vector spaces considered are finite-dimensional and defined over \(\mathbb{C}\). Vectors are denoted either by lowercase Roman letters  with a \(\vec{\cdot}\) symbol on top or by ket notation, depending on context. Linear operators—represented as matrices—are denoted by uppercase Roman letters. The linear algebra notation used throughout this paper is summarized below:
\begin{enumerate}
\item \(A^\dagger\) denotes the conjugate transpose of \(A\) and \(I\) denotes the identity matrix of appropriate dimension.
\item $\| \vec a \|_1$ denotes the $1$-norm of a vector~$\vec a$, defined as $\| \vec a \|_1 = \sum_{i=1}^n |a_i|$.
\item $\| A \|$ denotes the spectral norm of a matrix~$A$, defined as its largest singular value.
\item $\| A \|_{L^1[a,b]}$ denotes the $L^1$-norm of a time dependent matrix $A(t)$ on $[a,b]$, defined as $\int_a^b \|A(s)\| \, ds$.
\item The notation $A \succeq 0$ denotes that $A$ is a positive semi-definite matrix.
\item For two matrices $A$ and $B$, the commutator is defined by $[A,B] = AB - BA$. More generally, for matrices $A_1, A_2, \ldots, A_m$, we define the nested commutator
\begin{equation}
[A_1, A_2, \ldots, A_m] := [A_1, [A_2, \ldots, [A_{m-1}, A_m] \cdots ]].
\end{equation}
In the special case where $A_1 = \cdots = A_{m-1} = A$ and $A_m = B$, this nested commutator is denoted by $\operatorname{ad}_A^{\,m-1}(B)$.
\end{enumerate}
The formal solution to the matrix-valued linear differential equation $\frac{dU(t)}{dt} = -i H(t) U(t)$ for $T_0 \leq t \leq T$ is written as $U(T_0,T) = \exp_{\mathcal{T}} ( -i \int_{T_0}^T H(s) ds )$,  where $\exp_{\mathcal{T}}$ denotes the time-ordered exponential operator. We also denote the time-ordered exponential operator as $\mathcal{T}e$. The reader is referred to~\cite{dollard1984product} for more details. In what follows, we set $T_0 = 0$ and write $U(T) := U(0,T)$ when convenient.

\subsection{Product Formulas}\label{sec:product}
Consider a time-dependent Hamiltonian \(H(t)\) for \(t \in [0,T]\), decomposed as \(H(t)=\sum_{\gamma=1}^{\Gamma} H_{\gamma}(t)\). For each \(\gamma\), let \(U_{\gamma}(T)\) denote the time-ordered exponential generated by \(H_{\gamma}(t)\) on \([0,T]\). Product-formula methods assume that each \(U_{\gamma}(T)\) can be implemented efficiently. Under this assumption, a generic product formula approximates the exact time-evolution operator \(U(T)\) as
\begin{equation}\label{generic-prod}
U_p(T) = \prod_{\xi=1}^{\Xi} \prod_{\gamma=1}^{\Gamma} U_{\pi_\xi(\gamma)}(\beta_{\xi \gamma} T,(\beta_{\xi \gamma} + \alpha_{\xi \gamma})T),
\end{equation}
where $\pi_\xi$ is a permutation in $S_{\Gamma}$ and $\alpha_{\xi,\gamma},\beta_{\xi,\gamma}\in[0,1]$ such that $\beta_{\xi \gamma} + \alpha_{\xi \gamma} \in [0,1]$. We say that $U_p$ is a $p$th-order product formula if its single-step approximation error satisfies
\begin{equation}
\|U_p(T)-U(T)\| = \mathcal{O}(T^{p+1}).
\end{equation}
In practice, the interval \([0,T]\) is partitioned into \(r\) segments, and the evolution on each segment is approximated by a product formula. The parameter \(r\) is referred to as the Trotter number. The resulting approximation is \((U_p(T/r))^r\).

\begin{example}
Product formulas can be constructed using the Suzuki recursion~\cite{suzuki1985decomposition,suzuki1991general}, which yields, in particular, the following first-order and second-order product formulas
\begin{equation}
U_1(0,T) = \overset{\longleftarrow}{\prod_{\gamma=1,\cdots,\Gamma}} U_\gamma(0,T), \quad U_2(0,T) = \overset{\longrightarrow}{\prod_{\gamma=1,\cdots,\Gamma}} U_\gamma(T/2,T) \overset{\longleftarrow}{\prod_{\gamma=1,\cdots,\Gamma}} U_\gamma(0,T/2).
\end{equation}
We do not discuss higher-order product formulas in any detail, as these are not used in this work.
\end{example}

When \(H(t) \equiv H\) is time-independent, the error scaling of product formulas is well understood. In particular, \cite{Childs2021theorytrotter} showed that, for a general \(p\)th-order product formula, ensuring the global error bound \(\lVert U_p(T/r)^r - U(T) \rVert \le \epsilon\) requires choosing \(r\) such that
\begin{equation}\label{product-error}
r = \mathcal{O} \left( \frac{\alpha_{\operatorname{comm},p+1}(H)^{1/p} T^{1+1/p}}{\epsilon^{1/p}} \right), \quad \alpha_{\operatorname{comm},p+1}(H) = \sum_{j_1,\dots,j_{p+1}=1}^{\Gamma} \|[H_{j_1},\dots,H_{j_{p+1}}]\|,
\end{equation}
where $\epsilon$ is the target precision. The dependence on \(\alpha_{\operatorname{comm},p+1}(H)\) is commonly referred to as \emph{commutator scaling}. This property enables product-formula Hamiltonian simulation algorithms to achieve favorable system-size scaling for physically local Hamiltonians (see, e.g.,~\cite{babbush2015chemical,wecker2015solving,childs2018toward}). By contrast, the error scaling of product formulas for time-dependent Hamiltonians is less completely understood. Early work of Wiebe et al.~\cite{wiebe2010higher} derived bounds for smoothly varying Hamiltonians based on \(1\)-norm scaling. Subsequent studies~\cite{berry2020time,an2021time,ikeda2023minimum} analyzed low-order time-dependent product formulas, some of which exhibit commutator scaling. A general treatment was later given by Childs et al.~\cite{childs2019optimalproduct}, who established commutator scaling for smoothly varying Hamiltonians with finite-range interactions. More recently, Mizuta et al.~\cite{mizuta2024MPFtimedep} analyzed generic smoothly varying time-dependent product formulas and proved commutator-scaling bounds. In this work, we use only their results on time-dependent multi-product formulas, which are recalled in \cref{sec:mpf}.

\begin{remark}\label{prod-2-conv}
Throughout the remainder of the paper, we restrict attention to second-order product formulas, i.e., \(p=2\).\footnote{Consequently, the symbol \(p\) will be used freely for other purposes.}
\end{remark}

\subsection{Linear Combination of Unitaries}\label{sec:lcu}
The Linear Combination of Unitaries (LCU) algorithm~\cite{Childs2012lcu} provides a method for implementing linear combinations of unitary operators on a quantum computer. Let \(M \in \mathbb{N}\), let \(a_1,\ldots,a_M \in \mathbb{C}\), and let \(U_1,\ldots,U_M\) be unitary operators. The LCU algorithm aims to implement the operator \(U = \sum_{j=1}^{M} a_j U_j\). The algorithm assumes that two operators can be efficiently implemented. First, it requires access to state preparation (\(\operatorname{PREP}\)) oracles:
\begin{align}\label{F-LCU}
\operatorname{PREP}_{\operatorname{R}} \ket{0}  = \frac{1}{\sqrt{\| \vec a \|_1 }} \sum_{j=1}^{M}  \sqrt{a_{j}}  \ket{j}, \quad \operatorname{PREP}_{\operatorname{L}} \ket{0}  = \frac{1}{\sqrt{\| \vec a \|_1 }} \sum_{j=1}^{M}  \overline{\sqrt{a_{j}}} \ket{j}
\end{align}
where \(\sqrt{\cdot}\) denotes the principal branch of the square root. Second, it assumes access to a multi-qubit controlled operation, commonly called a select oracle, of the form
\begin{equation}\label{controlled-LCU}
    \operatorname{SEL}=\sum_{j=1}^{M} U_j \otimes \ket{j}\bra{j}.
\end{equation} 
Assuming access to the required operators, the LCU algorithm implements the unitary operator $U_{\text{LCU}} = (I \otimes \operatorname{PREP}^\dagger_{\operatorname{L}}) \cdot \operatorname{SEL} \cdot (I \otimes \operatorname{PREP}_{\operatorname{R}})$. If \(\ket{\psi}\) denotes an input state, we have
\begin{equation} U_{\text{LCU}} \ket{\psi} \ket{0}^{\log M} = \frac{1}{\| \vec a \|_1} \left(\sum_{j=1}^{M} a_j U_j\right) \ket{\psi} \ket{0}^{\log M} + \ket{\perp},
\end{equation}
where \(\ket{\perp}\) denotes a potentially non-normalized state satisfying \((I \otimes \ket{0}^{\log M}\bra{0}^{\log M}) \ket{\perp} = 0\). Constructing \(U_{\operatorname{LCU}}\) requires a single query to each of the oracles \(\operatorname{PREP}^\dagger_{\operatorname{L}}\), \(\operatorname{PREP}_{\operatorname{R}}\), and \(\operatorname{SEL}\), with the implementation of \(\operatorname{SEL}\) typically dominating the computational cost. The ancilla register, initially prepared in the state \(\ket{0}^{\log M}\), must be measured to obtain \(\sum_{j=1}^{M} a_j U_j\ket{\psi}\). To achieve a constant success probability, \(\mathcal{O}(\|\vec{a}\|_1)\) rounds of amplitude amplification~\cite{childs2015lcutaylor} are required. 

\subsection{Multi-Product Formula}\label{sec:mpf}
The Multi-Product Formula (MPF) for time-independent Hamiltonian simulation constructs a linear combination of time-independent product formulas
\begin{equation}\label{multiproduct}
    U_{\operatorname{MP}}(T) = \sum_{j=1}^{M} a_j \,  U^{b_j}_{2}(T/b_j),
    \quad a_j\in \mathbb{R}, b_j \in \mathbb N.
\end{equation}
Low, Kliuchnikov, and Wiebe~\cite{low2019wellconditioned} showed that, for \(M = m\), the coefficients \(a_j\) and exponents \(b_j\) can be chosen such that $\|\vec{a}\|_1 = \mathcal{O}(\log m)$ and $\|\vec{b}\|_1 = \mathcal{O}(m^2 \log m)$, ensuring that  $\|U_{\operatorname{MP}}(T) - e^{-i H T} \| = \mathcal{O}(T^{2m+1})$. Consequently, although an MPF is constructed from product formulas of fixed order, it can achieve arbitrarily high orders of convergence. It can be implemented using the LCU algorithm as follows:
\begin{enumerate}
\item Consider implementing the MPF over a short time-step, $\Delta$. Define the state-preparation oracles as in \cref{F-LCU}, and let the select oracle
\begin{equation}
\operatorname{SEL} = \sum_{j=0}^{m-1} U_2\left(\Delta/b_{j+1}\right)^{b_{j+1}} \otimes \ket{j}\bra{j} 
\end{equation}
encode the MPF. Then $(I \otimes \operatorname{PREP}^\dagger_{\operatorname{L}}) \cdot \operatorname{SEL} \cdot (I \otimes \operatorname{PREP}_{\operatorname{R}})$ implements the MPF via LCU, with the ancilla state $\ket{0}$ encoding the target operation and initial success probability $\mathcal{O}(1/\|\vec{a}\|_1^2)$.
\item Consider simulating evolution over total time $T$. Partition $[0,T]$ into $r$ equal segments. On each segment, apply the MPF for $\Delta=T/r$ using LCU. Robust oblivious amplitude amplification is applied at each step to maintain a constant success probability.
\end{enumerate}

The cost of the implementation above is dominated by the oracle \(\operatorname{SEL}\), which uses \(\mathcal{O}(\|\vec{b}\|_1)\) queries to controlled-\(U_2\) operations. Amplitude amplification contributes an additional factor of \(\mathcal{O}(\|\vec{a}\|_1)\). Thus, each time step requires \(\mathcal{O}(\|\vec{a}\|_1\|\vec{b}\|_1)\) queries to \(U_2\), and \(r\) steps require \(\mathcal{O}(r\|\vec{a}\|_1\|\vec{b}\|_1)\) queries in total. Although \(\|\vec{a}\|_1\) and \(\|\vec{b}\|_1\) are specified in \cref{sec:mpf}, a rigorous choice of \(r\) was previously unavailable and was expected to exhibit commutator scaling because the construction uses \(U_2\). This gap was recently closed by Aftab et al.~\cite{aftab2024mpf} and Mizuta~\cite{mizuta2025commutatorscalinghamiltoniansimulation}. We recall the relevant short-time simulation results below, with time-step size denoted by \(\Delta\).

\begin{prop}[Theorem 8 in \cite{aftab2024mpf}]
\label{aftab-time-indep-short}
Let $H=\sum_{\gamma=1}^{\Gamma} H_\gamma$ be a time-independent Hamiltonian. Suppose there exists a  $J \geq 1$ such that $\inf_{j \geq J} \alpha_{\operatorname{comm},j}^{-1/j}(H) > 0$. If $\Delta > 0$ is chosen such that $\Delta \leq  \inf_{j \geq J} \alpha_{\operatorname{comm},j}^{-1/j}(H)$, then we have
        \begin{equation}
            \|U_{\operatorname{MP}} (\Delta) - U(\Delta)\| \leq \|\vec{a}\|_1 \sum_{\substack{j \in 2\mathbb{Z}_+ \\ j \geq 2m}} \sum_{l=1}^{m}  \frac{\Delta^{j+l}}{l!} \left(\sum_{\substack{j_1,\cdots,j_l \in 2\mathbb{Z}_+, \\ j_1+\cdots+j_l=j}} \left(\prod_{\kappa=1}^{l}  \alpha_{\operatorname{comm},j_{\kappa}+1}(H)\right) \right). 
        \end{equation}
\end{prop}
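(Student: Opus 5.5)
We prove the short-time bound by expanding both $U_2(\Delta/b_j)^{b_j}$ and $U(\Delta)=e^{-iH\Delta}$ as series in $\Delta$ and using the MPF order conditions to cancel all low-order terms. The natural tool is the exponential representation of the symmetric second-order formula:
\begin{equation}
U_2(s)=\exp\left(-iHs+\sum_{j\in 2\mathbb Z_+} s^{j+1}E_{j+1}\right),
\end{equation}
which follows from the BCH formula. Here each $E_{j+1}$ is a linear combination of nested commutators of length $j+1$ in the $H_\gamma$. Only odd powers $s^{j+1}$, i.e. even $j$, appear, because $U_2(s)U_2(-s)=I$. Writing the effective Hamiltonian as $H+\sum_j s^{j}\widetilde E_{j+1}$, we aim for the bound $\|E_{j+1}\|\le \alpha_{\operatorname{comm},j+1}(H)$. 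This should follow from the standard estimate on BCH coefficients for symmetric splittings, and the recursive commutator structure gives exactly the sum $\sum_{j_1,\dots,j_{p}}\|[H_{j_1},\dots,H_{j_p}]\|$. The hypothesis $\Delta\le\inf_{j\ge J}\alpha_{\operatorname{comm},j}^{-1/j}$ guarantees that $\sum_j \Delta^{j+1}\alpha_{\operatorname{comm},j+1}$ converges, so the expansion is legitimate and not merely formal.

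Next we compute $U_2(\Delta/b)^{b}=\exp\bigl(-iH\Delta+\sum_{j}\Delta^{j+1}b^{-j}E_{j+1}\bigr)$. Since $e^{-iH\Delta}$ and the correction do not commute, we pass to the interaction picture and use a Dyson expansion of $e^{-iH\Delta}$ times the correction. This gives
\begin{equation}
U_2(\Delta/b)^b-U(\Delta)=\sum_{l\ge1}\frac{1}{l!}\sum_{j_1,\dots,j_l}\Delta^{j+l}\, b^{-j}\,(\text{time-ordered products of }E_{j_\kappa+1}\text{ conjugated by unitaries}),
\end{equation}
with $j=j_1+\cdots+j_l$. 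Conjugation by unitaries preserves norms, and the time-ordered simplex integral contributes the $1/l!$ together with the extra factor $\Delta^{l}$ from integrating over $[0,\Delta]^l$. Each term of total order $j$ carries the weight $b^{-j}$.

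Summing with the weights $a_j$, the MPF conditions $\sum_k a_k=1$ and $\sum_k a_k b_k^{-j}=0$ for even $2\le j<2m$ cancel every term with $j<2m$. This cancellation requires care. The interaction-picture integrands depend on $b$ only through the factor $b^{-j}$, because the exponent is already expressed at the full time $\Delta$, so the cancellation is exact term by term. For $j\ge 2m$ we bound $|\sum_k a_k b_k^{-j}|\le\|\vec a\|_1$. The restriction $l\le m$ in the statement should come from reorganizing the series so that terms with $l>m$, whose $j\ge 2l>2m$ lower-order structure is absent, are either absorbed into the indexing or handled by the form in \cite{aftab2024mpf}. I expect to follow their regrouping. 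Combining these steps gives the stated bound, with $\prod_\kappa\alpha_{\operatorname{comm},j_\kappa+1}(H)$ arising from the factors $\|E_{j_\kappa+1}\|$.

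The main obstacle is the step before the cancellation: establishing rigorously, with constant $1$, that $\|E_{j+1}\|\le\alpha_{\operatorname{comm},j+1}(H)$ and that the BCH series converges under the stated step-size condition. This requires a careful combinatorial control of the BCH coefficients for the symmetric Strang splitting. I would start from the Goldberg/Dynkin coefficient bounds and the recursive structure of symmetric compositions.
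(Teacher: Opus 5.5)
The paper does not prove this statement; it imports it from \cite{aftab2024mpf}. Your overall architecture is sound. You write $U_2(\Delta/b)^b=\exp(-i\Delta(H+V_b))$ with $V_b=\sum_{j\in 2\mathbb Z_+}(\Delta/b)^j\widetilde E_{j+1}$ and observe that the interaction-picture integrals do not depend on $b$. It is correct that $\sum_k a_k b_k^{-j}=0$ then cancels every term with $j<2m$ exactly. As written, however, the argument does not give the stated inequality, and its key input is missing.

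First, the range $1\le l\le m$. Your full Dyson series leaves, after cancellation, every term with $j\ge 2m$ and $1\le l\le j/2$. You therefore obtain the stated bound with $\sum_{l\ge 1}$ in place of $\sum_{l=1}^{m}$, which is strictly larger. The terms with $l>m$ are genuinely present and cannot be absorbed by re-indexing. The missing idea is to stop the expansion at order $m$. Iterating Duhamel's formula $m$ times gives
\[
e^{-i\Delta(H+V_b)}=\sum_{l=0}^{m-1}D_l(V_b)+R_m(V_b),\qquad \|R_m(V_b)\|\le \frac{\Delta^m}{m!}\|V_b\|^m .
\]
Here $D_l$ is the $l$-th interaction-picture Dyson term, and $R_m$ contains one factor of the full propagator $e^{-i\sigma(H+V_b)}$. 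That factor is unitary: $\log U_2(s)$ is a real Lie series in the anti-Hermitian operators $-isH_\gamma$, so $H+V_b$ is Hermitian. Since every $j_\kappa\ge 2$, all contributions to $R_m$ have $j\ge 2m$ and need no cancellation. For $l\le m-1$, only the $j\ge 2m$ terms survive the MPF conditions. Bounding $|\sum_k a_k b_k^{-j}|$ and $\sum_k |a_k| b_k^{-j}$ by $\|\vec a\|_1$ then yields exactly the stated right-hand side.

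Second, the step you defer, $\|E_{j+1}\|\le \alpha_{\operatorname{comm},j+1}(H)$ with constant $1$, is the real content of the cited theorem and is not a standard BCH estimate. One could bound the Goldberg--Dynkin coefficients of the logarithm of the $2\Gamma-1$ exponentials in $U_2$ term by term, for example via the Dynkin--Specht--Wever projection and counting splittings of words into blocks. That only gives $\|E_{j+1}\|\le C^{j}\alpha_{\operatorname{comm},j+1}(H)$ for some $C>1$, which is useless under a step-size condition of the form $\Delta^j\alpha_{\operatorname{comm},j}\le 1$. So the proposal contains no proof of its key lemma.

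Relatedly, your convergence claim is wrong as stated. The condition $\Delta\le \inf_{j\ge J}\alpha_{\operatorname{comm},j}^{-1/j}$ gives only $\Delta^j\alpha_{\operatorname{comm},j}\le 1$ for $j\ge J$, not summability of $\sum_j \Delta^{j+1}\alpha_{\operatorname{comm},j+1}$. The clean fix is to assume the right-hand side is finite, since otherwise there is nothing to prove. Its $l=1$ part then gives absolute convergence of $Z(s)=\log U_2(s)$ on $|s|\le\Delta$, once the coefficient bound is available. The identity $U_2(s)=e^{Z(s)}$ on that disk then follows from the formal BCH identity by analytic continuation, not from convergence alone.
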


While \cref{aftab-time-indep-short} applies to general time-independent Hamiltonians, it has the limitation that the infimum is taken over commutators of arbitrarily large depth. If \(\alpha_{\operatorname{comm},j}=\Omega(N^{j^\gamma})\) for some \(\gamma>1\), then this infimum vanishes, and no positive admissible \(\Delta\) exists. Mizuta~\cite{mizuta2025commutatorscalinghamiltoniansimulation} addressed this issue for \(q\)-local, \(g\)-extensive Hamiltonians by showing that only commutators up to finite depth need to be considered in the short-time MPF bound. If $H=\sum_{\gamma=1}^{\Gamma} H_\gamma$, let $\gamma$, let $X_\gamma$ denote the support of $H_\gamma$. We say that $H$ is $q$-local if $|X_\gamma|\le q$ for all $\gamma$ and $[H_\gamma,H_{\gamma'}]=0$ whenever $X_\gamma\cap X_{\gamma'}=\varnothing$. Moreover, we say that $H$ is $g$-extensive if $ \sum_{\gamma:\, j\in X_\gamma}\|H_\gamma\| \le g$ for all $j$. 

\begin{prop}[Theorem 4 in \cite{mizuta2025commutatorscalinghamiltoniansimulation}]
\label{mizuta-time-indep-short}
Let $H=\sum_{\gamma=1}^{\Gamma} H_\gamma$ be a $q$-local, $g$-extensive Hamiltonian. Fix $m \ge 1$.  For a truncation parameter $p_0 \in \mathbb{N}$, define
\begin{equation}\label{mu-m-p0}
\mu_{m,p_0} := \sup_{\substack{j,l \in \mathbb{N} \\ j \ge 2m,\; 1 \le l \le \left\lfloor j/2 \right\rfloor}} \left( \sum_{\substack{2 \le j_1,\ldots,j_l \le p_0-1 \\ j_1+\cdots+j_l = j}} \prod_{\kappa=1}^{l} \alpha_{\mathrm{comm},\,j_\kappa+1}(H) \right)^{\frac{1}{j+l}}.
\end{equation}
Let $\epsilon \in (0,1)$, $N \geq 1$ and let $p_0 := p_0(N,\epsilon) = \lceil \log(3N/\epsilon) \rceil $. Choose a $\Delta > 0$ such that
\begin{equation}
0 < \Delta \le \min\left\{ \frac{1}{16 e^3 p_0 q g}, \;\frac{1}{4 \mu_{m,p_0}} \right\}.
\label{eq:mizuta_tau_condition}
\end{equation}
Then we have
\begin{equation}
\left\| U_{\operatorname{MP}} (\Delta) - U(\Delta) \right\| \le 2 e^{1/2} \|\vec{a} \|_1 ( \mu_{m,p_0} \Delta )^{2m+1} + \|\vec a\|_1 \|\vec b\|_1 \epsilon.
\label{eq:mizuta_error}
\end{equation}
\end{prop}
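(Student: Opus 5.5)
The plan is to work on the level of effective Hamiltonians for the second-order product formula, split the expansion into a finite-depth part (commutators of depth at most \(p_0\)) and a tail, and treat the two parts separately. For \(\tau = \Delta/b\) I will write \(U_2(\tau) = e^{-i\tau H_{\mathrm{eff}}(\tau)}\), using the symmetric BCH expansion
\begin{equation*}
H_{\mathrm{eff}}(\tau) = H + \sum_{j\in 2\mathbb Z_+} \tau^{j} E_{j+1}.
\end{equation*}
Here each \(E_{j+1}\) is a linear combination of nested commutators of the \(H_\gamma\) of depth \(j+1\), and its coefficients are normalized so that \(\|E_{j+1}\|\le \alpha_{\mathrm{comm},j+1}(H)\). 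Only even \(j\) appear because the formula is symmetric.

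The first step is to drop all \(E_{j+1}\) with \(j+1>p_0\). This is where \(q\)-locality and \(g\)-extensivity enter. I will use the standard cluster/Lieb--Robinson counting for nested commutators of local terms. It gives a bound of the form
\begin{equation*}
\sum_{\gamma_1,\dots,\gamma_p}\|[H_{\gamma_1},\dots,H_{\gamma_p}]\| \lesssim N\,(p-1)!\,(2qg)^{p-1}\, g.
\end{equation*}
I must also control the BCH coefficients, which decay like \(1/p!\) up to a constant to the power \(p\). Combining these, the tail \(\sum_{j\ge p_0}\tau^j\|E_{j+1}\|\) is dominated by a geometric series in \(c\,p_0 q g\,\tau\). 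Under \(\Delta\le 1/(16e^3p_0qg)\) this series is at most \(N e^{-p_0}\le \epsilon/3\). Here \(p_0=\lceil\log(3N/\epsilon)\rceil\), and the factor \(e^3\) is exactly what absorbs the combinatorial constants. The Duhamel formula turns this Hamiltonian-level tail into an operator error of order \(\tau\epsilon\) per Trotter step, hence \(\mathcal O(\epsilon)\) for each \(U_2(\Delta/b_j)^{b_j}\) after \(b_j\) steps (up to the \(\Delta\) factor, which is at most one). Summing over the MPF with weights \(|a_j|\) then gives the term \(\|\vec a\|_1\|\vec b\|_1\epsilon\). I expect this truncation step to be the main obstacle. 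The difficulty is making the factorial growth of the local commutator counts and the decay of the BCH coefficients match precisely enough that the threshold \(1/(16e^3p_0qg)\) suffices. I would first try the approach of Childs et al.'s locality counting and bound everything by a geometric series in \(p/p_0\).

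The second step handles the truncated effective Hamiltonian. Let \(\widetilde E(\tau)=\sum_{2\le j\le p_0-1}\tau^jE_{j+1}\). Then \(\bigl(e^{-i\tau (H+\widetilde E(\tau))}\bigr)^{b} = e^{-i\Delta(H+\widetilde E(\Delta/b))}\). I will expand this in the interaction picture with respect to \(H\), as a Dyson series in \(\widetilde E\). The \(l\)-th order term is a time-ordered integral of products \(E_{j_1+1}\cdots E_{j_l+1}\), with prefactor \(\Delta^{l}(\Delta/b)^{j_1+\cdots+j_l}/l!\) coming from the simplex volume. The MPF conditions \(\sum_i a_i=1\) and \(\sum_i a_i b_i^{-j}=0\) for even \(2\le j\le 2m-2\) annihilate every term with total degree \(j=j_1+\cdots+j_l<2m\). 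Since \(b_i\ge1\), each surviving term is bounded by \(\|\vec a\|_1\,\Delta^{j+l}\prod_\kappa\alpha_{\mathrm{comm},j_\kappa+1}/l!\). Also \(l\le\lfloor j/2\rfloor\) automatically, because every \(j_\kappa\ge2\).

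The final step is to bound the remaining sum. By the definition \eqref{mu-m-p0}, each inner sum is at most \(\mu_{m,p_0}^{j+l}\). The total is therefore at most
\begin{equation*}
\|\vec a\|_1\sum_{j\ge 2m}\sum_{l\ge1}\frac{(\mu_{m,p_0}\Delta)^{j+l}}{l!}.
\end{equation*}
The smallest exponent is \(2m+1\). Summing the \(l\)-series contributes at most a factor \(e^{\mu\Delta}\), and with \(\mu\Delta\le 1/4\) the geometric series in \(j\) (step 2) contributes at most a factor \(16/15\). Together this leaves at most \(2e^{1/2}\|\vec a\|_1(\mu_{m,p_0}\Delta)^{2m+1}\). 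Adding the truncation error from the first step gives \eqref{eq:mizuta_error}.
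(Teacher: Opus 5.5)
The paper gives no proof of this proposition. It imports Theorem~4 of Mizuta, and the remark preceding the proposition stresses that the point of that result is that only commutators of depth at most $p_0$ ever need to be controlled.

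Your second and third steps are sound. Apply the interaction-picture Dyson expansion to the truncated generator $H+\widetilde E(\Delta/b_i)$. The conditions $\sum_i a_i=1$ and $\sum_i a_i b_i^{-j}=0$ for $j=2,\dots,2m-2$ then remove every multi-degree with $j<2m$. Each surviving term is at most $\|\vec a\|_1\Delta^{j+l}\prod_\kappa\alpha_{\mathrm{comm},j_\kappa+1}(H)/l!$. With $\mu_{m,p_0}\Delta\le 1/4$, your summation gives a constant $e^{1/4}\cdot\tfrac{16}{15}<2e^{1/2}$. One thing remains to be checked: for the symmetric $\Gamma$-term formula, the truncated coefficient operators must satisfy $\|E_{j+1}\|\le\alpha_{\mathrm{comm},j+1}(H)$. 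That is, the total coefficient of each ordered commutator word must be at most one.

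The genuine gap is your first step. You flagged it as the main obstacle, but the route you propose cannot work, and the problem is not one of matching constants. You start from an exact identity $U_2(\tau)=e^{-i\tau H_{\mathrm{eff}}(\tau)}$ with the full BCH series, then sum the tail $\sum_{j\ge p_0}\tau^j\|E_{j+1}\|$.

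\emph{The coefficients do not decay factorially.} If the degree-$p$ term had a representation with coefficients $\mathcal O(C^p/p!)$, then $\log U_2(\tau)$ would be entire in $\tau$. But the BCH series has a finite radius of convergence. The only general guarantee is $\tau\sum_\gamma\|H_\gamma\|=\mathcal O(1)$, which is typically $\tau=\mathcal O(1/(Ng))$. That lies far below the essentially $N$-independent step sizes allowed here, so $H_{\mathrm{eff}}(\tau)$ need not even exist.

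\emph{The tail diverges.} With geometric coefficients and the local count $\alpha_{\mathrm{comm},p}(H)\le(p-1)!(2qg)^{p-1}Ng$, your tail terms have size $Ng\,p!\,(Cqg\tau)^p$. This is not summable for any $\tau>0$.

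\emph{The step-size condition does not help.} The bound $\tau\le 1/(16e^3p_0qg)$ yields $p!\,(Cqg\tau)^p\le(Cqgp_0\tau)^p$ only for $p\le p_0$, and it fails once $p\gtrsim e p_0$. The presence of $p_0$ in the step size signals an optimal truncation of an asymptotic series, not a convergent geometric tail.

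\emph{What is needed.} Never form $H_{\mathrm{eff}}(\tau)$. Define $H+\widetilde E(\tau)$ directly as the order-$p_0$ truncation. Then bound $\|U_2(\tau)-e^{-i\tau(H+\widetilde E(\tau))}\|$ through an exact finite-order remainder, in the spirit of the truncated Floquet--Magnus analysis of Kuwahara, Mori and Saito. Organize it so that large depth enters only through expansions of conjugations $e^{A}Be^{-A}$, whose genuine $1/k!$ coefficients cancel the $k!$ from cluster counting. Since the pieces of $\widetilde E$ have support $\mathcal O(p_0 q)$, a condition $\tau\lesssim 1/(p_0qg)$ is exactly what makes that remainder geometric and of size $\lesssim Ne^{-p_0}\le\epsilon/3$. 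Only after this per-step estimate does your telescoping over the $b_i$ steps and the weights $|a_i|$ produce the $\|\vec a\|_1\|\vec b\|_1\epsilon$ term.
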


The Multi-Product Formula for time-dependent Hamiltonian simulation, constructed from the second-order time-dependent product formula, is given by
\begin{equation}\label{time-dep-2MPF}
U_{\operatorname{MP}}(T) = \sum_{j=1}^{m} a_j \overset{\longleftarrow}{\prod_{b=1,\cdots,b_j}} U_{2}\left( \frac{(b - 1)T}{b_j}, \frac{bT}{b_j} \right).
\end{equation}
An error analysis was recently given by Mizuta et al.~\cite{mizuta2024MPFtimedep}, who showed that time-dependent MPFs also exhibit commutator scaling. In particular, they proved the following bound for MPFs constructed from second-order time-dependent Trotter--Suzuki formulas.

\begin{prop}[Specialization of Theorem 17 in \cite{mizuta2024MPFtimedep}]
\label{prop:mizuta-time-dep-short}
Let $H(t)=\sum_{\gamma=1}^{\Gamma} H_\gamma(t)$ be a smooth time-dependent $q$-local, $g$-extensive Hamiltonian\footnote{In the time-dependent case, we say that $H$ is $g$-extensive if $\sup_{\tau\in[0,\Delta]}\sum_{\gamma:\,j\in X_\gamma}\|H_\gamma(\tau)\|\le g$ for all $j$. }.
Moreover, assume there exists an $f>0$ such that
\begin{equation}
\sup_{\tau\in[0,\Delta]} \sum_{\gamma:\,j\in X_\gamma} \left\| \frac{d^n}{d\tau^n}H_\gamma(\tau) \right\| \le f^n g
\end{equation}
for all $j$ and all $n\ge 1$. Fix $m\ge 1$. For $1\le \gamma\le \Gamma+1$, define
\begin{equation}
D_\gamma(\tau) := \begin{cases} \operatorname{ad}_{H_\gamma(\tau)}+ i\dfrac{d}{d\tau}, & 1\le \gamma\le \Gamma,\\ \Gamma\dfrac{d}{d\tau}, & \gamma=\Gamma+1. \end{cases}
\end{equation}
For $p\ge 1$, define $\alpha_{\operatorname{comm},p}(\tau) := \sum_{\gamma_0=1}^{\Gamma} \sum_{\gamma_1,\ldots,\gamma_p=1}^{\Gamma+1} \left\| \left( D_{\gamma_1}(\tau)\cdots D_{\gamma_p}(\tau)H_{\gamma_0}(\tau) \right) H_{\gamma_0}(\tau) \right\|$. 
For a truncation parameter $p_0\in\mathbb{N}$, define
$\mu_{p_0} := \sup_{\tau\in[0,\Delta]} \max_{\substack{p\in\mathbb{N}\\2\le p\le p_0}} \left( \alpha_{\operatorname{comm},p}(\tau) \right)^{\frac{1}{p+1}}$. 
Let $\epsilon\in(0,1)$, $N\ge 1$, and let $p_0:=p_0(N,\epsilon)=\lceil\log(2N/\epsilon)\rceil$.
Assume $p_0\ge 2$ and choose a $\Delta>0$ such that
\begin{equation}
0<\Delta \le \min\left\{ \frac{1}{8e^3p_0(2qg+2\Gamma f)},\;\frac{1}{8\mu_{p_0}} \right\}.
\label{eq:td-mizuta-step-condition-p2}
\end{equation}
Then we have
\begin{equation}
\left\| U_{\operatorname{MP}}(\Delta)-U(\Delta) \right\| \;\le\; \|\vec a\|_1 (4\mu_{p_0}\Delta)^{2m+1} + \|\vec a\|_1\|\vec b\|_1\epsilon.
\label{eq:td-mizuta-mpf-error-p2}
\end{equation}
\end{prop}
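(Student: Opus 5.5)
The plan is to derive this as a direct specialization of Theorem 17 in \cite{mizuta2024MPFtimedep}, not to redo the error analysis from scratch. That theorem covers MPFs built from an arbitrary order-$p$ time-dependent product formula, with the base formula order, the step-size constant and the error prefactor left as parameters. So the work is mainly bookkeeping: set the base order to $p=2$ (as fixed in \cref{prod-2-conv}), and check that the MPF in \cref{time-dep-2MPF} is exactly the object the theorem treats. That object is a linear combination of $b_j$-fold sliced second-order formulas $U_2((b-1)T/b_j,\,bT/b_j)$ with coefficients $a_j$ satisfying the usual order conditions $\sum_j a_j=1$ and $\sum_j a_j b_j^{-2k}=0$ for $1\le k\le m-1$.

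The key steps, in order, are as follows.

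First, I will rewrite the time-dependent problem as an autonomous one on an extended space. Here $d/d\tau$ is treated as an extra "Hamiltonian term" (index $\Gamma+1$). This is precisely why the generalized derivations $D_\gamma(\tau)=\operatorname{ad}_{H_\gamma(\tau)}+i\,d/d\tau$ appear, and why $\alpha_{\mathrm{comm},p}(\tau)$ plays the role of the nested-commutator norm.

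Second, I will recall the structure of the theorem's proof:
\begin{itemize}
\item Expand the error of each sliced $U_2$ in a Magnus/BCH-type series in $\Delta$. Its coefficients are sums of nested $D$-commutators.
\item Use the order conditions on $a_j$ so that all terms of total degree below $2m+1$ cancel in the linear combination.
\item Bound the surviving terms by $\|\vec a\|_1(c\,\mu_{p_0}\Delta)^{2m+1}$ with $c=4$ for $p=2$. The geometric tail is summable because $\Delta\le 1/(8\mu_{p_0})$.
\end{itemize}

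Third, I will handle the truncation at depth $p_0$, which is where locality enters. Commutators of depth larger than $p_0$ are not bounded through $\mu_{p_0}$. Instead, $q$-locality, $g$-extensivity and the derivative bound $f^n g$ are used to control them through a combinatorial (cluster/Lieb--Robinson-type) count. This count gives a factor like $(8e^3p_0(2qg+2\Gamma f)\Delta)^{p_0}$ per product formula. Under the first step-size condition it is at most $e^{-p_0}\le \epsilon/(2N)$ because $p_0=\lceil\log(2N/\epsilon)\rceil$. Summing over the $\|\vec b\|_1$ slices and weighting by $|a_j|$ then gives the additive term $\|\vec a\|_1\|\vec b\|_1\epsilon$.

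The main obstacle is verifying that the constants in Theorem 17 specialize exactly to those displayed: $8e^3$, $2qg+2\Gamma f$, $1/(8\mu_{p_0})$, the base $4\mu_{p_0}$, and the form of $p_0$. The theorem states them for general $p$, with $p$-dependent factors such as the number of stages of the Suzuki formula, which is $2$ here, and $p$-dependent powers. My first attempt would be to substitute $p=2$ into each such expression and simplify, loosening to the stated cleaner constants where needed, since all bounds are monotone in these constants. I would also check that the hypothesis $p_0\ge 2$ is exactly what is needed for the supremum defining $\mu_{p_0}$ to range over a nonempty set of depths $2\le p\le p_0$.
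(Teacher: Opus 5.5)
Your plan is essentially the paper's: the paper gives no separate proof and simply imports this bound from Theorem~17 of \cite{mizuta2024MPFtimedep} with the base product formula fixed to second order, so the real work is the constant-matching you describe. In that matching, the step-size hypothesis goes the other way from the error bound: you must check that the displayed condition on $\Delta$ implies the theorem's, so constants there may only be tightened, not loosened. There is also a small slip in your heuristic sketch: under $\Delta\le 1/(8e^3p_0(2qg+2\Gamma f))$, the quantity $(8e^3p_0(2qg+2\Gamma f)\Delta)^{p_0}$ is only $\le 1$, not $\le e^{-p_0}$, so the true truncation factor must have a smaller constant. This is harmless because you rely on the cited theorem rather than rederiving it.
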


\subsection{Linear Combination of Hamiltonian Simulation}\label{sec:lchs}
The Linear Combination of Hamiltonian Simulation (LCHS) algorithm simulates \emph{non-unitary dynamics} by expressing them as linear combinations of \emph{unitary dynamics}. 
LCHS formulation was first introduced in~\cite{an2023LCHS} and improved in~\cite{an2023LCHS-Optimal}. 
This approach generalizes the Fourier representation of the exponential function on the positive real axis,
\begin{equation}\label{fourier-one}
e^{-x}=\int_{\mathbb{R}} \hat f(k) e^{-ikx} \, dk,  \quad x \geq 0. 
\end{equation}
Specifically, for a possibly time-dependent operator $A(t) = L(t) + i H(t)$, with Hermitian $L(t) \succeq 0$ and $H(t)$, they show that 
\begin{equation}
\mathcal{T}e^{-\int_0^t A(s)\,ds} =\int_{\mathbb{R}} \hat f(k) U_k(t)dk, \quad t \in [0,T]
\end{equation}
solves \cref{non-unitary-ode} in the case $b(t) \equiv 0$. 
The case $b(t) \neq 0$ is then derived via an application of Duhamel's principle. 
The LCHS algorithm then proceeds by truncating the integral to $[-K,K]$ for some $K > 0$ and implementing it via the LCU algorithm, where each
\begin{equation}\label{Uk}
U_k(t) := \mathcal{T}  e^{-i \int_0^t (H(s) + k L(s)) \, ds}
\end{equation}
is implemented via a time-dependent Hamiltonian simulation algorithm.  The function $\hat f(k)$ is called the LCHS kernel function and should be the inverse Fourier transform of the function $f(x)$ which takes the value of $e^{-x}$ for $x \geq 0$. 
Previous works have suggested the choices of $\hat f(k)$ to be $\frac{1}{\pi(1+k^2)}$~\cite{an2023LCHS} or $\frac{1}{2 \pi e^{-2^\beta} e^{(1+iz)^\beta}}, \beta \in (0,1)$~\cite{an2023LCHS-Optimal}, where the LCHS algorithm with the latter kernel function can solve \cref{non-unitary-ode} in near-optimal time and precision complexity. Recently, Low and Somma~\cite{low2025optimallchs} introduced an approximate LCHS framework that further improves upon the previous formalism. 
Their key observation is that the requirement that \cref{fourier-one} hold exactly can be relaxed so that it need only hold approximately. That is, it suffices to find a function $f$ such that, for $\epsilon > 0$, we have $|e^{-x} - f(x)| \le \epsilon$ for $x \ge 0$, and such that \cref{fourier-one} holds only approximately. 
This weaker condition admits a broader class of kernels and leads to a flexible and more efficient LCHS algorithm with optimal complexity. 
Their main result is summarized below.

\begin{prop}[Theorem 1 in~\cite{low2025optimallchs}]
\label{somma-lchs}
For $T \geq 0$ and $n \geq 1$, let $A(t) = L(t) + iH(t) \in \C^{2^n \times 2^n}$ be a time-dependent matrix for $t \in [0,T]$, where $L(t)$ and $H(t)$ are Hermitian operators such that $\|L\|_{L^1[0,T]}$ and $\|H\|_{L^1[0,T]}$ are finite. Consider  the uniform strip
\begin{equation}
S_{[-y_0,0]} := \{ z \in \mathbb{C} : \operatorname{Im}(z) \in [-y_0,0] \}
\end{equation}
for some $y_0 > 1$. On $S_{[-y_0,0]}$, assume that the complex-valued kernel function $\hat{f}(z)$ is such that  $\lim_{|z| \to \infty} \hat{f}(z) = 0$, $(z+i)\hat{f}(z)$ is analytic and the residue at $z=-i$ is $\operatorname{Res}(\hat{f}, -i) = i/\sqrt{2\pi}$. For $t \in [0,T]$, we have
\begin{equation}\label{low-somma-thm-1}
\left \| \frac{1}{\sqrt{2 \pi}} \int_{-R}^R \hat f(k) U_k(t) dk - \mathcal{T}e^{-\int_0^t A(s)\,ds} \right \| \leq \frac{1}{\sqrt{2 \pi}} \int_{\R \setminus [-R,R]} | \hat f(k) | dk + \frac{1}{\sqrt{2 \pi}} \int_{\R} | \hat f(k-iy_0) | dk.
\end{equation}
\end{prop}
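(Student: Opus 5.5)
The plan is a triangle-inequality split into a truncation term and a contour-shift term. Write
\[
\frac{1}{\sqrt{2\pi}}\int_{-R}^{R}\hat f(k)U_k(t)\,dk-\mathcal{T}e^{-\int_0^t A(s)\,ds}
=\Bigl(-\frac{1}{\sqrt{2\pi}}\int_{\R\setminus[-R,R]}\hat f(k)U_k(t)\,dk\Bigr)+\Bigl(\frac{1}{\sqrt{2\pi}}\int_{\R}\hat f(k)U_k(t)\,dk-\mathcal{T}e^{-\int_0^t A(s)\,ds}\Bigr).
\]
Since $U_k(t)$ is unitary for real $k$, the first term has norm at most $\frac{1}{\sqrt{2\pi}}\int_{\R\setminus[-R,R]}|\hat f(k)|\,dk$. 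The real work is to show that the full-line integral differs from the propagator by at most $\frac{1}{\sqrt{2\pi}}\int_\R|\hat f(k-iy_0)|\,dk$.

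\textbf{Step 1: analytic continuation of $U_k(t)$.} For complex $z$, define $U_z(t)=\mathcal{T}e^{-i\int_0^t(H(s)+zL(s))ds}$, which is entire in $z$. I would estimate its norm by differentiating $\|U_z(t)\psi\|^2$:
\[
\frac{d}{dt}\|U_z\psi\|^2=2\,\operatorname{Im}(z)\,\langle U_z\psi,L\,U_z\psi\rangle .
\]
With $L\succeq0$, this gives $\|U_z(t)\|\le 1$ for $\operatorname{Im} z\le 0$. In particular $\|U_{k-iy_0}(t)\|\le 1$ uniformly in $k$.

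\textbf{Step 2: residue identity.} Consider the rectangle with vertices $\pm X$ and $\pm X-iy_0$, and write $\hat f(z)U_z(t)$ as $\frac{(z+i)\hat f(z)}{z+i}U_z(t)$. This is meromorphic in the strip, with a single simple pole at $z=-i$; the pole lies inside because $y_0>1$.

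At $z=-i$ the generator is $H-iL$, so
\[
U_{-i}(t)=\mathcal{T}e^{-\int_0^t(L+iH)ds}=\mathcal{T}e^{-\int_0^t A(s)ds}.
\]
The real axis is traversed left to right, so the rectangle is positively oriented only if it lies above the bottom edge going clockwise. Computing carefully, the boundary runs along the real axis left to right and then back along $\operatorname{Im}=-y_0$, which is a clockwise orientation. The residue theorem therefore gives
\[
\oint=-2\pi i\operatorname{Res}(\hat f,-i)U_{-i}(t)=-2\pi i\cdot\frac{i}{\sqrt{2\pi}}\,U_{-i}(t)=\sqrt{2\pi}\,U_{-i}(t).
\]
Dividing by $\sqrt{2\pi}$ yields
\[
\frac{1}{\sqrt{2\pi}}\Bigl(\int_{-X}^{X}\hat f(k)U_k\,dk-\int_{-X}^{X}\hat f(k-iy_0)U_{k-iy_0}\,dk\Bigr)+(\text{side edges})=\mathcal{T}e^{-\int A}.
\]

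\textbf{Step 3: vanishing side edges.} This is the main obstacle. The vertical segments have fixed length $y_0$. On them $\|U_z\|\le1$ by Step 1, and $\hat f(z)\to0$ as $|z|\to\infty$; the limit is uniform in the strip by the stated hypothesis. Hence the side contributions tend to $0$ as $X\to\infty$.

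One must also justify that the full-line integrals converge. I would either assume the absolute integrability of $\hat f$ on both lines implicit in the right-hand side (otherwise the bound is vacuous), or pass to the limit along $X\to\infty$, where the integrals are dominated.

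\textbf{Step 4: conclusion.} Taking the limit gives
\[
\frac{1}{\sqrt{2\pi}}\int_\R\hat f(k)U_k\,dk-\mathcal{T}e^{-\int A}=\frac{1}{\sqrt{2\pi}}\int_\R\hat f(k-iy_0)U_{k-iy_0}\,dk .
\]
Using $\|U_{k-iy_0}\|\le1$ bounds this term by $\frac{1}{\sqrt{2\pi}}\int_\R|\hat f(k-iy_0)|\,dk$. Combining with the truncation term gives \cref{low-somma-thm-1}.
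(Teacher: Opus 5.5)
Your proof is correct and follows the standard argument from Low--Somma, which the paper cites rather than reproduces: shift the contour from the real axis to $\operatorname{Im} z=-y_0$, pick up the residue at $z=-i$ where $U_{-i}(t)=\mathcal{T}e^{-\int_0^t A(s)\,ds}$, and use that $z\mapsto U_z(t)$ is entire and contractive for $\operatorname{Im} z\le 0$ (the ingredients the paper later invokes as Lemmas~3 and~4 of that reference), with unitarity on the real axis bounding the truncation tail. Your Step~1 genuinely requires $L(t)\succeq 0$, which the transcribed statement omits but the surrounding LCHS setup assumes, so you were right to use it; without it the bound fails (e.g.\ scalar $L=-1$, $H=0$, where $\mathcal{T}e^{-\int_0^t A}=e^{t}$ grows while the right-hand side does not depend on $t$).
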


In what follows, we use the improved approximate formulation of Low and Somma~\cite{low2025optimallchs}.

\section{Pre-Quadrature Error Analysis}\label{sec:alg-err-analysis}
In this section, we give a pre-quadrature error analysis of the algorithm in the time-independent homogeneous case. Throughout, let \(A=L+iH\), where \(L\) and \(H\) are Hermitian and \(L\succeq 0\). In the pre-quadrature regime, the algorithm has three main sources of error:

\begin{enumerate}
\item (Approximation) The approximation error, \(E_{\mathrm{approx}}(y_0)\), arising by replacing \(e^{-AT}\) with the LCHS representation. That is, $E_{\mathrm{approx}}(y_0) :=  \| \frac{1}{\sqrt{2\pi}}\int_{\mathbb R}\hat f(k)U_k(T)\,dk - e^{-AT}  \|$. By \cref{low-somma-thm-1}, after taking the limit \(R\to\infty\), it is bounded by \(\frac{1}{\sqrt{2\pi}}\int_{\mathbb{R}}|\hat f(k-iy_0)|\,dk\).

\item (Truncation) The truncation error, \(E_{\mathrm{trunc}}(R)\), arising from restricting \(\frac{1}{\sqrt{2\pi}}\int_{\mathbb{R}}\hat f(k)\,U_k(t)\,dk\) to \([-R,R]\). That is, $E_{\mathrm{trunc}}(R) :=\| \frac{1}{\sqrt{2\pi}}\int_{\mathbb R}\hat f(k)U_k(T)\,dk - \frac{1}{\sqrt{2\pi}}\int_{-R}^{R}\hat f(k)U_k(T)\,dk \|$. By \cref{low-somma-thm-1}, since $U_k(T)$ is unitary, it is bounded by \(\frac{1}{\sqrt{2\pi}}\int_{\mathbb{R}\setminus[-R,R]}|\hat f(k)|\,dk\).

\item (Inner Simulation) The simulation error \(E_{\mathrm{MPF}}(T)\), arising from replacing each \(U_k(T)\) by its MPF approximation \(\widetilde U_k(T)\).
\end{enumerate}

In what follows, we use the four-parameter kernel family introduced by Low and Somma~\cite{low2025optimallchs}:
\begin{equation}\label{low-somma-kernel}
\hat{f}_{a,b}(k;c,d) = \frac{(b+1)^{a-1}}{\sqrt{2\pi}} \frac{e^{d(1-ik)} e^{-\frac{k^2+1}{4c^2}}}{(1-ik)(b+ik)^{a-1}}, \quad a \in \mathbb{N},\; b,c>0,\; d\in\mathbb{R}.
\end{equation}
We call \(\vec{\theta}=(a,b,c,d)\) the kernel profile and write \(\hat f(k)\) for \(\hat f_{a,b}(k;c,d)\) when clear. 
\cref{approx-error-time-indept} discusses the approximation error, \cref{truncation-error} the truncation error, \cref{inner-sim-subsub} the inner simulation error, and \cref{coupling-subsub} discusses the coupling between the truncation and inner simulation errors.

\subsection{Approximation Error}\label{approx-error-time-indept}
We first analyze the approximation error \(E_{\mathrm{approx}}(y_0)\). Low and Somma analyzed \(E_{\mathrm{approx}}(y_0)\) for the subclass \(\hat{f}_{2,1}(k;c,d)\). Here, we analyze \(E_{\mathrm{approx}}(y_0)\) for the full family \(\hat{f}_{a,b}(k;c,d)\). Most proofs are relegated to \cref{app-alg-err}.

\begin{lem}\label{approx-simplify-integ}
Fix a kernel profile \(\vec{\theta} = (a,b,c,d)\) and \(y_0 > 1\). We have
\begin{align}
E_{\mathrm{approx}}(y_0) &\leq \frac{(b+1)^{a-1}}{2\pi}e^{\,d(1-y_0)+\frac{y_0^2-1}{4c^2}}\int_{\mathbb{R}}\frac{e^{-k^2/(4c^2)}\,dk}{\sqrt{(y_0-1)^2+k^2}\,((b+y_0)^2+k^2)^{(a-1)/2}}.
\label{simplified-eq}
\end{align}
\end{lem}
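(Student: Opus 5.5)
Starting from the bound in \cref{somma-lchs} with \(R\to\infty\), we have \(E_{\mathrm{approx}}(y_0)\le \frac{1}{\sqrt{2\pi}}\int_{\mathbb R}|\hat f(k-iy_0)|\,dk\). The proof then reduces to evaluating the modulus of the kernel \cref{low-somma-kernel} at \(z=k-iy_0\) factor by factor and collecting terms; no estimate beyond exact modulus computations is needed. The prefactor \(\frac{(b+1)^{a-1}}{\sqrt{2\pi}}\) combines with the outer \(\frac1{\sqrt{2\pi}}\) to give the \(\frac{(b+1)^{a-1}}{2\pi}\) in \cref{simplified-eq}.

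Substituting \(z=k-iy_0\), I would compute each factor.
\begin{enumerate}
\item \emph{Exponential of \(d(1-iz)\).} We have \(1-iz = 1-ik - y_0\), so \(|e^{d(1-iz)}| = e^{d(1-y_0)}\), using \(d\in\mathbb R\).
\item \emph{Gaussian factor.} We have \(z^2 = k^2 - y_0^2 - 2iky_0\), so
\(\operatorname{Re}\!\left(-\frac{z^2+1}{4c^2}\right) = -\frac{k^2 - y_0^2 + 1}{4c^2}\). Hence \(\left|e^{-(z^2+1)/(4c^2)}\right| = e^{(y_0^2-1)/(4c^2)}\,e^{-k^2/(4c^2)}\).
\item \emph{Pole factor.} We have \(|1-iz| = |(1-y_0) - ik| = \sqrt{(y_0-1)^2+k^2}\).
\item \emph{Second denominator factor.} We have \(b+iz = b+y_0+ik\), so \(|(b+iz)^{a-1}| = ((b+y_0)^2+k^2)^{(a-1)/2}\). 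Since \(a\in\mathbb N\), this power is an ordinary integer power and no branch issue arises.
\end{enumerate}

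Multiplying these four moduli and pulling the \(k\)-independent constants \(e^{d(1-y_0)+(y_0^2-1)/(4c^2)}\) outside the integral gives exactly \cref{simplified-eq}.

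The only point needing care is checking that the hypotheses of \cref{somma-lchs} hold, so that its bound may be invoked.
\begin{itemize}
\item \emph{Analyticity.} The function \((z+i)\hat f(z)\) is analytic on the strip \(S_{[-y_0,0]}\), because the only other singularity is at \(z=ib\), which lies in the upper half-plane since \(b>0\).
\item \emph{Decay.} We have \(\hat f(z)\to0\) as \(|z|\to\infty\) in the strip, thanks to the Gaussian factor \(e^{-\operatorname{Re}(z^2)/(4c^2)}\) with \(\operatorname{Im} z\) bounded.
\item \emph{Residue.} At \(z=-i\) we have \(1-iz = -i(z+i)\), \(e^{d(1-iz)}=e^{d\cdot 0}=1\), \(e^{-(z^2+1)/4c^2}=1\), and \((b+iz)^{a-1}=(b+1)^{a-1}\). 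Therefore the residue is \(\frac{1}{\sqrt{2\pi}}\cdot\frac{1}{-i} = i/\sqrt{2\pi}\), as required.
\end{itemize}

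The integral in \cref{simplified-eq} converges because the integrand is bounded by a Gaussian and \(y_0>1\) keeps the denominator away from zero.
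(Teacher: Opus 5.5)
Your proposal is correct and takes essentially the same approach as the paper: invoke the bound of \cref{somma-lchs} with $R\to\infty$, compute the modulus of $\hat f_{a,b}(k-iy_0;c,d)$ factor by factor, and integrate against the outer factor $1/\sqrt{2\pi}$. The paper leaves the analyticity, decay, and residue hypotheses of \cref{somma-lchs} implicit; your verification of them, including the residue $i/\sqrt{2\pi}$ at $z=-i$, is correct and a reasonable completeness check.
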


The proof of \cref{approx-simplify-integ} can be found in \cref{proof-of-approx-simplify-integ}. The integral in \cref{approx-simplify-integ} has an exact representation in terms of the confluent hypergeometric function, but that form is less transparent. We instead use the bound \(((b+y_0)^2+k^2)^{(a-1)/2}\geq (b+y_0)^{a-1}\) to obtain the simpler upper bound
\begin{equation}\label{bessel-integ}
E_{\mathrm{approx}}(y_0) \le \frac{(b+1)^{a-1}}{(b+y_0)^{a-1}}\frac{e^{\,d(1-y_0)+\frac{y_0^2-1}{4c^2}}}{2\pi}\int_{\mathbb{R}}\frac{e^{-k^2/(4c^2)}}{\sqrt{(y_0-1)^2+k^2}}\,dk.
\end{equation}
This bound is explicit and retains the essential dependence on \(y_0\) and \(a,b,c,d\). \cref{bessel-lemma}, which is proved in \cref{proof-of-bessel-lemma}, evaluates this integral form.

\begin{lem}\label{bessel-lemma}
For \(s,\sigma>0\), we have
\begin{equation}\label{bessel-eq}
\int_{\mathbb{R}}\frac{e^{-k^2/(4\sigma^2)}}{\sqrt{k^2+s^2}}\,dk = e^{s^2/(8\sigma^2)}K_0\left(\frac{s^2}{8\sigma^2}\right),
\end{equation}
where \(K_0\) denotes the modified Bessel function of the second kind.
\end{lem}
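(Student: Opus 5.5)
The plan is to reduce the integral to the standard integral representation
\begin{equation*}
K_0(z)=\int_0^\infty e^{-z\cosh u}\,du,\qquad z>0,
\end{equation*}
using a hyperbolic substitution that removes the square root in the denominator. Throughout we write \(z:=s^2/(8\sigma^2)>0\).

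\textbf{Step 1 (hyperbolic substitution).} Since \(s>0\), substitute \(k=s\sinh t\) with \(t\in\mathbb{R}\). This is a smooth increasing bijection of \(\mathbb{R}\), with \(dk=s\cosh t\,dt\) and \(\sqrt{k^2+s^2}=s\cosh t\). The Jacobian therefore cancels the denominator exactly, and the integral becomes
\begin{equation*}
\int_{\mathbb{R}} \exp\!\left(-\frac{s^2\sinh^2 t}{4\sigma^2}\right)dt .
\end{equation*}

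\textbf{Step 2 (double-angle identity).} Use \(\sinh^2 t=(\cosh 2t-1)/2\). The exponent then equals \(-z\cosh 2t + z\), so the constant factor \(e^{z}=e^{s^2/(8\sigma^2)}\) comes out of the integral. What remains is \(e^{z}\int_{\mathbb{R}} e^{-z\cosh 2t}\,dt\).

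\textbf{Step 3 (match the Bessel representation).} The integrand is even in \(t\), so \(\int_{\mathbb{R}} e^{-z\cosh 2t}\,dt = 2\int_0^\infty e^{-z\cosh 2t}\,dt\). Substituting \(u=2t\) gives \(dt=du/2\), which cancels the factor \(2\), leaving \(\int_0^\infty e^{-z\cosh u}\,du=K_0(z)\). Hence the integral equals \(e^{s^2/(8\sigma^2)}K_0\!\left(s^2/(8\sigma^2)\right)\), which is \cref{bessel-eq}.

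\textbf{Convergence and the main obstacle.} All integrands are positive and, for \(z>0\), decay like \(e^{-z e^{|u|}/2}\), so every integral is absolutely convergent and the changes of variables are justified by Tonelli. The only nontrivial input is the integral representation of \(K_0\), which I would cite as a standard fact (e.g.\ from DLMF). No step is a genuine obstacle; the one place needing care is the factor-of-two bookkeeping in Step 3.
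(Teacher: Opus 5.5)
Your proposal is correct and follows essentially the same argument as the paper: the substitution \(k=s\sinh t\) to cancel the square root, the identity \(\sinh^2 t=(\cosh 2t-1)/2\) to extract \(e^{s^2/(8\sigma^2)}\), and then \(u=2t\) with evenness to arrive at \(K_0(x)=\int_0^\infty e^{-x\cosh u}\,du\). Your factor-of-two bookkeeping and convergence remarks are accurate.
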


Using \cref{bessel-lemma} with \(s=y_0-1\), \(\sigma=c\) and \(K_0(x)\leq \sqrt{\frac{\pi}{2x}}e^{-x}\) for \(x>0\), we obtain
\begin{equation}\label{surrogate}
E_{\mathrm{approx}}(y_0) \le \frac{c}{\sqrt{\pi}}\frac{(b+1)^{a-1}}{(y_0-1)(b+y_0)^{a-1}}e^{d(1-y_0)+\frac{y_0^2-1}{4c^2}}.
\end{equation}
\cref{surrogate} motivates minimizing the right-hand side and identifying kernel profiles for which this bound is at most \(\epsilon_{\operatorname{approx}}\). In \cref{minimize-surrogate-approx}, which is proved in \cref{proof-of-minimize-surrogate-approx} , we first study existence and uniqueness of minimizers with respect to \(y_0\).

\begin{lem}\label{minimize-surrogate-approx}
Fix a kernel profile \(\vec{\theta} = (a,b,c,d)\). Define the surrogate cost functional
\begin{equation}
B(y_0) := \frac{(b+1)^{a-1}}{(y_0-1)(b+y_0)^{a-1}}e^{d(1-y_0)+\frac{y_0^2-1}{4c^2}}, \quad y_0 > 1.
\end{equation}
Then $\frac{d}{dy_0}\log B(y_0) = -d+\frac{y_0}{2c^2}-\frac{1}{y_0-1}-\frac{a-1}{b+y_0}$. 
Consequently, any stationary point \(y_0^*\) of \(\log B\) satisfies
\begin{equation}\label{surrogate-stat-eq}
\frac{y_0^*}{2c^2}-d = \frac{1}{y_0^*-1}+\frac{a-1}{b+y_0^*}.
\end{equation}
Moreover, we have $\frac{d^2}{dy_0^2}\log B(y_0) = \frac{1}{2c^2}+\frac{1}{(y_0-1)^2}+\frac{a-1}{(b+y_0)^2}>0$. 
Therefore, \(\log B\) is strictly convex on \((1,\infty)\) and has at most one stationary point. Any such stationary point is the unique global minimizer of \(B\). Moreover, \(B\) admits a global minimizer.
\end{lem}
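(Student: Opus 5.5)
The plan is a direct calculus argument on \(\log B\), which is well defined on \((1,\infty)\) because every factor of \(B\) is positive there (note \(b>0\) and \(y_0>1\), so \(b+y_0>0\)).

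\emph{Step 1: first derivative.} Expand
\[
\log B(y_0) = (a-1)\log(b+1) - \log(y_0-1) - (a-1)\log(b+y_0) + d(1-y_0) + \frac{y_0^2-1}{4c^2}.
\]
Differentiating term by term gives the stated formula for \(\frac{d}{dy_0}\log B\). Setting it to zero and rearranging yields \cref{surrogate-stat-eq}.

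\emph{Step 2: convexity and uniqueness.} Differentiating once more gives
\[
\frac{d^2}{dy_0^2}\log B(y_0) = \frac{1}{2c^2}+\frac{1}{(y_0-1)^2}+\frac{a-1}{(b+y_0)^2}.
\]
This is strictly positive, since \(a\ge 1\) (as \(a\in\mathbb N\)) and \(c>0\). Hence \(\log B\) is strictly convex on \((1,\infty)\), and its derivative is strictly increasing, so it has at most one zero. A stationary point of a convex differentiable function on an open interval is a global minimizer of that function. Since \(\log\) is increasing, the same point is the unique global minimizer of \(B\).

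\emph{Step 3: existence, which is the only step needing care.} The plan is to show that \(g:=(\log B)'\) changes sign and then invoke the intermediate value theorem; continuity of \(g\) on \((1,\infty)\) is clear.
\begin{itemize}
\item As \(y_0\to 1^+\), the term \(-\frac{1}{y_0-1}\to-\infty\). The remaining terms stay bounded, namely \(-d+\frac{1}{2c^2}-\frac{a-1}{b+1}\) in the limit. Hence \(g\to-\infty\).
\item As \(y_0\to\infty\), the term \(\frac{y_0}{2c^2}\to\infty\) while the other terms stay bounded. Hence \(g\to+\infty\).
\end{itemize}
Therefore \(g\) has a zero \(y_0^*\), which by Step 2 is the unique global minimizer.

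Equivalently, one can argue directly on \(B\): it is continuous, \(B\to\infty\) at both endpoints because of the \(1/(y_0-1)\) factor and the \(e^{y_0^2/(4c^2)}\) growth, so \(B\) attains a minimum on some compact subinterval. No further obstacles are expected; the argument only relies on positivity of the parameters \(c>0\), \(b>0\), and \(a\ge1\).
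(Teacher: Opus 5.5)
Your proposal is correct and follows the paper's proof essentially step for step. Like the paper, you expand $\log B$ and differentiate twice to get strict convexity. You then obtain existence from the intermediate value theorem applied to $(\log B)'$, using its limits $-\infty$ as $y_0\to 1^+$ and $+\infty$ as $y_0\to\infty$; your alternative coercivity argument directly on $B$ is also valid but not needed.
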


\cref{minimize-surrogate-approx} implies that, for a fixed kernel profile \(\vec{\theta}=(a,b,c,d)\), the surrogate cost \(B(y_0)\) admits a unique minimizer \(y_0^\ast\). If \(\epsilon_{\operatorname{approx}}>0\) denotes the approximation error budget, then
\begin{equation}\label{suff-from-surr}
\text{there exists } y_0 \in (1,\infty) \text{ such that } \frac{c}{\sqrt{\pi}}B(y_0)\le \epsilon_{\operatorname{approx}} \iff \frac{c}{\sqrt{\pi}}B(y_0^\ast)\le \epsilon_{\operatorname{approx}}.
\end{equation}
\cref{suff-from-surr} gives a sufficient condition for \(E_{\operatorname{approx}}(y_0)\leq \epsilon_{\operatorname{approx}}\). It remains to determine when \(y_0^\ast\) satisfies \(\frac{c}{\sqrt{\pi}}B(y_0^\ast)\leq \epsilon_{\operatorname{approx}}\). \cref{feasible-kernel-approx} gives a condition in terms of the parameter \(d\).

\begin{prop}\label{feasible-kernel-approx}
Let \(\epsilon_{\operatorname{approx}} > 0\). For \(s > 1\), define
\begin{align}
\Phi(s) &:= \frac{s}{2c^2}-\frac{1}{s-1}-\frac{a-1}{b+s}, \\
\Psi(s) &:= \frac{c}{\sqrt{\pi}(s-1)}\left(\frac{b+1}{b+s}\right)^{a-1}\exp\left(1-\frac{(s-1)^2}{4c^2}+\frac{(a-1)(s-1)}{b+s}\right).
\end{align}
\(\Phi\) is strictly increasing on \((1,\infty)\), \(\Psi\) is strictly decreasing on \((1,\infty)\), and there exists a unique \(s_\epsilon\in(1,\infty)\) such that \(\Psi(s_\epsilon)=\epsilon_{\operatorname{approx}}\). Moreover, for a fixed kernel profile \(\vec{\theta}=(a,b,c,d)\), let \(y_0^\ast\) be the unique minimizer of \(B\) from \cref{minimize-surrogate-approx}. Then
\begin{equation}
\frac{c}{\sqrt{\pi}}B(y_0^\ast)\le \epsilon_{\operatorname{approx}} \iff d \ge \Phi(s_\epsilon).
\end{equation}
\end{prop}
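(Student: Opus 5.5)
The plan is to use the stationarity equation from \cref{minimize-surrogate-approx} to write the minimal surrogate value as a function of the minimizer alone. Comparing \cref{surrogate-stat-eq} with the definition of \(\Phi\), a point \(y_0^\ast\in(1,\infty)\) is stationary for \(\log B\) exactly when \(\Phi(y_0^\ast)=d\). So \(y_0^\ast=\Phi^{-1}(d)\), and the condition on \(B(y_0^\ast)\) becomes a condition on \(\Phi^{-1}(d)\). We then reduce to monotonicity of \(\Phi\) and \(\Psi\).

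\textbf{Step 1 (monotonicity and range of \(\Phi\)).} We have
\[
\Phi'(s)=\frac{1}{2c^2}+\frac{1}{(s-1)^2}+\frac{a-1}{(b+s)^2}>0,
\]
using \(a\in\mathbb N\), so \(a-1\ge 0\). Moreover \(\Phi(s)\to-\infty\) as \(s\to1^+\) and \(\Phi(s)\to+\infty\) as \(s\to\infty\). Hence \(\Phi:(1,\infty)\to\mathbb R\) is a strictly increasing bijection. This gives, for every \(d\), exactly one stationary point \(y_0^\ast=\Phi^{-1}(d)\). By the convexity statement in \cref{minimize-surrogate-approx}, it is the global minimizer. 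This is consistent with \(B\to\infty\) at both ends of \((1,\infty)\): the factor \(1/(y_0-1)\) blows up at \(1\), and the Gaussian factor \(e^{y_0^2/(4c^2)}\) dominates at \(\infty\).

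\textbf{Step 2 (evaluate \(B\) at the minimizer).} Set \(s=y_0^\ast\) and substitute \(d=\Phi(s)\) into the exponent \(d(1-s)+(s^2-1)/(4c^2)\). A short computation gives
\[
-\frac{s(s-1)}{2c^2}+1+\frac{(a-1)(s-1)}{b+s}+\frac{(s-1)(s+1)}{4c^2}
=1-\frac{(s-1)^2}{4c^2}+\frac{(a-1)(s-1)}{b+s}.
\]
Hence \(\frac{c}{\sqrt\pi}B(y_0^\ast)=\Psi(y_0^\ast)\). This identity is the crux of the argument; the remaining steps are monotonicity bookkeeping.

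\textbf{Step 3 (\(\Psi\) decreasing, and existence of \(s_\epsilon\)).} Differentiating \(\log\Psi\), the two \((a-1)\)-terms combine as
\[
-\frac{a-1}{b+s}+\frac{(a-1)(b+1)}{(b+s)^2}=-\frac{(a-1)(s-1)}{(b+s)^2}.
\]
Therefore
\[
(\log\Psi)'(s)=-\frac{1}{s-1}-\frac{s-1}{2c^2}-\frac{(a-1)(s-1)}{(b+s)^2}<0,
\]
so \(\Psi\) is strictly decreasing. For the range: \(\Psi(s)\to\infty\) as \(s\to1^+\). As \(s\to\infty\), \(\Psi(s)\to0\), because the Gaussian factor decays while the remaining factors stay bounded, since \((a-1)(s-1)/(b+s)\to a-1\). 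By continuity and strict monotonicity, there is a unique \(s_\epsilon\) with \(\Psi(s_\epsilon)=\epsilon_{\operatorname{approx}}\).

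\textbf{Step 4 (conclusion).} We chain the equivalences
\[
\frac{c}{\sqrt\pi}B(y_0^\ast)\le\epsilon_{\operatorname{approx}}
\iff \Psi(y_0^\ast)\le\Psi(s_\epsilon)
\iff y_0^\ast\ge s_\epsilon
\iff d=\Phi(y_0^\ast)\ge\Phi(s_\epsilon).
\]
The first equivalence is Step 2, the second uses that \(\Psi\) is strictly decreasing, and the third uses that \(\Phi\) is strictly increasing. The only real care needed is the algebra in Step 2 and the sign of the combined \((a-1)\)-term in Step 3.
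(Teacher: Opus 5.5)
Your proposal is correct and follows the paper's proof essentially step for step. Like the paper, you identify the minimizer through $d=\Phi(y_0^\ast)$, substitute to obtain $\frac{c}{\sqrt{\pi}}B(y_0^\ast)=\Psi(y_0^\ast)$, establish the monotonicity and limits of $\Phi$ and $\Psi$, and chain the equivalences; your explicit checks of the exponent identity and of the combined $(a-1)$-term spell out algebra the paper leaves implicit.
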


\begin{proof}
By \cref{minimize-surrogate-approx}, \(B\) admits a unique minimizer \(y_0^\ast\in(1,\infty)\) satisfying \(d=\Phi(y_0^\ast)\). From \cref{inc-phi}, \(\Phi'(s)>0\), so \(\Phi\) is strictly increasing on \((1,\infty)\). Moreover,
\begin{equation}
\Phi(s)\to -\infty \quad \text{as } s\to 1^+, \quad \Phi(s)\to +\infty \quad \text{as } s\to \infty.
\end{equation}
Hence, \(\Phi\) is a bijection from \((1,\infty)\) onto \(\mathbb{R}\). Using \(d=\Phi(y_0^\ast)\) in \(\frac{c}{\sqrt{\pi}}B(y_0^\ast)\) gives \(\frac{c}{\sqrt{\pi}}B(y_0^\ast)=\Psi(y_0^\ast)\). We now show that \(\Psi\) is strictly decreasing. Differentiating \(\log\Psi\) gives
\begin{equation}
\frac{d}{ds}\log \Psi(s) = -\frac{1}{s-1}-\frac{s-1}{2c^2}-\frac{(a-1)(s-1)}{(b+s)^2}<0.
\end{equation}
Thus \(\Psi\) is strictly decreasing on \((1,\infty)\). Moreover, we have
\begin{equation}
\Psi(s)\to +\infty \quad \text{as } s\to 1^+, \quad \Psi(s)\to 0 \quad \text{as } s\to \infty.
\end{equation}
Hence, there exists a unique \(s_\epsilon\in(1,\infty)\) such that \(\Psi(s_\epsilon)=\epsilon_{\operatorname{approx}}\). Finally, since \(\Psi\) is strictly decreasing and \(d=\Phi(y_0^\ast)\), with \(\Phi\) strictly increasing, we obtain
\begin{align}
\frac{c}{\sqrt{\pi}}B(y_0^\ast)\le \epsilon_{\operatorname{approx}} &\iff \Psi(y_0^\ast)\le \epsilon_{\operatorname{approx}} \iff y_0^\ast \ge s_\epsilon \iff d=\Phi(y_0^\ast)\ge \Phi(s_\epsilon).
\end{align}
This completes the proof.
\end{proof}

\cref{feasible-kernel-approx} shows that taking \(d\) above an explicit lower bound ensures \(E_{\mathrm{approx}}(y_0)\leq \epsilon_{\mathrm{approx}}\). The threshold is determined by the unique solution of \(\Psi(s)=\epsilon_{\mathrm{approx}}\), which can be computed numerically. \cref{feasible-kernel-profile-explicit}, which is proved in \cref{proof-of-feasible-kernel-profile-explicit}, gives explicit sufficient conditions on $c$ and \(d\) and determines a corresponding choice of \(y_0\).

\begin{cor}\label{feasible-kernel-profile-explicit}
Let \(\epsilon_{\operatorname{approx}} \in (0,1)\) and let \(\vec{\theta}=(a,b,c,d)\) be a kernel profile. Then the following statements hold.

\begin{enumerate}
\item   Define \(\omega_\epsilon := W(e^{2a}/(2\pi\epsilon_{\operatorname{approx}}^2))\) and \(\widetilde x_0 := 1+c\sqrt{2\omega_\epsilon}\), where \(W\) denotes the Lambert \(W\)-function. If
\begin{equation}\label{eq:lambertW-d-threshold}
d \ge \frac{\widetilde x_0}{2c^2}-\frac{1}{\widetilde x_0-1}-\frac{a-1}{b+\widetilde x_0},
\end{equation}
then there exists \(y_0>1\) such that \(\frac{c}{\sqrt{\pi}}B(y_0)\le \epsilon_{\operatorname{approx}}\). Consequently, \(E_{\operatorname{approx}}(y_0)\le \epsilon_{\operatorname{approx}}\).

\item   Define \(\overline x_0 := 1+2c\sqrt{a+\log(1/\epsilon_{\operatorname{approx}})}\). If
\begin{equation}\label{eq:elementary-d-threshold}
d \ge \frac{\overline x_0}{2c^2}-\frac{1}{\overline x_0-1}-\frac{a-1}{b+\overline x_0},
\end{equation}
then there exists \(y_0>1\) such that \(\frac{c}{\sqrt{\pi}}B(y_0)\le \epsilon_{\operatorname{approx}}\). Consequently, \(E_{\operatorname{approx}}(y_0)\le \epsilon_{\operatorname{approx}}\).

\item The sufficient lower bound on \(d\) in \cref{eq:elementary-d-threshold} is obtained by evaluating \(\Phi(s)\) at \(\overline x_0\). In particular,
\begin{equation}
\Phi(\overline x_0) = \frac{1}{2c^2}+\frac{1}{c}\sqrt{a+\log(1/\epsilon_{\operatorname{approx}})}-\frac{1}{\overline x_0-1}-\frac{a-1}{b+\overline x_0}.
\end{equation}
For fixed \(b\) and \(c\), as either \(a\to\infty\) or \(\epsilon_{\operatorname{approx}}\to 0\), or both, we have
\begin{equation}
\Phi(\overline x_0) = \frac{1}{2c^2}+\mathcal O\left(\frac{1}{c}\sqrt{a+\log(1/\epsilon_{\operatorname{approx}})}\right).
\end{equation}
Therefore, a sufficient lower bound on \(d\) grows at most on the order of \(\frac{1}{2c^2}+\frac{1}{c}\sqrt{a+\log(1/\epsilon_{\operatorname{approx}})}\).
\end{enumerate}
\end{cor}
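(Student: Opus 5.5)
The plan is to reduce everything to \cref{feasible-kernel-approx}. Since \(\Psi\) is strictly decreasing and \(\Psi(s_\epsilon)=\epsilon_{\operatorname{approx}}\), any \(x>1\) with \(\Psi(x)\le\epsilon_{\operatorname{approx}}\) satisfies \(x\ge s_\epsilon\). Because \(\Phi\) is strictly increasing, this gives \(\Phi(s_\epsilon)\le\Phi(x)\). Hence a hypothesis of the form \(d\ge\Phi(x)\) implies \(d\ge\Phi(s_\epsilon)\). \cref{feasible-kernel-approx} then gives \(\frac{c}{\sqrt\pi}B(y_0^\ast)\le\epsilon_{\operatorname{approx}}\), and \cref{surrogate} yields \(E_{\operatorname{approx}}(y_0^\ast)\le\epsilon_{\operatorname{approx}}\). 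The right-hand sides of \cref{eq:lambertW-d-threshold} and \cref{eq:elementary-d-threshold} are exactly \(\Phi(\widetilde x_0)\) and \(\Phi(\overline x_0)\). So parts 1 and 2 reduce to verifying \(\Psi(\widetilde x_0)\le\epsilon_{\operatorname{approx}}\) and \(\Psi(\overline x_0)\le\epsilon_{\operatorname{approx}}\).

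\textbf{Envelope for \(\Psi\).} For \(s>1\) we have \(\left(\frac{b+1}{b+s}\right)^{a-1}\le 1\) and \(\frac{(a-1)(s-1)}{b+s}\le a-1\). Substituting \(s-1=cu\) with \(u>0\) then gives
\[
\Psi(1+cu)\le \frac{1}{\sqrt{\pi}\,u}\exp\left(a-\frac{u^2}{4}\right).
\]

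\textbf{Part 1.} Take \(u=\sqrt{2\omega_\epsilon}\). Squaring the envelope gives
\[
\Psi(\widetilde x_0)^2\le \frac{e^{2a}}{2\pi\,\omega_\epsilon e^{\omega_\epsilon}}.
\]
By the defining identity of the Lambert function, \(\omega_\epsilon e^{\omega_\epsilon}=e^{2a}/(2\pi\epsilon_{\operatorname{approx}}^2)\), so the right-hand side equals \(\epsilon_{\operatorname{approx}}^2\).

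\textbf{Part 2.} Take \(u=2\sqrt{a+\log(1/\epsilon_{\operatorname{approx}})}\), so that \(u^2/4=a+\log(1/\epsilon_{\operatorname{approx}})\). The envelope gives
\[
\Psi(\overline x_0)\le \frac{\epsilon_{\operatorname{approx}}}{2\sqrt{\pi}\sqrt{a+\log(1/\epsilon_{\operatorname{approx}})}}.
\]
This is at most \(\epsilon_{\operatorname{approx}}\) because \(a\ge1\) and \(\epsilon_{\operatorname{approx}}<1\), so the factor \(2\sqrt{\pi}\sqrt{a+\log(1/\epsilon_{\operatorname{approx}})}\) exceeds \(1\).

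\textbf{Part 3.} This is direct substitution. Since \(\overline x_0/(2c^2)=\frac{1}{2c^2}+\frac1c\sqrt{a+\log(1/\epsilon_{\operatorname{approx}})}\), the displayed formula for \(\Phi(\overline x_0)\) follows. For the asymptotic statement, the two subtracted terms are nonnegative, so \(\Phi(\overline x_0)\) is at most the first two terms. For a two-sided \(\mathcal O\) statement, I would bound their sizes separately:
\[
\frac{1}{\overline x_0-1}=\frac{1}{2c\sqrt{a+\log(1/\epsilon_{\operatorname{approx}})}},\qquad \frac{a-1}{b+\overline x_0}\le\frac{a}{2c\sqrt{a+\log(1/\epsilon_{\operatorname{approx}})}}\le \frac{1}{2c}\sqrt{a+\log(1/\epsilon_{\operatorname{approx}})}.
\]
Both are therefore \(\mathcal O\!\left(\frac1c\sqrt{a+\log(1/\epsilon_{\operatorname{approx}})}\right)\) for fixed \(b,c\).

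The only step needing care is the Part 1 algebra with the Lambert-\(W\) identity, since the constants \(2\pi\) and \(e^{2a}\) must match exactly after squaring. The rest is routine monotonicity bookkeeping.
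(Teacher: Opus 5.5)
Your proof is correct and follows the paper's argument: the same envelope $\Psi(s)\le \frac{c}{\sqrt{\pi}(s-1)}\exp\bigl(a-\frac{(s-1)^2}{4c^2}\bigr)$, the same Lambert-$W$ computation showing that this envelope equals $\epsilon_{\operatorname{approx}}$ at $\widetilde x_0$, and the same monotonicity reduction to \cref{feasible-kernel-approx}. The only difference is in Part 2: the paper shows $\widetilde x_0\le\overline x_0$ via $W(y)\le\log y$ for $y\ge e$ and then invokes Part 1, whereas you evaluate the envelope directly at $\overline x_0$. Your route avoids the Lambert-$W$ inequality and exhibits the slack factor $1/\bigl(2\sqrt{\pi}\sqrt{a+\log(1/\epsilon_{\operatorname{approx}})}\bigr)$, while the paper's route makes explicit that the Part 1 threshold on $d$ is the weaker of the two.
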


\subsection{Truncation Error}\label{truncation-error}
We now compute the truncation error in \cref{trunc-error}.

\begin{lem}\label{trunc-error}
Fix a kernel profile \(\vec\theta=(a,b,c,d)\) and \(R>0\). We have
\begin{align}
E_{\mathrm{trunc}}(R) \leq \frac{c}{\sqrt{\pi}}\left(\frac{b+1}{b}\right)^{a-1}e^{d-\frac{1}{4c^2}}\operatorname{erfc}\left(\frac{R}{2c}\right).
\label{trunc-comp}
\end{align}
\end{lem}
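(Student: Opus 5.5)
By \cref{low-somma-thm-1} and the unitarity of \(U_k(T)\), we have \(E_{\mathrm{trunc}}(R)\le \frac{1}{\sqrt{2\pi}}\int_{\mathbb R\setminus[-R,R]}|\hat f(k)|\,dk\). The plan is to bound the modulus of the kernel \cref{low-somma-kernel} pointwise on the real axis by a Gaussian, and then integrate the tails.

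\textbf{Modulus on the real axis.} For real \(k\):
\begin{itemize}
\item \(|e^{d(1-ik)}|=e^d\), since \(d\in\mathbb R\);
\item \(|e^{-(k^2+1)/(4c^2)}|=e^{-1/(4c^2)}e^{-k^2/(4c^2)}\);
\item \(|1-ik|=\sqrt{1+k^2}\ge 1\);
\item \(|b+ik|^{a-1}=(b^2+k^2)^{(a-1)/2}\ge b^{a-1}\), using \(a\ge 1\) and \(b>0\).
\end{itemize}
Combining these gives
\begin{equation}
|\hat f(k)|\le \frac{1}{\sqrt{2\pi}}\left(\frac{b+1}{b}\right)^{a-1}e^{d-\frac{1}{4c^2}}e^{-k^2/(4c^2)} .
\end{equation}

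\textbf{Tail integral.} By symmetry, and with the substitution \(k=2cu\),
\begin{equation}
\int_{|k|>R}e^{-k^2/(4c^2)}\,dk=2\int_R^\infty e^{-k^2/(4c^2)}\,dk=4c\int_{R/(2c)}^\infty e^{-u^2}\,du=2\sqrt{\pi}\,c\,\operatorname{erfc}\!\left(\frac{R}{2c}\right).
\end{equation}

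\textbf{Assembling the bound.} Multiplying the pointwise bound by the tail integral and the prefactor \(\frac{1}{\sqrt{2\pi}}\), the constants combine as \(\frac{1}{2\pi}\cdot 2\sqrt{\pi}\,c=\frac{c}{\sqrt{\pi}}\). This yields exactly \cref{trunc-comp}.

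No step is a genuine obstacle. The only point needing care is the lower bound \(|b+ik|^{a-1}\ge b^{a-1}\). It is a lossy choice, made deliberately so that the remaining integral is a pure Gaussian tail. A sharper bound, keeping \((b^2+R^2)^{(a-1)/2}\) on the tail region, is possible, but it is not needed for the stated result.
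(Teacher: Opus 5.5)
Your proposal is correct and follows the paper's proof essentially line for line. You bound the tail by $\frac{1}{\sqrt{2\pi}}\int_{|k|>R}|\hat f(k)|\,dk$, drop $\sqrt{1+k^2}\ge 1$ and $(b^2+k^2)^{(a-1)/2}\ge b^{a-1}$, and evaluate the Gaussian tail as $2c\sqrt{\pi}\,\operatorname{erfc}(R/(2c))$; your constants combine correctly to the stated bound.
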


The proof of \cref{trunc-error} can be founnd in \cref{proof-of-trunc-error}. The bound in \cref{trunc-error} shows that \(E_{\mathrm{trunc}}(R)\to 0\) as \(R\to\infty\). This motivates considering the set of radii for which the truncation error is below a fixed tolerance \(\epsilon'>0\)
\begin{equation}\label{g-def}
\mathcal H_{\epsilon'}:=\{R>0:E_{\mathrm{trunc}}(R)<\epsilon'\}.
\end{equation}
\cref{radii-lower-bd} gives a computable lower bound ensuring \(R\in\mathcal H_{\epsilon'}\).

\begin{cor}\label{radii-lower-bd}
Let \(\epsilon'>0\). The set \(\mathcal H_{\epsilon'}\) is non-empty for every \(\epsilon'>0\). For a kernel profile \(\vec{\theta}=(a,b,c,d)\), define
\begin{equation}
\vartheta(\vec\theta) := \frac{\sqrt{\pi}}{c}\left(\frac{b}{b+1}\right)^{a-1}\frac{\epsilon'}{e^{d-\frac{1}{4c^2}}}.
\end{equation}
If \(\vartheta(\vec\theta)<1\), then every \(R>2c\,\operatorname{erfc}^{-1}(\vartheta(\vec\theta))\) is in \(\mathcal H_{\epsilon'}\). If \(\vartheta(\vec\theta)\ge 1\), then every \(R>0\) is in \(\mathcal H_{\epsilon'}\).
\end{cor}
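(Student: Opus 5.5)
The corollary follows directly from \cref{trunc-error} together with the monotonicity of the complementary error function. Write
\begin{equation}
C(\vec\theta):=\frac{c}{\sqrt{\pi}}\left(\frac{b+1}{b}\right)^{a-1}e^{d-\frac{1}{4c^2}}>0,
\end{equation}
so that \cref{trunc-comp} reads \(E_{\mathrm{trunc}}(R)\le C(\vec\theta)\operatorname{erfc}(R/(2c))\). Note that \(\vartheta(\vec\theta)=\epsilon'/C(\vec\theta)\). Hence a sufficient condition for \(R\in\mathcal H_{\epsilon'}\) is the strict inequality \(\operatorname{erfc}(R/(2c))<\vartheta(\vec\theta)\). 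Strictness is needed because \(\mathcal H_{\epsilon'}\) is defined with a strict inequality.

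We use the standard facts that \(\operatorname{erfc}\) is continuous and strictly decreasing on \([0,\infty)\), with \(\operatorname{erfc}(0)=1\) and \(\operatorname{erfc}(x)\to0\) as \(x\to\infty\). Its derivative is \(-\tfrac{2}{\sqrt\pi}e^{-x^2}<0\). Thus \(\operatorname{erfc}\) is a bijection from \((0,\infty)\) onto \((0,1)\), and \(\operatorname{erfc}^{-1}\) is well defined on \((0,1)\).

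\textbf{Case \(\vartheta(\vec\theta)<1\).} Since \(\vartheta>0\), the quantity \(x^\ast:=\operatorname{erfc}^{-1}(\vartheta(\vec\theta))>0\) exists. If \(R>2cx^\ast\), then \(R/(2c)>x^\ast\). Strict monotonicity gives \(\operatorname{erfc}(R/(2c))<\vartheta(\vec\theta)\), and therefore \(E_{\mathrm{trunc}}(R)<\epsilon'\).

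\textbf{Case \(\vartheta(\vec\theta)\ge1\).} For every \(R>0\), we have \(\operatorname{erfc}(R/(2c))<\operatorname{erfc}(0)=1\le\vartheta(\vec\theta)\). Again \(E_{\mathrm{trunc}}(R)<\epsilon'\).

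\textbf{Non-emptiness.} In either case the set of admissible radii exhibited is nonempty, so \(\mathcal H_{\epsilon'}\neq\varnothing\) for every \(\epsilon'>0\).

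There is no real obstacle here. The only points requiring care are the strict-versus-non-strict inequality and the requirement that \(\vartheta>0\), so that \(\operatorname{erfc}^{-1}\) is defined. Both are handled by the strict monotonicity of \(\operatorname{erfc}\) and the positivity of \(C(\vec\theta)\).
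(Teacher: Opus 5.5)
Your proposal is correct and follows essentially the same route as the paper: rewrite the bound of \cref{trunc-error} as \(E_{\mathrm{trunc}}(R)\le (\epsilon'/\vartheta(\vec\theta))\operatorname{erfc}(R/(2c))\), then use the strict monotonicity of \(\operatorname{erfc}\) on \([0,\infty)\) in the two cases \(\vartheta(\vec\theta)<1\) and \(\vartheta(\vec\theta)\ge 1\). The only cosmetic difference is that you obtain \(\mathcal H_{\epsilon'}\neq\varnothing\) from the radii exhibited in the two cases, whereas the paper obtains it from \(E_{\mathrm{trunc}}(R)\to 0\) as \(R\to\infty\); both are valid.
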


\begin{proof}
\(\mathcal H_{\epsilon'}\neq\emptyset\) since \(\lim_{R\to\infty}E_{\mathrm{trunc}}(R)=0\) by \cref{trunc-error}. If \(\vartheta(\vec{\theta})<1\), then \cref{trunc-error} shows that \(E_{\mathrm{trunc}}(R)<\epsilon'\) is guaranteed whenever \(\operatorname{erfc}(R/(2c))<\vartheta(\vec\theta)\). Since \(\operatorname{erfc}\) is strictly decreasing on \([0,\infty)\), this is equivalent to \(R>2c\,\operatorname{erfc}^{-1}(\vartheta(\vec\theta))\). If \(\vartheta(\vec{\theta})\ge 1\), then \(\operatorname{erfc}(R/(2c))<1\leq \vartheta(\vec\theta)\), and the same estimate gives \(E_{\mathrm{trunc}}(R)<\epsilon'\) for all \(R>0\).
\end{proof}

\subsection{Inner Simulation Error}\label{inner-sim-subsub}
We next compute the inner simulation error. For \(k\in\mathbb{R}\), define \(G_k:=H+kL\). Since \(G_k\) consists of \(H\) and \(L\), each nested commutator is formed by choosing either \(H\) or \(L\) at each position, motivating the following definition.

\begin{defn}\label{comm-decouple}
For integers \( j \ge 2 \) and \( 0 \le \ell \le j \), define
\begin{equation}
\mathcal
C_{j,\ell}(H,L) := \sum_{\substack{M \in \{H,L\}^{j}\\ N_L(M)=\ell}} \|[M_1,M_2,\dots,M_j] \|,
\label{eq:def_Cql}
\end{equation}
where \( N_L(M) \) denotes the number of occurrences of \( L \) in the word \( M=(M_1,\dots,M_j) \in \{H,L\}^{j} \).
\end{defn}

\cref{lem:polynomial_dependence_k} formalizes the fact that the nested commutators of \(G_k\) depend polynomially on \(k\). 

\begin{lem}
\label{lem:polynomial_dependence_k}
Fix \( j \ge 2 \).  The \(j\)-fold nested commutator profile associated with the decomposition \(G_k=H+kL\) satisfies
\begin{equation}\label{eq:polynomial_dependence_bound}
\alpha_{\mathrm{comm},j}(G_k) := \sum_{M_1,\ldots,M_j \in \{H,kL\}} \|[M_1,\ldots,M_j]\| = \sum_{\ell=0}^{j} |k|^{\ell}\mathcal{C}_{j,\ell}(H,L).
\end{equation}
In particular, $\alpha_{\mathrm{comm},j}(G_k)$ is a polynomial in \(|k|\) of degree at most \(j\), with coefficients determined by nested commutators of \(H\) and \(L\).
\end{lem}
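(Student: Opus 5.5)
The proof is a direct expansion using multilinearity of the nested commutator. First I would note that $[M_1,\dots,M_j] = [M_1,[M_2,\dots,[M_{j-1},M_j]\cdots]]$ is linear in each argument separately, since each bracket is bilinear. Every word $(M_1,\dots,M_j)$ with $M_\kappa\in\{H,kL\}$ corresponds bijectively to a word $M'=(M'_1,\dots,M'_j)\in\{H,L\}^j$: replace each occurrence of $kL$ by $L$. If $N_L(M')=\ell$, pulling the scalar $k$ out of each of the $\ell$ slots containing $kL$ gives
\begin{equation}
[M_1,\dots,M_j] = k^{\ell}\,[M'_1,\dots,M'_j],
\end{equation}
and hence, by homogeneity of the spectral norm, $\|[M_1,\dots,M_j]\| = |k|^{\ell}\,\|[M'_1,\dots,M'_j]\|$.

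Next I would reindex the sum defining $\alpha_{\mathrm{comm},j}(G_k)$ through this bijection. I would partition the $2^j$ words $M'\in\{H,L\}^j$ according to $\ell=N_L(M')\in\{0,\dots,j\}$, and factor out $|k|^\ell$ within each class. The inner sum over words with exactly $\ell$ copies of $L$ is precisely $\mathcal C_{j,\ell}(H,L)$ from \cref{comm-decouple}, which yields \cref{eq:polynomial_dependence_bound}. The polynomial claim then follows immediately. The coefficients $\mathcal C_{j,\ell}(H,L)\ge 0$ do not depend on $k$, and the highest power that can appear is $|k|^j$. Some coefficients may vanish; for example, $\mathcal C_{j,0}=\mathcal C_{j,j}=0$ because $[H,H]=[L,L]=0$ in the innermost bracket. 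For this reason the degree is only "at most" $j$.

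There is no real obstacle here. The only point requiring care is the indexing convention: the sum in $\alpha_{\mathrm{comm},j}(G_k)$ runs over ordered words with repetition drawn from the two-term decomposition $\{H,kL\}$. This is consistent with \cref{product-error} for $\Gamma=2$. I would make sure the bijection with $\{H,L\}^j$ preserves multiplicities, so that the $2^j$ words are counted once each on both sides. One subtlety to note is the case $k=0$. There the decomposition has $kL=0$, but the identity still holds with the convention $0^0=1$: only the $\ell=0$ terms survive, and those terms vanish anyway.
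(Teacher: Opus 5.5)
Your proposal is correct and follows the same route as the paper: multilinearity of the nested commutator pulls a factor $k^\ell$ out of each word with $\ell$ copies of $L$, and homogeneity of the norm turns this into $|k|^\ell$. Grouping the words by $\ell$ then produces exactly $\mathcal C_{j,\ell}(H,L)$. Your extra remarks go slightly beyond what the paper spells out, and they are accurate: the label-preserving bijection with $\{H,L\}^j$, the vanishing of $\mathcal C_{j,0}$ and $\mathcal C_{j,j}$, and the $k=0$ convention.
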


\begin{proof} By multilinearity of nested commutators, each word in the decomposition \(G_k=H+kL\) containing exactly \(\ell\) copies of \(L\) contributes a factor \(k^\ell\). Hence its norm contributes \(|k|^\ell\). Grouping the resulting commutator terms by the number \(\ell\) of occurrences of \(L\) gives \cref{eq:polynomial_dependence_bound}.  \end{proof}

We approximate $U_k(t)=e^{-iG_kt}$ by a $2m$-th order MPF built from the second-order product formula, denoted $U_k^{(2m)}$. We state the error bound abstractly, which is instantiated later using either \cref{aftab-time-indep-short} or \cref{mizuta-time-indep-short}.

\begin{assumption}
\label{abstract-time-indep-mpf-revised}
For \(R>0\), consider the time-independent family \(\{G_k\}_{|k|\leq R}\). Fix \(m,N\geq 1\) and \(\delta>0\), and let \(U_k^{(2m)}(t)\) denote the order-\(2m\) MPF approximation to \(U_k(t)\). Assume there exist:
\begin{enumerate}
    \item a non-negative function
    \begin{equation}
    \Phi_m:[-R,R]\times\mathbb{N}\to\mathbb{R}_+,
    \end{equation}
    based on nested commutators of \(G_k\), the system size $N$ and satisfying
    $\Phi_{m,R}^{\ast}(N):=\sup_{|k|\le R}\Phi_m(k,N)<\infty$,
    \item a step size \(\Delta(R,N,\delta)>0\),
    \item a remainder term \(R_m(\delta)\geq 0\),
\end{enumerate}
such that, for all \(|k|\leq R\) and \(0<\Delta\leq\Delta(R,N,\delta)\),
\begin{equation}
\|U_k^{(2m)}(\Delta)-U_k(\Delta)\| \leq C_m\Delta^{2m+1}\Phi_m(k,N)+R_m(\delta) =:\eta_k(\Delta,k,N,\delta,m),
\end{equation}
where \(C_m>0\) is constant. For simplicity, we write \(\eta_k(\Delta)\) for \(\eta_k(\Delta,k,N,\delta,m)\).
\end{assumption}

\begin{remark}
\cref{abstract-time-indep-mpf-revised} captures different MPF error scalings through $\Phi_m(k,N)$. Under \cref{aftab-time-indep-short}, expanding $G_k$ gives a convergent series in $|k|$. In contrast, \cref{mizuta-time-indep-short} involves commutator products followed by a $(j+l)$-th root. This yields non-polynomial dependence on $k$. Thus, \cref{abstract-time-indep-mpf-revised} covers both polynomial and root-type commutator scalings. Also, $R_m(\delta)=0$ for \cref{aftab-time-indep-short}, while $R_m(\delta)=\|\vec a\|_1\|\vec b\|_1\delta$ for \cref{mizuta-time-indep-short}.
\end{remark}

We now quantify how the abstract error bound in \cref{abstract-time-indep-mpf-revised} propagates through the LCHS representation. We first define the quantities needed for the short-time analysis in \cref{def:kernel-moments-and-profile}, prove a one-step MPF error bound for step size $\Delta$ in \cref{short-time-step}, and extend it to $[0,T]$ in \cref{long-time-step}. Since the argument does not use specific properties of \(\hat f_{a,b}(k;c,d)\), we write \(\hat f(k)\) unless the parameter dependence is needed.

\begin{defn}\label{def:kernel-moments-and-profile}
Let \(\hat f\) be a kernel (as in \cref{low-somma-kernel}). For \(R>0\), define the truncated kernel norm by \(\alpha_{\hat f,R}:=\|\hat f\|_{L^1[-R,R]}/\sqrt{2\pi}\). Under \cref{abstract-time-indep-mpf-revised}, define the truncated profile average
\begin{equation}\label{eq:Lambda-cont}
\Lambda_m(R,N) := \frac{1}{\sqrt{2\pi}} \int_{-R}^{R} | \hat f (k)| \, \Phi_m(k,N) dk.
\end{equation}
\end{defn}

We first establish the error bound for a short time step \(\Delta\) in \cref{short-time-step}. The proof can be found in \cref{proof-of-short-time-step}. 

\begin{lem}
\label{short-time-step}
Fix $R, \delta > 0, \; m,N \geq 1$.
For all $0<\Delta\le \Delta(R,N,\delta)$, we have
\begin{equation}\label{eq:one-step-avg}
E_{\mathrm{MPF}} (\Delta) =
\left \| \frac{1}{\sqrt{2 \pi}} \int_{-R}^{R} \hat f(k)\big(U^{(2m)}_k(\Delta)-U_k(\Delta)\big)dk\right \|
\leq 
 C_m\Delta^{2m+1}\Lambda_m(R,N) + 
\alpha_{\hat f, R} R_m(\delta).
\end{equation}
\end{lem}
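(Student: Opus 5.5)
The argument is a direct triangle-inequality estimate: move the operator norm inside the integral, then apply the pointwise MPF bound of \cref{abstract-time-indep-mpf-revised} at each frequency. First, since the integrand \(k\mapsto \hat f(k)\big(U^{(2m)}_k(\Delta)-U_k(\Delta)\big)\) is a continuous matrix-valued function on the compact interval \([-R,R]\), the Bochner (equivalently, entrywise Riemann) integral satisfies the norm inequality
\begin{equation*}
\left\| \frac{1}{\sqrt{2\pi}}\int_{-R}^{R}\hat f(k)\big(U^{(2m)}_k(\Delta)-U_k(\Delta)\big)\,dk\right\|
\le \frac{1}{\sqrt{2\pi}}\int_{-R}^{R}|\hat f(k)|\,\big\|U^{(2m)}_k(\Delta)-U_k(\Delta)\big\|\,dk .
\end{equation*}
Continuity in \(k\) holds because \(G_k=H+kL\) depends affinely on \(k\). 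Both \(U_k\) and the MPF \(U^{(2m)}_k\) are finite sums of products of exponentials of matrices that are continuous in \(k\). The kernel \(\hat f\) in \cref{low-somma-kernel} is continuous on \(\mathbb R\), since \(1-ik\neq 0\) and \(b+ik\neq 0\) for real \(k\) when \(b>0\).

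Second, fix \(0<\Delta\le \Delta(R,N,\delta)\). \cref{abstract-time-indep-mpf-revised} gives, uniformly for every \(|k|\le R\),
\begin{equation*}
\|U^{(2m)}_k(\Delta)-U_k(\Delta)\|\le C_m\Delta^{2m+1}\Phi_m(k,N)+R_m(\delta).
\end{equation*}
This step uses that the admissible step size \(\Delta(R,N,\delta)\) is a single threshold valid for the whole family \(\{G_k\}_{|k|\le R}\), and that \(C_m\) and \(R_m(\delta)\) do not depend on \(k\).

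Substituting this pointwise bound into the integral and splitting by linearity gives two terms. The first is
\begin{equation*}
C_m\Delta^{2m+1}\frac{1}{\sqrt{2\pi}}\int_{-R}^R|\hat f(k)|\Phi_m(k,N)\,dk = C_m\Delta^{2m+1}\Lambda_m(R,N),
\end{equation*}
by \cref{eq:Lambda-cont}. The second is
\begin{equation*}
R_m(\delta)\,\frac{1}{\sqrt{2\pi}}\|\hat f\|_{L^1[-R,R]}=\alpha_{\hat f,R}R_m(\delta),
\end{equation*}
by \cref{def:kernel-moments-and-profile}.

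The only point needing care is the integrability of \(|\hat f(k)|\Phi_m(k,N)\), since \(\Phi_m\) is only assumed nonnegative with a finite supremum. If \(\Phi_m(\cdot,N)\) is not known to be measurable, I would bound it by \(\Phi^\ast_{m,R}(N)\) or interpret \(\Lambda_m\) as an upper integral. In both concrete instantiations, \cref{aftab-time-indep-short} and \cref{mizuta-time-indep-short}, \(\Phi_m\) is built from polynomials in \(|k|\) by sums, products, suprema and roots, so it is lower semicontinuous and hence measurable. Its boundedness by \(\Phi^\ast_{m,R}(N)<\infty\) then makes \(\Lambda_m(R,N)\) finite, which completes the bound \cref{eq:one-step-avg}.
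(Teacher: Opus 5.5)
Your proposal is correct and is the paper's argument: you move the norm inside the integral by the triangle inequality, apply the uniform pointwise bound of \cref{abstract-time-indep-mpf-revised}, and identify the two resulting terms with $C_m\Delta^{2m+1}\Lambda_m(R,N)$ and $\alpha_{\hat f,R}R_m(\delta)$. Your additional remarks on continuity of the integrand and measurability of $\Phi_m(\cdot,N)$ are sound refinements that the paper leaves implicit.
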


We now extend this short-time error bound to the full simulation over $[0,T]$ in \cref{long-time-step}.

\begin{prop}
\label{long-time-step}
Fix $R, \delta > 0, \; m,N \geq 1$. Let $T>0$, $r\in\mathbb{N}$. Set $\Delta := T/r$. Assume that $0< \Delta \le \Delta(R,N,\delta)$. Then
\begin{equation}\label{eq:long-time-avg-general}
E_{\mathrm{MPF}} (T) =
\left\| \frac{1}{\sqrt{2 \pi}} \int_{-R}^{R} \hat f(k)\Big(\big(U_k^{(2m)}(\Delta)\big)^r-U_k(T)\Big)dk \right\| \le \frac{r}{\sqrt{2 \pi}} \int_{-R}^{R} |\hat f(k)| \eta_k(\Delta) (1+\eta_k(\Delta) )^{r-1}dk,
\end{equation}
where $\eta_k(\Delta):=C_m \Delta ^{2m+1}\Phi_m(k,N)+R_m(\delta)$. In particular, if
$\eta_{R}(\Delta) = \sup_{|k| \leq R} \eta_k(\Delta)$, 
then
\begin{equation}\label{eq:long-time-avg-explicit}
\left\| \frac{1}{\sqrt{2 \pi}} \int_{-R}^{R} \hat f(k)\left((U_k^{(2m)}(\Delta))^r-U_k(T)\right)dk \right\| \le \frac{ C_m \frac{T^{2m+1}}{r^{2m}}\Lambda_m(R,N) + r \alpha_{\hat f, R} R_m(\delta) }{ (1+\eta_{R}(T/r))^{1-r}}.
\end{equation}
\end{prop}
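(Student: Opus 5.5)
The plan is to reduce the integral estimate to a pointwise-in-\(k\) bound and then integrate against \(|\hat f(k)|\). For each fixed \(|k|\le R\), write \(V_k:=U_k^{(2m)}(\Delta)\) and \(U_k:=U_k(\Delta)\). Since \(G_k\) is time-independent, \(U_k(T)=U_k^r\). Applying the triangle inequality under the integral gives
\begin{equation*}
E_{\mathrm{MPF}}(T)\le \frac{1}{\sqrt{2\pi}}\int_{-R}^{R}|\hat f(k)|\,\|V_k^r-U_k^r\|\,dk,
\end{equation*}
so it suffices to show \(\|V_k^r-U_k^r\|\le r\,\eta_k(\Delta)(1+\eta_k(\Delta))^{r-1}\).

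For this we use the telescoping identity
\begin{equation*}
V_k^r-U_k^r=\sum_{s=0}^{r-1}V_k^{\,r-1-s}(V_k-U_k)U_k^{\,s}.
\end{equation*}
Here \(U_k\) is unitary, so \(\|U_k^s\|=1\). Because \(\Delta\le\Delta(R,N,\delta)\), \cref{abstract-time-indep-mpf-revised} gives \(\|V_k-U_k\|\le\eta_k(\Delta)\). The MPF \(V_k\) is not unitary, so we bound \(\|V_k\|\le\|U_k\|+\|V_k-U_k\|\le 1+\eta_k(\Delta)\). Hence \(\|V_k^{r-1-s}\|\le(1+\eta_k)^{r-1-s}\le(1+\eta_k)^{r-1}\). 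Summing the \(r\) terms yields the pointwise bound, and integrating gives \cref{eq:long-time-avg-general}. The only delicate point is this non-unitarity of the MPF, which the factor \((1+\eta_k)^{r-1}\) absorbs. There is no deeper obstacle.

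For the explicit form \cref{eq:long-time-avg-explicit}, we bound \((1+\eta_k(\Delta))^{r-1}\le(1+\eta_R(\Delta))^{r-1}=(1+\eta_R(T/r))^{-(1-r)}\) uniformly in \(k\) and pull it out of the integral. What remains is
\begin{equation*}
\frac{r}{\sqrt{2\pi}}\int_{-R}^{R}|\hat f(k)|\big(C_m\Delta^{2m+1}\Phi_m(k,N)+R_m(\delta)\big)\,dk .
\end{equation*}
Substituting \(\Delta=T/r\), so that \(r\Delta^{2m+1}=T^{2m+1}/r^{2m}\), and using \cref{def:kernel-moments-and-profile}, this equals \(C_m\frac{T^{2m+1}}{r^{2m}}\Lambda_m(R,N)+r\,\alpha_{\hat f,R}R_m(\delta)\). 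One should also check that \(\eta_R\) is finite; this follows from \(\Phi^\ast_{m,R}(N)<\infty\).
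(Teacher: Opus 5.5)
Your proposal is correct and follows the paper's proof essentially step for step. It fixes $k$, telescopes $V_k^r-U_k^r$, uses unitarity of $U_k(\Delta)$ together with $\|U_k^{(2m)}(\Delta)\|\le 1+\eta_k(\Delta)$, integrates against $|\hat f(k)|/\sqrt{2\pi}$, and pulls out $(1+\eta_R(T/r))^{r-1}$ before substituting $\Delta=T/r$. Your observation that $\eta_R$ is finite because $\Phi^\ast_{m,R}(N)<\infty$ is a small check the paper leaves implicit.
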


\begin{proof}
Fix \(k\in[-R,R]\) and write \(A_k:=U_k^{(2m)}(\Delta)\) and \(B_k:=U_k(\Delta)\). Since \(U_k(T)=U_k(\Delta)^r=B_k^r\), the telescoping identity gives
\begin{equation}
A_k^r-B_k^r = \sum_{j=0}^{r-1} A_k^{\,r-1-j}(A_k-B_k)B_k^j.
\end{equation}
Taking norms and using that \(B_k\) is unitary, so \(\|B_k^j\|=1\), we obtain
\begin{equation}
\|A_k^r-B_k^r\| \le \sum_{j=0}^{r-1}\|A_k\|^{\,r-1-j}\|A_k-B_k\|.
\end{equation}
By \cref{abstract-time-indep-mpf-revised}, \(\|A_k-B_k\|\leq\eta_k(\Delta)\). Hence, \(\|A_k\|\leq \|B_k\|+\|A_k-B_k\|\leq 1+\eta_k(\Delta)\), and therefore
\begin{equation}
\|A_k^r-B_k^r\| \le \sum_{j=0}^{r-1} (1+\eta_k(\Delta) )^{r-1-j}\eta_k(\Delta) \le r\eta_k(\Delta) (1+\eta_k(\Delta) )^{r-1}.
\end{equation}
Multiplying by $|\hat f(k)|/\sqrt{2 \pi}$ and integrating over $[-R,R]$ yields
\begin{equation}
\left\| \frac{1}{\sqrt{2\pi}} \int_{-R}^{R} \hat f(k)\Big(\big(U_k^{(2m)}(\Delta)\big)^r-U_k(T)\Big)dk \right\| \le \frac{r}{\sqrt{2\pi}}\int_{-R}^{R} |\hat f(k)|\,\eta_k(\Delta) (1+\eta_k(\Delta) )^{r-1}dk,
\end{equation}
which proves~\cref{eq:long-time-avg-general}. Noting that for $|k|\le R$, $\eta_k(\Delta)\le \eta_{R}(T/r)$ implies that
\begin{equation}
\frac{r}{\sqrt{2\pi}}\int_{-R}^{R} |\hat f(k)|\,\eta_k(\Delta) (1+\eta_k(\Delta) )^{r-1}dk \le \frac{r}{\sqrt{2\pi}} (1+\eta_{R}(T/r) )^{r-1} \int_{-R}^{R} |\hat f(k)|\,\eta_k(\Delta)dk.
\end{equation}
Using the definition of $\eta_k(\Delta)$ and the quantities $\Lambda_m(R,N)$ and $\alpha_{\hat f, R}$, we find
\begin{equation}
\frac{1}{\sqrt{2\pi}} \int_{-R}^{R} |\hat f(k)|\,\eta_k(\Delta)dk = C_m \Delta^{2m+1}\Lambda_m(R,N)+\alpha_{\hat f, R} R_m(\delta).
\end{equation}
Substituting $\Delta=T/r$ gives
\begin{equation}
r \left( C_m \Delta^{2m+1}\Lambda_m(R,N)+\alpha_{\hat f, R} R_m(\delta) \right) = C_m\frac{T^{2m+1}}{r^{2m}}\Lambda_m(R,N) + r\,\alpha_{\hat f, R} R_m(\delta).
\end{equation}
This completes the proof.
\end{proof}

\subsection{Coupling of Truncation and Inner Simulation Errors}\label{coupling-subsub}
The key point of \cref{inner-sim-subsub} is that, after integrating over \(k\in[-R,R]\), the inner simulation error depends on \(\Lambda_m(R,N)\). Hence, \(R\) and the inner simulation error are coupled. Increasing \(R\) reduces truncation error but, when \(\Phi_m(k,N)\) grows with \(|k|\), also increases \(\Lambda_m(R,N)\) and amplifies simulation error. We capture this trade-off through a feasibility problem. Define

\begin{equation}\label{varf-def}
\mathcal{F}_{\epsilon_{\operatorname{comb}}} := \left\{ (R,\delta,r)\in(0,\infty)^2\times\mathbb{N} : E_{\mathrm{trunc}}(R) + E_{\mathrm{MPF}}(T) \le \epsilon_{\operatorname{comb}}, \frac{T}{r} \le \Delta(R,N,\delta) \right\}.
\end{equation}
Here $\epsilon_{\operatorname{comb}}$ is the total error budget for the truncation and inner simulation errors. If \(E_{\mathrm{trunc}}(R) \geq \epsilon_{\mathrm{comb}}\), then \((R,\delta,r)\notin \mathcal F_{\epsilon_{\mathrm{comb}}}\) for every \(r\in\mathbb N\) and \(\delta>0\). Hence a positive error budget for the inner simulation is available only when \(R\in\mathcal H_{\epsilon_{\mathrm{comb}}}\). For such \(R\), define $\epsilon_R := \epsilon_{\mathrm{comb}}-E_{\mathrm{trunc}}(R)>0$.  After fixing $R \in \mathcal H_{\epsilon_{\mathrm{comb}}}$, the problem therefore reduces to determining pairs \((r,\delta)\in\mathbb N\times(0,\infty)\) such that
\begin{equation}
E_{\mathrm{MPF}}(T) \le \epsilon_R, \quad \frac{T}{r} \le \Delta(R,N,\delta)
\label{eq:feasibility-reduced}
\end{equation}
The choice of \(r\) depends on whether \cref{aftab-time-indep-short} or \cref{mizuta-time-indep-short} instantiates \cref{abstract-time-indep-mpf-revised}. We defer this discussion to \cref{complexity-analysis}.

\section{Post-Quadrature Complexity Analysis}\label{complexity-analysis}
We now analyze the overall query complexity of our algorithm. We fix an admissible kernel from \cref{low-somma-kernel}. \cref{quadrature-error} gives the post-quadrature error bound, \cref{implem} describes the implementation, \cref{abs-rel} derives complexity under the abstract MPF estimate in \cref{abstract-time-indep-mpf-revised}, and \cref{spec-aftab,spec-mizuta} specialize the analysis to \cref{aftab-time-indep-short} and \cref{mizuta-time-indep-short}, respectively. \cref{quad-rules} discusses various quadrature rules.

\subsection{Post-Quadrature Error Bound}\label{quadrature-error}
The analysis in \cref{coupling-subsub} was carried out before discretizing the LCHS integral. We now show that the same coupling persists after applying a quadrature rule. Let $Q=\{(k_i,w_i)\}_{i\in\mathcal I_Q}$ be a quadrature rule, with $k_i\in\mathbb R$ and $w_i\in\mathbb C$. Define its radius by $R_Q:=\max_{i\in\mathcal I_Q}|k_i|$.
The corresponding ideal quadrature operator is
\begin{equation}\label{eq:generic-quadrature-operator}
W_Q^{\mathrm{ideal}} := \sum_{i\in\mathcal I_Q}v_iU_{k_i}(T), \qquad v_i:= \frac{w_i}{\sqrt{2\pi}}\hat f_{a,b}(k_i;c,d)
\end{equation}
and the associated quadrature mass is $\alpha_Q := \sum_{i\in\mathcal I_Q}|v_i|$. Under \cref{abstract-time-indep-mpf-revised}, we also define the discrete weighted MPF profile by
\begin{equation}\label{eq:Lambda-Q}
\Lambda_{m,Q} := \sum_{i\in\mathcal I_Q}|v_i|\Phi_m(k_i,N).
\end{equation}
This is the post-quadrature analogue of \(\Lambda_m(R,N)\). Define \(\Phi_{m,Q}^\ast:=\max_{i\in\mathcal I_Q}\Phi_m(k_i,N)\). Let \(U_{k_i}^{(2m)}(T)\) denote the \(2m\)-th order MPF approximation to \(U_{k_i}(T)\) over \([0,T]\), and define
$W_Q^{\mathrm{MPF}} := \sum_{i\in\mathcal I_Q}v_i U^{(2m)}_{k_i}(T)$. 
Specific quadrature rules are discussed in \cref{quad-rules}. We have the following discrete analog of \cref{long-time-step}.

\begin{cor}\label{discrete-mpf-error}
Let $Q=\{(k_i,w_i)\}_{i\in\mathcal I_Q}$ be a quadrature rule and set $R_Q:=\max_{i\in\mathcal I_Q}|k_i|$. Fix $m,N\ge 1$, $T>0$, $r\in\mathbb N$, and $\delta >0$. Set $\Delta:=T/r$. Assume that $0<\Delta\le \Delta(R_Q,N,\delta)$. Then
\begin{align}
\|W_Q^{\mathrm{ideal}}-W_Q^{\mathrm{MPF}}\| &\le \sum_{i\in\mathcal I_Q}|v_i| \|U_{k_i}(T)-U^{(2m)}_{k_i}(T)\|  \le r\sum_{i\in\mathcal I_Q} |v_i|\eta_{k_i}(\Delta)(1+\eta_{k_i}(\Delta))^{r-1}.
\label{eq:discrete-mpf-error-exact}
\end{align}
In particular, if $\eta_Q(\Delta) := \max_{i\in\mathcal I_Q}\eta_{k_i}(\Delta)$, then
\begin{equation}\label{eq:discrete-mpf-error-bound}
\|W_Q^{\mathrm{ideal}}-W_Q^{\mathrm{MPF}}\| \le (1+\eta_Q(T/r))^{r-1} \left( C_m\frac{T^{2m+1}}{r^{2m}}\Lambda_{m,Q} + r\alpha_QR_m(\delta) \right).
\end{equation}
\end{cor}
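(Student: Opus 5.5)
The plan is to follow the proof of \cref{long-time-step} essentially verbatim, replacing the integral over \(k\in[-R,R]\) by the finite sum over the quadrature nodes \(k_i\). The only structural observation needed is that every node satisfies \(|k_i|\le R_Q\). Hence, for each \(i\in\mathcal I_Q\), \cref{abstract-time-indep-mpf-revised} applies with \(R=R_Q\): the hypothesis \(0<\Delta\le\Delta(R_Q,N,\delta)\) gives the one-step bound
\begin{equation}
\|U^{(2m)}_{k_i}(\Delta)-U_{k_i}(\Delta)\|\le\eta_{k_i}(\Delta).
\end{equation}
Here \(U^{(2m)}_{k_i}(T)\) is understood as \((U^{(2m)}_{k_i}(\Delta))^r\).

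First, I will write
\begin{equation}
W_Q^{\mathrm{ideal}}-W_Q^{\mathrm{MPF}}=\sum_{i\in\mathcal I_Q} v_i\big(U_{k_i}(T)-U^{(2m)}_{k_i}(T)\big),
\end{equation}
and apply the triangle inequality. This gives the first inequality in \cref{eq:discrete-mpf-error-exact}.

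Second, for each fixed \(i\), I will set \(A_i=U^{(2m)}_{k_i}(\Delta)\) and \(B_i=U_{k_i}(\Delta)\), so that \(B_i^r=U_{k_i}(T)\). I then use the telescoping identity
\begin{equation}
A_i^r-B_i^r=\sum_{j=0}^{r-1}A_i^{r-1-j}(A_i-B_i)B_i^j.
\end{equation}
Since \(B_i\) is unitary, \(\|B_i^j\|=1\), and \(\|A_i\|\le 1+\eta_{k_i}(\Delta)\). Therefore
\begin{equation}
\|A_i^r-B_i^r\|\le r\,\eta_{k_i}(\Delta)\,(1+\eta_{k_i}(\Delta))^{r-1}.
\end{equation}
Multiplying by \(|v_i|\) and summing over \(i\) gives the second inequality.

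For \cref{eq:discrete-mpf-error-bound}, I bound each factor \((1+\eta_{k_i}(\Delta))^{r-1}\) by \((1+\eta_Q(\Delta))^{r-1}\), which is independent of \(i\), and pull it out of the sum. The remaining sum is evaluated using the definition of \(\eta_{k_i}\):
\begin{equation}
\sum_i|v_i|\,\eta_{k_i}(\Delta)=C_m\Delta^{2m+1}\Lambda_{m,Q}+\alpha_Q R_m(\delta).
\end{equation}
Here \(\Lambda_{m,Q}\) and \(\alpha_Q\) are exactly the discrete analogues of \(\Lambda_m(R,N)\) and \(\alpha_{\hat f,R}\). Multiplying by \(r\) and substituting \(\Delta=T/r\) yields \(C_mT^{2m+1}r^{-2m}\Lambda_{m,Q}+r\alpha_QR_m(\delta)\).

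I expect no real obstacle. The only point requiring care is justifying that the assumption is used with radius \(R_Q\), so that a single step-size condition covers all nodes simultaneously. This holds because the family \(\{G_{k_i}\}\) is contained in \(\{G_k\}_{|k|\le R_Q}\).
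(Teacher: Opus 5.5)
Your proposal is correct and matches the paper's proof, which is exactly the telescoping argument of \cref{long-time-step} applied node by node, followed by the definitions of $\alpha_Q$ and $\Lambda_{m,Q}$. Your explicit remark that \cref{abstract-time-indep-mpf-revised} is invoked with radius $R_Q$, so that one step-size condition covers all nodes, is the only point the paper leaves implicit.
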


\begin{proof}
The result follows by the same telescoping argument as in \cref{long-time-step}, followed by the definitions of $\alpha_Q$ and $\Lambda_{m,Q}$.
\end{proof}

\cref{discrete-mpf-error} shows that, after fixing the quadrature rule, the inner simulation error couples to the quadrature error through \(R_Q\), \(\alpha_Q\), and \(\Lambda_{m,Q}\). This is the post-quadrature analogue of \cref{coupling-subsub}. We end with a total error bound, which is the starting point for the complexity analysis in \cref{abs-rel}.

\begin{lem}\label{post-quadrature-error-decomposition-general}
Fix a kernel profile \(\vec{\theta}=(a,b,c,d)\) such that $d$ and $y_0$ are chosen as in \cref{feasible-kernel-approx} or \cref{feasible-kernel-profile-explicit}. Let $Q$ be a quadrature rule, $T, \delta>0$, $r\in\mathbb N$ and set $\Delta=T/r$. If $0<\Delta\le \Delta(R_Q,N,\delta)$, then
\begin{align}\label{post-quadrature-error-general}
\|e^{-AT}-W_Q^{\mathrm{MPF}}\| &\le E_{\mathrm{approx}}(y_0) + E_{\mathrm{quad}}(Q) \\ & + (1+\eta_Q(T/r))^{r-1} \left( C_m\frac{T^{2m+1}}{r^{2m}}\Lambda_{m,Q} + r\alpha_QR_m(\delta) \right).
\end{align}
\end{lem}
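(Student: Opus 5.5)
The plan is a triangle inequality through two intermediate operators, the untruncated LCHS integral and the ideal quadrature operator. Write
\begin{equation}
I_\infty := \frac{1}{\sqrt{2\pi}}\int_{\mathbb R}\hat f(k)U_k(T)\,dk .
\end{equation}
Then split
\begin{equation}
\|e^{-AT}-W_Q^{\mathrm{MPF}}\| \le \|e^{-AT}-I_\infty\| + \|I_\infty - W_Q^{\mathrm{ideal}}\| + \|W_Q^{\mathrm{ideal}}-W_Q^{\mathrm{MPF}}\|.
\end{equation}

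The first term is by definition \(E_{\mathrm{approx}}(y_0)\). Letting \(R\to\infty\) in \cref{somma-lchs}, it is bounded by \(\frac{1}{\sqrt{2\pi}}\int_{\mathbb R}|\hat f(k-iy_0)|\,dk\). The kernel \(\hat f_{a,b}(k;c,d)\) satisfies the hypotheses of \cref{somma-lchs}: it has a simple pole at \(z=-i\) with the correct residue, it is analytic elsewhere on the strip, and it has Gaussian decay. The choices of \(d\) and \(y_0\) from \cref{feasible-kernel-approx} or \cref{feasible-kernel-profile-explicit} make this quantity finite and controlled. I would simply keep the term as \(E_{\mathrm{approx}}(y_0)\), since the statement does not require a numerical bound.

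The second term is the quadrature error. I would define \(E_{\mathrm{quad}}(Q):=\|I_\infty - W_Q^{\mathrm{ideal}}\|\), the full discretization error of the LCHS integral by \(Q\), which includes any tail contribution outside \(R_Q\). This matches the later use of \(E_{\mathrm{quad}}\) as the combined mesh and tail error for the trapezoidal and sinh--sinh rules. The point to check is that \(E_{\mathrm{quad}}\) is measured against the untruncated integral, so no separate \(E_{\mathrm{trunc}}\) term appears. If the paper's definition is instead relative to the integral truncated at \(R_Q\), I would add \(E_{\mathrm{trunc}}(R_Q)\) and absorb it into the quadrature term.

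The third term is exactly the quantity bounded in \cref{discrete-mpf-error}. The hypothesis \(0<T/r\le\Delta(R_Q,N,\delta)\) is the one needed to invoke \cref{abstract-time-indep-mpf-revised} at every node, because \(|k_i|\le R_Q\). Applying \cref{eq:discrete-mpf-error-bound} gives
\begin{equation}
(1+\eta_Q(T/r))^{r-1}\Big(C_m\tfrac{T^{2m+1}}{r^{2m}}\Lambda_{m,Q}+r\alpha_QR_m(\delta)\Big).
\end{equation}
Summing the three terms proves the claim.

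The only real obstacle is definitional: fixing \(E_{\mathrm{quad}}(Q)\) so that it absorbs both discretization and tail effects relative to \(I_\infty\). Everything else is a direct citation of \cref{somma-lchs} and \cref{discrete-mpf-error}.
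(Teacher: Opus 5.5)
Your proposal is correct and matches the paper's proof. The paper uses the same three-term triangle inequality through the untruncated integral \(W_\infty\) and \(W_Q^{\mathrm{ideal}}\), and it defines \(E_{\mathrm{quad}}(Q):=\|W_\infty-W_Q^{\mathrm{ideal}}\|\) against the untruncated integral exactly as you chose, so your fallback of adding \(E_{\mathrm{trunc}}(R_Q)\) is unnecessary; the last term comes from \cref{discrete-mpf-error} via \(|k_i|\le R_Q\).
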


\begin{proof}
Let $W_\infty := \frac{1}{\sqrt{2\pi}} \int_{\mathbb R} \hat f_{a,b}(k;c,d) U_k(T) dk$, and define the quadrature error by $E_{\mathrm{quad}}(Q) := \|W_\infty - W_Q^{\mathrm{ideal}}\|$. By the triangle inequality, we have
\begin{align}
\|e^{-AT}-W_Q^{\mathrm{MPF}}\| &\le \|e^{-AT}-W_\infty\| + \|W_\infty-W_Q^{\mathrm{ideal}}\| + \|W_Q^{\mathrm{ideal}}-W_Q^{\mathrm{MPF}}\|.
\end{align}
The first term is bounded by \(E_{\mathrm{approx}}(y_0)\), the second term is \(E_{\mathrm{quad}}(Q)\), and the third term follows from \cref{discrete-mpf-error}. Since the nodes of \(Q\) lie in \([-R_Q,R_Q]\), the MPF step-size condition is imposed with \(R_Q\), yielding a bound involving the quadrature-induced quantities \(\alpha_Q\), \(\Lambda_{m,Q}\), and \(\eta_Q\).
\end{proof}

\begin{remark} The quantity \(E_{\mathrm{quad}}(Q)\) in \cref{post-quadrature-error-decomposition-general} measures the total outer discretization error from the infinite LCHS integral to the finite quadrature rule. For concrete quadrature rules, this error may itself be decomposed into truncation and quadrature components (see \cref{quad-rules}).
\end{remark}

\cref{post-quadrature-error-decomposition-general} shows that the quadrature rule affects the final error through \(E_{\mathrm{quad}}(Q)\) and through \(R_Q\), \(\alpha_Q\), and \(\Lambda_{m,Q}\). Thus, quadrature couples discretization error to quantum implementation error. This yields the post-quadrature problem. Choose an admissible rule \(Q\) with \(E_{\mathrm{quad}}(Q)<\epsilon_{\mathrm{comb}}\), leaving residual budget $\epsilon_Q := \epsilon_{\mathrm{comb}}-E_{\mathrm{quad}}(Q) > 0$. Then choose \((r,\delta)\in\mathbb N\times(0,\infty)\) such that
\begin{equation}
(1+\eta_Q(T/r))^{r-1} \left( C_m\frac{T^{2m+1}}{r^{2m}}\Lambda_{m,Q} + r\alpha_QR_m(\delta ) \right) \le \epsilon_Q, \quad \frac{T}{r}\le \Delta(R_Q,N,\delta ).
\end{equation}
This is the post-quadrature analogue of the discussion in \cref{coupling-subsub}. 
 
\subsection{Quantum Implementation}\label{implem}
The complexity estimates below address two tasks in a fully quantum implementation. The first task is to construct a block-encoding of the post-quadrature approximation
\begin{equation}
W_Q^{\operatorname{MPF}} = \sum_{i \in \mathcal{I}_Q} v_i U^{(2m)}_{k_i} (T) = \sum_{i \in \mathcal{I}_Q} v_i \left( \sum_{j=1}^m a_j \left[ U_{2,k_i}\left(\frac{T}{r b_j}\right) \right]^{b_j} \right)^r .
\end{equation}
Here \(v_i\) denotes the quadrature coefficients, while \(a_j,b_j\) denote the coefficients from \cref{sec:mpf}. Given \(Q\), define the quadrature preparation oracles
\begin{equation}
\operatorname{PREP}^Q_{\operatorname{R}}\ket{0} = \frac{1}{\sqrt{\alpha_Q}} \sum_{i\in\mathcal I_Q} \sqrt{v_i}\ket{i}, \quad \operatorname{PREP}^{Q}_{\operatorname{L}}\ket{0} = \frac{1}{\sqrt{\alpha_Q}} \sum_{i\in\mathcal I_Q} \overline{\sqrt{v_i}}\ket{i},
\end{equation}
where \(\alpha_Q:=\sum_{i\in\mathcal I_Q}|v_i|\).  For each \(i\in\mathcal I_Q\), let \(\mathcal U_Q\) be a unitary block-encoding of \(U_{k_i}^{(2m)}(T)\), normalized to one after amplitude amplification at each of the $r$ time steps. Hence, we have 
\begin{equation}
(\bra{0} \otimes I)\mathcal U_Q (\ket{0} \otimes I) = U_{k_i}^{(2m)}(T).
\end{equation}
Define the controlled unitary $\operatorname{SELECT}_Q := \sum_{i\in\mathcal I_Q} \ket{i}\bra{i}\otimes \mathcal U_Q$. By the outer LCU construction, we obtain a block-encoding of $W_Q^{\operatorname{MPF}}$
\begin{equation}
(\bra{0} \bra{0} \otimes I) (\operatorname{PREP}_{\operatorname L}^{Q\dagger}\otimes I) \operatorname{SELECT}_Q (\operatorname{PREP}_{\operatorname R}^{Q}\otimes I) (\ket{0} \ket{0} \otimes I) = \frac{W_Q^{\operatorname{MPF}}}{\alpha_Q}.
\end{equation}
Writing \(K_m:=\|\vec a\|_1\|\vec b\|_1\), one controlled implementation of \(U_{k_i}^{(2m)}(T)\) with \(r\) time steps costs \(\mathcal O(rK_m)\) controlled second-order product-formula queries. Hence, one use of \(\operatorname{SELECT}_Q\) has the same cost. One block-encoding of \(W_Q^{\operatorname{MPF}}/\alpha_Q\) uses \(\operatorname{SELECT}_Q\), the quadrature preparation oracles \(\operatorname{PREP}_{\operatorname L}^{Q}\), \(\operatorname{PREP}_{\operatorname R}^{Q}\), and their required inverses in the outer LCU construction.

\begin{remark}
The implementation of \(W_Q^{\operatorname{MPF}}\) is a nested LCU construction. The outer LCU is the quadrature sum over \(q\), with normalization \(\alpha_Q\), while the inner LCU implements each selected \(U_{k_i}^{(2m)}(T)\) by an MPF. We do not flatten this into one LCU over \((q,j_1,\ldots,j_r)\), since that would introduce normalization \(\alpha_Q\|\vec a\|_1^r\), which is not the implementation model used here.
\end{remark}

The second task is to prepare the normalized output state \(W_Q^{\operatorname{MPF}}u(0)/\|W_Q^{\operatorname{MPF}}u(0)\|\). Write \(\ket{\widetilde u(0)}:=u(0)/\|u(0)\|\). Applying the block-encoding of \(W_Q^{\operatorname{MPF}}/\alpha_Q\) to \(\ket{\widetilde u(0)}\) and post-selecting the ancillas on \(\ket{0}\) gives the un-normalized state
\(W_Q^{\operatorname{MPF}}\ket{\widetilde u(0)}/\alpha_Q\), whose norm is
\begin{equation}
\frac{\|W_Q^{\operatorname{MPF}}\ket{\widetilde u(0)}\|} {\alpha_Q} = \frac{\|W_Q^{\operatorname{MPF}}u(0)\|} {\alpha_Q\|u(0)\|}.
\end{equation}
Hence, amplitude amplification requires \(\chi_Q(u(0)) := \frac{\alpha_Q\|u(0)\|}{\|W_Q^{\operatorname{MPF}}u(0)\|}\) uses of the block-encoding, up to absolute constants. Therefore, normalized state preparation costs \(\mathcal O(\chi_Q(u(0))rK_m)\) controlled second-order product-formula queries, \(\mathcal O(\chi_Q(u(0)))\) uses of the quadrature preparation oracles and their inverses, and \(\mathcal O(\chi_Q(u(0)))\) uses of the state-preparation oracle for \(\ket{\widetilde u(0)}\) and its inverse.

\begin{remark}\label{state-norm-1}
Let \(u(T):=e^{-AT}u(0)\). If \(\|W_Q^{\operatorname{MPF}}-e^{-AT}\|\le \epsilon\) and \(\epsilon\|u(0)\|\le \|u(T)\|/2\), then \begin{equation} \|W_Q^{\operatorname{MPF}}u(0)\|\ge \|u(T)\| - \epsilon\|u(0)\|\ge \frac{\|u(T)\|}{2}. \end{equation} Consequently, \(\chi_Q(u(0))\le 2\alpha_Q\|u(0)\|/\|u(T)\|\), so the state query complexity is \(\mathcal O\!\left( \frac{\|u(0)\|}{\|u(T)\|}\alpha_Q rK_m \right)\), which becomes \(\mathcal O(\alpha_Q rK_m/\|u(T)\|)\) when \(\|u(0)\|=1\). This is the standard form in which normalized-state query complexity is presented in the literature.
\end{remark}

\subsection{Overall Query Complexity}\label{abs-rel}
Split the error tolerance as \(\epsilon_{\operatorname{approx}}+\epsilon_{\operatorname{comb}}\leq\epsilon\), where \(\epsilon_{\operatorname{approx}}\) controls \(E_{\operatorname{approx}}(y_0)\) and \(\epsilon_{\operatorname{comb}}\) controls the combined quadrature and inner-simulation errors.  \cref{feasible-kernel-approx,feasible-kernel-profile-explicit} gives sufficient conditions on $c,d$ to ensure the approximation error is bounded blow by $\epsilon_{\operatorname{approx}}$. Choose a  quadrature rule \(Q\) such that \(E_{\operatorname{quad}}(Q)<\epsilon_{\operatorname{comb}}\), and define \(\epsilon_Q:=\epsilon_{\operatorname{comb}}-E_{\operatorname{quad}}(Q)\). Specific quadrature rules are discussed in \cref{quad-rules}. We first prove an abstract fixed-\(2m\)-order overall complexity estimate, based on \cref{abstract-time-indep-mpf-revised}, for arbitrary \(Q\). For \(\delta>0\), define
\begin{equation}\label{abstract-rQ-delta-def}
r_Q(\delta) := \left\lceil \max\left\{ \frac{T}{\Delta(R_Q,N,\delta)}, \left(2C_mT^{2m+1}\Phi_{m,Q}^{\ast}\right)^{1/2m}, \left( \frac{2e\,C_mT^{2m+1}\Lambda_{m,Q}}{\epsilon_Q} \right)^{1/2m} \right\} \right\rceil .
\end{equation}

\begin{thm}
\label{abstract-fixed-order-arbitrary-quadrature}
Fix \(m,N\in\mathbb N\) and \(T>0\). 
Let \(\epsilon,\epsilon_{\operatorname{approx}},\epsilon_{\operatorname{comb}}>0\) satisfy \(\epsilon_{\operatorname{approx}}+\epsilon_{\operatorname{comb}}\leq\epsilon\). Let \(\vec{\theta}=(a,b,c,d)\) be an admissible kernel profile, and let \(Q\) be a quadrature rule such that \(E_{\operatorname{quad}}(Q)<\epsilon_{\operatorname{comb}}\). Set \(\epsilon_Q:=\epsilon_{\operatorname{comb}}-E_{\operatorname{quad}}(Q)>0\). Suppose there exists \(\delta>0\) such that
\begin{equation}\label{abstract-delta-admissible}
r_Q(\delta)R_m(\delta)\leq \frac12, \qquad r_Q(\delta)\alpha_QR_m(\delta) \leq \frac{\epsilon_Q}{2e}.
\end{equation}
Choose \(r=r_Q(\delta)\). Then \(\|e^{-AT}-W_Q^{\operatorname{MPF}}\|\leq\epsilon\). Moreover, a block-encoding of \(W_Q^{\operatorname{MPF}}/\alpha_Q\) can be implemented using
\begin{equation}\label{abstract-fixed-block-complexity}
\mathcal Q_{\operatorname{block}} = \mathcal O\left(r_Q(\delta)K_m\right)
\end{equation}
controlled second-order product-formula queries. If \(W_Q^{\operatorname{MPF}}u(0)\neq0\), the normalized output state proportional
to \(W_Q^{\operatorname{MPF}}\ket{u(0)}\) can be prepared using
\begin{equation}\label{abstract-fixed-state-complexity}
\mathcal Q_{\operatorname{state}} = \mathcal O\left(\chi_Q(u(0))r_Q(\delta)K_m\right)
\end{equation}
controlled second-order product-formula queries. Equivalently, this uses \(\mathcal O(\chi_Q(u(0)))\) applications of the block-encoding of \(W_Q^{\operatorname{MPF}}/\alpha_Q\), hence \(\mathcal O(\chi_Q(u(0)))\) calls to \(\operatorname{SELECT}_Q\), the quadrature preparation oracles, their inverses, and the state-preparation oracle for \(\ket{\widetilde u(0)}\).
\end{thm}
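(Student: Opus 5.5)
The argument starts from the total error decomposition in \cref{post-quadrature-error-decomposition-general}. The first thing to check is that the step-size hypothesis there holds with $r=r_Q(\delta)$. This is immediate, since the first entry of the maximum in \cref{abstract-rQ-delta-def} gives $T/r\le\Delta(R_Q,N,\delta)$. We then have
\begin{equation*}
\|e^{-AT}-W_Q^{\mathrm{MPF}}\|\le E_{\mathrm{approx}}(y_0)+E_{\mathrm{quad}}(Q)+(1+\eta_Q(T/r))^{r-1}\Bigl(C_m\tfrac{T^{2m+1}}{r^{2m}}\Lambda_{m,Q}+r\alpha_QR_m(\delta)\Bigr).
\end{equation*}
Admissibility of the kernel profile, via \cref{feasible-kernel-approx} or \cref{feasible-kernel-profile-explicit}, gives $E_{\mathrm{approx}}(y_0)\le\epsilon_{\mathrm{approx}}$. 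By definition, $E_{\mathrm{quad}}(Q)=\epsilon_{\mathrm{comb}}-\epsilon_Q$. It therefore suffices to bound the last term by $\epsilon_Q$.

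The main step is to control the amplification factor $(1+\eta_Q(T/r))^{r-1}$ by $e$. For this I would show $r\,\eta_Q(T/r)\le1$ and then use $(1+x)^{r-1}\le e^{(r-1)x}\le e^{rx}$. Write $\eta_Q(T/r)=C_m(T/r)^{2m+1}\Phi_{m,Q}^\ast+R_m(\delta)$. The second entry of the maximum in $r_Q(\delta)$ gives $r^{2m}\ge 2C_mT^{2m+1}\Phi_{m,Q}^\ast$, hence $r\cdot C_m(T/r)^{2m+1}\Phi_{m,Q}^\ast=C_mT^{2m+1}\Phi^\ast_{m,Q}/r^{2m}\le\tfrac12$. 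The first condition in \cref{abstract-delta-admissible} gives $rR_m(\delta)\le\tfrac12$. Summing, $r\eta_Q\le1$, so the factor is at most $e$.

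With the factor bounded, the third entry of $r_Q(\delta)$ gives $C_mT^{2m+1}\Lambda_{m,Q}/r^{2m}\le\epsilon_Q/(2e)$, and the second condition in \cref{abstract-delta-admissible} gives $r\alpha_QR_m(\delta)\le\epsilon_Q/(2e)$. Multiplying their sum by $e$ yields at most $\epsilon_Q$. The total error is then at most $\epsilon_{\mathrm{approx}}+\epsilon_{\mathrm{comb}}\le\epsilon$.

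The complexity claims follow directly from \cref{implem}. One block-encoding of $W_Q^{\mathrm{MPF}}/\alpha_Q$ uses a single $\operatorname{SELECT}_Q$ together with the preparation oracles. Each controlled $U^{(2m)}_{k_i}(T)$ with $r$ steps costs $\mathcal O(rK_m)$ controlled second-order product-formula queries, where the inner amplitude amplification contributes the $\|\vec a\|_1$ factor inside $K_m$. Hence $\mathcal Q_{\mathrm{block}}=\mathcal O(r_Q(\delta)K_m)$. For state preparation, amplitude amplification on the post-selected state $W_Q^{\mathrm{MPF}}\ket{\widetilde u(0)}/\alpha_Q$ needs $\mathcal O(\chi_Q(u(0)))$ rounds, which gives the stated oracle counts and $\mathcal Q_{\mathrm{state}}=\mathcal O(\chi_Q(u(0))r_Q(\delta)K_m)$.

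The only genuinely delicate point is bounding the exponential factor, which is exactly why the second entry of $r_Q(\delta)$ and the first condition in \cref{abstract-delta-admissible} are imposed. The rest is bookkeeping.
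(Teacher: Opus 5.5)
Your proposal is correct and follows the paper's proof essentially step for step. You read off $T/r\le\Delta(R_Q,N,\delta)$, $C_mT^{2m+1}\Phi_{m,Q}^\ast/r^{2m}\le\frac12$ and $C_mT^{2m+1}\Lambda_{m,Q}/r^{2m}\le\epsilon_Q/(2e)$ from the three entries of $r_Q(\delta)$, combine them with the admissibility conditions to get $r\eta_Q(T/r)\le 1$ and hence an amplification factor of at most $e$, and substitute into \cref{post-quadrature-error-decomposition-general}, taking the complexity claims directly from \cref{implem} as the paper does.
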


\begin{proof}
Let \(r=r_Q(\delta)\). By \cref{abstract-rQ-delta-def}, we have
\begin{equation}
\frac{T}{r}\le \Delta(R_Q,N,\delta), \quad C_m\frac{T^{2m+1}}{r^{2m}}\Phi_{m,Q}^{\ast} \le \frac12, \quad C_m\frac{T^{2m+1}}{r^{2m}}\Lambda_{m,Q} \le \frac{\epsilon_Q}{2e}.
\end{equation}
Together with \cref{abstract-delta-admissible}, this gives
\begin{equation}
r\eta_Q(T/r) = C_m\frac{T^{2m+1}}{r^{2m}}\Phi_{m,Q}^{\ast} + rR_m(\delta) \le 1.
\end{equation}
Hence, we have
\begin{equation}
(1+\eta_Q(T/r))^{r-1} \le (1+\eta_Q(T/r))^r \le e^{r\eta_Q(T/r)} \le e.
\end{equation}
Using this estimate, the third condition in \cref{abstract-rQ-delta-def}, and \cref{abstract-delta-admissible}, we obtain
\begin{align}(1+\eta_Q(T/r))^{r-1} \left( C_m\frac{T^{2m+1}}{r^{2m}}\Lambda_{m,Q} + r\alpha_QR_m(\delta) \right) \le e\left( \frac{\epsilon_Q}{2e}+\frac{\epsilon_Q}{2e} \right) = \epsilon_Q.
\end{align}
By the estimate in \cref{post-quadrature-error-decomposition-general} and the admissibility of the kernel and quadrature rule, we have
\begin{align}
\|e^{-AT}-W_Q^{\operatorname{MPF}}\| &\le \epsilon_{\operatorname{approx}} + E_{\operatorname{quad}}(Q) + \epsilon_Q = \epsilon_{\operatorname{approx}} +\epsilon_{\operatorname{comb}} \le \epsilon.
\end{align}
The block-encoding and normalized-state query estimates follow from the discussion \cref{implem}.
\end{proof}

In what follows, define the \(m\)-dependent post-quadrature effective commutator scale
\begin{equation}\label{abstract-mu-mQ-def}
\mu_{m,Q}(\delta) := \max\left\{ \frac{1}{\Delta(R_Q,N,\delta)}, (C_m\Phi_{m,Q}^{\ast})^{1/2m}, \left(C_m\Lambda_{m,Q}\right)^{1/2m} \right\}.
\end{equation}
It can then be checked that \cref{abstract-rQ-delta-def} implies that
\begin{equation}\label{abstract-rQ-mu-bound}
r_Q(\delta) = \mathcal O\left( 1+ \mu_{m,Q}(\delta) \max\left\{ T,\, T^{1+\frac1{2m}},\, T^{1+\frac1{2m}}\epsilon_Q^{-\frac1{2m}} \right\} \right),
\end{equation}
for $\epsilon_Q > 0$. Under the additional assumption \(0<\epsilon_Q<1\), this simplifies to
\begin{equation}\label{abstract-rQ-mu-bound-simple}
r_Q(\delta) = \mathcal O\left( 1+ \mu_{m,Q}(\delta) \max\left\{ T,\, T^{1+\frac1{2m}}\epsilon_Q^{-\frac1{2m}} \right\} \right).
\end{equation}
In what follows, assume \(0<\epsilon_Q<1\). In \cref{abstract-optimized-order-arbitrary-quadrature}, we allow mild MPF-order dependence in the effective commutator scale and optimize over the order to obtain complexity logarithmic in \(T/\epsilon_Q\), up to the growth envelope \(F\).

\begin{cor}
\label{abstract-optimized-order-arbitrary-quadrature}
Assume the error budget, kernel profile, and quadrature hypotheses of \cref{abstract-fixed-order-arbitrary-quadrature}. Let \(K_m:=\|\vec a\|_1\|\vec b\|_1\) and \(0<\epsilon_Q<1\). Suppose there exist \(m_0\in\mathbb N\), \(\mu_Q\geq1\), and a non-decreasing function \(F:[m_0,\infty)\to[1,\infty)\) such that one can choose \(\delta_m>0\) satisfying
\begin{equation}\label{F-cond}
\mu_{m,Q}(\delta_m) \leq \mu_QF(m).
\end{equation}
for $m \geq m_0$. Assume also that, for every \(m\geq m_0\), we have
\begin{equation}
r_Q(\delta_m)R_m(\delta_m)\le \frac12, \quad r_Q(\delta_m)\alpha_QR_m(\delta_m) \le \frac{\epsilon_Q}{2e}.
\label{abstract-optimized-admissibility}
\end{equation}
Choose $m=\max\left\{m_0,\left\lceil\log\left(e+\frac{T}{\epsilon_Q}\right)\right\rceil\right\}$ and set \(\delta=\delta_m\) and \(r=r_Q(\delta_m)\). Then \(\|e^{-AT}-W_Q^{\operatorname{MPF}}\|\leq \epsilon\). Moreover, we have
\begin{equation}
\mathcal Q_{\operatorname{block}} = \mathcal O\Bigg( \left(1+\mu_QF(m)\max\{1,T\}\right) \left(\log\left(e+\frac{T}{\epsilon_Q}\right)\right)^2 \left(\log\log\left(e^e+\frac{T}{\epsilon_Q}\right)\right)^2 \Bigg).
\label{abstract-optimized-block-complexity}
\end{equation}
If \(W_Q^{\operatorname{MPF}}u(0)\neq0\), we have
\begin{align}
\mathcal Q_{\operatorname{state}}(u(0)) = \mathcal O\Bigg( \chi_Q(u(0)) \left(1+\mu_QF(m)\max\{1,T\}\right) \left(\log\left(e+\frac{T}{\epsilon_Q}\right)\right)^2 \left(\log\log\left(e^e+\frac{T}{\epsilon_Q}\right)\right)^2 \Bigg).
\label{abstract-optimized-state-complexity}
\end{align}
If \(T\geq1\) and \(\mu_QTF(m)\geq1\), then \cref{abstract-optimized-block-complexity} and \cref{abstract-optimized-state-complexity} simplify to
\begin{equation}
\mathcal Q_{\operatorname{block}} = \mathcal O\Bigg( \mu_QTF(m) \left(\log\left(e+\frac{T}{\epsilon_Q}\right)\right)^2 \left(\log\log\left(e^e+\frac{T}{\epsilon_Q}\right)\right)^2 \Bigg),
\label{abstract-optimized-block-complexity-simplified}
\end{equation}
and
\begin{align}
\mathcal Q_{\operatorname{state}}(u(0)) = \mathcal O\Bigg( \chi_Q(u(0))\,\mu_QTF(m) \left(\log\left(e+\frac{T}{\epsilon_Q}\right)\right)^2 \left(\log\log\left(e^e+\frac{T}{\epsilon_Q}\right)\right)^2 \Bigg).
\label{abstract-optimized-state-complexity-simplified}
\end{align}
\end{cor}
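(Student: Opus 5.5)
The corollary follows from \cref{abstract-fixed-order-arbitrary-quadrature} once we insert the chosen order \(m\) and track how each factor depends on \(m\). Fix \(m=\max\{m_0,\lceil\log(e+T/\epsilon_Q)\rceil\}\) and \(\delta=\delta_m\). The admissibility conditions \cref{abstract-optimized-admissibility} are exactly the hypothesis \cref{abstract-delta-admissible} at this \(m\). So \cref{abstract-fixed-order-arbitrary-quadrature} applies directly and gives both the error guarantee \(\|e^{-AT}-W_Q^{\operatorname{MPF}}\|\le\epsilon\) and the bounds \(\mathcal Q_{\operatorname{block}}=\mathcal O(r_Q(\delta_m)K_m)\) and \(\mathcal Q_{\operatorname{state}}=\mathcal O(\chi_Q(u(0))r_Q(\delta_m)K_m)\). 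What remains is to bound \(r_Q(\delta_m)\) and \(K_m\) in terms of \(T\), \(\epsilon_Q\) and \(\mu_QF(m)\).

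For \(K_m\), I would use the Low--Kliuchnikov--Wiebe coefficients recalled in \cref{sec:mpf}: \(\|\vec a\|_1=\mathcal O(\log m)\) and \(\|\vec b\|_1=\mathcal O(m^2\log m)\). These give \(K_m=\mathcal O(m^2\log^2 m)\). With \(m=\Theta(\log(e+T/\epsilon_Q))\) (treating \(m_0\) as a constant absorbed into \(\mathcal O\)), this becomes \(\mathcal O(\log^2(e+T/\epsilon_Q)\,(\log\log(e^e+T/\epsilon_Q))^2)\). The shift by \(e^e\) keeps \(\log\log\ge1\).

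For \(r_Q\), start from \cref{abstract-rQ-mu-bound-simple} together with \cref{F-cond}:
\begin{equation}
r_Q(\delta_m)=\mathcal O\Big(1+\mu_QF(m)\max\{T,\,T(T/\epsilon_Q)^{1/2m}\}\Big).
\end{equation}
The key step is showing that \((T/\epsilon_Q)^{1/2m}=\mathcal O(1)\) under the chosen \(m\). Since \(2m\ge\log(e+T/\epsilon_Q)\), we get
\begin{equation}
(T/\epsilon_Q)^{1/2m}\le (e+T/\epsilon_Q)^{1/\log(e+T/\epsilon_Q)}=e.
\end{equation}
Hence \(r_Q(\delta_m)=\mathcal O(1+\mu_QF(m)\max\{1,T\})\), where \(\max\{1,T\}\) dominates \(T\). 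Multiplying by \(K_m\) gives \cref{abstract-optimized-block-complexity}, and multiplying in addition by \(\chi_Q(u(0))\) gives \cref{abstract-optimized-state-complexity}.

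The main subtlety is bookkeeping rather than analysis. First, \(F(m)\) is evaluated at the chosen \(m\), and \(F\) being non-decreasing ensures \cref{F-cond} is used at that \(m\). Second, the hidden constants must not depend on \(m\); this holds because \(C_m\) enters only through \(\mu_{m,Q}\), which is already bounded by \(\mu_QF(m)\). In the regime \(T\ge1\) and \(\mu_QTF(m)\ge1\), the term \(1+\mu_QF(m)\max\{1,T\}\) is at most \(2\mu_QTF(m)\), which yields the simplified forms \cref{abstract-optimized-block-complexity-simplified} and \cref{abstract-optimized-state-complexity-simplified}.
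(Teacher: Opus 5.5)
Your proposal is correct and follows essentially the same route as the paper's proof: apply \cref{abstract-fixed-order-arbitrary-quadrature} with \(\delta=\delta_m\), bound \(r_Q(\delta_m)\) through \cref{abstract-rQ-mu-bound-simple} and \cref{F-cond}, use the choice of \(m\) to get \((T/\epsilon_Q)^{1/2m}\le e\) (this explicit bound sharpens the paper's bare \(\mathcal O(1)\) claim), and bound \(K_m=\mathcal O(m^2\log^2 m)\) to produce the polylogarithmic factor. One minor remark: \cref{F-cond} is assumed for every \(m\ge m_0\), so it applies at the chosen \(m\) directly, and the monotonicity of \(F\) plays no role in that step.
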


\begin{proof}
The admissibility conditions in \cref{abstract-optimized-admissibility} allow us to apply \cref{abstract-fixed-order-arbitrary-quadrature} with \(\delta=\delta_m\), so the error is at most \(\epsilon\). By \cref{abstract-fixed-block-complexity}, we have $\mathcal Q_{\operatorname{block}} = \mathcal O(r_Q(\delta_m)K_m)$.  Using \cref{abstract-rQ-mu-bound-simple} and \cref{F-cond}, we obtain
\begin{equation}
r_Q(\delta_m) = \mathcal O\left( 1+ \mu_QF(m) \max\left\{ T,\, T^{1+\frac1{2m}}\epsilon_Q^{-\frac1{2m}} \right\} \right).
\label{abstract-rQ-corrected-bound-in-proof}
\end{equation}
Since $T^{1+\frac1{2m}}\epsilon_Q^{-\frac1{2m}} = T\left(\frac{T}{\epsilon_Q}\right)^{1/2m}$,  the choice $m = \max\left\{ m_0, \left\lceil \log\left(e+\frac{T}{\epsilon_Q}\right) \right\rceil \right\}$ implies $(T/\epsilon_Q)^{1/2m} = \mathcal O(1)$. Therefore, we have
\begin{equation}
r_Q(\delta_m) = \mathcal O\left( 1+\mu_QF(m)\max\{1,T\} \right).
\label{abstract-rQ-optimized-corrected}
\end{equation}
Consequently, we have $\mathcal Q_{\operatorname{block}} = \mathcal O\left( K_m\left(1+\mu_QF(m)\max\{1,T\}\right) \right)$. 
Since \(K_m=\mathcal O(m^2(\log m)^2)\) as in \cref{sec:mpf}, the choice of \(m\) gives
\begin{equation}
m^2(\log m)^2 = \mathcal O\left( \left(\log\left(e+\frac{T}{\epsilon_Q}\right)\right)^2 \left(\log\log\left(e^e+\frac{T}{\epsilon_Q}\right)\right)^2 \right),
\label{abstract-Km-optimized-bound}
\end{equation}
with constants depending at most on \(m_0\). Hence, we have
\begin{equation}
\mathcal Q_{\operatorname{block}} = \mathcal O\left( \left(1+\mu_QF(m)\max\{1,T\}\right) \left(\log\left(e+\frac{T}{\epsilon_Q}\right)\right)^2 \left(\log\log\left(e^e+\frac{T}{\epsilon_Q}\right)\right)^2 \right).
\label{abstract-Qblock-optimized-proof-final}
\end{equation}
This proves \cref{abstract-optimized-block-complexity}. Multiplying by \(\chi_Q(u(0))\) gives \cref{abstract-optimized-state-complexity}. If \(T\geq1\) and \(\mu_QTF(m)\geq1\), then
\begin{equation}
1+\mu_QF(m)\max\{1,T\} = 1+\mu_QTF(m) = \mathcal O(\mu_QTF(m)).
\label{abstract-simplified-F-factor}
\end{equation}
The derivation establishes \cref{abstract-optimized-block-complexity-simplified} and \cref{abstract-optimized-state-complexity-simplified}.
\end{proof}

In the applications considered in this paper, we have \(F(m)=m^u\bigl(\log(e+m)\bigr)^v\) for some \(u,v\geq0\). Then \cref{abstract-optimized-order-arbitrary-quadrature} implies the following explicit polylogarithmic form:
\begin{align}\label{abstract-optimized-block-complexity-polylog}
\mathcal Q_{\operatorname{block}} = \mathcal O\Bigg( &\left( 1+ \mu_Q\max\{1,T\} \left(\log\left(e+\frac{T}{\epsilon_Q}\right)\right)^u \left( \log\left(e+\log\left(e+\frac{T}{\epsilon_Q}\right)\right) \right)^v \right) \notag\\ &\hspace{2cm}\times \left(\log\left(e+\frac{T}{\epsilon_Q}\right)\right)^2 \left( \log\left(e+\log\left(e+\frac{T}{\epsilon_Q}\right)\right) \right)^2 \Bigg).
\end{align}
Indeed, the choice of \(m\) implies 
$m = \mathcal O\left( \log\left(e+\frac{T}{\epsilon_Q}\right) \right)$. Therefore, we have
\begin{equation}
F(m) = \mathcal O\left( \left(\log\left(e+\frac{T}{\epsilon_Q}\right)\right)^u \left( \log\left(e+\log\left(e+\frac{T}{\epsilon_Q}\right)\right) \right)^v \right).
\end{equation}
Substituting this into \cref{abstract-optimized-block-complexity} gives \cref{abstract-optimized-block-complexity-polylog}. The state complexity follows similarly.

\subsection{General Commutator Scaling Specialization}\label{spec-aftab} 
The analysis in \cref{abs-rel} allowed arbitrary quadrature under \cref{abstract-time-indep-mpf-revised}. We now specialize to \cref{aftab-time-indep-short}, still with arbitrary quadrature. We first discuss the applicability of \cref{aftab-time-indep-short}. For each \(G_{k_i}=H+k_iL\), we must compute \(\inf_{j\geq J}\alpha_{\operatorname{comm},j}(G_{k_i})^{-1/j}\). Define \(\mathcal C_j(H,L):=\max_{0\leq \ell\leq j}\mathcal C_{j,\ell}(H,L)\). We have
\begin{align}
\label{series-alpha-comm-bound}
\alpha_{\operatorname{comm},j}(G_{k_i}) \leq \sum_{\ell=0}^{j}|k_i|^\ell\mathcal C_{j,\ell}(H,L) \leq \mathcal C_j(H,L)(j+1)\max\{1,|k_i|\}^j .
\end{align}
Let $\chi_J(H,L):=\inf_{j\geq J}\mathcal C_j(H,L)^{-1/j}$. For \(J\geq 1\), we have
\begin{align}
\label{eq:comm-bound-root}
\inf_{j\geq J}\alpha_{\operatorname{comm},j}(G_{k_i})^{-1/j} &\geq \frac{\inf_{j\geq J} ( (j+1)^{-1/j} \mathcal C_j(H,L)^{-1/j})} {\max\{1,|k_i|\}}  \\ &\geq \frac{ (\inf_{j\geq J}(j+1)^{-1/j}) (\inf_{j\geq J}\mathcal C_j(H,L)^{-1/j})} {\max\{1,|k_i|\}} \\ &= \frac{\chi_J(H,L)} {(J+1)^{1/J}\max\{1,|k_i|\}} \geq \frac{\chi_J(H,L)} {(J+1)^{1/J}\max\{1,R_Q\}} := \kappa_Q.
\end{align}
The last inequality follows from \(|k_i|\leq R_Q\). 
Thus \cref{aftab-time-indep-short} applies uniformly to all \(G_{k_i}\) whenever \(\chi_J(H,L)>0\) and \(\Delta\leq\rho_Q\) for some \(0<\rho_Q<\kappa_Q\). The parameter \(\rho_Q\) is a uniform admissible step-size and enters the error profile because \cref{aftab-time-indep-short} gives an MPF error series in powers of \(\Delta^{j+\ell}\), while the abstract framework factors out \(\Delta^{2m+1}\). Set \(C_m:=\|\vec a\|_1\) and define
\begin{align}
\label{series-Dmnu-def}
\Phi_{m,\nu}(H,L) := \sum_{\substack{j\in 2\mathbb Z_+\\ j\geq 2m}} \sum_{l=1}^{m} \frac{\rho_Q^{j+l-(2m+1)}}{l!} \sum_{\substack{j_1,\dots,j_l\in 2\mathbb Z_+\\ j_1+\cdots+j_l=j}} \sum_{\substack{\ell_1,\dots,\ell_l\geq 0\\ 0\leq \ell_\kappa\leq j_\kappa+1\\ \ell_1+\cdots+\ell_l=\nu}} \prod_{\kappa=1}^{l} \mathcal C_{j_\kappa+1,\ell_\kappa}(H,L).
\end{align}
for $\nu\geq 0$. Define the discrete quadrature moment \(M_\alpha^Q:=\sum_{i\in\mathcal I_Q}|v_i|\,|k_i|^\alpha\). The relevant quantities appearing in the complexity estimates are \(\Lambda_{m,Q}:=\sum_{\nu\geq 0}\Phi_{m,\nu}(H,L)M_\nu^Q\) and $\Phi^*_{m,Q} := \max_{i\in\mathcal I_Q} \Phi_{m,\nu}(H,k_iL)$. In \cref{series-profile-convergence-lemma}, which is proved in \cref{proof-of-series-profile-convergence-lemma}, we record the convergence of the profiles before stating the complexity consequence. This is the only point where the strict inequality \(\rho_Q<\kappa_Q\) is used.

\begin{lem}
\label{series-profile-convergence-lemma}
Assume that there exists \(J\geq 1\) such that $\chi_J(H,L):=\inf_{j\geq J}\mathcal C_j(H,L)^{-1/j}>0$.  Let \(0<\rho_Q<\kappa_Q\). The series defining \(\Phi_{m,\nu}(H,L)\), \(\Lambda_{m,Q}\), and \(\Phi^*_{m,Q}\) are absolutely convergent.
\end{lem}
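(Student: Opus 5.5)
The idea is to dominate every series in question by a single geometric series in $j$, whose ratio is controlled by $\rho_Q/\kappa_Q<1$. The natural route is to resum over $\nu$ first, since all terms are nonnegative and Tonelli's theorem applies. Fix $\kappa$ and $j_\kappa$. Summing $\mathcal C_{j_\kappa+1,\ell_\kappa}(H,L)\,x^{\ell_\kappa}$ over $0\le\ell_\kappa\le j_\kappa+1$ gives exactly $\alpha_{\operatorname{comm},j_\kappa+1}(H+xL)$ for $x\ge0$, by \cref{lem:polynomial_dependence_k}. By \cref{series-alpha-comm-bound} this is at most
\[
(j_\kappa+2)\,\mathcal C_{j_\kappa+1}(H,L)\max\{1,x\}^{j_\kappa+1}.
\]
Hence, for any $x\ge 0$, the generating function $\sum_{\nu}\Phi_{m,\nu}(H,L)x^\nu$ is dominated term by term by
\[
\sum_{j\ge 2m}\sum_{l=1}^m \frac{\rho_Q^{\,j+l-(2m+1)}}{l!}\sum_{j_1+\cdots+j_l=j}\prod_{\kappa=1}^l (j_\kappa+2)\,\mathcal C_{j_\kappa+1}(H,L)\max\{1,x\}^{j_\kappa+1}.
\]
Convergence of this majorant at $x=0$ gives convergence of each $\Phi_{m,\nu}$. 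At $x=|k_i|\le R_Q$ it bounds $\sum_\nu \Phi_{m,\nu}|k_i|^\nu$. Since $\mathcal I_Q$ is finite, $\Lambda_{m,Q}=\sum_\nu\Phi_{m,\nu}M^Q_\nu=\sum_i|v_i|\sum_\nu\Phi_{m,\nu}|k_i|^\nu$ is then a finite combination of convergent series. The same majorant at $x=|k_i|$ handles $\Phi^\ast_{m,Q}$, whether that quantity is read as the per-node profile or its maximum over the finitely many nodes.

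The main step is convergence of the majorant at $x=R_Q$. Set $R:=\max\{1,R_Q\}$. Pick $\rho'$ with $\rho_Q<\rho'<\kappa_Q$, and choose $\theta$ with $\rho_Q R/\chi_J<\theta<\rho' R/\chi_J<1$. The second inequality holds because $\kappa_Q R=\chi_J/(J+1)^{1/J}$, so $\rho'R/\chi_J<(J+1)^{-1/J}<1$.

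First bound the commutator constants. For $j_\kappa+1\ge J$, the definition of $\chi_J$ gives $\mathcal C_{j_\kappa+1}\le\chi_J^{-(j_\kappa+1)}$. There are only finitely many $j_\kappa+1<J$, so a constant $C_0\ge1$ absorbs them: $\mathcal C_{j_\kappa+1}\le C_0\chi_J^{-(j_\kappa+1)}$ for all $j_\kappa$. Each factor in the product is therefore at most $C_0 (j_\kappa+2)(R/\chi_J)^{j_\kappa+1}$. The product over $\kappa$ is at most
\[
C_0^l\,(R/\chi_J)^{j+l}\prod_\kappa (j_\kappa+2).
\]
Multiplying by $\rho_Q^{\,j+l}$ produces $(\rho_Q R/\chi_J)^{j+l}$, up to the fixed factor $\rho_Q^{-(2m+1)}$.

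Next count compositions. The number of compositions of $j$ into $l\le m$ even parts is at most $\binom{j}{l-1}\le (j+1)^{m}$, and $\prod_\kappa(j_\kappa+2)\le (j+2)^m$. Also $l\le m$, so $C_0^l/l!$ is a constant. The $j$-th term of the majorant is therefore $\mathcal O\big(\mathrm{poly}_m(j)\,(\rho_QR/\chi_J)^{j}\big)$. Since $\rho_QR/\chi_J<\theta<1$, the polynomial factor is absorbed by $(\theta\chi_J/(\rho_Q R))^{j}$, and the series converges by comparison with $\sum_j \theta^j$.

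This is exactly where the strict inequality $\rho_Q<\kappa_Q$ is used: it provides the room to absorb the factor $(J+1)^{1/J}$ and the polynomial combinatorial growth. The only delicate bookkeeping is the small-$j$ constant $C_0$ and the fact that $j_\kappa+1\ge3$ may still fall below $J$. Both are finite-range issues and do not affect convergence.
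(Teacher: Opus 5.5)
Your argument follows essentially the same route as the paper's. Both bound $S_n(x)\le (n+1)\,\mathcal C_n(H,L)\max\{1,x\}^n$, absorb the finitely many orders $n<J$ into a constant, count even compositions polynomially in $j$, compare with a geometric series, and then use Tonelli (nonnegative terms, finite $\mathcal I_Q$) to handle $\Lambda_{m,Q}$ and $\Phi^\ast_{m,Q}$.

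The only difference is in the ratio. The paper folds the factor $n+1\le (J+1)^{n/J}$ into the ratio, which becomes $\rho_Q/\kappa_Q$. You keep the polynomial factor and compare with ratio $\rho_Q\max\{1,R_Q\}/\chi_J$. As a result, your auxiliary $\rho'$ and $\theta$ are unnecessary, and your argument in fact only needs $\rho_Q\max\{1,R_Q\}<\chi_J$.

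One slip needs fixing. Evaluating the generating function at $x=0$ controls only $\Phi_{m,0}$, not every $\Phi_{m,\nu}$. To get all $\nu$, evaluate at $x=1$, as the paper does. There your majorant is literally unchanged, since $\max\{1,x\}=1$, and $\Phi_{m,\nu}\le\sum_{\nu'}\Phi_{m,\nu'}<\infty$ follows.
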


It remains to verify that the MPF error has the abstract form required in \cref{abstract-time-indep-mpf-revised}. By \cref{aftab-time-indep-short}, for each \(i\in\mathcal I_Q\) and \(\Delta\leq\rho_Q\), we have
\begin{align}
\label{series-MPF-step-bound-original}
\| U_{\operatorname{MPF},k_i}(\Delta) - e^{-iG_{k_i}\Delta}\| \leq& \|\vec a\|_1 \sum_{\substack{j\in 2\mathbb Z_+\\ j\geq 2m}} \sum_{l=1}^{m} \frac{\Delta^{j+l}}{l!} \sum_{\substack{j_1,\dots,j_l\in 2\mathbb Z_+\\ j_1+\cdots+j_l=j}} \prod_{\kappa=1}^{l} \left( \sum_{\ell=0}^{j_\kappa+1} |k_i|^\ell \mathcal C_{j_\kappa+1,\ell}(H,L) \right).
\end{align}
Since \(j\geq 2m\) and \(l\geq 1\), we have $j+l\geq 2m+1$. Thus, for \(\Delta\leq \rho_Q\), we have $\Delta^{j+l} \leq \Delta^{2m+1}\rho_Q^{j+l-(2m+1)}$. Let $S_n(r):=\sum_{\ell=0}^{n}r^\ell\mathcal C_{n,\ell}(H,L)$. Hence, \cref{series-MPF-step-bound-original} gives
\begin{equation}
\label{series-MPF-step-bound-abstract}
\| U_{\operatorname{MPF},k_i}(\Delta) - e^{-iG_{k_i}\Delta} \| \leq C_m \Delta^{2m+1} \sum_{\substack{j\in 2\mathbb Z_+\\ j\geq 2m}} \sum_{l=1}^{m} \frac{\rho_Q^{j+l-(2m+1)}}{l!} \sum_{\substack{j_1,\dots,j_l\in 2\mathbb Z_+\\ j_1+\cdots+j_l=j}} \prod_{\kappa=1}^{l} S_{j_\kappa+1}(|k_i|) ,
\end{equation}
Denote the nested sum above as $\Phi_m(|k_i|)$. Its maximum over \(i\in\mathcal I_Q\) is precisely \(\Phi_{m,Q}^\ast\). Summing against the quadrature weights gives
\begin{align}
\label{series-weighted-MPF-bound}
\sum_{i\in\mathcal I_Q} |v_i|  \| U_{\operatorname{MPF},k_i}(\Delta) - e^{-iG_{k_i}\Delta} \| &\leq C_m \Delta^{2m+1} \sum_{i\in\mathcal I_Q}|v_i|\Phi_m(|k_i|) = C_m \Delta^{2m+1} \Lambda_{m,Q}.
\end{align}

Since \(R_m(\delta)=0\) in \cref{aftab-time-indep-short}, \(r_Q\) and \(\mu_{m,Q}\) are defined as in \cref{abstract-rQ-delta-def,abstract-mu-mQ-def} by setting \(\Delta(R_Q,N):=\rho_Q\) and suppressing \(\delta\).  The fixed-order and optimized error-complexity bounds then follow from \cref{abstract-fixed-order-arbitrary-quadrature,abstract-optimized-order-arbitrary-quadrature}. We record in \cref{lem:simplified-root-growth-scale} an easier sufficient condition for computing commutator scaling in practice.

\begin{lem}\label{lem:simplified-root-growth-scale}
For \(j\geq2\), set $\alpha_{Q,j}:=\max_{i\in\mathcal I_Q}\alpha_{\operatorname{comm},j}(G_{k_i})$ and $\omega:=\sup_{j\geq2}\alpha_{Q,j}^{1/j}$. Assume \(0<\omega<\infty\). If \(\rho_Q\omega<1\), then \cref{aftab-time-indep-short} applies uniformly to all \(G_{k_i}\), \(i\in\mathcal I_Q\), for every \(\Delta\leq\rho_Q\).
Moreover, we have
\begin{equation}\label{eq:mu-mQ-root-growth-bound}
\mu_{m,Q} \leq \max\left\{ \rho_Q^{-1}, \max\{1,\alpha_Q^{1/(2m)}\} (\|\vec a\|_1S_m)^{1/(2m)} \omega^{1+1/(2m)} \right\}, \end{equation}
where $S_m:=\sum_{\substack{j\in2\mathbb Z_+\\ j\geq2m}} \sum_{l=1}^{m} \frac{1}{l!}\binom{j-1}{l-1}(\rho_Q \omega)^{j+l-(2m+1)}$.  
\end{lem}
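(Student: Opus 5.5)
The plan is to work node by node, using the equality in \cref{lem:polynomial_dependence_k}. That lemma gives $S_n(|k_i|)=\sum_{\ell=0}^n |k_i|^\ell\mathcal C_{n,\ell}(H,L)=\alpha_{\operatorname{comm},n}(G_{k_i})$. By the definitions of $\alpha_{Q,n}$ and $\omega$, this yields the uniform bound
\begin{equation}
S_n(|k_i|)=\alpha_{\operatorname{comm},n}(G_{k_i})\le \alpha_{Q,n}\le \omega^n,\qquad n\ge 2,\ i\in\mathcal I_Q .
\end{equation}
Everything in the statement will be reduced to this single inequality.

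\emph{Applicability.} Take $J=2$ in \cref{aftab-time-indep-short}. For every $j\ge 2$ we have $\alpha_{\operatorname{comm},j}(G_{k_i})^{-1/j}\ge \omega^{-1}$. Hence $\inf_{j\ge 2}\alpha_{\operatorname{comm},j}(G_{k_i})^{-1/j}\ge \omega^{-1}>0$, using $\omega<\infty$. Since $\rho_Q\omega<1$, every $\Delta\le\rho_Q$ satisfies $\Delta<\omega^{-1}$, which is at most this infimum. The hypotheses of \cref{aftab-time-indep-short} therefore hold uniformly in $i$. Note that this route bypasses $\chi_J$ and $\kappa_Q$ entirely, and uses $\rho_Q$ directly as the step size $\Delta(R_Q,N)$.

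\emph{Bounding the profiles.} Start from the per-node profile $\Phi_m(|k_i|)$ in \cref{series-MPF-step-bound-abstract}, with $C_m=\|\vec a\|_1$. In each product, $\prod_{\kappa=1}^{l} S_{j_\kappa+1}(|k_i|)\le \omega^{\sum_\kappa (j_\kappa+1)}=\omega^{j+l}$. The number of tuples $(j_1,\dots,j_l)\in(2\mathbb Z_+)^l$ with sum $j$ is at most the number of compositions of $j$ into $l$ positive parts, namely $\binom{j-1}{l-1}$. Factoring out $\omega^{2m+1}$ then gives
\begin{equation}
\Phi_m(|k_i|)\le \omega^{2m+1}\sum_{\substack{j\in2\mathbb Z_+\\ j\ge 2m}}\sum_{l=1}^m \frac{1}{l!}\binom{j-1}{l-1}(\rho_Q\omega)^{j+l-(2m+1)}=\omega^{2m+1}S_m .
\end{equation}
Taking the maximum over nodes gives $\Phi^\ast_{m,Q}\le \omega^{2m+1}S_m$. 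Using $\Lambda_{m,Q}=\sum_i|v_i|\Phi_m(|k_i|)\le \alpha_Q\,\Phi^\ast_{m,Q}$ gives $\Lambda_{m,Q}\le \alpha_Q\,\omega^{2m+1}S_m$.

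\emph{Assembling $\mu_{m,Q}$.} Substitute these two bounds into \cref{abstract-mu-mQ-def}, with $1/\Delta=\rho_Q^{-1}$. Taking $2m$-th roots, the second and third entries are bounded by $(\|\vec a\|_1S_m)^{1/(2m)}\omega^{1+1/(2m)}$, the third carrying an extra factor $\alpha_Q^{1/(2m)}$. The factor $\max\{1,\alpha_Q^{1/(2m)}\}$ dominates both cases, which gives \cref{eq:mu-mQ-root-growth-bound}.

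The main point to check is that $S_m$ is finite, since otherwise the bound is vacuous. For fixed $l\le m$, $\binom{j-1}{l-1}$ is a polynomial in $j$ of degree $l-1$. The sum over even $j$ is then a polynomial times a geometric series with ratio $(\rho_Q\omega)^2<1$, so it converges. This is exactly where the strict inequality $\rho_Q\omega<1$ is needed. The composition count is a crude overcount, but only an upper bound is required.
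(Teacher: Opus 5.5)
Your proposal is correct and follows the paper's proof essentially step for step. The uniform bound $\alpha_{\operatorname{comm},j}(G_{k_i})\le\alpha_{Q,j}\le\omega^j$ yields the uniform step-size condition. The composition count $\binom{j-1}{l-1}$ then gives $\Phi^\ast_{m,Q}\le\omega^{2m+1}S_m$, and $\Lambda_{m,Q}\le\alpha_Q\Phi^\ast_{m,Q}$ completes the bound on $\mu_{m,Q}$. Two of your steps are small additions that the paper leaves implicit: the explicit identification $S_n(|k_i|)=\alpha_{\operatorname{comm},n}(G_{k_i})$, and the check that $S_m<\infty$ (the paper treats finiteness separately in its lemma on $S_m$).
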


\begin{proof}
By definition, \(\alpha_{\operatorname{comm},j}(G_{k_i})\leq\alpha_{Q,j}\leq\omega^j\). Hence,  $\inf_{i\in\mathcal I_Q}\inf_{j\geq2} \alpha_{\operatorname{comm},j}(G_{k_i})^{-1/j} \geq \omega^{-1}$. If \(\Delta\leq\rho_Q\) and \(\rho_Q\omega<1\), then $\Delta\leq\rho_Q<\omega^{-1} \leq \inf_{i\in\mathcal I_Q}\inf_{j\geq2} \alpha_{\operatorname{comm},j}(G_{k_i})^{-1/j}$, 
so the step-size condition in \cref{aftab-time-indep-short} holds uniformly. For each product in the error profile in \cref{aftab-time-indep-short}, we have 
$\prod_{\kappa=1}^{l}\alpha_{Q,j_\kappa+1} \leq \prod_{\kappa=1}^{l}\omega^{j_\kappa+1} = \omega^{j+l}$ since \(j_1+\cdots+j_l=j\). The number of positive integer solutions of \(j_1+\cdots+j_l=j\) is \(\binom{j-1}{l-1}\), and the even restriction can only reduce this count. Therefore, we have 
\begin{equation}
\Phi_{m,Q}^{\ast}\leq \omega^{2m+1}S_m.
\end{equation}
Since \(\Lambda_{m,Q}\leq\alpha_Q\Phi_{m,Q}^{\ast}\), we get
\begin{equation}
(\|\vec a\|_1\Lambda_{m,Q})^{1/(2m)} \leq \alpha_Q^{1/(2m)} (\|\vec a\|_1S_m)^{1/(2m)} \omega^{1+1/(2m)}.
\end{equation}
Together with the step-size restriction \(\rho_Q^{-1}\), this gives \eqref{eq:mu-mQ-root-growth-bound}. 
\end{proof}

The upper bound \(\omega\) in \cref{lem:simplified-root-growth-scale} is easier to compute in practice. The condition \(\rho_Q\omega<1\) ensures both uniform applicability and convergence of the simplified error profiles. Moreover, \(\mu_{m,Q}\) in \cref{eq:mu-mQ-root-growth-bound} is independent of \(m\) in most applications. Indeed, \(\alpha_Q\) is independent of \(m\), \(\omega\) is independent of \(m\) for applications considered in \cref{applications}, and \cref{sm-lemma}, proved in \cref{proof-of-sm-lemma}, gives \(\sup_{m\geq1} S_m<\infty\).

\begin{lem}\label{sm-lemma}
Consider the following expression
\begin{equation}\label{eq:simplified-profile-Sm-def}
S_m(x) := \sum_{\substack{j\in2\mathbb Z_+\\ j\geq 2m}} \sum_{l=1}^{m} \frac{1}{l!} \binom{j-1}{l-1} x^{j+l-(2m+1)}
\end{equation}
for $0 < x < 1$.  We have $\sup_{m\geq1}S_m(x)^{1/(2m)}<\infty$.   
\end{lem}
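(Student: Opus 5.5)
We want $\sup_m S_m(x)^{1/(2m)}<\infty$. It suffices to show $S_m(x)\le C^{2m}$ for some constant $C=C(x)$, i.e. at most exponential growth in $m$. The plan is to substitute $j=2m+2s$ with $s\ge 0$ (since $j$ is even and $j\ge 2m$), bound the binomial coefficient crudely, and then sum a geometric-type series in $s$ and a finite sum in $l$.

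First, rewrite the exponent. With $j=2m+2s$, we get $j+l-(2m+1)=2s+l-1\ge 2s$ because $l\ge1$. Since $0<x<1$, this gives $x^{j+l-(2m+1)}\le x^{2s}$. Hence
\begin{equation}
S_m(x)\le \sum_{s\ge0} x^{2s}\sum_{l=1}^{m}\frac{1}{l!}\binom{2m+2s-1}{l-1}.
\end{equation}

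Second, control the inner sum. Use $\binom{n}{l-1}\le n^{l-1}/(l-1)!$ and, more simply, $\binom{n}{l-1}\le 2^{n}$. Also $\sum_{l\ge1}1/l!\le e$. Together these give
\begin{equation}
\sum_{l=1}^m \frac{1}{l!}\binom{2m+2s-1}{l-1}\le e\,2^{2m+2s-1}.
\end{equation}
Therefore
\begin{equation}
S_m(x)\le \frac{e}{2}\,4^{m}\sum_{s\ge0}(4x^2)^{s}.
\end{equation}

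This crude bound only converges when $x<1/2$, so the main obstacle is recovering the full range $x<1$. The $2^{n}$ bound wastes too much in $s$, since for fixed $l\le m$ the binomial coefficient grows only polynomially in $s$. The fix is to use $\binom{n}{l-1}\le n^{l-1}/(l-1)!$ with $n=2m+2s$, and split off the polynomial growth in $s$:
\begin{equation}
(2m+2s)^{l-1}\le (4m)^{l-1}(1+s)^{l-1}.
\end{equation}
Then use $(1+s)^{l-1}\le (1+s)^{m}$, and $(1+s)^m x^{s}\le C_x^{m}$ with $C_x:=\max(1,\sup_{s\ge 0}(1+s)x^{s/m})$. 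It is cleaner to absorb this differently: since $x^{2s}(1+s)^{m}=(x^{2s/m}(1+s))^m$ and $\sum_s x^{s}(1+s)^m\le m!/(1-x)^{m+1}$, the factorial $m!$ is then paid for by the $1/((l-1)!\,l!)$ factors. Concretely,
\begin{equation}
\sum_{l}\frac{(4m)^{l-1}}{l!(l-1)!}\sum_s x^{2s}(1+s)^{l-1}\le \sum_l \frac{(4m)^{l-1}(l-1)!}{l!(l-1)!(1-x^2)^{l}},
\end{equation}
using $\sum_s (1+s)^{k}y^s\le k!/(1-y)^{k+1}$ up to constants. The right-hand side is at most $\sum_l (4m/(1-x^2))^{l-1}/l!$, which is bounded by $e^{4m/(1-x^2)}/(1-x^2)$.

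Taking the $(2m)$-th root gives $S_m(x)^{1/(2m)}\le e^{2/(1-x^2)}(1-x^2)^{-1/(2m)}$. This is bounded uniformly in $m$, which proves the lemma. The only delicate step is the moment bound $\sum_s (1+s)^k y^s\lesssim k!/(1-y)^{k+1}$, which follows from comparing with the derivatives of the geometric series, $\sum_s \binom{s+k}{k}y^s=(1-y)^{-k-1}$, together with $(1+s)^k\le k!\binom{s+k}{k}$.
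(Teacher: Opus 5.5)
Your proof is correct, but its key step differs from the paper's. Both arguments start the same way. They reindex the $j$-sum: the paper writes $j=2m+n$, drops the parity restriction and keeps $x^{n+l-1}$, while you write $j=2m+2s$ and drop $x^{l-1}$. Both then use $\binom{n}{r}\le n^r/r!$.

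The paper next fixes $\tau\in(0,\log(1/x))$ and uses $y^r/r!\le e^{\tau y}\tau^{-r}$. This dominates $(2m+n)^{l-1}/(l-1)!$ by $K_\tau^m e^{\tau(2m+n)}$ with $K_\tau=\max\{1,\tau^{-1}\}$. The $n$-sum then becomes geometric with ratio $xe^{\tau}<1$, giving $S_m(x)\le \frac{e}{1-xe^{\tau}}(K_\tau e^{2\tau})^m$.

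You instead keep the polynomial factor $(1+s)^{l-1}$ and sum it exactly via $(1+s)^k\le k!\binom{s+k}{k}$ and $\sum_{s\ge 0}\binom{s+k}{k}y^s=(1-y)^{-k-1}$. This inequality is exact, so the phrase ``up to constants'' can be deleted. The $(l-1)!$ then cancels and $S_m(x)\le e^{4m/(1-x^2)}/(1-x^2)$.

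What each route buys:
\begin{itemize}
\item Your route needs no auxiliary parameter and gives the explicit constant $e^{2/(1-x^2)}(1-x^2)^{-1/2}$.
\item The paper's route, with e.g.\ $\tau=\tfrac12\log(1/x)$, gives a constant of order $\mathcal{O}((1-x)^{-1})$ as $x\to 1$.
\item Your exponential dependence on $1/(1-x^2)$ comes only from the last step $\sum_{l\le m}z^{l-1}/l!\le e^{z}$ with $z=4m/(1-x^2)$. Since $z\ge m$, the terms increase in $l$, so the sum is at most $z^{m-1}/(m-1)!$. By Stirling this yields the constant $\sqrt{8e}\,(1-x^2)^{-1/2}$.
\end{itemize}
For fixed $x$, which is all the lemma needs (and all its use with $x=\rho_Q\omega$ needs), either version suffices.

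In a write-up, remove the two exploratory sentences before ``Concretely''; neither would work:
\begin{itemize}
\item $C_x=\max(1,\sup_s(1+s)x^{s/m})$ grows linearly in $m$, so $C_x^m$ is superexponential.
\item Bounding $(1+s)^{l-1}$ by $(1+s)^m$ for every $l$ leaves an uncancelled $m!$ at $l=1$.
\end{itemize}
It is the exponent $l-1$ that makes your cancellation work.
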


Since the MPF coefficients satisfy \(\|\vec a\|_1=\mathcal O(\log m)\), we also have $\sup_{m\geq1}\|\vec a\|_1^{1/(2m)}<\infty$. This shows that \(\sup_{m \geq 1} (\|\vec a\|_1 S_m)^{1/(2m)} < \infty\). Thus, the sufficient condition \(\rho_Q\omega<1\) gives not only fixed-\(m\) convergence of the simplified profile, but also uniform-in-\(m\) boundedness required to invoke the optimized-order estimate.

\subsection{Local Hamiltonian Specialization}\label{spec-mizuta}
We next use the locality-based MPF estimate from \cref{mizuta-time-indep-short}. Unlike \cref{aftab-time-indep-short}, it includes the remainder \(R_m(\delta)=K_m\delta\), where \(K_m=\|\vec a\|_1\|\vec b\|_1\). Since \cref{mizuta-time-indep-short} gives \(p_0=\mathcal O(\log(3N/\delta))\), set \(\delta_{p_0}:=3Ne^{-p_0}\). We choose \(p_0>\log(3N)\), so \(\delta_{p_0}\in(0,1)\). For \(p_0\in\mathbb N\), define
\begin{equation}\label{loc-mu-bar-def}
\bar\mu_{m,Q}(p_0) := \max_{i\in\mathcal I_Q} \mu_{m,p_0}(G_{k_i}), 
\end{equation}
where \(\mu_{m,p_0}\) is defined in \cref{mu-m-p0}.  Since \(H\) is \(q_H\)-local and \(g_H\)-extensive, and \(L\) is \(q_L\)-local and \(g_L\)-extensive, set \(q_Q := \max\{q_H,q_L\}\) and \(g_Q := g_H + R_Q g_L\). Then \(G_{k_i}=H+k_iL\) is uniformly \(q_Q\)-local and \(g_Q\)-extensive for all \(i\in\mathcal I_Q\). The corresponding step-size threshold is
\begin{equation}\label{loc-step-size}
\Delta_{Q,p_0}:=\min\left\{\frac{1}{16e^3p_0q_Qg_Q},\frac{1}{4\bar\mu_{m,Q}(p_0)}\right\},
\end{equation}
Set \(C_m=2e^{1/2}\|\vec a\|_1\). The quantities appearing in the complexity estimate are
\begin{align}\label{loc-Lambda-Phi-def}
\Lambda_{m,Q}(p_0):=\sum_{i\in\mathcal I_Q}
|v_i|\,\mu_{m,p_0}(G_{k_i})^{2m+1},
\quad 
\Phi^{\ast}_{m,Q}(p_0):= \bar\mu_{m,Q}(p_0)^{2m+1}.
\end{align}
For fixed \(m\ge1\) and \(p_0\in\mathbb N\), define \begin{equation}\label{loc-r-choice} r_{Q,p_0}:=\left\lceil\max\left\{\frac{T}{\Delta_{Q,p_0}}, \left( 2C_mT^{2m+1} \Phi^{\ast}_{m,Q}(p_0) \right)^{1/2m}, \left( \frac{ 2e C_m T^{2m+1} \Lambda_{m,Q}(p_0) }{\epsilon_Q} \right)^{1/2m} \right\} \right\rceil
\end{equation}
In \cref{loc-mu-growth-envelope}, proved in \cref{proof-of-loc-mu-growth-envelope}, we derive a growth condition on \(\mu_{m,p_0}\) needed to control \(r_{Q,p_0}\) before applying the results of \cref{abs-rel}.

\begin{lem}\label{loc-mu-growth-envelope}
Let \(H\) be a \(q\)-local, \(g\)-extensive Hamiltonian on \(N\) sites. For every \(p_0\geq2\), the quantity \(\mu_{m,p_0}(H)\) satisfies
\begin{equation}
\mu_{m,p_0}(H) \leq C_{q,g,N}(1+p_0)^{4/3},
\label{eq:local-growth-envelope-from-mizuta}
\end{equation}
where one may take $C_{q,g,N} := \max\{1,2qg\}\max\{1,Ng\}^{1/3}$. Consequently, for a quadrature family \(\{G_{k_i}:i\in \mathcal I_Q\}\) which is uniformly \(q_Q\)-local and \(g_Q\)-extensive, one has
\begin{equation}
\overline\mu_{m,Q}(p_0) := \max_{i\in \mathcal I_Q}\mu_{m,p_0}(G_{k_i}) \leq C_{\mu,Q,m}(1+p_0)^{4/3},
\label{eq:mubar-local-growth-envelope-from-mizuta}
\end{equation}
with $C_{\mu,Q,m} := \max\{1,2q_Q g_Q\}\max\{1,Ng_Q\}^{1/3}$. 
\end{lem}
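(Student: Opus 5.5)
The plan is to bound each nested-commutator quantity \(\alpha_{\mathrm{comm},p}(H)\) by a standard locality estimate, substitute that estimate into the definition \cref{mu-m-p0}, and then take the \((j+l)\)-th root term by term. The constraint \(l\le\lfloor j/2\rfloor\) is what produces the exponent \(1/3\) on \(Ng\).

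\emph{Step 1 (single commutator bound).} For a \(q\)-local, \(g\)-extensive \(H=\sum_\gamma H_\gamma\) on \(N\) sites, I would prove
\begin{equation}
\alpha_{\mathrm{comm},p+1}(H)\le Ng\,(2qg)^{p}\,p!, \qquad p\ge1.
\end{equation}
The argument is the usual counting one, as in \cite{Childs2021theorytrotter} and in the proof of \cref{mizuta-time-indep-short}.
\begin{itemize}
\item Fix the innermost term \(H_{\gamma_{p+1}}\). Summing its norm over all \(\gamma_{p+1}\), site by site, costs at most \(Ng\).
\item Each further commutator \([H_{\gamma_s},\cdot\,]\) is nonzero only if \(X_{\gamma_s}\) meets the current support.
\item That support has size at most \(sq\), so summing \(\|H_{\gamma_s}\|\) over admissible \(\gamma_s\) costs at most \(sq\cdot g\).
\item The commutator itself contributes a factor \(2\).
\end{itemize}
The product of these factors is \(Ng\prod_{s=1}^{p}2sqg=Ng(2qg)^p p!\). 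I expect this to be the main technical step, although it is standard; I would cite it rather than reprove it.

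\emph{Step 2 (plug into \(\mu_{m,p_0}\)).} Fix \(j\ge2m\) and \(1\le l\le\lfloor j/2\rfloor\), and take a composition \(j_1+\cdots+j_l=j\) with \(2\le j_\kappa\le p_0-1\). Step 1 gives
\begin{equation}
\prod_{\kappa=1}^{l}\alpha_{\mathrm{comm},j_\kappa+1}(H)\le (Ng)^l(2qg)^j\prod_\kappa j_\kappa!\le (Ng)^l(2qg)^j(1+p_0)^{j},
\end{equation}
using \(j_\kappa!\le j_\kappa^{j_\kappa}\le(1+p_0)^{j_\kappa}\). The number of admissible compositions is at most \(p_0^{\,l}\le(1+p_0)^l\), since each part takes at most \(p_0\) values. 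Hence the inner sum in \cref{mu-m-p0} is at most \((Ng)^l(2qg)^j(1+p_0)^{j+l}\).

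\emph{Step 3 (take the root).} Raising to the power \(1/(j+l)\), I would bound each factor separately.
\begin{itemize}
\item The first factor satisfies \((Ng)^{l/(j+l)}\le\max\{1,Ng\}^{l/(j+l)}\le\max\{1,Ng\}^{1/3}\), because \(l\le j/2\) implies \(l/(j+l)\le1/3\).
\item The second factor satisfies \((2qg)^{j/(j+l)}\le\max\{1,2qg\}\).
\item The third factor equals \(1+p_0\), which is at most \((1+p_0)^{4/3}\) since \(1+p_0\ge1\).
\end{itemize}
Taking the supremum over \(j\) and \(l\) gives \cref{eq:local-growth-envelope-from-mizuta} with \(C_{q,g,N}=\max\{1,2qg\}\max\{1,Ng\}^{1/3}\).

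If a sharper or slightly different locality lemma is used in Step 1, the slack in the \(4/3\) exponent absorbs it. For example, a bound with \((p+1)!\) in place of \(p!\) costs \((1+p_0)^{l}\) more, and after the root this is at most an extra \((1+p_0)^{1/3}\).

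\emph{Step 4 (uniform family).} As noted before \cref{loc-step-size}, every \(G_{k_i}=H+k_iL\) is \(q_Q\)-local and \(g_Q\)-extensive with \(g_Q=g_H+R_Qg_L\), since \(|k_i|\le R_Q\). Applying the single-Hamiltonian bound to each \(G_{k_i}\) and maximizing over \(i\in\mathcal I_Q\) yields \cref{eq:mubar-local-growth-envelope-from-mizuta} with the stated constant \(C_{\mu,Q,m}\).
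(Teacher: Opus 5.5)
Your proposal is correct and follows essentially the same route as the paper: it uses the crude locality bound $\alpha_{\mathrm{comm},p}(H)\le (p-1)!\,(2qg)^{p-1}Ng$ (which the paper cites as Mizuta's Eq.~8), bounds $j_\kappa!$ by a power of $p_0$, counts admissible compositions by $(1+p_0)^l$, and uses $l/(j+l)\le 1/3$ after taking the $(j+l)$-th root. The only difference is bookkeeping: you pair the counting factor $(1+p_0)^l$ with the factorial factor $(1+p_0)^j$, giving exactly $(1+p_0)$ after the root, whereas the paper groups $(1+p_0)^l$ with $(Ng)^l$ and bounds $p_0^{j/(j+l)}\le 1+p_0$ separately, which is where its non-sharp exponent $4/3$ comes from; your argument therefore gives the slightly stronger bound $C_{q,g,N}(1+p_0)$, which implies the stated one.
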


By \cref{loc-mu-growth-envelope}, \(\bar\mu_{m,Q}(p_0)\), and hence \(\Delta_{Q,p_0}^{-1}\), grow at most polynomially in \(p_0\). By \cref{loc-Lambda-Phi-def}, we have $\Phi_{m,Q}^{\ast}(p_0)\leq \bar\mu_{m,Q}(p_0)^{2m+1}$ and $\Lambda_{m,Q}(p_0)\leq \alpha_Q\bar\mu_{m,Q}(p_0)^{2m+1}$. Thus, \(\Phi_{m,Q}^{\ast}(p_0)\), \(\Lambda_{m,Q}(p_0)\), and \(r_{Q,p_0}\) grow at most polynomially in \(p_0\). Since \(\delta_{p_0}=3Ne^{-p_0}\), we have $r_{Q,p_0}K_m\delta_{p_0}\to 0$ as \(p_0\to\infty\). Hence, there exists \(p_0>\log(3N)\) such that
\begin{equation}\label{loc-p0-admissible}
K_m\delta_{p_0} \le \min\left\{ \frac{1}{2r_{Q,p_0}}, \frac{\epsilon_Q}{2e\,\alpha_Qr_{Q,p_0}} \right\}.
\end{equation}
Under \cref{mizuta-time-indep-short}, \(R_m(\delta)=K_m\delta\). By \cref{loc-p0-admissible}, we have
\begin{equation}
r_mR_m(\delta_m)=r_mK_m\delta_m\leq \frac12, \qquad r_m\alpha_QR_m(\delta_m)=r_m\alpha_QK_m\delta_m\leq \frac{\epsilon_Q}{2e}.
\end{equation}
Thus, the accumulated locality-remainder conditions in \cref{abstract-fixed-order-arbitrary-quadrature} hold. By \cref{loc-r-choice},
\begin{equation}
\frac{T}{r_m}\leq \Delta_{Q,p_0}, \qquad \frac{C_mT^{2m+1}\Phi_{m,Q}^{\ast}(p_0)}{r_m^{2m}}\leq \frac12, \qquad \frac{C_mT^{2m+1}\Lambda_{m,Q}(p_0)}{r_m^{2m}}\leq \frac{\epsilon_Q}{2e}.
\end{equation}
Therefore, we have
\begin{equation}
r_m\eta_Q(T/r_m) = \frac{C_mT^{2m+1}\Phi_{m,Q}^{\ast}(p_0)}{r_m^{2m}} + r_mR_m(\delta_m) \leq 1.
\end{equation}
Combining the accumulated-remainder condition with the third inequality above gives the post-quadrature MPF error budget required in \cref{abstract-fixed-order-arbitrary-quadrature}. Hence, all hypotheses of \cref{abstract-fixed-order-arbitrary-quadrature} hold, and the fixed-order error and complexity estimates follow. 
Define
\begin{equation}\label{loc-mu-mQ-def}
\mu_{m,Q} := \max\left\{ \frac{1}{\Delta_{Q,p_0(m)}}, ( C_m\Phi_{m,Q}^{\ast}(p_0(m)) )^{1/2m}, \left( C_m \Lambda_{m,Q}(p_0(m)) \right)^{1/2m} \right\}.
\end{equation}
For the optimized statement, apply \cref{abstract-optimized-order-arbitrary-quadrature} with the admissible parameters \(\delta_m=\delta_{p_0(m)}\), \(r_Q(\delta_m)=r_m\), and \(\mu_{m,Q}(\delta_m)=\mu_{m,Q}\). The growth condition \(\mu_{m,Q}\leq \mu_QF(m)\) verifies the hypothesis in \cref{F-cond} for the chosen $m$.

\subsection{Quadrature Rules}\label{quad-rules} 
We now fix a quadrature rule \(Q\). For a family \(\{Q_\eta\}\) of quadrature rules, we choose the quadrature parameters \(\eta\) so that \(E_{\operatorname{quad}}(Q_\eta)<\epsilon_{\operatorname{comb}}\), leaving the residual budget $\epsilon_{Q_\eta}:=\epsilon_{\operatorname{comb}} - E_{\operatorname{quad}}(Q_\eta)>0$.  The final choice of quadrature parameters is then governed by the complexity estimates in \cref{abstract-fixed-order-arbitrary-quadrature,abstract-optimized-order-arbitrary-quadrature}.

\subsubsection{Uniform Trapezoidal Rule}
\label{trap-rule-subsub}
We first specialize to the uniform trapezoidal rule. Following the Low--Somma discretization strategy~\cite{low2025optimallchs}, we compare the infinite LCHS integral with the infinite trapezoidal sum using strip analyticity, and then truncate the lattice to obtain the quadrature rule used in the LCU implementation. We use the notation
\begin{equation}\label{eq:FT-def}
F_T(z) := \frac{1}{\sqrt{2\pi}}\hat f_{a,b}(z;c,d)U_z(T).
\end{equation}
Then \(W_\infty = \frac{1}{\sqrt{2\pi}}\int_{\mathbb{R}} \hat f_{a,b}(k;c,d)U_k(T)\,dk = \int_{\mathbb{R}} F_T(k)\,dk\). We first state \cref{general-kernel-strip-bound}, which is proved in \cref{proof-of-general-kernel-strip-bound},  a basic estimate used in the complexity analysis of the uniform trapezoidal quadrature rule.

\begin{lem}\label{general-kernel-strip-bound}
Fix a kernel profile \(\vec{\theta}=(a,b,c,d)\), and define
\begin{equation}\label{eq:rho-star-def}
\rho_\ast := \begin{cases} 1, & a=1,\\ \min\{1,b\}, & a>1. \end{cases}
\end{equation}
For every \(\rho\in(0,\rho_\ast)\), \(F_T\) is analytic on an open set containing the closed strip \(\overline S_\rho := \{z\in\mathbb C:\ |\operatorname{Im}z|\le \rho\}\). Moreover, \(F_T(x+i\beta)\) decays uniformly to zero on \(\overline S_\rho\) as \(|x|\to\infty\), and
\begin{equation}\label{eq:strip-L1-bound}
\sup_{|\beta|\le \rho}\int_{\mathbb R}\|F_T(x+i\beta)\|\,dx \le M_{a,b}(\rho),
\end{equation}
where $D_{a,b}(\rho) = (b-\rho)^{a-1}$ and
\begin{equation}\label{eq:Mab-sharp-def}
M_{a,b}(\rho) := \frac{c}{\sqrt{\pi}}\frac{(b+1)^{a-1}}{(1-\rho)D_{a,b}(\rho)}\exp\left(d+\frac{\rho^2-1}{4c^2}+\rho\max\{-d,d+T\|L\|\}\right).
\end{equation}
\end{lem}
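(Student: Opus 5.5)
The plan is to bound the two factors of \(F_T(z)=\frac{1}{\sqrt{2\pi}}\hat f_{a,b}(z;c,d)\,U_z(T)\) separately on the strip, where \(U_z(T)=e^{-i(H+zL)T}\) is extended to complex \(z\).

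\emph{Analyticity.}
\begin{itemize}
\item The map \(z\mapsto e^{-i(H+zL)T}\) is an entire operator-valued function, since the exponential series converges locally uniformly.
\item In \cref{low-somma-kernel}, the numerator \(e^{d(1-iz)}e^{-(z^2+1)/(4c^2)}\) is entire.
\item The factor \((1-iz)^{-1}\) has its only pole at \(z=-i\).
\item When \(a>1\), the factor \((b+iz)^{-(a-1)}\) has its only pole at \(z=ib\); when \(a=1\) this factor is absent.
\end{itemize}
Hence \(F_T\) is analytic on the open strip \(|\operatorname{Im}z|<\rho_\ast\). For \(\rho<\rho_\ast\), this open strip contains \(\overline S_\rho\), which gives the first claim.

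\emph{Pointwise bound on \(\overline S_\rho\).} Write \(z=x+i\beta\) with \(|\beta|\le\rho\).
\begin{itemize}
\item Propagator. We have \(-i(H+zL)T=-i(H+xL)T+\beta TL\), where \(H+xL\) is Hermitian. By the logarithmic-norm (Lumer--Phillips) estimate \(\|e^{X}\|\le e^{\lambda_{\max}((X+X^\dagger)/2)}\), applied with Hermitian part \(\beta TL\) and using \(L\succeq0\),
\[
\|U_z(T)\|\le e^{\max\{0,\beta\}T\|L\|}.
\]
If one prefers to avoid the log-norm, the same bound follows from differentiating \(\|e^{tX}v\|^2\) in \(t\).
\item Exponential factors. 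Since \(1-iz=1+\beta-ix\), we get \(|e^{d(1-iz)}|=e^{d(1+\beta)}\). Since \(\operatorname{Re}(z^2)=x^2-\beta^2\), we get \(|e^{-(z^2+1)/(4c^2)}|=e^{-(x^2+1-\beta^2)/(4c^2)}\le e^{(\rho^2-1)/(4c^2)}e^{-x^2/(4c^2)}\).
\item Denominators. We have \(|1-iz|\ge 1+\beta\ge1-\rho\). Also \(|b+iz|=|b-\beta+ix|\ge b-\rho\), so \(|b+iz|^{a-1}\ge D_{a,b}(\rho)\).
\end{itemize}

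\emph{Collecting the \(\beta\)-dependence.} The exponent contributes \(d+\beta d+\max\{0,\beta\}T\|L\|\). This is convex and piecewise linear in \(\beta\), so its maximum over \([-\rho,\rho]\) is attained at an endpoint. At \(\beta=-\rho\) it equals \(d-\rho d\); at \(\beta=\rho\) it equals \(d+\rho(d+T\|L\|)\). Hence it is at most \(d+\rho\max\{-d,\,d+T\|L\|\}\), uniformly in \(|\beta|\le\rho\).

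\emph{Conclusion.} The resulting pointwise bound on \(\|F_T(x+i\beta)\|\) is a constant times \(e^{-x^2/(4c^2)}\), uniformly in \(|\beta|\le\rho\). This Gaussian envelope gives the uniform decay to zero as \(|x|\to\infty\). Integrating in \(x\) uses \(\int_{\mathbb R}e^{-x^2/(4c^2)}\,dx=2c\sqrt\pi\). Combined with the prefactor \(\frac{1}{\sqrt{2\pi}}\cdot\frac{(b+1)^{a-1}}{\sqrt{2\pi}}\), this yields \(\frac{c}{\sqrt\pi}(b+1)^{a-1}\) and hence exactly \(M_{a,b}(\rho)\).

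\emph{Main obstacle.} The step needing the most care is the propagator bound for non-Hermitian exponents. A crude estimate \(\|e^{-iGT+\beta TL}\|\le e^{|\beta|T\|L\|}\) would lose the asymmetric form \(\max\{-d,d+T\|L\|\}\). The sign information from \(L\succeq0\) (the Hermitian part has spectrum in \([\beta T\lambda_{\min}(L),\beta T\lambda_{\max}(L)]\)) must be used to get the one-sided factor \(e^{\max\{0,\beta\}T\|L\|}\).
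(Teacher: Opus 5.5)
Your proof is correct and follows the paper's argument essentially step for step. The shared steps are analyticity from the pole locations $-i$, $ib$ together with entireness of $z\mapsto U_z(T)$, the one-sided propagator bound $e^{\max\{\beta,0\}T\|L\|}$ obtained from $L\succeq 0$, endpoint maximization of a convex exponent in $\beta$, and the Gaussian integral $2c\sqrt{\pi}$. The differences are minor: you prove entireness and the propagator bound directly (exponential series, logarithmic norm), whereas the paper cites Lemmas~3 and~4 of Low--Somma, and you bound the $\beta^2/(4c^2)$ term separately instead of including it in the convexity argument.
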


We use the uniform trapezoidal rule because it converges exponentially for functions analytic in a strip around the real axis \cite{low2025optimallchs}. This is formalized below.

\begin{lem}\label{infinite-line-trapezoidal-estimate}
Let \(\rho>0\). Suppose that \(g\) is analytic on an open set containing \(\overline S_\rho\) and that \(g\) decays uniformly to zero on \(\overline S_\rho\) as \(|z|\to\infty\). If \(M>0\) satisfies $\int_{\mathbb R}\|g(x+i\beta)\|\,dx\le M$
for every \(\beta\in(-\rho,\rho)\), then, for every \(h>0\), we have
\begin{equation}\label{eq:trap-infinite-line}
\left\|\int_{\mathbb R}g(x)\,dx-h\sum_{q\in\mathbb Z}g(qh)\right\|\le \frac{2M}{e^{2\pi \rho/h}-1}.
\end{equation}
\end{lem}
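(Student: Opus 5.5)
The plan is the classical Poisson-summation argument for the trapezoidal rule on the real line, carried out for operator-valued $g$; the finite dimensionality of the matrices lets us work entrywise or with the norm directly.

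\textbf{Step 1 (Poisson summation).} Define the Fourier transform $\widetilde g(\xi):=\int_{\mathbb R} g(x)e^{-i\xi x}\,dx$. This is well defined because $g(\cdot+i\beta)\in L^1$ for $|\beta|<\rho$; by dominated convergence and the uniform decay, this extends to the boundary lines. Poisson summation then gives
\begin{equation*}
h\sum_{q\in\mathbb Z}g(qh)=\sum_{n\in\mathbb Z}\widetilde g(2\pi n/h),
\end{equation*}
and the $n=0$ term is exactly $\int_{\mathbb R}g(x)\,dx$. So the error equals $\sum_{n\neq0}\widetilde g(2\pi n/h)$. To justify the identity, I would first establish the decay bound of Step 2; it makes the right-hand side absolutely summable. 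Continuity of $g$ together with summability of the samples (which follows from analyticity and the strip $L^1$ bound via a Cauchy mean-value estimate on small discs) then closes the argument.

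\textbf{Step 2 (exponential decay of $\widetilde g$).} Take $\xi>0$ and $\beta\in(0,\rho)$. Shift the contour from $\mathbb R$ to $\mathbb R-i\beta$ using Cauchy's theorem on rectangles $[-X,X]\times[-\beta,0]$. The vertical sides vanish as $X\to\infty$ because $g$ decays uniformly on the strip and the exponential factor is bounded there. This gives
\begin{equation*}
\widetilde g(\xi)=\int_{\mathbb R}g(x-i\beta)e^{-i\xi(x-i\beta)}\,dx,
\qquad\text{so}\qquad
\|\widetilde g(\xi)\|\le e^{-\beta\xi}\int_{\mathbb R}\|g(x-i\beta)\|\,dx\le Me^{-\beta\xi}.
\end{equation*}
For $\xi<0$, shift upward instead to obtain the same bound with $|\xi|$. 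Then let $\beta\uparrow\rho$ to get $\|\widetilde g(\xi)\|\le Me^{-\rho|\xi|}$.

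\textbf{Step 3 (summing the geometric series).} Summing over $n\neq0$ gives
\begin{equation*}
\sum_{n\neq0}Me^{-2\pi\rho|n|/h}=\frac{2M}{e^{2\pi\rho/h}-1},
\end{equation*}
which is \cref{eq:trap-infinite-line}.

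The main obstacle is the rigor of Step 1 for a merely $L^1$-on-lines analytic function. The cleanest way to sidestep Poisson summation is to integrate $g(z)\bigl(\cot(\pi z/h)\mp i\bigr)/(2i)$ around the rectangle with horizontal sides at $\operatorname{Im}z=\pm\beta$ (the Trefethen--Weideman kernel argument). On $\operatorname{Im}z=\pm\beta$ one has $|\cot(\pi z/h)\mp i|\le 2/(e^{2\pi\beta/h}-1)$, and each half-line contributes $M/(e^{2\pi\beta/h}-1)$. Letting $\beta\uparrow\rho$ gives the stated bound. The residues at $z=qh$ reproduce $h\sum g(qh)$, and the uniform decay kills the vertical sides. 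I would present this contour version as the actual proof.
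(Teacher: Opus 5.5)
Your proof is correct in substance. It also supplies something the paper does not: the paper's proof is only a citation of Trefethen--Weideman's Theorem~5.1, in the matrix-valued form of Low--Somma's Lemma~10, whereas you reconstruct the standard argument behind that result. Your Poisson-summation route (Steps 1--3) is complete as written:
\begin{itemize}
\item the contour shift gives $\|\widetilde g(\xi)\|\le Me^{-\beta|\xi|}$ for every $\beta<\rho$, so only the open-strip $L^1$ bound in the hypothesis is ever used;
\item the disc mean-value estimate, applied to the vector-valued $g$ before taking norms, gives absolute and locally uniform convergence of the periodization;
\item the geometric sum gives exactly $2M/(e^{2\pi\rho/h}-1)$.
\end{itemize}
The citation buys brevity. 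Your version makes explicit that the operator-valued setting and the hypotheses exactly as stated suffice.

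In the contour version you say you would present, the signs are reversed as written. Since $\cot w\to -i$ as $\operatorname{Im}w\to+\infty$ and $\cot w\to i$ as $\operatorname{Im}w\to-\infty$, the small kernels are $\cot(\pi z/h)+i$ on $\operatorname{Im}z=+\beta$ and $\cot(\pi z/h)-i$ on $\operatorname{Im}z=-\beta$. Indeed, $|\cot(\pi z/h)\pm i|=2/|1-e^{\mp 2\pi i z/h}|\le 2/(e^{2\pi\beta/h}-1)$ on $\operatorname{Im}z=\pm\beta$. With your pairing ($\mp i$ on $\operatorname{Im}z=\pm\beta$), the kernels have size about $2$. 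The two horizontal integrals then add up to $h\sum_q g(qh)+\int_{\mathbb R}g(x)\,dx$ rather than to the quadrature error.

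Also place the vertical sides at $\operatorname{Re}z=\pm(Q+\tfrac12)h$. There $|\cot(\pi z/h)|=|\tanh(\pi\operatorname{Im}z/h)|\le 1$, so the uniform decay of $g$ really does kill those sides. This gives convergence of the symmetric partial sums, which your mean-value estimate upgrades to absolute convergence. With these corrections either route proves the lemma.
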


\begin{proof}
This is the exponentially convergent trapezoidal estimate of Trefethen--Weideman~\cite[Theorem~5.1]{trefethen2014exponentially}, in the matrix-valued form used by Low--Somma~\cite[Lemma~10]{low2025optimallchs}.
\end{proof}

We now analyze the uniform trapezoidal rule in our setting. Fix \(R>0\) and \(h>0\), and set $n_h:= \left\lceil\frac{R}{h}\right\rceil$ and $R_h:=n_hh$.  Then \(R_h\ge R\) and \(R_h<R+h\). Define the finite uniform trapezoidal rule
\begin{equation}\label{eq:uniform-trap-Q-def}
Q_{R,h}^{\operatorname{trap}}
:=
\{(k_i,w_i): k_i=qh,\ w_i=h,\ q=-n_h,\ldots,n_h\}.
\end{equation}
For this rule, \(R_{Q_{R,h}^{\operatorname{trap}}}=R_h\). The LCHS quadrature coefficients are \(v_q=\frac{h}{\sqrt{2\pi}}\hat f_{a,b}(qh;c,d)\). The corresponding ideal quadrature operator is
\begin{equation}\label{eq:uniform-lattice-rule}
W_{R,h}^{\operatorname{trap},\operatorname{ideal}} := \frac{h}{\sqrt{2\pi}}\sum_{q=-n_h}^{n_h}\hat f_{a,b}(qh;c,d)U_{qh}(T) = h\sum_{q=-n_h}^{n_h}F_T(qh).
\end{equation}
The associated LCU normalization and discrete moments are
\begin{equation}\label{eq:trap-alpha-def}
\alpha_{R,h}^{\operatorname{trap}} := \frac{h}{\sqrt{2\pi}}\sum_{q=-n_h}^{n_h}|\hat f_{a,b}(qh;c,d)|, \quad M_{\nu,R,h}^{\operatorname{trap}} := \frac{h}{\sqrt{2\pi}}\sum_{q=-n_h}^{n_h}|\hat f_{a,b}(qh;c,d)|\,|qh|^\nu,
\end{equation}
for \(\nu\ge0\). Moreover, the omitted lattice tail is
\begin{equation}\label{eq:lattice-tail}
E_{\operatorname{tail},h}(R_h) := \frac{h}{\sqrt{2\pi}}\sum_{|q|>n_h}|\hat f_{a,b}(qh;c,d)|.
\end{equation}
\cref{uniform-lattice-quadrature} bounds the quadrature errors \(E_{\operatorname{tail},h}(R_h)\) and \(E_{\operatorname{trap}}(h,\rho)\), the latter arising from \cref{infinite-line-trapezoidal-estimate}.

\begin{prop}\label{uniform-lattice-quadrature}
Fix a kernel profile \(\vec{\theta}=(a,b,c,d)\), and let \(\rho\in(0,\rho_\ast)\), where \(\rho_\ast\) is defined in \cref{eq:rho-star-def}. Let \(R>0\), \(h>0\). Then
\begin{equation}\label{eq:uniform-lattice-error}
E_{\operatorname{quad}}(Q_{R,h}^{\operatorname{trap}}) = \|W_\infty-W_{R,h}^{\operatorname{trap},\operatorname{ideal}}\| \le E_{\operatorname{trap}}(h,\rho)+E_{\operatorname{tail},h}(R_h),
\end{equation}
where \(E_{\operatorname{trap}}(h,\rho):=\frac{2M_{a,b}(\rho)}{e^{2\pi \rho/h}-1}\). Moreover, we have $E_{\operatorname{tail},h}(R_h)\le E_{\operatorname{trunc}}(R_h)\le E_{\operatorname{trunc}}(R)$. 
Consequently, we have
\begin{equation}\label{eq:uniform-lattice-error-simple}
E_{\operatorname{quad}}(Q_{R,h}^{\operatorname{trap}})\le E_{\operatorname{trap}}(h,\rho)+E_{\operatorname{trunc}}(R).
\end{equation}
\end{prop}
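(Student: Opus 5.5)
The argument splits the error at the infinite lattice sum. Insert $W^{\operatorname{trap}}_{h} := h\sum_{q\in\mathbb Z}F_T(qh)$, which converges absolutely because $\|F_T(qh)\|\le |\hat f_{a,b}(qh;c,d)|/\sqrt{2\pi}$ and the kernel has Gaussian decay. The triangle inequality then gives
\begin{equation}
\|W_\infty-W_{R,h}^{\operatorname{trap},\operatorname{ideal}}\|\le \|W_\infty-W^{\operatorname{trap}}_{h}\|+\Big\|h\sum_{|q|>n_h}F_T(qh)\Big\|.
\end{equation}

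For the first term, fix $\rho\in(0,\rho_\ast)$. \cref{general-kernel-strip-bound} supplies three facts: analyticity of $F_T$ on a neighbourhood of $\overline S_\rho$, uniform decay there, and the bound $M_{a,b}(\rho)$ on the strip-integrals. These are exactly the hypotheses of \cref{infinite-line-trapezoidal-estimate} with $g=F_T$ and $M=M_{a,b}(\rho)$, which yields $\|W_\infty-W^{\operatorname{trap}}_h\|\le E_{\operatorname{trap}}(h,\rho)$. For the second term, $U_{qh}(T)$ is unitary for real $qh$, so its norm is at most $\frac{h}{\sqrt{2\pi}}\sum_{|q|>n_h}|\hat f_{a,b}(qh;c,d)|=E_{\operatorname{tail},h}(R_h)$. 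Together these prove \cref{eq:uniform-lattice-error}.

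The comparison $E_{\operatorname{tail},h}(R_h)\le E_{\operatorname{trunc}}(R_h)$ is the delicate step. The bound on $E_{\operatorname{trunc}}$ in \cref{trunc-error} is an integral bound, while the tail is a Riemann sum. I would use monotonicity: $|\hat f_{a,b}(k;c,d)|$ is even in $k$ and nonincreasing in $|k|$. This holds because
\begin{equation}
|\hat f_{a,b}(k;c,d)|=\frac{(b+1)^{a-1}}{\sqrt{2\pi}}\,\frac{e^{d}e^{-(k^2+1)/(4c^2)}}{(1+k^2)^{1/2}(b^2+k^2)^{(a-1)/2}},
\end{equation}
and every factor decreases in $|k|$. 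For $q>n_h$ the interval $[(q-1)h,qh]$ lies in $[R_h,\infty)$, and monotonicity gives $h|\hat f(qh)|\le\int_{(q-1)h}^{qh}|\hat f(k)|\,dk$. The case $q<-n_h$ is symmetric. Summing over $|q|>n_h$ gives $E_{\operatorname{tail},h}(R_h)\le\frac{1}{\sqrt{2\pi}}\int_{\mathbb R\setminus[-R_h,R_h]}|\hat f|\,dk$.

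The phrase ``$E_{\operatorname{trunc}}$'' in the statement should be read as this tail-integral quantity, or its bound from \cref{trunc-error}, rather than the operator norm defined in \cref{sec:alg-err-analysis}. That operator norm is only bounded above by the tail integral, so the literal inequality would not follow. Next, $R_h\ge R$ gives $\int_{|k|>R_h}\le\int_{|k|>R}$, since the integrand is nonnegative and the domain shrinks; the same holds for the erfc bound, which is decreasing in $R$. This yields $E_{\operatorname{trunc}}(R_h)\le E_{\operatorname{trunc}}(R)$, and combining everything gives \cref{eq:uniform-lattice-error-simple}.
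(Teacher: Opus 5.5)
Your proof is correct and follows the paper's argument essentially step for step. It applies the infinite-line trapezoidal estimate to $F_T$ via \cref{general-kernel-strip-bound}, then uses a triangle inequality with the omitted lattice terms bounded by unitarity of $U_{qh}(T)$, and finally a monotone Riemann-sum comparison of the lattice tail with the integral of $|\hat f_{a,b}|$ over $|k|>R_h$. Your caveat about $E_{\operatorname{trunc}}$ is well taken: the paper's proof likewise only establishes the comparison with the scalar tail integral $\frac{1}{\sqrt{2\pi}}\int_{\mathbb R\setminus[-R_h,R_h]}|\hat f_{a,b}(k;c,d)|\,dk$ (and its erfc bound), and that is exactly the quantity used downstream in \cref{trap-budget}.
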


\begin{proof}
Applying \cref{infinite-line-trapezoidal-estimate} to \(g=F_T\), and using \cref{general-kernel-strip-bound} with \(M=M_{a,b}(\rho)\), gives
\begin{equation}
\left\|\int_{\mathbb R}F_T(x)\,dx-h\sum_{q\in\mathbb Z}F_T(qh)\right\|\le E_{\operatorname{trap}}(h,\rho).
\end{equation}
Therefore,
\begin{align}
\left\|W_\infty-W_{R,h}^{\operatorname{trap},\operatorname{ideal}}\right\| &\le \left\|\int_{\mathbb R}F_T(x)\,dx-h\sum_{q\in\mathbb Z}F_T(qh)\right\|+h\sum_{|q|>n_h}\|F_T(qh)\| \\
&\le E_{\operatorname{trap}}(h,\rho)+E_{\operatorname{tail},h}(R_h).
\end{align}
It remains to compare the lattice tail with the continuous truncation error. For real \(k\), \(U_k(T)\) is unitary and
\begin{equation}\label{eq:real-kernel-modulus}
|\hat f_{a,b}(k;c,d)| = \frac{(b+1)^{a-1}e^{d-\frac{1}{4c^2}}}{\sqrt{2\pi}}\frac{e^{-k^2/(4c^2)}}{\sqrt{1+k^2}(b^2+k^2)^{(a-1)/2}}.
\end{equation}
This function is even. For \(k>0\),
\begin{equation}
\frac{d}{dk}\log|\hat f_{a,b}(k;c,d)| = -\frac{k}{2c^2}-\frac{k}{1+k^2}-(a-1)\frac{k}{b^2+k^2}<0.
\end{equation}
Hence \(k\mapsto|\hat f_{a,b}(k;c,d)|\) is decreasing on \([0,\infty)\). Since \(R_h=n_hh\), monotonicity gives
\begin{equation}
h\sum_{q=n_h+1}^{\infty}|\hat f_{a,b}(qh;c,d)| \le \int_{R_h}^{\infty}|\hat f_{a,b}(k;c,d)|\,dk.
\end{equation}
The same argument applies on the negative half-line. Multiplying by \(1/\sqrt{2\pi}\) gives \(E_{\operatorname{tail},h}(R_h)\le E_{\operatorname{trunc}}(R_h)\). Finally, since \(R_h\ge R\), monotonicity of \(E_{\operatorname{trunc}}\) gives \(E_{\operatorname{trunc}}(R_h)\le E_{\operatorname{trunc}}(R)\).
\end{proof}

\cref{uniform-lattice-quadrature} bounds the quadrature error by mesh and tail errors. Then \cref{trap-budget} gives sufficient conditions for controlling both. In what follows, we use

\begin{align}
C_{\operatorname{tail}} := \frac{c}{\sqrt{\pi}}\left(\frac{b+1}{b}\right)^{a-1}e^{d-\frac{1}{4c^2}}, \quad
C_{\operatorname{ker}} := \frac{1}{2\pi}\left(\frac{b+1}{b}\right)^{a-1}e^{d-\frac{1}{4c^2}}, \quad
\widetilde C_{\operatorname{ker}} := \frac{(b+1)^{a-1}}{2\pi}e^{d-\frac{1}{4c^2}}.
\end{align}
\cref{trap-budget} bounds the quantities that determine the post-quadrature parameters for the trapezoidal quadrature rule.

\begin{cor}\label{trap-budget}
Let \(\rho\in(0,\rho_\ast)\) and \(\epsilon_{\operatorname{mesh}},\epsilon_{\operatorname{tail}}>0\). Define
\begin{equation}\label{eq:tail-log-def}
\ell_{\operatorname{tail}} := \max\left\{1,\log\left(\frac{C_{\operatorname{tail}}}{\epsilon_{\operatorname{tail}}}\right)\right\}, \quad \ell_{\operatorname{mesh}} := \log\left(1+\frac{2M_{a,b}(\rho)}{\epsilon_{\operatorname{mesh}}}\right).
\end{equation}
Choose \(h=\min\left\{1,2c,\frac{2\pi\rho}{\ell_{\operatorname{mesh}}}\right\}\) and \(R=2c\sqrt{\ell_{\operatorname{tail}}}\), and set \(n_h=\lceil R/h\rceil\), \(R_h=n_hh\). Then $E_{\operatorname{quad}}(Q_{R,h}^{\operatorname{trap}})\le \epsilon_{\operatorname{mesh}}+\epsilon_{\operatorname{tail}}$. Moreover, we have
\begin{align}
|\mathcal I_{Q_{R,h}^{\operatorname{trap}}}| &= 2n_h+1 = \mathcal O\left(1+(1+c)\sqrt{\ell_{\operatorname{tail}}}+\frac{c}{\rho}\sqrt{\ell_{\operatorname{tail}}}\,\ell_{\operatorname{mesh}}\right), \label{eq:trap-comparison-cardinality-1}\\
R_{Q_{R,h}^{\operatorname{trap}}} &= R_h = \mathcal O\left(c\sqrt{\ell_{\operatorname{tail}}}+\min\left\{1,2c,\frac{\rho}{\ell_{\operatorname{mesh}}}\right\}\right). \label{eq:trap-comparison-cardinality-2}
\end{align}
For a fixed admissible kernel profile and balanced budgets \(\epsilon_{\operatorname{mesh}}\asymp\epsilon_{\operatorname{tail}}\asymp\epsilon_q\), we obtain
\begin{equation}\label{eq:trap-balanced-cardinality}
|\mathcal I_{Q_{R,h}^{\operatorname{trap}}}| = \mathcal O\left(\frac{c}{\rho}\log^{3/2}\left(\frac1{\epsilon_q}\right)\right), \quad R_{Q_{R,h}^{\operatorname{trap}}} = \mathcal O\left(c\sqrt{\log\left(\frac1{\epsilon_q}\right)}\right),
\end{equation}
for fixed \(\rho\) and \(c\). For \(\nu\ge0\), define
\begin{equation}\label{eq:trap-sharp-moment-scale}
\mathfrak M_{\nu,a,b}(c) := \begin{cases} 1, & \nu<a-1,\\ 1+\log(1+c), & \nu=a-1,\\ 1+c^{\nu-a+1}, & \nu>a-1. \end{cases}
\end{equation}
Then there exists a constant \(C_{\nu,a,b}^{\operatorname{trap}}>0\), depending only on \(\nu,a,b\), such that
\begin{equation}\label{eq:trap-sharp-moment-bound}
M_{\nu,R,h}^{\operatorname{trap}}\le \widetilde C_{\operatorname{ker}}C_{\nu,a,b}^{\operatorname{trap}}\mathfrak M_{\nu,a,b}(c).
\end{equation}
In particular, \(\alpha_{R,h}^{\operatorname{trap}}=M_{0,R,h}^{\operatorname{trap}}\le \widetilde C_{\operatorname{ker}}C_{0,a,b}^{\operatorname{trap}}\mathfrak M_{0,a,b}(c)\). Equivalently, the trapezoidal moments satisfy
\begin{equation}\label{eq:trap-sharp-moment-asymptotic}
M_{\nu,R,h}^{\operatorname{trap}} = \begin{cases} \mathcal O(\widetilde C_{\operatorname{ker}}), & \nu<a-1,\\ \mathcal O(\widetilde C_{\operatorname{ker}}\log(1+c)), & \nu=a-1,\\ \mathcal O(\widetilde C_{\operatorname{ker}}c^{\nu-a+1}), & \nu>a-1. \end{cases}
\end{equation}
\end{cor}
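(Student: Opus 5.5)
Starting from \cref{uniform-lattice-quadrature}, we have $E_{\operatorname{quad}}(Q_{R,h}^{\operatorname{trap}})\le E_{\operatorname{trap}}(h,\rho)+E_{\operatorname{trunc}}(R)$, so it suffices to bound each term by its budget.

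\textbf{Mesh error.} Since $h\le 2\pi\rho/\ell_{\operatorname{mesh}}$, we get $e^{2\pi\rho/h}-1\ge e^{\ell_{\operatorname{mesh}}}-1=2M_{a,b}(\rho)/\epsilon_{\operatorname{mesh}}$. Hence $E_{\operatorname{trap}}(h,\rho)\le\epsilon_{\operatorname{mesh}}$.

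\textbf{Tail error.} By \cref{trunc-error}, $E_{\operatorname{trunc}}(R)\le C_{\operatorname{tail}}\operatorname{erfc}(R/(2c))$. With $R=2c\sqrt{\ell_{\operatorname{tail}}}$ and $\operatorname{erfc}(x)\le e^{-x^2}$, this gives $E_{\operatorname{trunc}}(R)\le C_{\operatorname{tail}}e^{-\ell_{\operatorname{tail}}}\le\epsilon_{\operatorname{tail}}$. The $\max\{1,\cdot\}$ in $\ell_{\operatorname{tail}}$ only enlarges $R$, and the bound is monotone in $R$, so it does no harm.

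\textbf{Cardinality and radius.} These follow from $n_h\le R/h+1$ and $R_h<R+h$. Using $1/h=\max\{1,1/(2c),\ell_{\operatorname{mesh}}/(2\pi\rho)\}$, we get $R/h\lesssim c\sqrt{\ell_{\operatorname{tail}}}\,(1+1/c+\ell_{\operatorname{mesh}}/\rho)$. This matches \cref{eq:trap-comparison-cardinality-1}, and $R_h\le R+h$ gives \cref{eq:trap-comparison-cardinality-2}. For balanced budgets with $c,\rho$ fixed, both logarithms are $\mathcal O(\log(1/\epsilon_q))$. This gives $R_h=\mathcal O(c\sqrt{\log(1/\epsilon_q)})$ and $|\mathcal I|=\mathcal O(\tfrac{c}{\rho}\log^{3/2}(1/\epsilon_q))$.

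\textbf{Moment bound.} This is the main obstacle, because the constant must be uniform in $h\le1$ and in $n_h$. Start from \cref{eq:real-kernel-modulus}:
\[
\frac{|\hat f(k)|}{\sqrt{2\pi}}=\widetilde C_{\operatorname{ker}}\frac{e^{-k^2/(4c^2)}}{\sqrt{1+k^2}(b^2+k^2)^{(a-1)/2}}.
\]
Let $g_\nu(k):=|k|^\nu\,|\hat f(k)|/\sqrt{2\pi}$. Then $M^{\operatorname{trap}}_{\nu,R,h}\le h\sum_{q\in\mathbb Z}g_\nu(qh)$. On $|k|\le1$ we use $g_\nu\le \widetilde C_{\operatorname{ker}}b^{-(a-1)}$; since $h\le1$, at most three lattice points lie there, giving an $\mathcal O_{a,b}(1)\widetilde C_{\operatorname{ker}}$ contribution. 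On $|k|\ge1$ the function $g_\nu$ is unimodal, increasing then decreasing. A lattice sum of a unimodal function is at most its integral plus $2h\sup g_\nu$. The supremum is bounded by $\widetilde C_{\operatorname{ker}}$ times $\sup_{k\ge1}k^{\nu-a}e^{-k^2/4c^2}\lesssim 1+c^{\nu-a}$, which the final case split absorbs because $h\le 2c$. The integral is bounded by
\[
\widetilde C_{\operatorname{ker}}\,C_{a,b}\int_1^\infty k^{\nu-a}e^{-k^2/(4c^2)}\,dk .
\]
We split into three cases. If $\nu-a<-1$, the integral is bounded by a constant. If $\nu-a=-1$, it is $\lesssim 1+\log(1+c)$: cut at $k=c$, then use $e^{-k^2/4c^2}\le e^{-k/(4c)}$ beyond. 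If $\nu-a>-1$, substitute $k=2cs$ to get $\lesssim c^{\nu-a+1}$. Together these give \cref{eq:trap-sharp-moment-bound}. The statement $\alpha_{R,h}^{\operatorname{trap}}=M_0$ is the case $\nu=0$, and \cref{eq:trap-sharp-moment-asymptotic} restates the same bound.
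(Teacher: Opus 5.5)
Your argument is correct and follows the paper's proof essentially step for step. You use the same mesh and tail budgets via \cref{uniform-lattice-quadrature} and \cref{trunc-error}, the same counting $2n_h+1\le 2R/h+3$ and $R_h\le R+h$, and the same $\min\{b^{-(a-1)},k^{-a}\}$ majorant with three Gaussian-integral regimes for the moments. The differences are minor: you split at $|k|=1$ rather than at $\kappa_b$, and you make explicit the unimodal Riemann-sum comparison that the paper only asserts. That comparison is valid because $k\,\frac{d}{dk}\log g_\nu(k)$ is strictly decreasing, and the extra $2h\sup g_\nu$ is absorbed using $h\le\min\{1,2c\}$.

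One small slip: for small $h$ there are about $2/h$ lattice points in $[-1,1]$, not three. What you actually need is $h\cdot\#\{q:|qh|\le 1\}\le 2+h\le 3$, which does hold, so the conclusion stands.
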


\begin{proof}
The choice of \(h\) controls the mesh error. By \cref{infinite-line-trapezoidal-estimate}, \(E_{\operatorname{trap}}(h,\rho)\le \epsilon_{\operatorname{mesh}}\). It remains to control the truncation error. By \cref{radii-lower-bd}, \(E_{\operatorname{trunc}}(R)\le C_{\operatorname{tail}}\operatorname{erfc}(R/(2c))\). Since \(\operatorname{erfc}(x)\le e^{-x^2}\), the condition \(R\ge 2c\sqrt{\ell_{\operatorname{tail}}}\) implies \(E_{\operatorname{trunc}}(R)\le \epsilon_{\operatorname{tail}}\). Combining these estimates with \cref{uniform-lattice-quadrature} gives
\begin{equation}
E_{\operatorname{quad}}(Q_{R,h}^{\operatorname{trap}})\le E_{\operatorname{trap}}(h,\rho)+E_{\operatorname{trunc}}(R)\le \epsilon_{\operatorname{mesh}}+\epsilon_{\operatorname{tail}}.
\end{equation}
Since \(n_h=\lceil R/h\rceil\), we have \(|\mathcal I_{Q_{R,h}^{\operatorname{trap}}}|=2n_h+1\le 2R/h+3\). Moreover,
\begin{equation}
\frac1h = \max\left\{1,\frac1{2c},\frac{\ell_{\operatorname{mesh}}}{2\pi\rho}\right\} = \mathcal O\left(1+\frac1c+\frac{\ell_{\operatorname{mesh}}}{\rho}\right).
\end{equation}
Substituting \(R=2c\sqrt{\ell_{\operatorname{tail}}}\) gives the bound on \(|\mathcal I_{Q_{R,h}^{\operatorname{trap}}}|\) in \cref{eq:trap-comparison-cardinality-1}. Moreover, \(R_h=n_hh\le R+h\), giving the bound on \(R_{Q_{R,h}^{\operatorname{trap}}}\) in \cref{eq:trap-comparison-cardinality-2}. The balanced-budget estimates follow from \(\ell_{\operatorname{mesh}}=\ell_{\operatorname{tail}}=\mathcal O(\log(1/\epsilon_q))\), for fixed \(\rho\) and \(c\). It remains to bound the LCU normalization and discrete moments. For real \(k\), \cref{eq:real-kernel-modulus} implies
\begin{equation}\label{eq:kernel-sharp-majorant}
\frac{1}{\sqrt{2\pi}}|\hat f_{a,b}(k;c,d)| = \widetilde C_{\operatorname{ker}}\frac{e^{-k^2/(4c^2)}}{\sqrt{1+k^2}(b^2+k^2)^{(a-1)/2}}.
\end{equation}
Therefore,
\begin{equation}\label{eq:trap-moment-sharp-start}
M_{\nu,R,h}^{\operatorname{trap}}\le \widetilde C_{\operatorname{ker}}h\sum_{q\in\mathbb Z}\frac{e^{-(qh)^2/(4c^2)}|qh|^\nu}{\sqrt{1+(qh)^2}(b^2+(qh)^2)^{(a-1)/2}}.
\end{equation}
For \(k>0\), we have
\begin{equation}\label{eq:trap-denom-min-bound}
\frac{1}{\sqrt{1+k^2}(b^2+k^2)^{(a-1)/2}}\le \min\left\{\frac{1}{b^{a-1}},\frac{1}{k^a}\right\}.
\end{equation}
Indeed, the first bound follows from \(\sqrt{1+k^2}\ge1\) and \((b^2+k^2)^{(a-1)/2}\ge b^{a-1}\), while the second follows from \(\sqrt{1+k^2}\ge k\) and \((b^2+k^2)^{(a-1)/2}\ge k^{a-1}\). Let
\begin{equation}\label{eq:trap-kappa-b-def}
\kappa_b := \begin{cases} 1, & a=1,\\ b^{(a-1)/a}, & a>1. \end{cases}
\end{equation}
We split the full-line sum in \cref{eq:trap-moment-sharp-start} into the regions \(|qh|\le \kappa_b\) and \(|qh|>\kappa_b\). Since \(h\le1\), the first contribution is bounded by a constant depending only on \(\nu,a,b\). More precisely,
\begin{equation}
h\sum_{|qh|\le\kappa_b}\frac{e^{-(qh)^2/(4c^2)}|qh|^\nu}{\sqrt{1+(qh)^2}(b^2+(qh)^2)^{(a-1)/2}}\le C_{\nu,a,b}.
\end{equation}
On the complementary region, \cref{eq:trap-denom-min-bound} gives
\begin{equation}
h\sum_{|qh|>\kappa_b}\frac{e^{-(qh)^2/(4c^2)}|qh|^\nu}{\sqrt{1+(qh)^2}(b^2+(qh)^2)^{(a-1)/2}}\le h\sum_{|qh|>\kappa_b}e^{-(qh)^2/(4c^2)}|qh|^{\nu-a}.
\end{equation}
The latter Riemann sum has three regimes. If \(\nu<a-1\), then \(|k|^{\nu-a}\) is integrable at infinity, and hence
\begin{equation}
h\sum_{|qh|>\kappa_b}e^{-(qh)^2/(4c^2)}|qh|^{\nu-a}\le C_{\nu,a,b}.
\end{equation}
If \(\nu=a-1\), then the summand behaves like \(1/|qh|\) until the Gaussian cutoff scale \(|qh|\sim c\), and therefore
\begin{equation}
h\sum_{|qh|>\kappa_b}e^{-(qh)^2/(4c^2)}|qh|^{-1}\le C_{a,b}\bigl(1+\log(1+c)\bigr).
\end{equation}
If \(\nu>a-1\), comparison with the corresponding Gaussian integral gives
\begin{align}
h\sum_{|qh|>\kappa_b}e^{-(qh)^2/(4c^2)}|qh|^{\nu-a} &\le C_{\nu,a,b}+C_\nu\int_0^\infty e^{-k^2/(4c^2)}k^{\nu-a}\,dk \le C_{\nu,a,b}+C_\nu c^{\nu-a+1}.
\end{align}
Combining these three cases with \cref{eq:trap-moment-sharp-start} proves \cref{eq:trap-sharp-moment-bound}. The normalization bound is the case \(\nu=0\), and the asymptotic statement follows directly from the definition of \(\mathfrak M_{\nu,a,b}(c)\).
\end{proof}

\cref{feasible-kernel-profile-explicit} shows that, when the kernel parameters satisfy $a,d = \mathcal{O}(1)$, one may take $c = \Omega(\sqrt{\log(1/\epsilon_{\operatorname{approx}})})$. Assuming that all error-tolerance parameters are of the same order as $\epsilon > 0$, \cref{trap-budget} implies
\begin{equation}
|\mathcal I_{Q_{R,h}^{\operatorname{trap}}}| = \mathcal O\left(\frac{1}{\rho}\log^{2}\left(\frac1{\epsilon_q}\right)\right), \quad R_{Q_{R,h}^{\operatorname{trap}}} = \mathcal O\left(\log\left(\frac1{\epsilon_q}\right)\right).
\end{equation}
Moreover, \cref{trap-budget} yields exponential mesh-error decay in $1/h$, Gaussian tail decay in $R$, and moment estimates that retain the rational decay of the kernel, improving the crude Gaussian scaling $\mathcal O((\log(1/\epsilon_q))^{(\nu+1)/2})$ to the $a$-dependent scaling in \cref{eq:trap-sharp-moment-asymptotic}.

\subsubsection{Sinh--Sinh Trapezoidal Rule}
\label{sinh-trap-rule-subsub} 
We next consider a non-uniform quadrature rule obtained by applying the trapezoidal rule after the change of variables \(k=\phi_\eta(x):=\eta\sinh x\), where \(0<\eta\le 2c\) is a free scale parameter. Set \(\lambda_\eta:=\eta/(2c)\). Under this change of variables, the Gaussian factor in \(\hat f_{a,b}\) becomes \(e^{-\phi_\eta(x)^2/(4c^2)}=e^{-\lambda_\eta^2\sinh^2 x}\), so the transformed integrand has double-exponential decay along the real axis. The choice \(\eta=2c\) gives the Gaussian-normalized case \(e^{-\lambda_\eta^2\sinh^2 x}=e^{-\sinh^2 x}\). See~\cite[Section~15]{trefethen2014exponentially} for more on this quadrature rule. Define
\begin{equation}\label{eq:sinh-transformed-integrand}
G_{T,\eta}(z) := F_T(\phi_\eta(z)) \phi_\eta'(z) = \eta\cosh z\, F_T(\eta\sinh z).
\end{equation}
Then, by the change of variables \(k=\phi_\eta(x)\), we have \(W_\infty=\int_{\mathbb R}F_T(k)\,dk=\int_{\mathbb R}G_{T,\eta}(x)\,dx\). The exponential convergence of the sinh--sinh rule follows by applying the infinite-line trapezoidal estimate to the transformed integrand \(G_{T,\eta}\). The relevant analyticity strip is now an \(x\)-strip rather than the original \(k\)-strip. Let
\begin{equation}\label{eq:sinh-beta-star-def}
\beta_{\ast,\eta} := \sup\left\{\beta>0: G_{T,\eta} \text{ is analytic on an open set containing } \{z\in\mathbb C:|\operatorname{Im}z|\le\beta\}\right\}.
\end{equation}
Equivalently, \(\beta_{\ast,\eta}\) is limited by the nearest preimage of the poles \(k=-i\) and \(k=ib\) under \(z\mapsto \eta\sinh z\), and by the requirement that the transformed Gaussian decay on horizontal translates. In particular, one may restrict to \(\beta<\pi/4\), since
\begin{equation}
\operatorname{Re}(\sinh^2(x+i\beta)) = \sinh^2x\cos(2\beta)-\sin^2\beta
\end{equation}
has positive quadratic growth in \(|x|\) whenever \(|\beta|<\pi/4\). 
For \(\beta\in(0,\min\{\beta_{\ast,\eta},\pi/4\})\), define
\begin{equation}\label{eq:sinh-denom-distances}
\delta_{1,\eta}(\beta) := \inf_{\substack{x\in\mathbb R\\ |\omega|\le \beta}}\left|1-i\phi_\eta(x+i\omega)\right|, \qquad \delta_{b,\eta}(\beta) := \inf_{\substack{x\in\mathbb R\\ |\omega|\le \beta}}\left|b+i\phi_\eta(x+i\omega)\right|.
\end{equation}
Here \(\delta_{b,\eta}(\beta)\) is only needed when \(a>1\). Since \(\beta<\beta_{\ast,\eta}\), these quantities are strictly positive. Set
\begin{equation}\label{eq:sinh-rho-A-def}
\rho_\beta:=\cos(2\beta)>0, \qquad A_{\eta,\beta} := \eta\sin\beta\,\bigl(|d|+T\|L\|\bigr).
\end{equation}
Define
\begin{equation}\label{eq:Msinh-explicit-def}
M_{\operatorname{sinh},\eta}(\beta) := \frac{\eta (b+1)^{a-1}}{2\pi\,\delta_{1,\eta}(\beta)\,\delta_{b,\eta}(\beta)^{a-1}}\exp\left(d-\frac{1}{4c^2}+\lambda_\eta^2\sin^2\beta+A_{\eta,\beta}+\frac{A_{\eta,\beta}^2}{2\lambda_\eta^2\rho_\beta}\right)\sqrt{\frac{2\pi}{\lambda_\eta^2\rho_\beta}},
\end{equation}
with the convention \(\delta_{b,\eta}(\beta)^{a-1}=1\) when \(a=1\).

\begin{lem}\label{sinh-strip-bound}
Let \(0<\eta\le 2c\) and \(\beta\in(0,\min\{\beta_{\ast,\eta},\pi/4\})\). Then \(G_{T,\eta}\) is analytic on an open set containing the closed strip \(\overline S_\beta := \{z\in\mathbb C:|\operatorname{Im}z|\le\beta\}\). Moreover, \(G_{T,\eta}(x+i\omega)\) decays uniformly to zero on \(\overline S_\beta\) as \(|x|\to\infty\), and
\begin{equation}\label{eq:sinh-strip-L1-bound}
\sup_{|\omega|\le\beta}\int_{\mathbb R}\|G_{T,\eta}(x+i\omega)\|\,dx \le M_{\operatorname{sinh},\eta}(\beta).
\end{equation}
Consequently, for every \(h>0\),
\begin{equation}\label{eq:sinh-infinite-trap-error}
\left\|\int_{\mathbb R}G_{T,\eta}(x)\,dx-h\sum_{q\in\mathbb Z}G_{T,\eta}(qh)\right\|\le \frac{2M_{\operatorname{sinh},\eta}(\beta)}{e^{2\pi\beta/h}-1}.
\end{equation}
\end{lem}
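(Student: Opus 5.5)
The plan is to follow the template of \cref{general-kernel-strip-bound}, with the \(k\)-strip replaced by the \(x\)-strip under \(k=\phi_\eta(z)=\eta\sinh z\). We first settle analyticity. By \cref{low-somma-kernel}, \(\hat f_{a,b}(k;c,d)\) is entire in \(k\) except for the pole at \(k=-i\) and, when \(a>1\), the pole at \(k=ib\). The factor \(U_k(T)=e^{-i(H+kL)T}\) is entire in \(k\). Hence \(G_{T,\eta}(z)=\eta\cosh z\,F_T(\eta\sinh z)\) is analytic away from the preimages of those two poles. For \(\beta<\beta_{\ast,\eta}\), the definition \cref{eq:sinh-beta-star-def} gives analyticity on an open neighbourhood of \(\overline S_\beta\). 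Equivalently, \(\delta_{1,\eta}(\beta),\delta_{b,\eta}(\beta)>0\), and by continuity this persists on a slightly larger strip.

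Next we bound the integrand pointwise. Write \(z=x+i\omega\) with \(|\omega|\le\beta\), and \(\sinh z=\sinh x\cos\omega+i\cosh x\sin\omega\). The three factors are handled as follows.
\begin{itemize}
\item \emph{Rational factors.} By definition of \(\delta_{1,\eta}\) and \(\delta_{b,\eta}\), we have \(|1-i\phi_\eta(z)|^{-1}|b+i\phi_\eta(z)|^{-(a-1)}\le \delta_{1,\eta}(\beta)^{-1}\delta_{b,\eta}(\beta)^{-(a-1)}\).
\item \emph{Exponential factor.} This is \(|e^{d(1-ik)}e^{-(k^2+1)/(4c^2)}|=e^{d+d\operatorname{Im}k-\operatorname{Re}(k^2)/(4c^2)-1/(4c^2)}\). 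Using the identity stated before the lemma, \(\operatorname{Re}(k^2)/(4c^2)=\lambda_\eta^2(\sinh^2x\cos 2\omega-\sin^2\omega)\ge \lambda_\eta^2\rho_\beta\sinh^2x-\lambda_\eta^2\sin^2\beta\).
\item \emph{Propagator.} From \(\|e^{-i(H+kL)T}\|\le e^{T\|L\|\,|\operatorname{Im}k|}\) (since \(L\succeq0\), one side even gives \(\le1\)), together with \(|\operatorname{Im}k|=\eta\cosh x|\sin\omega|\le \eta\sin\beta\cosh x\), the combined linear term is at most \(A_{\eta,\beta}\cosh x\).
\end{itemize}
Collecting these, and bounding \(|\eta\cosh z|\le\eta\cosh x\), we get
\[
\|G_{T,\eta}(z)\|\le \frac{\eta(b+1)^{a-1}e^{d-\frac1{4c^2}+\lambda_\eta^2\sin^2\beta}}{2\pi\,\delta_{1,\eta}\,\delta_{b,\eta}^{a-1}}\cosh x\;e^{-\lambda_\eta^2\rho_\beta\sinh^2 x+A_{\eta,\beta}\cosh x}.
\]
The \(1/(2\pi)\) arises from the two \(1/\sqrt{2\pi}\) factors. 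This bound tends to zero uniformly in \(\omega\) as \(|x|\to\infty\), which gives the decay claim.

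The main obstacle is integrating this bound in \(x\) so as to land exactly on \(M_{\operatorname{sinh},\eta}(\beta)\), where the linear term \(A\cosh x\) competes with the double-exponential decay. The plan is to use the \(x\)-independent quadratic bound
\[
A\cosh x\le A(1+|\sinh x|)\le A+\frac{A^2}{2\lambda_\eta^2\rho_\beta}+\frac{\lambda_\eta^2\rho_\beta}{2}\sinh^2x,
\]
which follows from \(\cosh x\le 1+|\sinh x|\) and AM–GM. This leaves \(e^{-\frac{\lambda_\eta^2\rho_\beta}{2}\sinh^2x}\). Substituting \(u=\sinh x\), so that \(\cosh x\,dx=du\), gives
\[
\int_{\mathbb R} e^{-\lambda_\eta^2\rho_\beta u^2/2}\,du=\sqrt{\frac{2\pi}{\lambda_\eta^2\rho_\beta}}.
\]
This reproduces every factor of \cref{eq:Msinh-explicit-def}, and we would check the constants in the AM–GM step carefully. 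Finally, \cref{eq:sinh-infinite-trap-error} follows by applying \cref{infinite-line-trapezoidal-estimate} to \(g=G_{T,\eta}\) with \(\rho=\beta\) and \(M=M_{\operatorname{sinh},\eta}(\beta)\).
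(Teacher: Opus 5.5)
Your proposal is correct and follows the paper's proof essentially step for step. It uses the same pole-distance bounds through \(\delta_{1,\eta}(\beta)\) and \(\delta_{b,\eta}(\beta)\), the same merging of the \(e^{d(1-ik)}\) growth and the propagator growth into \(\exp(A_{\eta,\beta}\cosh x)\), the same lower bound on \(\operatorname{Re}\sinh^2 z\), the same absorption step via \(\cosh x\le 1+|\sinh x|\) and AM--GM, and the same substitution \(u=\sinh x\) before applying the infinite-line trapezoidal estimate with strip width \(\beta\).
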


\begin{proof}
Let \(z=x+i\omega\), with \(|\omega|\le\beta\), and write \(\phi_\eta(z)=\eta\sinh z=u+iv\), where \(u=\eta\sinh x\cos\omega\) and \(v=\eta\cosh x\sin\omega\). Since \(\beta<\beta_{\ast,\eta}\), the transformed contour stays a positive distance from the pole \(k=-i\), and, when \(a>1\), from the pole \(k=ib\). Hence \(|1-i\phi_\eta(z)|\ge \delta_{1,\eta}(\beta)\) and, for \(a>1\), \(|b+i\phi_\eta(z)|\ge \delta_{b,\eta}(\beta)\). Moreover, \(|e^{d(1-i\phi_\eta(z))}|=e^{d(1+v)}\le e^d e^{|d||v|}\). Using the time-independent bound for \(U_k(T)\) as in \cref{general-kernel-strip-bound}, we have \(\|U_{\phi_\eta(z)}(T)\|\le e^{T\|L\|\,|v|}\). Since \(|v|\le \eta\sin\beta\,\cosh x\), the combined non-Gaussian exponential growth is bounded by \(\exp(A_{\eta,\beta}\cosh x)\). Next, we have
\begin{equation}
\operatorname{Re}(\sinh^2(x+i\omega)) = \sinh^2x\cos(2\omega)-\sin^2\omega \ge \rho_\beta\sinh^2x-\sin^2\beta.
\end{equation}
Therefore, the Gaussian factor satisfies
\begin{equation}
|e^{-(\phi_\eta(z)^2+1)/(4c^2)}| = e^{-\lambda_\eta^2\operatorname{Re}(\sinh^2 z)-1/(4c^2)} \le \exp\left(-\lambda_\eta^2\rho_\beta\sinh^2x+\lambda_\eta^2\sin^2\beta-\frac{1}{4c^2}\right).
\end{equation}
Also \(|\cosh(x+i\omega)|\le\cosh x\). Combining these estimates gives
\begin{equation}
\|G_{T,\eta}(x+i\omega)\| \le \frac{\eta (b+1)^{a-1}}{2\pi\,\delta_{1,\eta}(\beta)\,\delta_{b,\eta}(\beta)^{a-1}}\exp\left(d-\frac{1}{4c^2}+\lambda_\eta^2\sin^2\beta\right)\cosh x\exp\left(-\lambda_\eta^2\rho_\beta\sinh^2x+A_{\eta,\beta}\cosh x\right).
\end{equation}
Using \(\cosh x\le 1+|\sinh x|\) and \(A_{\eta,\beta}|\sinh x|\le \frac{\lambda_\eta^2\rho_\beta}{2}\sinh^2x+\frac{A_{\eta,\beta}^2}{2\lambda_\eta^2\rho_\beta}\), we obtain
\begin{equation}
-\lambda_\eta^2\rho_\beta\sinh^2x+A_{\eta,\beta}\cosh x \le -\frac{\lambda_\eta^2\rho_\beta}{2}\sinh^2x+A_{\eta,\beta}+\frac{A_{\eta,\beta}^2}{2\lambda_\eta^2\rho_\beta}.
\end{equation}
Hence,
\begin{align}
\int_{\mathbb R}\|G_{T,\eta}(x+i\omega)\|\,dx  \le \frac{\eta (b+1)^{a-1}}{2\pi\,\delta_{1,\eta}(\beta)\,\delta_{b,\eta}(\beta)^{a-1}} & \exp\left(d-\frac{1}{4c^2}+\lambda_\eta^2\sin^2\beta+A_{\eta,\beta}+\frac{A_{\eta,\beta}^2}{2\lambda_\eta^2\rho_\beta}\right) \\
& \times 
\int_{\mathbb R}\cosh x\,e^{-\frac{\lambda_\eta^2\rho_\beta}{2}\sinh^2x}\,dx.
\end{align}
With the substitution \(u=\sinh x\), \(du=\cosh x\,dx\), we get
\begin{equation}
\int_{\mathbb R}\cosh x\,e^{-\frac{\lambda_\eta^2\rho_\beta}{2}\sinh^2x}\,dx = \int_{\mathbb R}e^{-\frac{\lambda_\eta^2\rho_\beta}{2}u^2}\,du = \sqrt{\frac{2\pi}{\lambda_\eta^2\rho_\beta}}.
\end{equation}
This proves \cref{eq:sinh-strip-L1-bound} with \(M_{\operatorname{sinh},\eta}(\beta)\). The same bound also implies uniform decay as \(|x|\to\infty\). Finally, applying \cref{infinite-line-trapezoidal-estimate} to \(g=G_{T,\eta}\) gives \cref{eq:sinh-infinite-trap-error}.
\end{proof}

\cref{sinh-strip-bound} gives exponential mesh convergence in \(1/h\). The advantage of the sinh--sinh transformation is that the lattice truncation error decays double-exponentially in the \(x\)-truncation length. Fix \(Y>0\) and \(h>0\). As in \cref{trap-rule-subsub}, set $n_h := \left\lceil\frac{Y}{h}\right\rceil$ and $Y_h:=n_hh$.  Define the finite sinh--sinh quadrature rule
\begin{equation}\label{eq:sinh-Q-def}
Q_{Y,h}^{\operatorname{sinh},\eta} := \left\{(k_i,w_i): k_i=\eta\sinh(qh),\ w_i=\eta h\cosh(qh),\ q=-n_h,\ldots,n_h\right\}.
\end{equation}
For this rule, \(R_{Q_{Y,h}^{\operatorname{sinh},\eta}}=\eta\sinh(Y_h)\). 
The corresponding ideal quadrature operator is
\begin{equation}\label{eq:sinh-ideal-operator}
W_{Y,h}^{\operatorname{sinh},\eta,\operatorname{ideal}} := h\sum_{q=-n_h}^{n_h}G_{T,\eta}(qh) = \frac{\eta h}{\sqrt{2\pi}}\sum_{q=-n_h}^{n_h}\cosh(qh)\,\hat f_{a,b}(\eta\sinh(qh);c,d)U_{\eta\sinh(qh)}(T).
\end{equation}
The LCHS quadrature coefficients are \(v_q=\frac{\eta h\cosh(qh)}{\sqrt{2\pi}}\hat f_{a,b}(\eta\sinh(qh);c,d)\). The associated LCU normalization is \(\alpha_{Y,h}^{\operatorname{sinh},\eta}:=\sum_{q=-n_h}^{n_h}|v_q|\), and the discrete moments are \(M_{\nu,Y,h}^{\operatorname{sinh},\eta}:=\sum_{q=-n_h}^{n_h}|v_q|\,|\eta\sinh(qh)|^\nu\), for \(\nu\ge0\). Moreover, the omitted transformed-lattice tail is \(E_{\operatorname{tail},h}^{\operatorname{sinh},\eta}(Y_h):=h\sum_{|q|>n_h}\|G_{T,\eta}(qh)\|\).
\cref{sinh-lattice-quadrature} is the analogue of \cref{uniform-lattice-quadrature}. 

\begin{prop}\label{sinh-lattice-quadrature}
Let \(0<\eta\le 2c\) and \(\beta\in(0,\min\{\beta_{\ast,\eta},\pi/4\})\). Let \(Y>0\), \(h>0\). Then
\begin{equation}\label{eq:sinh-lattice-error}
E_{\operatorname{quad}}(Q_{Y,h}^{\operatorname{sinh},\eta}) = \left\|W_\infty-W_{Y,h}^{\operatorname{sinh},\eta,\operatorname{ideal}}\right\| \le E_{\operatorname{sinh},\eta}(h,\beta)+E_{\operatorname{tail},h}^{\operatorname{sinh},\eta}(Y_h),
\end{equation}
where \(E_{\operatorname{sinh},\eta}(h,\beta):=\frac{2M_{\operatorname{sinh},\eta}(\beta)}{e^{2\pi\beta/h}-1}\). Moreover, if \(\lambda_\eta\sinh Y_h\ge1\), then
\begin{equation}\label{eq:sinh-tail-bound}
E_{\operatorname{tail},h}^{\operatorname{sinh},\eta}(Y_h) \le 2c\sqrt{\pi}\,C_{\operatorname{ker}}\,\operatorname{erfc}(\lambda_\eta\sinh Y_h) \le 2c\sqrt{\pi}\,C_{\operatorname{ker}}\,e^{-\lambda_\eta^2\sinh^2 Y_h},
\end{equation}
where \(C_{\operatorname{ker}}:=\frac{1}{2\pi}\left(\frac{b+1}{b}\right)^{a-1}e^{d-\frac{1}{4c^2}}\).
\end{prop}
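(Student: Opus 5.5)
The argument follows the proof of \cref{uniform-lattice-quadrature}, with the \(x\)-lattice in place of the \(k\)-lattice. First, for the error decomposition, use \(W_\infty=\int_{\mathbb R}G_{T,\eta}(x)\,dx\) and split by the triangle inequality:
\begin{equation}
\|W_\infty-W_{Y,h}^{\operatorname{sinh},\eta,\operatorname{ideal}}\| \le \Big\|\int_{\mathbb R}G_{T,\eta}(x)\,dx-h\sum_{q\in\mathbb Z}G_{T,\eta}(qh)\Big\| + h\sum_{|q|>n_h}\|G_{T,\eta}(qh)\|.
\end{equation}
The first term is bounded by \(E_{\operatorname{sinh},\eta}(h,\beta)\) directly from \cref{eq:sinh-infinite-trap-error} in \cref{sinh-strip-bound}. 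The second term is \(E^{\operatorname{sinh},\eta}_{\operatorname{tail},h}(Y_h)\) by definition. This gives \cref{eq:sinh-lattice-error}.

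Next, bound the tail. For real \(x\), \(U_{\eta\sinh x}(T)\) is unitary, so \(\|G_{T,\eta}(x)\|=\eta\cosh x\,|\hat f_{a,b}(\eta\sinh x;c,d)|/\sqrt{2\pi}\). Insert the explicit modulus \cref{eq:real-kernel-modulus} and drop the rational factors using \(\sqrt{1+k^2}\ge1\) and \((b^2+k^2)^{(a-1)/2}\ge b^{a-1}\). This gives the pointwise bound
\begin{equation}
\|G_{T,\eta}(x)\|\le C_{\operatorname{ker}}\,\eta\cosh x\,e^{-\lambda_\eta^2\sinh^2x}.
\end{equation}
The majorant is even in \(x\). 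It has derivative proportional to \(\sinh x\,(1-2\lambda_\eta^2\cosh^2 x)e^{-\lambda_\eta^2\sinh^2 x}\), so it is decreasing on \([x_0,\infty)\) once \(2\lambda_\eta^2\cosh^2x_0\ge1\). The hypothesis \(\lambda_\eta\sinh Y_h\ge1\) gives \(\lambda_\eta^2\cosh^2 Y_h>1\), so the majorant is monotone on \([Y_h,\infty)\). For a decreasing function, the right-endpoint Riemann sum is dominated by the integral:
\begin{equation}
h\sum_{q>n_h}\|G_{T,\eta}(qh)\|\le C_{\operatorname{ker}}\int_{Y_h}^\infty\eta\cosh x\,e^{-\lambda_\eta^2\sinh^2x}\,dx.
\end{equation}
The negative half-line is handled the same way.

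To evaluate the integral, substitute \(u=\lambda_\eta\sinh x\), so that \(\eta\cosh x\,dx=(\eta/\lambda_\eta)\,du=2c\,du\). The integral becomes \(2c\int_{\lambda_\eta\sinh Y_h}^\infty e^{-u^2}du=c\sqrt{\pi}\operatorname{erfc}(\lambda_\eta\sinh Y_h)\). Doubling for both sides gives \(2c\sqrt{\pi}C_{\operatorname{ker}}\operatorname{erfc}(\lambda_\eta\sinh Y_h)\). The second inequality in \cref{eq:sinh-tail-bound} then follows from \(\operatorname{erfc}(t)\le e^{-t^2}\) for \(t\ge0\).

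The main obstacle is the monotonicity step. Unlike the uniform rule, the transformed integrand carries the growing Jacobian \(\cosh x\), so it is not automatically decreasing. The threshold \(\lambda_\eta\sinh Y_h\ge1\) is exactly what guarantees monotonicity beyond \(Y_h\), and hence makes the sum-to-integral comparison valid.
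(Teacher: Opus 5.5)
Your proposal is correct and follows essentially the same route as the paper. You split the error with the triangle inequality into the infinite-lattice trapezoidal error, bounded via \cref{sinh-strip-bound}, and the omitted lattice tail; you then bound the tail using the majorant $\|G_{T,\eta}(x)\|\le \eta C_{\operatorname{ker}}\cosh x\,e^{-\lambda_\eta^2\sinh^2 x}$, monotonicity beyond $Y_h$, a sum-to-integral comparison, and a substitution that reduces the integral to $\operatorname{erfc}$. Your explicit derivative computation supplies the monotonicity claim that the paper asserts without detail.
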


\begin{proof}
The mesh contribution follows from \cref{sinh-strip-bound}. For the tail, use the real-axis bound \(\|G_{T,\eta}(x)\|\le \eta C_{\operatorname{ker}}\cosh x\,e^{-\lambda_\eta^2\sinh^2x}\), which follows from the Gaussian majorant for \(|\hat f_{a,b}|\) and unitarity of \(U_k(T)\) for real \(k\). The function \(x\mapsto \cosh x\,e^{-\lambda_\eta^2\sinh^2x}\) is decreasing whenever \(\lambda_\eta\sinh x\ge1\). Hence, if \(\lambda_\eta\sinh Y_h\ge1\), then monotonicity gives
\begin{align}
E_{\operatorname{tail},h}^{\operatorname{sinh},\eta}(Y_h) &\le 2\int_{Y_h}^{\infty}\eta C_{\operatorname{ker}}\cosh x\,e^{-\lambda_\eta^2\sinh^2x}\,dx \\
&= 2\eta C_{\operatorname{ker}}\int_{\sinh Y_h}^{\infty}e^{-\lambda_\eta^2u^2}\,du = 2c\sqrt{\pi}\,C_{\operatorname{ker}}\,\operatorname{erfc}(\lambda_\eta\sinh Y_h).
\end{align}
The final inequality follows from \(\operatorname{erfc}(x)\le e^{-x^2}\).
\end{proof}

\cref{sinh-lattice-quadrature} shows that the sinh--sinh rule has exponential mesh convergence in \(1/h\) and double-exponential truncation decay in the transformed cutoff \(Y\). Finally, \cref{sinh-budget} bounds the quantities determining the implementation complexity of the sinh quadrature rule. Since the proof is similar to that of \cref{trap-budget}, it is given in \cref{proof-of-sinh-budget}.

\begin{cor}\label{sinh-budget}
Let \(0<\eta\le 2c\), \(\beta\in(0,\min\{\beta_{\ast,\eta},\pi/4\})\), and \(\epsilon_{\operatorname{mesh}},\epsilon_{\operatorname{tail}}>0\). Consider the following quantities:
\begin{equation}\label{eq:sinh-logs-def}
\ell_{\operatorname{mesh}}^{\operatorname{sinh},\eta} := \log\left(1+\frac{2M_{\operatorname{sinh},\eta}(\beta)}{\epsilon_{\operatorname{mesh}}}\right), \quad \ell_{\operatorname{tail}}^{\operatorname{sinh},\eta} := \max\left\{1,\log\left(\frac{2c\sqrt{\pi}C_{\operatorname{ker}}}{\epsilon_{\operatorname{tail}}}\right)\right\}.
\end{equation}
Choose
\[
h = \min\left\{1,\frac{2\pi\beta}{\ell_{\operatorname{mesh}}^{\operatorname{sinh},\eta}}\right\}, \qquad Y = \max\left\{1,\operatorname{arsinh}\left(\frac{\sqrt{\ell_{\operatorname{tail}}^{\operatorname{sinh},\eta}}}{\lambda_\eta}\right)\right\} = \max\left\{1,\operatorname{arsinh}\left(\frac{2c}{\eta}\sqrt{\ell_{\operatorname{tail}}^{\operatorname{sinh},\eta}}\right)\right\},
\]
and set \(n_h=\lceil Y/h\rceil\), \(Y_h=n_hh\). Then \(E_{\operatorname{quad}}(Q_{Y,h}^{\operatorname{sinh},\eta}) \le \epsilon_{\operatorname{mesh}}+\epsilon_{\operatorname{tail}}\). Moreover, we have
\begin{align}
|\mathcal I_{Q_{Y,h}^{\operatorname{sinh},\eta}}| &= 2n_h+1 = \mathcal O\left(1+\log\left(1+\frac{\sqrt{\ell_{\operatorname{tail}}^{\operatorname{sinh},\eta}}}{\lambda_\eta}\right)\left(1+\frac{\ell_{\operatorname{mesh}}^{\operatorname{sinh},\eta}}{\beta}\right)\right), \label{eq:sinh-cardinality-bound}\\
R_{Q_{Y,h}^{\operatorname{sinh},\eta}} &= \eta\sinh(Y_h) = \mathcal O\left(c\sqrt{\ell_{\operatorname{tail}}^{\operatorname{sinh},\eta}}+\eta\right). \label{eq:sinh-radius-bound}
\end{align}
For a fixed admissible kernel profile and balanced computational budgets \(\epsilon_{\operatorname{mesh}}\asymp \epsilon_{\operatorname{tail}}\asymp\epsilon_q\), we obtain
\begin{equation}\label{eq:sinh-balanced-cardinality-radius}
|\mathcal I_{Q_{Y,h}^{\operatorname{sinh},\eta}}| = \mathcal O\left(\frac{1}{\beta}\log\left(\frac1{\epsilon_q}\right)\log\left(1+\frac{2c}{\eta}\sqrt{\log\left(\frac1{\epsilon_q}\right)}\right)\right), \quad R_{Q_{Y,h}^{\operatorname{sinh},\eta}} = \mathcal O\left(c\sqrt{\log\left(\frac1{\epsilon_q}\right)}+\eta\right).
\end{equation}
Define \(\mathfrak M_{\nu,a,b}(c)\) as in \cref{eq:trap-sharp-moment-scale}. Then there exists a constant \(C_{\nu,a,b}^{\operatorname{sinh}}>0\), depending only on \(\nu,a,b\), such that
\begin{equation}\label{eq:sinh-sharp-moment-bound}
M_{\nu,Y,h}^{\operatorname{sinh},\eta}\le \widetilde C_{\operatorname{ker}}C_{\nu,a,b}^{\operatorname{sinh}}\mathfrak M_{\nu,a,b}(c).
\end{equation}
In particular, \(\alpha_{Y,h}^{\operatorname{sinh},\eta}=M_{0,Y,h}^{\operatorname{sinh},\eta}\le \widetilde C_{\operatorname{ker}}C_{0,a,b}^{\operatorname{sinh}}\mathfrak M_{0,a,b}(c)\). Equivalently, the sinh--sinh moments satisfy the same scaling as in \cref{eq:trap-sharp-moment-asymptotic}. 
\end{cor}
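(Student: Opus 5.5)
The plan mirrors the proof of \cref{trap-budget}, with the uniform lattice replaced by the transformed lattice \(k=\eta\sinh x\). There are four pieces.

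\textbf{Error bound.} By \cref{sinh-lattice-quadrature}, \(E_{\operatorname{quad}}\le E_{\operatorname{sinh},\eta}(h,\beta)+E^{\operatorname{sinh},\eta}_{\operatorname{tail},h}(Y_h)\).
\begin{itemize}
\item Mesh term: the choice \(h\le 2\pi\beta/\ell^{\operatorname{sinh},\eta}_{\operatorname{mesh}}\) gives \(e^{2\pi\beta/h}-1\ge 2M_{\operatorname{sinh},\eta}(\beta)/\epsilon_{\operatorname{mesh}}\), hence \(E_{\operatorname{sinh},\eta}(h,\beta)\le\epsilon_{\operatorname{mesh}}\).
\item Tail term: since \(Y_h\ge Y\), we get \(\lambda_\eta\sinh Y_h\ge\sqrt{\ell^{\operatorname{sinh},\eta}_{\operatorname{tail}}}\ge1\). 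The hypothesis of \cref{eq:sinh-tail-bound} therefore holds, and it yields \(E_{\operatorname{tail}}\le 2c\sqrt\pi C_{\operatorname{ker}}e^{-\ell_{\operatorname{tail}}}\le\epsilon_{\operatorname{tail}}\).
\end{itemize}
In the case \(Y=1\), monotonicity of \(\sinh\) still gives the inequality, provided \(\sqrt{\ell_{\operatorname{tail}}}/\lambda_\eta\le\sinh 1\) is handled by \(Y\ge\operatorname{arsinh}(\cdot)\) in either branch of the max.

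\textbf{Cardinality and radius.} We have \(|\mathcal I|=2n_h+1\le 2Y/h+3\), with \(1/h=\max\{1,\ell_{\operatorname{mesh}}/(2\pi\beta)\}\) and \(Y\le 1+\operatorname{arsinh}(\sqrt{\ell_{\operatorname{tail}}}/\lambda_\eta)\le 1+\log(1+2\sqrt{\ell_{\operatorname{tail}}}/\lambda_\eta)\); this gives \cref{eq:sinh-cardinality-bound}.

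For the radius, \(R_Q=\eta\sinh(Y_h)\) with \(Y_h<Y+h\le Y+1\), and \(\sinh(Y+1)\le e\,(\sinh Y+1)\). Hence \(R_Q\le e\eta(\sinh Y+1)\). Moreover, \(\eta\sinh Y\le \max\{\eta\sinh1,\;2c\sqrt{\ell_{\operatorname{tail}}}\}\). Together these give \cref{eq:sinh-radius-bound}.

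The balanced forms follow from \(\ell_{\operatorname{mesh}},\ell_{\operatorname{tail}}=\mathcal O(\log(1/\epsilon_q))\) with the kernel profile, \(\beta\) and \(\eta\) fixed. Here \(M_{\operatorname{sinh},\eta}(\beta)\) is a constant independent of \(\epsilon_q\) once \(T\) and \(\|L\|\) are fixed.

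\textbf{Moments (main obstacle).} By \cref{eq:kernel-sharp-majorant},
\[
|v_q|\,|k_q|^\nu\le \widetilde C_{\operatorname{ker}}\,h\,\phi'_\eta(qh)\,g_\nu(\phi_\eta(qh)),\qquad g_\nu(k)=\frac{e^{-k^2/(4c^2)}|k|^\nu}{\sqrt{1+k^2}\,(b^2+k^2)^{(a-1)/2}}.
\]
The sum is then a trapezoidal sum in \(x\) of \(\phi'_\eta(x)g_\nu(\phi_\eta(x))\), whose integral over \(\mathbb R\) equals \(\int_{\mathbb R}g_\nu(k)\,dk\). The proof of \cref{trap-budget} has already bounded that integral by \(C_{\nu,a,b}\mathfrak M_{\nu,a,b}(c)\) via the three-regime split at \(\kappa_b\) and the Gaussian scale \(c\).

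It remains to compare the sum to the integral. The function \(x\mapsto\phi'_\eta(x)g_\nu(\phi_\eta(x))\) is even. On \([0,\infty)\) it is a product of \(\cosh x\), a power of \(\sinh x\), and a decreasing factor, so it is unimodal: increasing and then decreasing. For a unimodal nonnegative function, \(h\sum_q F(qh)\le\int F+2h\max F\). I would bound \(h\max F\) using \(h\le1\), \(\eta\le 2c\), and the estimate
\[
\eta\cosh x\,g_\nu(\eta\sinh x)\lesssim (\eta+|k|)\,g_\nu(k),\qquad k=\eta\sinh x,
\]
which is bounded by a constant times \(\mathfrak M_{\nu,a,b}(c)\); for \(\nu\ge a\) this uses \(\sup_k |k|^{\nu-a+1}e^{-k^2/4c^2}\lesssim c^{\nu-a+1}\). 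Checking the unimodality claim and this sup estimate uniformly in \(\eta\le 2c\) is the main obstacle. If unimodality fails for small \(\nu\), I would instead split the \(x\)-lattice at \(\eta\sinh x=\kappa_b\), as in the trapezoidal proof, and handle each piece by monotonicity. The case \(\nu=0\) gives the bound on \(\alpha_{Y,h}^{\operatorname{sinh},\eta}\).
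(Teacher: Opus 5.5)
Your mesh, tail, cardinality and radius arguments are correct and coincide with the paper's. For the cardinality bound, note that \(\sqrt{\ell_{\operatorname{tail}}^{\operatorname{sinh},\eta}}/\lambda_\eta\ge1\) because \(\lambda_\eta\le1\); this absorbs the additive constant in \(Y\le1+\operatorname{arsinh}(\cdot)\) into the logarithm.

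The moment part follows the paper's route: compare the transformed lattice sum with \(\int_{\mathbb R}g_\nu\) plus a boundary term. The step that controls the boundary term fails. You bound \(h\max_x F(x)\), where \(F(x):=\eta\cosh x\,g_\nu(\eta\sinh x)\), using only \(h\le1\) and \(\eta\le2c\). These two facts allow \(\eta h\asymp c\). Take \(\nu=0\) and \(a\ge2\): then already \(hF(0)=\eta h/b^{a-1}\asymp c\), whereas \(\mathfrak M_{0,a,b}(c)=1\). For \(a=1\) the target \(1+\log(1+c)\) is also exceeded. This is not just a lossy estimate, since the node \(q=0\) contributes exactly \(\widetilde C_{\operatorname{ker}}\eta h/b^{a-1}\) to \(\alpha_{Y,h}^{\operatorname{sinh},\eta}\). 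More generally, \(\sup_k(\eta+|k|)g_\nu(k)\lesssim\mathfrak M_{\nu,a,b}(c)\) uniformly in \(\eta\le2c\) fails for every \(\nu<a\). The paper's own sentence ``since \(h\le1\), the contribution from \(|\eta\sinh(qh)|\le\kappa_b\) is bounded by a constant'' has the same defect.

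What is missing is \(\eta h=\mathcal O(1)\), and this must come from the strip rather than from \(h\le1\).
\begin{itemize}
\item In both branches of the minimum defining \(h\), we have \(h\le2\pi\beta/\ell_{\operatorname{mesh}}^{\operatorname{sinh},\eta}\).
\item For \(\eta\ge1\), the pole \(k=-i\) pulls back to \(z=-i\arcsin(1/\eta)\). Hence \(\beta<\beta_{\ast,\eta}\le\arcsin(1/\eta)\le\pi/(2\eta)\). For \(\eta<1\), the restriction \(\beta<\pi/4\) suffices.
\item Therefore \(\eta\beta<\pi/2\) and \(\eta h<\pi^2/\ell_{\operatorname{mesh}}^{\operatorname{sinh},\eta}\).
\end{itemize}
This is bounded once, for example, \(\epsilon_{\operatorname{mesh}}\le2M_{\operatorname{sinh},\eta}(\beta)\), which gives \(\ell_{\operatorname{mesh}}^{\operatorname{sinh},\eta}\ge\log2\). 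Some non-degeneracy assumption of this kind must be added, or else \(\eta=\mathcal O(1)\), which is the regime used after the corollary. Without it the statement is false: with \(\eta=2c\) and \(\epsilon_{\operatorname{mesh}}\) so large that \(h=1\), the claimed bound on \(\alpha_{Y,h}^{\operatorname{sinh},\eta}\) fails.

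Once \(\eta h=\mathcal O(1)\), the boundary term is at most \((\eta h)\sup_kg_\nu(k)+\sup_k|k|g_\nu(k)\lesssim\mathfrak M_{\nu,a,b}(c)\), and your argument closes.

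The unimodality you flagged is not actually needed. Multiply \((\log F)'\) by \(\sinh x\cosh x\) and clear denominators; the result is a nonzero polynomial of degree \(4\) in \(u=\sinh^2x\). So \(F\) has at most five monotone pieces on \([0,\infty)\), and the lattice sum exceeds \(\int_{\mathbb R}F\) by at most an absolute multiple of \(h\max F\).
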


If one takes $a,d=\mathcal O(1)$, $c=\Theta(\sqrt{\log(1/\epsilon_{\operatorname{approx}})})$, and assumes that all error parameters are of order $\epsilon>0$, then \cref{sinh-budget} shows that, for a fixed free-scale choice $\eta=\Theta(\min(1,b))$, the transformed strip width satisfies $\beta^{-1}=\mathcal O(1)$. Therefore
\begin{equation}
|\mathcal I_{Q_{Y,h}^{\operatorname{sinh},\eta}}|
=
\mathcal O\left(
\log\left(\frac1{\epsilon_q}\right)
\log\log\left(\frac1{\epsilon_q}\right)
\right),
\end{equation}
which is asymptotically smaller than $\log^{3/2}(1/\epsilon_q)$.

\section{Applications}\label{applications}
In this section, we present applications of our algorithm and analyze its implementation complexity. \cref{fractional-diff} discusses the fractional diffusion equation with an imaginary potential, \cref{generalized-heat-dirichlet} discusses the advection--diffusion equation, and \cref{sec:no-jump-tfim} the no-jump dynamics regime in a dissipative Ising model.

\subsection{Fractional Diffusion Equation}\label{fractional-diff} Let \(0<s<1\). Consider the following partial differential equation:
\begin{equation}\label{eq:fractional-diffusion-pde}
\begin{cases} \partial_t u(t,x) = -(-\Delta)^s u(t,x) + (a(t,x)-iv(t,x))u(t,x) + b(t,x), & (t,x)\in (0,T]\times\mathbb T^d,\\[0.5em] u(0,x)=u_0(x), & x\in\mathbb T^d. \end{cases}
\end{equation}

Here \((-\Delta)^s\) is a fractional power of the positive semidefinite Laplacian on \(\mathbb T^d\), \(a,v:[0,T]\times\mathbb T^d\to\mathbb R\) are real-valued potentials, and \(b:[0,T]\times\mathbb T^d\to\mathbb C\) is a source term. If \(v\equiv0\), then \cref{eq:fractional-diffusion-pde} reduces to a linear fractional reaction--diffusion equation. The operator \((-\Delta)^s\) models non-local diffusion, \(a(t,x)\) gives a linear reaction term, and the imaginary potential \(-iv(t,x)\) produces local phase modulation, analogous to a Larmor-frequency-offset term in Bloch--Torrey-type models. Thus, \cref{eq:fractional-diffusion-pde} is a forced fractional Bloch--Torrey-type equation with an additional real potential. Fractional Bloch--Torrey models have been used for anomalous diffusion in complex media, especially in magnetic resonance applications~\cite{Torrey1956,Magin2008,Yu2013,BuenoOrovioBurrage2017,Moutal2020}. In what follows, we take \(a,b\equiv0\) and assume that \(v\) is time-independent.

\begin{remark}
See also \cite[Equation~10]{an2023LCHS}, which considers complex absorbing potentials in open-system dynamics through a similar Schrödinger-type equation with a complex-valued potential.
\end{remark}

Discretize \cref{eq:fractional-diffusion-pde} on an \(M^d\) periodic grid with mesh size \(h=1/M\) and \(N=M^d\). Let \(F_M\) be the one-dimensional discrete Fourier transform, \(F_N:=F_M^{\otimes d}\), and let \(-\Delta_{h,1}\) be the one-dimensional periodic discrete Laplacian, with eigenvalues \(4M^2\sin^2(\pi m/M)\), \(m\in\mathbb Z_M\). The \(d\)-dimensional periodic discrete Laplacian is given by the Kronecker sum
\begin{equation}\label{eq:periodic-laplacian-kronecker-sum}
-\Delta_{h,d} = \sum_{j=1}^d I^{\otimes(j-1)} \otimes (-\Delta_{h,1}) \otimes I^{\otimes(d-j)}, \quad -\Delta_{h,1} = \frac{1}{h^2} \begin{pmatrix} 2 & -1 &        &        &        & -1\\ -1 & 2 & -1     &        &        &   \\ & -1 & 2     & \ddots &        &   \\ &    & \ddots& \ddots & -1     &   \\ &    &       & -1     & 2      & -1\\ -1 &    &       &        & -1     & 2 \end{pmatrix}.
\end{equation}
Consequently, \(-\Delta_{h,d}\) is diagonalized by \(F_N=F_M^{\otimes d}\), with eigenvalues $\lambda_m = 4M^2 \sum_{\alpha=1}^d \sin^2\left(\frac{\pi m_\alpha}{M}\right)$ for \(m=(m_1,\dots,m_d)\in \mathbb Z_M^d\). Therefore, the discretized fractional Laplacian is given by
\begin{equation}\label{eq:discrete-fractional-laplacian}
L_s := (-\Delta_{h,d})^s = F_N^\dagger \operatorname{diag}\left[ \left( 4M^2 \sum_{\alpha=1}^d \sin^2\left(\frac{\pi m_\alpha}{M}\right) \right)^s \right]_{m\in\{0,\dots,M-1\}^d} F_N .
\end{equation}
Discretize the potential as \(H:=\operatorname{diag}(v(x_i/M))_{i=1}^N\), with \(x_i\in\mathbb Z_M^d\). If \(\vec u(t)\) approximates \(u(t,x)\) on the grid and \(\vec u_0\) is the initial condition, then the semi-discrete equation is
\begin{equation}\label{eq:fractional-diffusion-discrete}
\begin{cases} \dfrac{d\vec{u}}{dt} = -(L_s+iH)\vec{u}(t), \\ \vec{u}(0)=\vec{u}_0 . \end{cases}
\end{equation}

The estimate in \cref{mizuta-time-indep-short} does not apply to \cref{eq:fractional-diffusion-discrete}, since \(L_s\) is non-local for \(0<s<1\). On \(\mathbb R^d\), the fractional Laplacian has the singular integral representation
\begin{equation}\label{eq:frac-lap-singular-integral}
(-\Delta)^s u(x) = \frac{2^{2s}\Gamma (s+\frac{d}{2})} {\pi^{d/2}|\Gamma (-s)|} \operatorname{p.v.} \int_{\mathbb R^d} \frac{u(x)-u(y)}{|x-y|^{d+2s}}\,dy \qquad 0<s<1.
\end{equation}
On \(\mathbb T^d\), the kernel is obtained by periodization. Thus, \((-\Delta)^s u(x)\) depends on all spatial values \(u(y)\), with algebraically decaying interactions. This non-locality is inherited by \(L_s\), so the standard qubit encoding of \(L_s\) is non-local. On the other hand, \cref{aftab-time-indep-short} does apply. Using \(\|[A,B]\|\leq 2\|A\|\|B\|\), we obtain
\begin{equation}\label{eq:fractional-crude-Cjl-bound}
C_{j,\ell}(H,L_s) \leq \binom{j}{\ell} 2^{j-1} \|H\|^{j-\ell} \|L_s\|^\ell .
\end{equation}
With \(C_j(H,L_s):=\max_{0\leq \ell\leq j} C_{j,\ell}(H,L_s)\), we have \(C_j(H,L_s) \leq 2^{j-1}\bigl(\|H\|+\|L_s\|\bigr)^j\). It follows that
\begin{equation}
\chi_J(H,L_s) := \inf_{j\geq J} C_j(H,L_s)^{-1/j} \geq \frac{1}{2(\|H\|+\|L_s\|)} > 0 .
\end{equation}

Since \(L_s\succeq0\) and \(H=H^\dagger\), our algorithm applies. We compute the commutator scaling used in the complexity analysis. Since \(L_s\) is diagonalized by \(F_N\), we have $(F_NL_sF_N^\dagger)_{p,q}=\lambda_p^s\delta_{p,q}$, where \(\lambda_p\) is an eigenvalue of \(-\Delta_{h,d}\). The Fourier-basis matrix of $H$ is
\begin{equation}\label{eq:fractional-FHF}
(F_NHF_N^\dagger)_{p,q}
=
\frac{1}{N}\sum_{x\in\mathbb Z_M^d}v(x/M)e^{-2\pi i(p-q)\cdot x/M}
=: \widehat V_{p-q}.
\end{equation}
\cref{lem:fractional-symbol-difference-simple}, which is proved in \cref{proof-of-fractional-symbol-difference-simple}, estimates differences of fractional-Laplacian eigenvalues. Define \(\beta_s:=\min\{2s,1\}\), \(\theta_s:=2s-\beta_s=\max\{0,2s-1\}\), and let \(|\cdot|_{\operatorname{per}}\) denote distance on \(\mathbb Z_M^d\).

\begin{lem}\label{lem:fractional-symbol-difference-simple}
Let $0 < s < 1$. There exists a constant \(C_{s,d}>0\), independent of \(N\), such that
\begin{equation}\label{eq:fractional-symbol-difference-simple}
|\lambda_p^s-\lambda_q^s| \leq C_{s,d} N^{\theta_s/d} |p-q|_{\operatorname{per}}^{\beta_s}
\end{equation}
for all \(p,q\in \mathbb Z_M^d\).
\end{lem}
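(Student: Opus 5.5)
We write \(\lambda_p=M^2\Theta(p)\) with \(\Theta(p):=4\sum_{\alpha=1}^d\sin^2(\pi p_\alpha/M)\). Then \(\lambda_p^s=M^{2s}\Theta(p)^s\), and it suffices to bound \(|\Theta(p)^s-\Theta(q)^s|\). Since \(N=M^d\), we have \(N^{\theta_s/d}=M^{\theta_s}\). The target is therefore \(M^{2s}|\Theta(p)^s-\Theta(q)^s|\le C_{s,d}M^{\theta_s}|p-q|_{\mathrm{per}}^{\beta_s}\), or equivalently
\begin{equation}
|\Theta(p)^s-\Theta(q)^s|\le C_{s,d}\,\left(\frac{|p-q|_{\mathrm{per}}}{M}\right)^{\beta_s},
\end{equation}
because \(2s-\theta_s=\beta_s\). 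The first step is to bound \(\Theta\) itself. Write \(g(t):=\sin^2(\pi t)\), which is \(1\)-periodic and smooth. For each coordinate, \(|g(p_\alpha/M)-g(q_\alpha/M)|\le \pi|p_\alpha-q_\alpha|_{\mathrm{per}}/M\), using the mean value theorem and periodicity. Summing over the \(d\) coordinates gives \(|\Theta(p)-\Theta(q)|\le 4\pi\sqrt d\,\delta\), where \(\delta:=|p-q|_{\mathrm{per}}/M\le \sqrt d/2\) is understood as a Euclidean periodic distance. We also record the bound \(0\le\Theta\le 4d\).

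\emph{Case \(s\le 1/2\).} Here \(\beta_s=2s\) and \(\theta_s=0\). The elementary inequality \(|x^s-y^s|\le|x-y|^s\) for \(x,y\ge0\) only yields \(\delta^{s}\), which is weaker than the required \(\delta^{2s}\). To recover the full exponent, we use \(\sqrt{\Theta}\). Define \(\sigma(p):=(2|\sin(\pi p_\alpha/M)|)_{\alpha}\in\mathbb R^d\), so that \(\Theta=|\sigma|^2\). Since \(t\mapsto|\sin\pi t|\) is Lipschitz and \(1\)-periodic, \(|\,|\sigma(p)|-|\sigma(q)|\,|\le|\sigma(p)-\sigma(q)|\le 2\pi\delta\). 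Writing \(\Theta^s=|\sigma|^{2s}\) with \(2s\le1\), the map \(x\mapsto x^{2s}\) is \(2s\)-Hölder with constant \(1\) on \([0,\infty)\). Hence \(|\Theta(p)^s-\Theta(q)^s|\le(2\pi\delta)^{2s}\), which is the claim.

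\emph{Case \(s>1/2\).} Here \(\beta_s=1\) and \(\theta_s=2s-1\). We again use \(|\sigma|^{2s}\), now with \(2s\in(1,2)\). By the mean value theorem, \(|x^{2s}-y^{2s}|\le 2s\max(x,y)^{2s-1}|x-y|\). Since \(|\sigma|\le 2\sqrt d\), this gives \(|\Theta(p)^s-\Theta(q)^s|\le 2s(2\sqrt d)^{2s-1}\,2\pi\delta\), which is the claim with \(\beta_s=1\).

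Multiplying back by \(M^{2s}\) gives \(M^{2s}\delta^{\beta_s}=M^{\theta_s}|p-q|_{\mathrm{per}}^{\beta_s}\), and this completes the proof. The one point that needs care is the Lipschitz estimate for \(|\sin(\pi\cdot)|\) with respect to the periodic distance on \(\mathbb Z_M\). If \(|\cdot|_{\mathrm{per}}\) is taken as the \(\ell^1\) or \(\ell^\infty\) norm rather than the Euclidean one, the two differ by at most a factor depending on \(d\), and this factor is absorbed into \(C_{s,d}\). Going through the square root \(\sigma\) is the key device: it is what produces the sharp exponent \(2s\) rather than \(s\) when \(s<1/2\).
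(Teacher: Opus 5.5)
Your proposal is correct and follows essentially the same route as the paper. Both arguments pass through the square root $\Theta^{1/2}=|\sigma|$ (the paper's $b(x)^{1/2}=|F(x)|$ with $F(x)=(2|\sin(x_\alpha/2)|)_\alpha$), use its Lipschitz bound in the periodic distance together with the reverse triangle inequality, and then apply $2s$-H\"older continuity of $r\mapsto r^{2s}$ for $s\le 1/2$ or its Lipschitz bound on $[0,2\sqrt d]$ for $s\ge 1/2$.

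The differences are cosmetic: you work directly on the grid with the normalized distance $\delta$, whereas the paper works on $\mathbb T^d$ and then samples at $x=2\pi p/M$. Your preliminary Lipschitz estimate for $\Theta$ itself is never used and can be dropped.
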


Using \cref{lem:fractional-symbol-difference-simple}, \cref{lem:canonical-fractional-commutator-simple} estimates the scaling of \(\operatorname{ad}_{L_s}^j(H)\) for \(j\geq1\). This serves as a prototype for bounding general commutator words in \cref{fractional-general-word-simple}.

\begin{lem}\label{lem:canonical-fractional-commutator-simple}
For every \(\ell\geq1\), we have $(F_N\operatorname{ad}_{L_s}^\ell(H)F_N^\dagger)_{p,q} = (\lambda_p^s-\lambda_q^s)^\ell \widehat V_{p-q}$.  Consequently, we have
\begin{equation}\label{eq:canonical-adLsH-bound}
\|\operatorname{ad}_{L_s}^\ell(H)\| \leq C_{\ell,V,s,d}^{(N)} N^{\ell\theta_s/d},
\end{equation}
where $C^{(N)}_{\ell,V,s,d}  :=    C_{s,d}^\ell   \sum_{r \in \mathbb Z^d_M} |r|_{\operatorname{per}}^{\ell\beta_s}  |\widehat V_r|$. In particular, if $\sup_{M\geq 1} \sum_{r\in\mathbb Z_M^d} |r|_{\operatorname{per}}^{\ell\beta_s}|\widehat V_r|<\infty$ then \(C_{\ell,V,s,d}^{(N)}\) is uniformly bounded in \(N\). In this case we may write $\|\operatorname{ad}_{L_s}^\ell(H)\| \leq C_{\ell,V,s,d} N^{\ell\theta_s/d}$, 
where \(C_{\ell,V,s,d}\) independent of $N$.
\end{lem}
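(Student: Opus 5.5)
We work in the Fourier basis, where $L_s$ is diagonal and $H$ has Toeplitz-type entries, and first establish the entrywise formula by induction on $\ell$. Set $\widetilde L := F_NL_sF_N^\dagger = \operatorname{diag}(\lambda_p^s)$ and $\widetilde H := F_NHF_N^\dagger$, so that $\widetilde H_{p,q}=\widehat V_{p-q}$ by \cref{eq:fractional-FHF}. Conjugation by the unitary $F_N$ commutes with taking commutators, so $F_N\operatorname{ad}_{L_s}^\ell(H)F_N^\dagger=\operatorname{ad}_{\widetilde L}^\ell(\widetilde H)$. For any matrix $X$ we have $[\widetilde L,X]_{p,q}=(\lambda_p^s-\lambda_q^s)X_{p,q}$, because $\widetilde L$ is diagonal. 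Starting from $\widetilde H_{p,q}=\widehat V_{p-q}$, induction on $\ell$ then gives $(\operatorname{ad}_{\widetilde L}^\ell\widetilde H)_{p,q}=(\lambda_p^s-\lambda_q^s)^\ell\widehat V_{p-q}$. Here $p-q$ is understood modulo $M$ in each coordinate, consistent with \cref{eq:fractional-FHF}.

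For the norm bound, we apply the Schur test: the spectral norm of a matrix is at most $\max\{\max_p\sum_q|X_{p,q}|,\ \max_q\sum_p|X_{p,q}|\}$. \cref{lem:fractional-symbol-difference-simple} gives the entrywise bound
\begin{equation}
|X_{p,q}|\le C_{s,d}^\ell N^{\ell\theta_s/d}|p-q|_{\operatorname{per}}^{\ell\beta_s}|\widehat V_{p-q}|.
\end{equation}
For fixed $p$, the map $q\mapsto r=p-q$ is a bijection of $\mathbb Z_M^d$. Hence each row sum is at most $C_{s,d}^\ell N^{\ell\theta_s/d}\sum_{r\in\mathbb Z_M^d}|r|_{\operatorname{per}}^{\ell\beta_s}|\widehat V_r|$, and the same reindexing bounds each column sum. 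Because $F_N$ is unitary, $\|\operatorname{ad}_{L_s}^\ell(H)\|=\|\operatorname{ad}_{\widetilde L}^\ell(\widetilde H)\|$, and this yields \cref{eq:canonical-adLsH-bound} with the constant $C^{(N)}_{\ell,V,s,d}$ as defined in the statement.

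The final claim then follows immediately. If the weighted sums are bounded uniformly in $M$, we take $C_{\ell,V,s,d}:=C_{s,d}^\ell\sup_M\sum_r|r|_{\operatorname{per}}^{\ell\beta_s}|\widehat V_r|$, which does not depend on $N$.

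I expect the only point requiring care to be indexing: $p-q$ must be read as a difference in $\mathbb Z_M^d$, and the periodic distance must be compatible with how \cref{lem:fractional-symbol-difference-simple} is stated. Both the reindexing bijection and the symmetry $|r|_{\operatorname{per}}=|-r|_{\operatorname{per}}$, which handles the column sums, rely on this. The Schur test itself is standard, and the rest is routine.
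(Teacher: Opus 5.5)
Your proposal is correct and follows essentially the same route as the paper. Both arguments conjugate by $F_N$ so that $\operatorname{ad}_{L_s}$ acts entrywise by multiplication by $\lambda_p^s-\lambda_q^s$, bound the entries using \cref{lem:fractional-symbol-difference-simple}, and control the spectral norm through row and column sums. The paper writes the last step as $\|A\|\le\sqrt{\|A\|_1\|A\|_\infty}$ where you take the maximum, which gives the same bound here. The symmetry $|r|_{\operatorname{per}}=|-r|_{\operatorname{per}}$ is not actually needed: for fixed $q$, the map $p\mapsto p-q$ is already a bijection of $\mathbb Z_M^d$.
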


\begin{proof}
Since \(F_NL_sF_N^\dagger\) is diagonal, we have
\begin{equation}\label{eq:comm-with-Ls-fourier}
(F_N[L_s,A]F_N^\dagger)_{p,q} = ([F_NL_sF_N^\dagger,F_NAF_N^\dagger])_{p,q} =(\lambda_p^s-\lambda_q^s) ( F_NAF_N^\dagger )_{p,q}.
\end{equation}
for any matrix \(A\). With \(A=H\), \cref{eq:fractional-FHF} gives $(F_N[L_s,H]F_N^\dagger)_{p,q} = (\lambda_p^s-\lambda_q^s)\widehat V_{p-q}$. 
Iterating \cref{eq:comm-with-Ls-fourier} gives the matrix elements of
$F_N\operatorname{ad}_{L_s}^\ell(H)F_N^\dagger$. By \cref{lem:fractional-symbol-difference-simple}, we have
\begin{equation}\label{entrywise-est}
|( F_N\operatorname{ad}_{L_s}^\ell(H)F_N^\dagger)_{p,q}| \leq C_{s,d}^\ell N^{\ell\theta_s/d} |p-q|_{\operatorname{per}}^{\ell\beta_s} |\widehat V_{p-q}|.
\end{equation}
Using \cref{entrywise-est}, we have
\begin{align}
\sum_{q \in \mathbb Z^d_M} |(F_N\operatorname{ad}_{L_s}^\ell(H)F_N^\dagger)_{p,q}| \leq C_{s,d}^\ell N^{\ell\theta_s/d} \sum_{q \in \mathbb Z^d_M} |p-q|_{\operatorname{per}}^{\ell\beta_s} |\widehat V_{p-q}|= C_{s,d}^\ell N^{\ell\theta_s/d} \sum_{r \in \mathbb Z^d_M} |r|_{\operatorname{per}}^{\ell\beta_s} |\widehat V_r|.
\end{align}
Hence, we have
\begin{equation}\label{eq:Bk-infty-bound}
\|F_N\operatorname{ad}_{L_s}^\ell(H)F_N^\dagger\|_\infty \leq C_{s,d}^\ell N^{\ell\theta_s/d} \sum_{r \in \mathbb Z^d_M} |r|_{\operatorname{per}}^{\ell\beta_s} |\widehat V_r|.
\end{equation}
Similarly, we have $\|F_N\operatorname{ad}_{L_s}^\ell(H)F_N^\dagger\|_1 \leq C_{s,d}^\ell N^{\ell\theta_s/d} \sum_{r \in \mathbb Z^d_M} |r|_{\operatorname{per}}^{\ell\beta_s} |\widehat V_r|$.  Using the estimate \(\|A\|\leq \sqrt{\|A\|_1\|A\|_\infty}\) which valid for any matrix \(A\), we obtain
\begin{equation}
\|\operatorname{ad}_{L_s}^\ell(H)\| = \|F_N\operatorname{ad}_{L_s}^\ell(H)F_N^\dagger\| \leq C_{s,d}^\ell N^{\ell\theta_s/d} \sum_{r \in \mathbb Z^d_M}|r|_{\operatorname{per}}^{\ell\beta_s} |\widehat V_r|.
\end{equation}
This proves \cref{eq:canonical-adLsH-bound}.
\end{proof}

We now extend \cref{lem:canonical-fractional-commutator-simple} to arbitrary nested commutator words in \cref{fractional-general-word-simple}, which is proved in \cref{proof-of-fractional-general-word-simple}. In \cref{fractional-general-word-simple}, we use the weaker consequence of \cref{lem:fractional-symbol-difference-simple}
\begin{equation}\label{eq:fractional-symbol-difference-weak}
|\lambda_p^s-\lambda_q^s|
\leq
C_{s,d}N^{\theta_s/d}\bigl(1+|p-q|_{\operatorname{per}}\bigr)^{\beta_s}.
\end{equation}
The constant in \cref{fractional-general-word-simple} is not sharp since the proof uses the crude bound \(\|\operatorname{ad}_H(A)\|_{a,\mathcal F}\leq2\mathcal V_a^{(N)}\|A\|_{a,\mathcal F}\) in \cref{eq:H-weighted-bound-current}. For special words, direct Fourier-basis formulas give sharper constants. For example, \cref{lem:canonical-fractional-commutator-simple} uses \((F_N\operatorname{ad}_{L_s}^{\ell}(H)F_N^\dagger)_{p,q}=(\lambda_p^s-\lambda_q^s)^\ell\widehat V_{p-q}\) together with fractional-Laplacian eigenvalue differences. More general nested commutators finite differences of \(p\mapsto\lambda_p^s\), which may yield sharper constants. We do not pursue these refinements here.

\begin{prop}\label{fractional-general-word-simple}
Let \(W_{j,\ell}(H,L_s)\) be any nested commutator word of length \(j\geq2\) containing exactly \(\ell\) occurrences of \(L_s\) and \(j-\ell\) occurrences of \(H\), with \(1\leq \ell\leq j-1\). Define
$\mathcal V_a^{(N)} := \sum_{r\in\mathbb Z_M^d} \left(1+|r|_{\operatorname{per}}\right)^a |\widehat V_r|$. 
Then
\begin{equation}
\|W_{j,\ell}(H,L_s)\| \leq 2^{j-\ell} C_{s,d}^{\ell} (\mathcal V_{j\beta_s}^{(N)} )^{j-\ell} N^{\ell\theta_s/d}.
\end{equation}
In particular, if \(\mathcal V_{j\beta_s}^{(N)}\) is bounded independently of \(N\), then
\begin{equation}
\|W_{j,\ell}(H,L_s)\| \leq C_{j,V,s,d}N^{\ell\theta_s/d},
\end{equation}
where \(C_{j,V,s,d} > 0\) is independent of \(N\).
\end{prop}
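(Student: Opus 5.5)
Our plan is to work entirely in the Fourier basis and track a weighted matrix norm that is stable under both operations appearing in a word: commutation with \(H\) and commutation with \(L_s\). For a matrix \(A\), set \(\widetilde A:=F_NAF_N^\dagger\) and define the weighted Schur-type norm
\begin{equation}
\|A\|_{a,\mathcal F} := \max\left\{\max_{p}\sum_{q}(1+|p-q|_{\operatorname{per}})^{-a}\ldots\right\}
\end{equation}
or, more concretely, we will use the simpler ``off-diagonal profile'' bound: say \(A\) is dominated by a profile \(g:\mathbb Z_M^d\to\mathbb R_+\) if \(|\widetilde A_{p,q}|\le g(p-q)\) for all \(p,q\). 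The base case is \(\widetilde H_{p,q}=\widehat V_{p-q}\), so \(H\) is dominated by \(|\widehat V|\). Since \(\|A\|\le\sqrt{\|\widetilde A\|_1\|\widetilde A\|_\infty}\le \|g\|_{\ell^1}\) for a dominated matrix (as in the proof of \cref{lem:canonical-fractional-commutator-simple}), it suffices to control the \(\ell^1\)-mass of the profile, weighted appropriately.

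Next we record the two inductive steps. First, by \cref{eq:comm-with-Ls-fourier} and the weak symbol estimate \cref{eq:fractional-symbol-difference-weak}, if \(A\) is dominated by \(g\) then \([L_s,A]\) is dominated by \(C_{s,d}N^{\theta_s/d}(1+|r|_{\operatorname{per}})^{\beta_s}g(r)\); this produces exactly one factor \(C_{s,d}N^{\theta_s/d}\) and one weight power \(\beta_s\) per occurrence of \(L_s\), giving \(C_{s,d}^\ell N^{\ell\theta_s/d}\) overall. Second, \(([H,A])^\sim=\widetilde H\widetilde A-\widetilde A\widetilde H\) is dominated by \(2(|\widehat V|*g)\), a periodic convolution. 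The key submultiplicativity is \((1+|r+r'|_{\operatorname{per}})^{a}\le(1+|r|_{\operatorname{per}})^{a}(1+|r'|_{\operatorname{per}})^{a}\), so the weighted \(\ell^1\) norm \(\|g\|_{a}:=\sum_r(1+|r|_{\operatorname{per}})^a g(r)\) satisfies \(\|\,|\widehat V|*g\|_a\le\mathcal V_a^{(N)}\|g\|_a\). This is the crude bound \(\|\operatorname{ad}_H(A)\|_{a,\mathcal F}\le 2\mathcal V_a^{(N)}\|A\|_{a,\mathcal F}\) alluded to before the statement.

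Then we induct along the word from the innermost entry outward with a fixed weight exponent \(a=j\beta_s\). Multiplication by \((1+|r|)^{\beta_s}\) at an \(L_s\) step raises the weight needed; to keep everything in a single norm we instead bound the final \(\ell^1\) mass by \(\|g_{\mathrm{final}}\|_{0}\) and push the weights back: equivalently, track \(\|g\|_{a_t}\) with \(a_t=(\text{remaining }L_s\text{ count})\beta_s\le j\beta_s\), using \(\|(1+|r|)^{\beta_s}g\|_{a-\beta_s}=\|g\|_a\) at \(L_s\) steps and \(\mathcal V_{a}^{(N)}\le\mathcal V_{j\beta_s}^{(N)}\) at \(H\) steps. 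The innermost entry contributes \(\mathcal V^{(N)}_{a}\) if it is \(H\) (or the word's innermost commutator pair handles it; note the word must contain at least one \(H\), and if the innermost is \(L_s\) we start from \([M_{j-1},L_s]\), which by antisymmetry reduces to the same two steps). Counting, there are \(j-\ell\) factors of \(\mathcal V^{(N)}_{j\beta_s}\) (one for the initial \(H\) and \(2\mathcal V\) for each subsequent \(H\), at most \(2^{j-\ell}\) overall) and \(\ell\) factors \(C_{s,d}N^{\theta_s/d}\), yielding the claim; the \(N\)-independent corollary is immediate. The main obstacle is the bookkeeping of the weight exponents and the case where \(L_s\) sits innermost (so \([\cdot,L_s]\) starts from a diagonal matrix); we would handle it by writing \([X,L_s]=-[L_s,X]\) and treating the innermost \(H\) as the seed.
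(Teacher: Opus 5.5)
Your argument is correct and is essentially the paper's proof. The paper also works in the Fourier basis with the submultiplicative weight $(1+|p-q|_{\operatorname{per}})^a$. It shows that $\operatorname{ad}_{L_s}$ trades $\beta_s$ units of weight for a factor $C_{s,d}N^{\theta_s/d}$, while $\operatorname{ad}_H$ costs $2\mathcal V_a^{(N)}$. It seeds the induction with $\pm\operatorname{ad}_{L_s}(H)$ and finishes with $\|A\|\le\sqrt{\|F_NAF_N^\dagger\|_1\|F_NAF_N^\dagger\|_\infty}$.

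The differences are only in packaging. You dominate the Fourier entries by a profile $g(p-q)$ and use its weighted $\ell^1$ norm, where the paper uses a weighted row/column norm of the matrix. Your weight bookkeeping (exponent at most $\ell\beta_s$, versus the paper's $a+n_m\beta_s$) is slightly tighter but gives the same stated bound.
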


We now analyze the implications of \cref{fractional-general-word-simple} for the complexity of our algorithm. Since \(\mathcal C_{j,\ell}(H,L_s)\) collects all commutator words with exactly \(\ell\) occurrences of \(L_s\), \cref{fractional-general-word-simple} implies
\begin{equation}\label{eq:fractional-Cjell-application-bound}
\mathcal C_{j,\ell}(H,L_s) \leq 2^j 2^{j-\ell} C_{s,d}^{\ell} (\mathcal V_{j\beta_s}^{(N)})^{j-\ell} N^{\ell\theta_s/d}
\end{equation}
for \(1\leq \ell\leq j-1\). 
Define
\begin{equation}\label{eq:fractional-Aj-def}
A_{V,j}^{(N)} := \max_{1\leq \ell\leq j-1} \left( 2^j 2^{j-\ell} C_{s,d}^{\ell}  (\mathcal V_{j\beta_s}^{(N)})^{j-\ell} \right)^{1/j}.
\end{equation}
\cref{eq:fractional-Cjell-application-bound} gives $\mathcal C_{j,\ell}(H,L_s) \leq (A_{V,j}^{(N)})^j N^{\ell\theta_s/d}$. Using the definition of \(\alpha_{Q,j}\) from \cref{lem:simplified-root-growth-scale}, we obtain
\begin{align}
\alpha_{Q,j}\le \sum_{\ell=0}^{j} R_Q^\ell \mathcal C_{j,\ell}(H,L_s)\le (A_{V,j}^{(N)})^j\sum_{\ell=1}^{j-1}(R_QN^{\theta_s/d})^\ell\le \bigl(2A_{V,j}^{(N)}(1+R_QN^{\theta_s/d})\bigr)^j,
\end{align}
where the last inequality follows from \(\sum_{\ell=1}^{j-1}x^\ell\leq (2(1+x))^j\) for \(x\geq0\). Hence, if \(A_V^{(N)} := \sup_{j\ge 2} A_{V,j}^{(N)} < \infty\), we have $\alpha_{Q,j} \leq ( 2A_V^{(N)} (1+R_QN^{\theta_s/d} ) )^j$ Therefore, the root-growth scale \(\omega\) from \cref{lem:simplified-root-growth-scale} satisfies
\begin{equation}\label{eq:fractional-omega-bound}
\omega = \mathcal O ( A_V^{(N)} (1+R_QN^{\theta_s/d}) ).
\end{equation}
Choose \(\rho_Q=c/\omega\) for some fixed \(0<c<1\). Then \(\rho_Q\omega=c<1\), so \cref{lem:simplified-root-growth-scale} applies, and \(\rho_Q^{-1}=\mathcal O(\omega)\). Combining \cref{eq:fractional-omega-bound} with \cref{eq:mu-mQ-root-growth-bound} gives
\begin{equation}\label{eq:fractional-muQ-scale}
\mu_Q = \mathcal {O} ( A_V^{(N)} (1+R_QN^{\theta_s/d})),
\end{equation}
where the \(\mathcal O\)-notation absorbs the fixed finite factor \(S\) from \cref{lem:simplified-root-growth-scale}. More explicitly, the optimized MPF overhead contributes the polylogarithmic factor
\begin{align}\label{eq:fractional-polylog-factor}
\operatorname{polylog}_Q(T,\epsilon_Q,N) &:= \mathcal{O} \left( \log\left(\frac{ A_V^{(N)} (1+R_QN^{\theta_s/d}) T}{\epsilon_Q}\right)^2 \left( \log\log\left(\frac{ A_V^{(N)} (1+R_QN^{\theta_s/d}) T}{\epsilon_Q}\right) \right)^2\right), \\ &= \mathcal{O} \left( \log\left(\frac{ A_V^{(N)} (1+R_QN^{\theta_s/d}) T}{\epsilon_Q}\right)^4\right),
\end{align}
where we have assumed that $A_V^{(N)} (1+R_QN^{\theta_s/d}) T/\epsilon_Q \geq e^e$.  Substituting \cref{eq:fractional-muQ-scale} into \cref{abstract-optimized-block-complexity-simplified} with \(F(m)\equiv1\), we obtain
\begin{equation}
\mathcal Q_{\operatorname{block}} = \mathcal O ( A_V^{(N)} (1+R_QN^{\theta_s/d} )T \operatorname{polylog}_Q(T,\epsilon_Q,N) ).
\end{equation}
Similarly, provided \(W_Q^{\operatorname{MPF}}u_0\neq0\), substituting \cref{eq:fractional-muQ-scale} into \cref{abstract-optimized-state-complexity-simplified}, we obtain
\begin{equation}\label{eq:fractional-state-complexity-final}
\mathcal Q_{\operatorname{state}}(u_0) = \mathcal O ( \chi_Q(u_0) A_V^{(N)} (1+R_QN^{\theta_s/d} )T \operatorname{polylog}_Q(T,\epsilon_Q,N) ).
\end{equation}

\subsection{Advection-Diffusion Equation}\label{generalized-heat-dirichlet}
Let \(a>0\) and \(b\in\mathbb R\). Consider the following partial differential equation: 
\begin{equation}\label{eq:generalized-heat-pde}
\begin{cases} \partial_t u(t,x) = a\Delta u(t,x) + b \nabla\cdot(\vec 1 u(t,x)) - c(t,x)u(t,x) + f(t,x), & (t,x)\in (0,T]\times (0,1)^d,\\ u(0,x)=u_0(x), & x\in [0,1]^d,\\ u(t,x)=0, & (t,x)\in[0,T]\times\partial [0,1]^d. \end{cases}
\end{equation}
\cref{eq:generalized-heat-pde} is a \(d\)-dimensional advection--diffusion equation with homogeneous Dirichlet boundary conditions, where \(a\Delta u\) is diffusion, \(b\nabla\cdot(\vec 1u)\) is drift, \(c(t,x)\geq0\) is absorption, and \(f(t,x)\) is the source. Such equations model heat and mass transfer, reactive systems, drug delivery, and related transport phenomena~\cite{ParhiziEtAlCDRS,AgudAlbesaEtAlCDR,VillotaCadenaEtAlCDRSolv,an2026fast}. We take \(c,f\equiv0\) for simplicity. Discretize using \(M+1\) equidistant grid points per direction, mesh size \(h=1/M\), and \(N=(M+1)^d\). The \(d\)-dimensional Dirichlet finite-difference Laplacian is the Kronecker sum
\begin{equation}
-\Delta_{h,d} := \sum_{j=1}^d I^{\otimes(j-1)} \otimes (-\Delta_{h,1}) \otimes I^{\otimes(d-j)}, \quad -\Delta_{h,1} := \frac{1}{h^2} \begin{pmatrix} 2 & -1 &        &        &   \\ -1 & 2 & -1     &        &   \\ & -1 & 2     & \ddots &   \\ &    & \ddots & \ddots & -1 \\ &    &        & -1     & 2 \end{pmatrix}.
\end{equation}
By abuse of notation, we use the same notation for the Dirichlet finite-difference Laplacian and the periodic finite-difference Laplacian from \cref{fractional-diff}. Similarly, the \(d\)-dimensional Dirichlet finite-difference drift matrix is the Kronecker sum
\begin{equation}
D_{h,d} := \sum_{j=1}^d I^{\otimes(j-1)} \otimes D_{h,1} \otimes I^{\otimes(d-j)}, \quad D_{h,1} = \frac{1}{2h} \begin{pmatrix} 0  & 1  &        &        &   \\ -1 & 0  & 1      &        &   \\ & -1 & 0      & \ddots &   \\ &    & \ddots & \ddots & 1 \\ &    &        & -1     & 0 \end{pmatrix}.
\end{equation}
Here \(\Delta_{h,d}\) discretizes \(\Delta=\sum_{i=1}^d\partial_{x_i}^2\), while \(D_{h,d}\) discretizes \(\nabla\cdot(\vec 1u)=\sum_{i=1}^d\partial_{x_i}u\). 
Note that  Since \(\Delta_{h,d}=\Delta_{h,d}^\dagger\preceq0\)  \(D_{h,d}^\dagger=-D_{h,d}\). 
 Define $L := -a\Delta_{h,d}$ and $H := ibD_{h,d}$.  If \(\vec u(t)\) approximates \(u(t,x)\) on the grid and \(\vec u_0\) is the initial condition, then the semi-discrete equation is in dissipative-plus-Hamiltonian form is
\begin{equation}\label{eq:generalized-heat-lchs-form}
\begin{cases} \dfrac{d\vec u}{dt} = -\left(L+iH\right) \vec u(t),\\ \vec u(0)=\vec u_0 . \end{cases}
\end{equation}

\begin{remark}
The locality-extensive estimate in \cref{mizuta-time-indep-short} does not apply directly. Although \(\Delta_{h,d}\) and \(D_{h,d}\) are sparse in the grid basis, this does not imply \(q\)-locality under a many-body tensor-product encoding. For example, in binary encoding, \(|m+1\rangle\langle m|\) may involve non-local carries. Nevertheless, \cref{aftab-time-indep-short} applies as in \cref{fractional-diff}.
\end{remark}

We now derive commutator bounds. For \(d=1\), define \(A_1:=-h^2\Delta_{h,1}\) and \(B_1:=2hD_{h,1}\).

\begin{lem}\label{lem:dirichlet-first-commutator-1d}
Let \(\{e_0,\dots,e_M\}\) denote the standard basis of \(\mathbb C^{M+1}\). Define the matrices 
\begin{equation}
S := \sum_{j=0}^{M-1} e_j e_{j+1}^T, \quad 
P_0:=e_0e_0^T  
\quad P_M:=e_Me_M^T.
\end{equation}
We have $A_1:=2I-S-S^T$ and $B_1:=S-S^T$ and $[S,S^T]=P_0-P_M$. Consequently, we have
\begin{equation}\label{eq:dirichlet-first-commutator-1d}
[A_1,B_1] = 2(P_0-P_M), \quad [-\Delta_{h,1},D_{h,1}] =\frac{1}{h^3}(P_0-P_M).
\end{equation}
\end{lem}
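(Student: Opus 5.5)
This is a direct matrix computation, done in four steps.

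\emph{Step 1: express $A_1$ and $B_1$ through $S$.} The matrix $S$ has ones on the superdiagonal, so $S^T$ has ones on the subdiagonal. Hence the tridiagonal matrix $h^2(-\Delta_{h,1})$, with $2$ on the diagonal and $-1$ off it, equals $2I-S-S^T$. Likewise $2hD_{h,1}$ has $+1$ on the superdiagonal and $-1$ on the subdiagonal, so it equals $S-S^T$. This gives $A_1=2I-S-S^T$ and $B_1=S-S^T$.

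\emph{Step 2: compute $[S,S^T]$.} Using $e_{j+1}^Te_{k+1}=\delta_{jk}$, we get
\[
SS^T=\sum_{j,k}e_je_{j+1}^Te_{k+1}e_k^T=\sum_{j=0}^{M-1}e_je_j^T=I-P_M .
\]
Similarly,
\[
S^TS=\sum_{j=0}^{M-1}e_{j+1}e_{j+1}^T=I-P_0 .
\]
Therefore $[S,S^T]=(I-P_M)-(I-P_0)=P_0-P_M$.

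\emph{Step 3: expand $[A_1,B_1]$ bilinearly.} The identity commutes with everything, so
\[
[A_1,B_1]=-[S+S^T,\,S-S^T]=-\bigl([S,S]-[S,S^T]+[S^T,S]-[S^T,S^T]\bigr).
\]
The terms $[S,S]$ and $[S^T,S^T]$ vanish, and $[S^T,S]=-[S,S^T]$. The bracket therefore equals $-2[S,S^T]$, and so $[A_1,B_1]=2[S,S^T]=2(P_0-P_M)$.

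\emph{Step 4: rescale.} Since $-\Delta_{h,1}=h^{-2}A_1$ and $D_{h,1}=(2h)^{-1}B_1$, we have
\[
[-\Delta_{h,1},D_{h,1}]=\frac{1}{2h^3}[A_1,B_1]=\frac{1}{h^3}(P_0-P_M).
\]

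The only place that needs care is the boundary bookkeeping in Step 2, namely which diagonal entry is missing from $SS^T$ versus $S^TS$; this is what fixes the sign of $P_0-P_M$. The rest is routine.
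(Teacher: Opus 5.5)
Your proof is correct and matches the paper's argument step for step. Like the paper, you identify $A_1=2I-S-S^T$ and $B_1=S-S^T$, compute $SS^T=I-P_M$ and $S^TS=I-P_0$, reduce $[A_1,B_1]$ to $2[S,S^T]$ by bilinearity, and rescale by $h^{-2}(2h)^{-1}$; you simply write out the bilinear expansion and the entrywise identification of $A_1,B_1$, which the paper treats as immediate.
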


\begin{proof}
The formulas for \(A_1\) and \(A_2\) are immediate from their definitions. By definition, we have $S=\sum_{j=0}^{M-1}e_je_{j+1}^T$ and $S^T=\sum_{j=0}^{M-1}e_{j+1}e_j^T$. Therefore, we have
\begin{equation}
SS^T=\sum_{j=0}^{M-1}e_je_j^T=I-P_M, \quad S^TS=\sum_{j=0}^{M-1}e_{j+1}e_{j+1}^T=I-P_0.
\end{equation}
Hence, we have $ [S,S^T] = SS^T-S^T S = (I-P_M)-(I-P_0) = P_0-P_M$. Therefore, we have
\begin{equation}
[A_1,B_1]=[2I-S-S^T,S-S^T]=[-S-S^T,S-S^T] = 2[S,S^T] = 2(P_0-P_M).
\end{equation}
Hence, we have
\begin{equation}
[-\Delta_{h,1},D_{h,1}]  = \left[\frac{A_1}{h^2},\frac{B_1}{2h}\right] =  \frac{1}{2h^3}[A_1,B_1]  =  \frac{1}{h^3}(P_0-P_M).
\end{equation}
This completes the proof.
\end{proof}

\cref{lem:dirichlet-first-commutator-1d} shows that the non-commutativity of \(A_1\) and \(B_1\) is localized at the boundary. Consequently, repeated commutators remain boundary-supported. This is made precise in \cref{dirichlet-nested-commutator-bound-1d}, which is proved in \cref{proof-of-dirichlet-nested-commutator-bound-1d}.

\begin{lem}\label{dirichlet-nested-commutator-bound-1d}
Let \(S,P_0\), and \(P_M\) be defined as in \cref{lem:dirichlet-first-commutator-1d}. For \(r\geq0\), define
\begin{equation}\label{eq:boundary-layer-space}
\mathcal B_r:=\operatorname{span}\left\{(S^T)^pP_0S^q,\;S^pP_M(S^T)^q:\;0\leq p,q\leq r\right\}.
\end{equation}
For \(X\in\mathcal B_r\), define the coefficient seminorm
\begin{equation}\label{eq:boundary-layer-coefficient-seminorm}
\|X\|_{\mathcal B}:=\inf\left\{\sum_\eta |c_\eta|:\;X=\sum_\eta c_\eta E_\eta,\;E_\eta\in\left\{(S^T)^pP_0S^q,\;S^pP_M(S^T)^q:\;0\leq p,q\leq r \leq M \right\}\right\}.
\end{equation}
Then $\|X\|\leq \|X\|_{\mathcal B}$. Moreover, for every \(X\in\mathcal B_r\), we have $[A_1,X]\in\mathcal B_{r+1}$, $[B_1,X]\in\mathcal B_{r+1}$, $\|[A_1,X]\|_{\mathcal B}\leq4\|X\|_{\mathcal B}$ and $\|[B_1,X]\|_{\mathcal B}\leq4\|X\|_{\mathcal B}$. Consequently, every non-zero right-nested mixed commutator \(W_j(A_1,B_1)\) of length \(j\geq2\) satisfies $\|W_j(A_1,B_1)\|\leq4^{j-1}$. 
\end{lem}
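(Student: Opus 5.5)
The plan has three steps: show that the generators $E_\eta$ have norm at most one, compute the commutators of $S$ and $S^T$ with each generator, and then induct on the commutator length.

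\textbf{Norm bound.} Since $S$ and $S^T$ are partial isometries, $\|S\|,\|S^T\|\le 1$. The projections $P_0$ and $P_M$ are rank one with norm one. Hence every generator $E_\eta$ in \cref{eq:boundary-layer-space} satisfies $\|E_\eta\|\le 1$. For any decomposition $X=\sum_\eta c_\eta E_\eta$, the triangle inequality gives $\|X\|\le\sum_\eta|c_\eta|$. Taking the infimum over decompositions yields $\|X\|\le\|X\|_{\mathcal B}$.

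\textbf{Commutators with generators.} Write $A_1=2I-S-S^T$ and $B_1=S-S^T$ (\cref{lem:dirichlet-first-commutator-1d}). The identity contributes nothing to $[A_1,X]$. Hence it suffices to show that $[S,E]$ and $[S^T,E]$ lie in $\mathcal B_{r+1}$ with coefficient norm at most $2$ for each generator $E$. The triangle inequality then gives coefficient norm at most $4$ for $[A_1,E]$ and $[B_1,E]$, and the bound extends to general $X$ by linearity and the infimum.

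Consider first $E=(S^T)^pP_0S^q$. Then
\[
[S^T,E]=(S^T)^{p+1}P_0S^q-(S^T)^pP_0S^qS^T.
\]
The first term is already a generator with $p+1\le r+1$. For the second term, I will use $S S^T=I-P_M$ to write $S^qS^T=S^{q-1}(I-P_M)$ when $q\ge1$. I then need $P_0S^{q-1}P_M$, which is $e_0e_M^T$ if $q-1=M$ and zero otherwise. I expect this to vanish in the admissible range, or at least to stay inside the span. If it does not, I can rewrite $e_0e_M^T=P_0S^M$, which is itself a generator.

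Similarly, $[S,E]=S(S^T)^pP_0S^q-(S^T)^pP_0S^{q+1}$. Here I use $SS^T=I-P_M$ repeatedly, or more directly $S(S^T)^p=(S^T)^{p-1}$ when acting on $P_0$'s range $e_0$. This works because $(S^T)^pe_0=e_p$ and $Se_p=e_{p-1}$ for $p\ge1$, while $Se_0=0$.

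The cleanest way to organize these computations is to note that $(S^T)^pP_0S^q=e_pe_q^T$ and $S^pP_M(S^T)^q=e_{M-p}e_{M-q}^T$. So $\mathcal B_r$ is spanned by matrix units supported within distance $r$ of a corner. Multiplication by $S$ or $S^T$ shifts the row or column index by one, either killing the matrix unit or moving it one step further from the corner. Consequently $[S,e_pe_q^T]=e_{p-1}e_q^T-e_pe_{q+1}^T$, with terms dropped at the boundary. Each commutator is therefore a sum of at most two unit-coefficient generators in $\mathcal B_{r+1}$, which gives the coefficient bound $2$ and the closure claim.

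\textbf{Induction on length.} A right-nested mixed commutator $W_j$ of length $j$ is nonzero only if it contains both $A_1$ and $B_1$. Its innermost nonvanishing bracket is then $\pm[A_1,B_1]=\pm2(P_0-P_M)$, since $[A_1,A_1]=[B_1,B_1]=0$. This element lies in $\mathcal B_0$ with $\|\cdot\|_{\mathcal B}\le4$. Each of the remaining $j-2$ outer commutators multiplies the coefficient norm by at most $4$. Hence
\[
\|W_j\|\le\|W_j\|_{\mathcal B}\le 4\cdot4^{j-2}=4^{j-1}.
\]

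\textbf{Expected obstacle.} The main difficulty is the boundary bookkeeping. When the indices $p$ or $q$ hit $0$ or $M$, the two corner families overlap, and the restriction $r\le M$ must be handled carefully. The matrix-unit description avoids this: any matrix unit $e_pe_q^T$ with $p,q\le M$ is a legitimate generator whenever $p,q\le r+1\le M$ is allowed. Terms that leave the index range vanish, so the bound only improves.
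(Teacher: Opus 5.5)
Your proposal is correct and takes essentially the same route as the paper. You identify the generators with the corner matrix units $e_pe_q^T$ and $e_{M-p}e_{M-q}^T$, observe that left or right multiplication by $S$ or $S^T$ either annihilates a matrix unit or sends it to another generator of $\mathcal B_{r+1}$, and then induct from $[A_1,B_1]=2(P_0-P_M)$ with $\|[A_1,B_1]\|_{\mathcal B}\le 4$. The paper charges cost $1$ to each of $SX$, $XS$, $S^TX$, $XS^T$, while you charge cost $2$ to each of $[S,X]$ and $[S^T,X]$, which is the same count. Two minor remarks: the preliminary detour through $SS^T=I-P_M$ becomes unnecessary once you adopt the matrix-unit description, and a shift can also move a unit toward its corner rather than away from it. The latter only keeps it in $\mathcal B_r\subseteq\mathcal B_{r+1}$, so it does not affect the argument.
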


We lift the $d=1$ estimate to the \(d\)-dimensional tensor-product discretization in \cref{prop:dirichlet-nested-commutator-bound-d}.

\begin{prop}\label{prop:dirichlet-nested-commutator-bound-d}
Let \(L:=-a\Delta_{h,d}\) and \(H:=ibD_{h,d}\). For \(1\leq i\leq d\), define
\begin{equation}\label{eq:dirichlet-coordinate-operators}
A_i := I^{\otimes(i-1)} \otimes A_1 \otimes I^{\otimes(d-i)}, \qquad B_i := I^{\otimes(i-1)} \otimes B_1 \otimes I^{\otimes(d-i)}.
\end{equation}
Then $-\Delta_{h,d} = h^{-2}\sum_{i=1}^d A_i$ and $D_{h,d}  = \frac{1}{2h}\sum_{i=1}^d B_i$. Let \(W_{j,\ell}(H,L)\) be a non-zero right-nested mixed commutator of length \(j\geq2\), containing exactly \(\ell\) occurrences of \(L\) and \(j-\ell\) occurrences of \(H\), where \(1\leq \ell\leq j-1\). Then
\begin{equation}\label{eq:dirichlet-word-bound}
\|W_{j,\ell}(H,L)\| \leq d 2^{j+\ell-2} a^\ell |b|^{j-\ell} h^{-(j+\ell)}.
\end{equation}
Consequently, if \(\mathcal C_{j,\ell}(H,L)\) denotes the sum over all right-nested mixed commutator words of length \(j\) with exactly \(\ell\) occurrences of \(L\), then
\begin{equation}\label{eq:dirichlet-Cjell-bound}
\mathcal C_{j,\ell}(H,L) \leq \binom{j}{\ell} d 2^{j+\ell-2} a^\ell |b|^{j-\ell} h^{-(j+\ell)} \leq d 2^{2j+\ell-2} a^\ell |b|^{j-\ell}  h^{-(j+\ell)}.
\end{equation}
\end{prop}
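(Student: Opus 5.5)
The plan is to reduce the $d$-dimensional statement to the one-dimensional bound of \cref{dirichlet-nested-commutator-bound-1d} by exploiting the Kronecker-sum structure. The two decomposition identities are immediate: they follow from $-\Delta_{h,1}=h^{-2}A_1$ and $D_{h,1}=(2h)^{-1}B_1$, tensored into the $i$th slot. Thus
\begin{equation}
L=a h^{-2}\sum_{i=1}^d A_i, \qquad H=\frac{ib}{2h}\sum_{i=1}^d B_i .
\end{equation}
The key structural fact is that operators acting on different coordinates commute: $[A_i,B_{i'}]=[A_i,A_{i'}]=[B_i,B_{i'}]=0$ for $i\neq i'$, and likewise $[A_i,A_i]=[B_i,B_i]=0$.

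First I expand the word $W_{j,\ell}(H,L)=[M_1,\dots,M_j]$ by multilinearity. This gives a sum over index tuples $(i_1,\dots,i_j)\in\{1,\dots,d\}^j$ of terms $[X^{(1)}_{i_1},\dots,X^{(j)}_{i_j}]$, with prefactor $a^\ell h^{-2\ell}(|b|/(2h))^{j-\ell}$ in modulus. Next I argue that only diagonal tuples $i_1=\cdots=i_j$ survive.

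Consider the innermost commutator $[X_{i_{j-1}},X_{i_j}]$. If $i_{j-1}\ne i_j$, it vanishes. If the indices agree, the commutator is supported on factor $i_j$, i.e.\ it has the form $I\otimes Y\otimes I$. Proceeding outward by induction, suppose the current inner operator is supported on factor $i$. Then commuting with $X_{i'}$ for $i'\ne i$ gives zero, because they act on different tensor factors. Hence only the $d$ diagonal tuples contribute, and for these $[X^{(1)}_i,\dots,X^{(j)}_i]=I\otimes W_j(A_1,B_1)\otimes I$. Its norm is at most $4^{j-1}$ by \cref{dirichlet-nested-commutator-bound-1d}, or it is zero.

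Collecting terms gives
\begin{equation}
\|W_{j,\ell}(H,L)\|\le d\,4^{j-1}a^\ell |b|^{j-\ell}h^{-2\ell}(2h)^{-(j-\ell)} = d\,2^{2j-2-(j-\ell)}a^\ell|b|^{j-\ell}h^{-(j+\ell)},
\end{equation}
and $2j-2-j+\ell=j+\ell-2$, which matches \eqref{eq:dirichlet-word-bound}. For \eqref{eq:dirichlet-Cjell-bound}, there are at most $\binom{j}{\ell}$ words with exactly $\ell$ occurrences of $L$. Summing the word bound over them and using $\binom{j}{\ell}\le 2^j$ gives the second inequality.

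The main obstacle is the vanishing of the off-diagonal terms. The subtle point is that an inner commutator vanishing for mixed indices needs care when $j=2$ and when intermediate letters coincide, e.g.\ an identical-letter commutator $[A_i,A_i]=0$. I will handle it by the support induction above, noting that zero words contribute nothing; this is also why the statement restricts to non-zero words and uses the 1D bound only for non-zero mixed words.
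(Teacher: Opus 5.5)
Your proposal is correct and follows the same route as the paper. You expand multilinearly in the coordinate operators $A_i,B_i$, observe that cross-coordinate commutation kills every term except the $d$ diagonal index tuples, and bound each surviving term by $4^{j-1}$ via \cref{dirichlet-nested-commutator-bound-1d}; the prefactor bookkeeping $4^{j-1}2^{-(j-\ell)}=2^{j+\ell-2}$ and the count $\binom{j}{\ell}\le 2^j$ then finish the proof. Your support-induction argument for why only diagonal tuples survive makes explicit what the paper states in one line (``summing over the possible coordinate directions contributes only a factor of $d$'').
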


\begin{proof}
The formulas for \(-\Delta_{h,d}\) and \(D_{h,d}\) follow directly from their Kronecker-sum definitions. We expand \(W_{j,\ell}(H,L)\) multilinearly in \(A_i\) and \(B_i\). Clearly, $[A_i,A_k]=[B_i,B_k]=[A_i,B_k]=0$ for $i\neq k$. For each fixed \(i\), \cref{dirichlet-nested-commutator-bound-1d} gives
\begin{equation}\label{eq:coordinate-wise-dimensionless-bound}
\|W_j(A_i,B_i)\| \leq 4^{j-1}.
\end{equation}
Each occurrence of \(L\) contributes a factor \(ah^{-2}\), while each occurrence of \(H\) contributes a factor \(|b|/(2h)\). Summing over the possible coordinate directions contributes only a factor of \(d\). Therefore, we have
\begin{equation}\label{eq:dirichlet-word-bound-proof-first}
\|W_{j,\ell}(H,L)\| \leq d  4^{j-1} a^\ell h^{-2\ell} \left(\frac{|b|}{2h}\right)^{j-\ell} = d  2^{j+\ell-2} a^\ell |b|^{j-\ell} h^{-(j+\ell)}. 
\end{equation}
There are at most \(\binom{j}{\ell}\) right-nested mixed words of length \(j\) with exactly \(\ell\) occurrences of \(L\). Summing \cref{eq:dirichlet-word-bound} over these words gives \cref{eq:dirichlet-Cjell-bound}.
\end{proof}

Since \(N=(M+1)^d\) and \(h^{-1}=M\leq N^{1/d}\) \cref{prop:dirichlet-nested-commutator-bound-d} implies
\begin{equation}\label{eq:dirichlet-Cjell-N-scaling}
\mathcal C_{j,\ell}(H,L) \leq d 2^{2j+\ell-2} a^\ell |b|^{j-\ell} N^{(j+\ell)/d}.
\end{equation}
We now translate \cref{prop:dirichlet-nested-commutator-bound-d} into a complexity bound.  Let \(\alpha_{Q,j} \leq \sum_{\ell=1}^{j-1} R_Q^\ell \mathcal C_{j,\ell}(H,L)\) be as in \cref{lem:simplified-root-growth-scale}. Using \cref{eq:dirichlet-Cjell-bound}, for every \(j \geq 2\) we have

\begin{align}\label{eq:advection-diffusion-alphaQj-bound}
\alpha_{Q,j} &\leq d2^{2j-2} \sum_{\ell=1}^{j-1} (2aR_Qh^{-2} )^\ell (|b|h^{-1} )^{j-\ell} \leq d2^{2j-2} ( |b|h^{-1} + 2aR_Qh^{-2} )^j .
\end{align}
Since \(h^{-1}=M\leq N^{1/d}\), the root-growth scale \(\omega\) from \cref{lem:simplified-root-growth-scale} satisfies
\begin{equation}\label{eq:advection-diffusion-omega-N-scale}
\omega = \mathcal O ( |b|N^{1/d} + aR_QN^{2/d}).
\end{equation}
Choose \(\rho_Q=c/\omega\) for fixed \(0<c<1\). Then \(\rho_Q\omega=c<1\), so \cref{lem:simplified-root-growth-scale} applies and \(\rho_Q^{-1}=\mathcal O(\omega)\). Combining \cref{eq:advection-diffusion-omega-N-scale} with \cref{eq:mu-mQ-root-growth-bound} gives
\begin{equation}\label{eq:advection-diffusion-muQ-scale}
\mu_Q= \mathcal O ( |b|N^{1/d} + aR_QN^{2/d}),
\end{equation}
where the \(\mathcal O\)-notation absorbs the fixed finite factor \(S\) from \cref{lem:simplified-root-growth-scale}. For the optimized-order estimate, the MPF overhead contributes the polylogarithmic factor
\begin{align}\label{eq:advection-diffusion-polylog-factor}
\operatorname{polylog}_Q(T,\epsilon_Q,N) &:= \mathcal O\left( \log\left( \frac{ \left(|b|N^{1/d}+aR_QN^{2/d}\right)T }{\epsilon_Q} \right)^2 \left( \log\log\left( \frac{ \left(|b|N^{1/d}+aR_QN^{2/d}\right)T }{\epsilon_Q} \right) \right)^2 \right) \nonumber\\ &= \mathcal O\left( \log\left( \frac{ \left(|b|N^{1/d}+aR_QN^{2/d}\right)T }{\epsilon_Q} \right)^4 \right),
\end{align}
where in the second line we assume $  (|b|N^{1/d}+aR_QN^{2/d})T/\epsilon_Q \geq e^e$. Substituting \cref{eq:advection-diffusion-muQ-scale} into \cref{abstract-optimized-block-complexity-simplified}, we obtain
\begin{equation}\label{eq:advection-diffusion-block-complexity-final}
\mathcal Q_{\operatorname{block}} = \mathcal O ( ( |b|N^{1/d} + aR_QN^{2/d} ) T \operatorname{polylog}_Q(T,\epsilon_Q,N) ).
\end{equation}
Similarly, provided \(W_Q^{\operatorname{MPF}}\vec u_0\neq0\), substituting \cref{eq:advection-diffusion-muQ-scale} into \cref{abstract-optimized-state-complexity-simplified} gives
\begin{equation}\label{eq:advection-diffusion-state-complexity-final}
\mathcal Q_{\operatorname{state}}(\vec u_0) = \mathcal O ( \chi_Q(\vec u_0) ( |b|N^{1/d} + aR_QN^{2/d} ) T \operatorname{polylog}_Q(T,\epsilon_Q,N) ).
\end{equation}

\subsection{No-Jump Dynamics for a Dissipative Ising Model}
\label{sec:no-jump-tfim}
We next consider the no-jump regime of a dissipative transverse-field Ising model with local spontaneous emission, formulated as a Lindblad dynamics. The no-jump regime is obtained from the standard quantum-trajectory formalism~\cite{DalibardCastinMolmer1992,PlenioKnight1998,LeeChan2013,RobertsClerk2023}. Let \(\mathcal G=(V,E(V))\) be a finite graph. The closed-system transverse-field Ising Hamiltonian is 
\begin{equation}\label{eq:tfim-hamiltonian}
H = \sum_{j\in V} h_j X_j + \sum_{\langle j,\ell\rangle\in E(V)} J_{j\ell} Z_j Z_\ell,
\end{equation}
where \(h_j \in \mathbb{R}\) are transverse-field strengths and \(J_{j\ell} \in \mathbb{R}\) are Ising couplings. Local spontaneous emission is modeled by Lindblad jump operators \(L_j=\sqrt{2\gamma_j}\,\sigma_j^-\), where \(\gamma_j\geq0\) and \(\sigma_j^-=\ket{0}\!\bra{1}\). Thus \(L_j\) represents local decay at site \(j\) with rate \(\gamma_j\). The associated Lindblad equation for \(\rho\) is
\begin{equation}\label{eq:tfim-lindblad}
\frac{d\rho}{dt} = -i[H,\rho] + \sum_{j\in V} ( L_j\rho L_j^\dagger - \frac12\{L_j^\dagger L_j,\rho\} ).
\end{equation}

\cref{eq:tfim-lindblad} gives the unconditional evolution averaged over all emission histories. In the quantum-trajectory picture, \(L_j\rho L_j^\dagger\) represents a detected jump at site \(j\). Conditioning on records with no detected emissions gives the no-jump evolution, generated by the effective Hamiltonian
\begin{equation}\label{eq:tfim-effective-Hamiltonian}
H_{\operatorname{eff}} = H-\frac{i}{2}\sum_{j\in V}L_j^\dagger L_j = \sum_{j\in V} h_j X_j + \sum_{\langle j,\ell\rangle\in E(V)} J_{j\ell} Z_j Z_\ell - i\sum_{j\in V}\gamma_j n_j,
\end{equation}
where $n_j=\sigma_j^+\sigma_j^-= |1\rangle \langle 1|=(I-Z_j)/2$.
The final term reflects the no-jump conditioning, where configurations more likely to emit are suppressed when no photon is detected. The corresponding dynamics are given by
\begin{equation}\label{eq:tfim-no-jump-schrodinger-LCHS-derivation}
i\frac{d}{dt}\ket{\psi(t)} = H_{\mathrm{eff}}\ket{\psi(t)} .
\end{equation}
Let $L=\sum_{j\in V}\gamma_j n_j = \sum_{j\in V}\gamma_j \frac{I-Z_j}{2}$. Then \(L\succeq0\) and $H_{\operatorname{eff}} = H-iL$. Multiplying \cref{eq:tfim-no-jump-schrodinger-LCHS-derivation} by \(-i\), we obtain
\begin{equation}\label{eq:tfim-no-jump-LCHS-form}
\frac{d}{dt}\ket{\psi(t)} = -i(H-iL)\ket{\psi(t)}  = -(L+iH)\ket{\psi(t)}.
\end{equation}

In what follows, take \(h_j\equiv h\), \(J_{j\ell}\equiv J\), and \(\gamma_j\equiv\gamma\), and assume \(\mathcal G\) has maximum degree \(D\), independent of \(|V|\). We first verify locality and extensivity. The Hamiltonian \(H\) is \(2\)-local:. Indeed, \(hX_j\) is supported on on one site and \(JZ_jZ_\ell\) is supported on two sites. For each site \(j\), at most one onsite term and \(D\) interaction terms contain \(j\). Since \(\|X_j\|=\|Z_jZ_\ell\|=1\),
\begin{equation}
\sum_{\alpha:\,j\in\operatorname{supp}(H_\alpha)} \|H_\alpha\| \leq |h|+D|J|.
\label{eq:tfim-H-extensive-bound}
\end{equation}
Hence, \(H\) is \(g_H\)-extensive with \(g_H:=|h|+D|J|\). Similarly, \(L\) is \(1\)-local because each term \(\gamma n_j\) is supported only on site \(j\). Hence, $q_L=1$. Since \(n_j\) is a projection, \(\|n_j\|=1\), and therefore
\begin{equation}
\sum_{\alpha:\,j\in\operatorname{supp}(L_\alpha)} \|L_\alpha\| = \|\gamma n_j\| = \gamma .
\label{eq:tfim-L-extensive-bound}
\end{equation}
Thus \(L\) is \(g_L\)-extensive with $g_L:=\gamma$. Hence, \cref{spec-mizuta} implies that \(\{G_{k_i}\}_{i\in\mathcal I_Q}\) is \(q_Q\)-local and \(g_Q\)-extensive, with \(q_Q=\max\{q_H,q_L\}=2\) and \(g_Q\leq |h|+D|J|+R_Q\gamma\).

We now derive commutator bounds. Our approach uses the Pauli incompatibility graph, whose vertices are local Pauli summands and whose edges connect anti-commuting Pauli strings. Such commutativity-sensitive counting is in graph-based ordering methods in Trotter-error analyses~\cite{tranter2019ordering}. Since \(n_j=(I-Z_j)/2\), the identity part commutes with all terms and does not affect nested commutators. Thus, for commutator estimates, \(G_k=H+kL\) may be replaced by
\begin{equation}
\overline G_k = h\sum_{j\in V}X_j + J\sum_{\langle j,\ell\rangle\in E(V)}Z_jZ_\ell - \frac{k\gamma}{2}\sum_{j\in V}Z_j .
\label{eq:tfim-Gbar-k-def}
\end{equation}
This replacement leaves every nested commutator of length at least two unchanged. In what follows, we will use the following standard computation for commutators of Pauli strings \(P\) and \(Q\)
\begin{equation}
\|[xP,yQ]\| = \begin{cases} 0, & PQ=QP,\\ 2|xy|, & PQ=-QP, \end{cases}
\label{eq:pauli-commutator-norm}
\end{equation}
for $x,y \in \R$. 

\begin{prop}
\label{prop:tfim-graph-commutator-bound}
Assume that \(\mathcal G\) has maximum degree \(D\) and that \(\gamma\geq0\). For every \(p\geq2\) and every quadrature node \(|k_i|\leq R_Q\), we have
\begin{align}
\alpha_{\operatorname{comm},p}(G_{k_i}) &\leq  |V|\,4^{p-1}(p-1)! \,2|h| \left(D|J|+\frac{R_Q\gamma}{2}\right) \left(|h|+D|J|+\frac{R_Q\gamma}{2}\right)^{p-2} \\ &\leq |V|\,4^{p-1}(p-1)! \left(|h|+D|J|+\frac{R_Q\gamma}{2}\right)^p .
\label{eq:tfim-alpha-comm-graph-bound}
\end{align}
\end{prop}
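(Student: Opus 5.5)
We work with the reduced generator \(\overline G_{k_i}\) from \cref{eq:tfim-Gbar-k-def}, since nested commutators of length at least two are unchanged by dropping the identity part. The decomposition into summands is \(\overline G_{k_i}=\sum_\alpha x_\alpha P_\alpha\): the \(|V|\) terms \(hX_j\), the \(|E(V)|\) terms \(JZ_jZ_\ell\), and the \(|V|\) terms \(-\tfrac{k_i\gamma}{2}Z_j\). Each \(P_\alpha\) is a Pauli string. We bound
\begin{equation*}
\alpha_{\operatorname{comm},p}=\sum_{\alpha_1,\dots,\alpha_p}\|[x_{\alpha_1}P_{\alpha_1},\dots,x_{\alpha_p}P_{\alpha_p}]\|
\end{equation*}
by counting nonvanishing words on the Pauli incompatibility graph.

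\textbf{Step 1: the innermost commutator.} Each nested commutator of Pauli strings is either zero or \(2^{p-1}\) times a single Pauli string, up to sign and \(i\). By \cref{eq:pauli-commutator-norm}, its norm is then \(2^{p-1}\prod|x_{\alpha_r}|\). The innermost commutator \([x_{\alpha_{p-1}}P_{\alpha_{p-1}},x_{\alpha_p}P_{\alpha_p}]\) vanishes unless the two strings anticommute.
\begin{enumerate}
\item All \(Z\)-type terms commute with each other, so one of the two must be an \(X_j\).
\item \(X_j\) anticommutes exactly with \(Z_j\) and with the \(Z_jZ_\ell\) for the at most \(D\) edges at \(j\).
\end{enumerate}
Summing over the ordered innermost pair therefore gives at most \(2\cdot|V|\cdot|h|\,(D|J|+R_Q\gamma/2)\), up to a factor-of-2 bookkeeping for the two orderings that I would absorb into the \(4^{p-1}\).

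\textbf{Step 2: growing the support.} After the innermost pair, the current Pauli string \(P\) has support of size at most \(r+1\) after \(r\) commutators; each term has support of size at most 2. A new outer term \(x_\alpha P_\alpha\) gives a nonzero commutator only if \(P_\alpha\) anticommutes with \(P\), hence only if \(P_\alpha\) touches \(\operatorname{supp}P\). For each site \(j\) of the support, the total weight of terms touching \(j\) is at most \(|h|+D|J|+R_Q\gamma/2\), by the extensivity computation \cref{eq:tfim-H-extensive-bound} with the rescaled \(Z\) weight. Hence the weighted sum over the next letter is at most
\begin{equation*}
(\text{support size})\cdot\Bigl(|h|+D|J|+\tfrac{R_Q\gamma}{2}\Bigr)\le (r+1)\Bigl(|h|+D|J|+\tfrac{R_Q\gamma}{2}\Bigr).
\end{equation*}

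\textbf{Step 3: assembling the bound.} Iterating Step 2 over the \(p-2\) outer letters produces the product \(\prod_{r=1}^{p-2}(r+1)\le (p-1)!\) times the \((p-2)\)-th power of the weight. This combines with the factor \(2^{p-1}\) from the commutator norms and the innermost-pair count from Step 1. Collecting constants, with all loose factors bounded by \(4^{p-1}\), gives the first line of \cref{eq:tfim-alpha-comm-graph-bound}. The second line follows from \(2|h|(D|J|+R_Q\gamma/2)\le(|h|+D|J|+R_Q\gamma/2)^2\), which is AM--GM.

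The main obstacle is the careful constant bookkeeping in Steps 1 and 2. Two points need care: the ordered pair in Step 1 and its factor of 2, and the fact that the support may grow by 2 when an edge term is added. If the support grows by 2 per step, the product becomes roughly \(\prod(2r+2)\), i.e. \(2^{p-2}(p-1)!\). That extra \(2^{p-2}\), together with the \(2^{p-1}\) from the commutator norms, is exactly what the \(4^{p-1}\) budget must absorb. I would therefore bound the support size after \(r\) steps by \(2r\), and check that \(2^{p-1}\cdot 2^{p-2}\cdot 2\le 4^{p-1}\).
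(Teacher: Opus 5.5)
Your proof is correct and is essentially the paper's argument: both grow a cluster on the Pauli incompatibility graph, charge a factor $2$ per nonvanishing commutator, and bound each new letter by the weight of summands touching the current support. The paper organizes the count by $X/Z$ layer patterns and resums with the binomial theorem, while you count the ordered innermost anticommuting pair directly, which gives exactly the explicit factor $2|h|(D|J|+R_Q\gamma/2)$, and then use the combined per-site weight $|h|+D|J|+R_Q\gamma/2$ for the outer letters. The worry in your last paragraph is unfounded: a summand that fails to commute with the current string must share a site with its support and has at most two sites, so the support grows by at most one per step and your $r+1$ bound in Step 2 is valid. With it your count gives the first line with $2^{p-1}$ in place of $4^{p-1}$ (the paper's extra $2^{p-1}$ comes from its cruder support bound $2a$), so no factors need to be absorbed.
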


\begin{proof}
We work with \(\overline G_{k_i}\) because \(\alpha_{\operatorname{comm},p}(G_{k_i})=\alpha_{\operatorname{comm},p}(\overline G_{k_i})\). Its local Pauli summands are \(hX_j\), \(JZ_jZ_\ell\), and \(-k_i\gamma Z_j/2\). We call \(hX_j\) transverse, or \(X\)-type, and the remaining summands diagonal, or \(Z\)-type. The proof uses graph counting. Each non-zero nested commutator grows a cluster in the Pauli incompatibility graph, and each new summand must overlap the current support. We first record the local counting estimates. If \(S\subseteq V\), the total transverse weight supported on \(S\) is $\sum_{j\in S}\|hX_j\|  = |S|\,|h|$. Similarly, the total transverse weight supported on \(S\) is
\begin{equation}
\sum_{\substack{\langle a,b\rangle\in E(V)\\ \{a,b\}\cap S\neq\emptyset}} \|JZ_aZ_b\| + \sum_{a\in S} \left\| \frac{k_i\gamma}{2}Z_a \right\| \leq |S|\left(D|J|+\frac{R_Q\gamma}{2}\right).
\label{eq:tfim-diagonal-weight-on-S}
\end{equation}
Each vertex belongs to at most \(D\) edges, \(|k_i|\leq R_Q\), and \(\gamma\geq0\). Fix a layer pattern \(\tau=(\tau_1,\ldots,\tau_p)\in\{X,Z\}^p\), with \(\tau_1\) innermost. Since the innermost commutator is nonzero only for opposite types, \(\tau_1\neq\tau_2\). If \(\tau\) has \(\nu\) transverse entries, the first summand has total weight at most \(|V||h|\) for \(X\)-type and \(|V|(D|J|+R_Q\gamma/2)\) for \(Z\)-type. After \(a\) insertions, the cluster support has size at most \(2a\), so the next summand has total weight at most \(2a|h|\) for \(X\)-type and \(2a(D|J|+R_Q\gamma/2)\) for \(Z\)-type. Each nonzero commutator contributes a factor \(2\). Thus, the total contribution is bounded by
\begin{equation}
|V|\,4^{p-1}(p-1)! |h|^\nu \left(D|J|+\frac{R_Q\gamma}{2}\right)^{p-\nu}.
\label{eq:tfim-fixed-pattern-bound}
\end{equation}
If the pattern has \(\nu\) transverse entries, then \(\tau_1\neq\tau_2\) forces the first two entries to contain one \(X\) and one \(Z\), giving \(2\) possible orders. The remaining \(p-2\) entries contain \(\nu-1\) transverse entries, giving \(2\binom{p-2}{\nu-1}\) admissible patterns. Summing over \(1\leq\nu\leq p-1\) gives
\begin{align}
\alpha_{\operatorname{comm},p}(G_{k_i}) &\leq |V|\,4^{p-1}(p-1)! \sum_{\nu=1}^{p-1} 2\binom{p-2}{\nu-1} |h|^\nu \left(D|J|+\frac{R_Q\gamma}{2}\right)^{p-\nu} \\ &= |V|\,4^{p-1}(p-1)! \,2|h| \left(D|J|+\frac{R_Q\gamma}{2}\right) \left(|h|+D|J|+\frac{R_Q\gamma}{2}\right)^{p-2}.
\end{align}
The final inequality follows from \(2|h|(D|J|+\frac{R_Q\gamma}{2}) \leq (|h|+D|J|+\frac{R_Q\gamma}{2})^2\).
\end{proof}

\cref{prop:tfim-graph-commutator-bound} gives estimates for each commutator length \(p\). To apply the estimate in \cref{spec-mizuta}, we need a finite-\(p_0\) growth scale controlling all \(2\leq p\leq p_0\). Define
\begin{equation}\label{eq:tfim-ThetaQp0-def}
\Theta_{Q,p_0} := \max_{2\leq p\leq p_0} \left( 4^{p-1}(p-1)! \,2|h| \left(D|J|+\frac{R_Q\gamma}{2}\right) \left( |h|+D|J|+\frac{R_Q\gamma}{2} \right)^{p-2} \right)^{1/p}.
\end{equation}
\cref{prop:tfim-graph-commutator-bound} gives \(\alpha_{\operatorname{comm},p}(G_{k_i})\leq |V|\Theta_{Q,p_0}^p\) for \(2\leq p\leq p_0\). \cref{prop:tfim-mizuta-profile} converts this finite-order commutator estimate into a bound on \(\overline\mu_{m,Q}(p_0)\) for the locality-based complexity analysis.

\begin{prop}
\label{prop:tfim-mizuta-profile}
Let \(p_0\geq3\). We have 
\begin{equation}\label{eq:tfim-mubar-BQ-bound}
\overline\mu_{m,Q}(p_0) \leq|V|^{1/3} (1+p_0)^{1/3}\Theta_{Q,p_0}, 
\end{equation}
where $\bar\mu_{m,Q}(p_0):=\max_{i\in\mathcal I_Q}\mu_{m,p_0}(G_{k_i})$, \(\mu_{m,p_0}\) is defined in \cref{mu-m-p0} and $\Theta_{Q,p_0}$ is defined in \cref{eq:tfim-ThetaQp0-def}. 
\end{prop}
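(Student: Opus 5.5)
The plan is to bound $\mu_{m,p_0}(G_{k_i})$ directly from its definition in \cref{mu-m-p0}, uniformly in $i\in\mathcal I_Q$. The only input is the per-length estimate from \cref{prop:tfim-graph-commutator-bound}, repackaged through \cref{eq:tfim-ThetaQp0-def} as $\alpha_{\operatorname{comm},p}(G_{k_i})\le |V|\,\Theta_{Q,p_0}^p$ for all $2\le p\le p_0$ and all nodes with $|k_i|\le R_Q$. This holds because $\Theta_{Q,p_0}^p$ dominates the $p$-th term inside the maximum. Afterwards one takes the maximum over $i$, which leaves the right-hand side unchanged.

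\textbf{Step 1 (bound each summand).} Fix $j\ge 2m$ and $1\le l\le\lfloor j/2\rfloor$, and consider a composition $j_1+\cdots+j_l=j$ with $2\le j_\kappa\le p_0-1$. Each commutator length satisfies $3\le j_\kappa+1\le p_0$, so the estimate above applies to every factor. This gives
\[
\prod_{\kappa=1}^l\alpha_{\operatorname{comm},j_\kappa+1}(G_{k_i})\le |V|^l\,\Theta_{Q,p_0}^{\sum_\kappa(j_\kappa+1)}=|V|^l\,\Theta_{Q,p_0}^{\,j+l}.
\]

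\textbf{Step 2 (count compositions).} Each part $j_\kappa$ takes at most $p_0-2\le 1+p_0$ values. Hence the number of admissible compositions is at most $(1+p_0)^l$, and the inner sum is at most $\bigl(|V|(1+p_0)\bigr)^l\,\Theta_{Q,p_0}^{\,j+l}$. If no composition exists, the sum is $0$ and the bound holds trivially.

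\textbf{Step 3 (take the root).} Raising the bound from Step 2 to the power $1/(j+l)$ gives $\bigl(|V|(1+p_0)\bigr)^{l/(j+l)}\,\Theta_{Q,p_0}$. The function $l\mapsto l/(j+l)$ is increasing, and $l\le j/2$, so $l/(j+l)\le 1/3$. Since $|V|(1+p_0)\ge 1$, we get $\bigl(|V|(1+p_0)\bigr)^{l/(j+l)}\le |V|^{1/3}(1+p_0)^{1/3}$. This bound is uniform in $j$, $l$ and $m$. Taking the supremum over $(j,l)$ and then the maximum over $i\in\mathcal I_Q$ yields \cref{eq:tfim-mubar-BQ-bound}.

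The main point needing care is Step 3: the exponent $1/3$ comes entirely from the constraint $l\le\lfloor j/2\rfloor$ built into \cref{mu-m-p0}, and the inequality needs the base $|V|(1+p_0)$ to be at least one. Step 1 also relies on the truncation $j_\kappa\le p_0-1$, which guarantees that only lengths up to $p_0$ enter, so that $\Theta_{Q,p_0}$ controls every factor. The hypothesis $p_0\ge3$ ensures that this range of admissible lengths is nonempty.
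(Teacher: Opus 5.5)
Your proposal is correct and is essentially the argument the paper intends. The paper omits this proof as "analogous to \cref{loc-mu-growth-envelope}", and your three steps carry out that analogy exactly: bound each factor by $|V|\Theta_{Q,p_0}^{j_\kappa+1}$, count at most $(1+p_0)^l$ compositions, and use $l/(j+l)\le 1/3$ after taking the $(j+l)$-th root. Here the root returns exactly $\Theta_{Q,p_0}$, so the $j/(j+l)\le 1$ step needed in \cref{loc-mu-growth-envelope} is unnecessary.
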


\begin{proof}
The proof is analogous to that of \cref{loc-mu-growth-envelope}, so we omit the details.
\end{proof}

We now instantiate the locality-based complexity estimate. By \cref{prop:tfim-mizuta-profile} and the definitions of \(\Phi_{m,Q}^{\ast}(p_0)\) and \(\Lambda_{m,Q}(p_0)\) in \cref{loc-Lambda-Phi-def}, we have
\begin{equation}\label{eq:tfim-Phi-Lambda-bound}
\Phi_{m,Q}^{\ast}(p_0)\leq  (|V|^{1/3}(1+p_0)^{1/3}\Theta_{Q,p_0} )^{2m+1}, \quad \Lambda_{m,Q}(p_0)\leq \alpha_Q (|V|^{1/3}(1+p_0)^{1/3}\Theta_{Q,p_0} )^{2m+1}.
\end{equation}
Together with the locality and extensivity bounds, \(q_Q=2\) and \(g_Q\leq |h|+D|J|+R_Q\gamma\), so \cref{spec-mizuta} applies. Choose \(p_0\), \(r_m\), and \(\delta_m\) as in \cref{spec-mizuta}. By \cref{eq:tfim-ThetaQp0-def}, we have
\begin{equation}\label{eq:tfim-Theta-polynomial-bound}
\Theta_{Q,p_0}\leq Cp_0\left(|h|+D|J|+\frac{R_Q\gamma}{2}\right)
\end{equation}
for some \(C>0\). Hence, we have
\begin{equation}\label{eq:tfim-mubar-polynomial-bound}
\overline\mu_{m,Q}(p_0)\leq C|V|^{1/3}\left(|h|+D|J|+\frac{R_Q\gamma}{2}\right)(1+p_0)^{4/3}.
\end{equation}
Since \(\delta_{p_0}=3|V|e^{-p_0}\) and \(K_m=\mathcal O(m^2(\log m)^2)\), the admissibility condition in \cref{loc-p0-admissible} holds with \(p_0(m)=\mathcal O(\log(e+m))\). Thus, the commutator-profile contribution grows as \((\log(e+m))^{4/3}\). For the optimized-order condition, the relevant quantity is \(\mu_{m,Q}\), not only \(\overline\mu_{m,Q}(p_0)\).The rooted \(\Phi_{m,Q}^{\ast}\)- and \(\Lambda_{m,Q}\) terms in \cref{loc-mu-mQ-def} contribute powers of \(\overline\mu_{m,Q}(p_0)^{1+1/(2m)}\). Hence a safe growth function is \(F(m):=(\log(e+m))^2\). More explicitly, one may take
\begin{equation}\label{eq:tfim-muQ-choice}
\mu_Q:=C_Q\max\left\{1,\ q_Qg_Q,\ |V|^{1/3}\left(|h|+D|J|+\frac{R_Q\gamma}{2}\right),\ \alpha_Q^{1/2}\left[|V|^{1/3}\left(|h|+D|J|+\frac{R_Q\gamma}{2}\right)\right]^{3/2}\right\},
\end{equation}
where \(C_Q>0\) is independent of \(m\), \(T\), and \(\epsilon_Q\), and absorbs the implicit constants. With this choice, the optimized-order growth condition analogous to \cref{F-cond} holds as $\mu_{m,Q}\leq \mu_Q(\log(e+m))^2$.  Choose $m=\max\left\{m_0,\left\lceil\log\left(e+\frac{T}{\epsilon_Q}\right)\right\rceil\right\}$. Applying the optimized complexity estimate  with \(F(m)=(\log(e+m))^2\) gives the block encoding complexity to be
\begin{equation}\label{eq:tfim-optimized-block-complexity}
\mathcal Q_{\operatorname{block}}=\mathcal O\left(\left(1+\mu_Q\max\{1,T\}(\log(e+m))^2\right)\left(\log\left(e+\frac{T}{\epsilon_Q}\right)\right)^2\left(\log\log\left(e^e+\frac{T}{\epsilon_Q}\right)\right)^2\right).
\end{equation}
Similarly, the normalized output state complexity is
\begin{equation}\label{eq:tfim-optimized-state-complexity}
\mathcal O\left(\chi_Q(u(0))\left(1+\mu_Q\max\{1,T\}(\log(e+m))^2\right)\left(\log\left(e+\frac{T}{\epsilon_Q}\right)\right)^2\left(\log\log\left(e^e+\frac{T}{\epsilon_Q}\right)\right)^2\right).
\end{equation}

\printbibliography

@misc{Dalzell2024,
      title={A shortcut to an optimal quantum linear system solver}, 
      author={Alexander M. Dalzell},
      year={2024},
      eprint={2406.12086},
      archivePrefix={arXiv},
      primaryClass={quant-ph},
      url={https://arxiv.org/abs/2406.12086}, 
}

@article{LowSu2026,
   title={Quantum linear system algorithm with optimal queries to initial state preparation},
   volume={10},
   ISSN={2521-327X},
   url={http://dx.doi.org/10.22331/q-2026-03-23-2041},
   DOI={10.22331/q-2026-03-23-2041},
   journal={Quantum},
   publisher={Verein zur Forderung des Open Access Publizierens in den Quantenwissenschaften},
   author={Low, Guang Hao and Su, Yuan},
   year={2026},
   month=Mar, pages={2041} }

@article{Krovi2022,
   title={Improved quantum algorithms for linear and nonlinear differential equations},
   volume={7},
   DOI={10.22331/q-2023-02-02-913},
   journal={Quantum},
   publisher={Verein zur Forderung des Open Access Publizierens in den Quantenwissenschaften},
   author={Krovi, Hari},
   year={2023},
   month=2, pages={913} }

@article{BerryChildsOstranderEtAl2017,
	doi = {10.1007/s00220-017-3002-y},
	year = 2017,
	publisher = {Springer Science and Business Media {LLC}
},
	volume = {356},
	number = {3},
	pages = {1057--1081},
	author = {Dominic W. Berry and Andrew M. Childs and Aaron Ostrander and Guoming Wang},
	title = {Quantum algorithm for linear differential equations with exponentially improved dependence on precision},
	journal = {Communications in Mathematical Physics}
}

@article{Berry2014,
	doi = {10.1088/1751-8113/47/10/105301},
	year = 2014,
	publisher = {{IOP} Publishing},
	volume = {47},
	number = {10},
	pages = {105301},
	author = {Dominic W. Berry},
	title = {High-order quantum algorithm for solving linear differential equations},
	journal = {Journal of Physics A: Mathematical and Theoretical}
}

@article{Childs2021theorytrotter,
  author        = {Andrew M. Childs and Yuan Su and Minh C. Tran and Nathan Wiebe and Shuchen Zhu},
  title         = {Theory of Trotter Error with Commutator Scaling},
  journal = {Phys. Rev. X},
  volume = {11},
  issue = {1},
  pages = {011020},
  numpages = {49},
  year = {2021},
  publisher = {American Physical Society},
  doi = {10.1103/PhysRevX.11.011020},
}

@misc{low2019wellconditioned,
      title={Well-conditioned multiproduct Hamiltonian simulation}, 
      author={Guang Hao Low and Vadym Kliuchnikov and Nathan Wiebe},
      year={2019},
      eprint={1907.11679},
      archivePrefix={arXiv},
      primaryClass={quant-ph}
}

@misc{low2025optimallchs,
      title={Optimal quantum simulation of linear non-unitary dynamics}, 
      author={Low, Guang Hao  and Somma, Rolando D.},
      year={2025},
      eprint={2508.19238},
      archivePrefix={arXiv},
      primaryClass={quant-ph},
      url={https://arxiv.org/abs/2508.19238}, 
}

@article{harrow2009quantum,
  title={Quantum algorithm for linear systems of equations},
  author={Harrow, Aram W. and Hassidim, Avinatan and Lloyd, Seth},
  journal={Physical review letters},
  volume={103},
  number={15},
  pages={150502},
  year={2009},
  publisher={American Physical Society}
}

@article{berry2007efficient,
  title={Efficient quantum algorithms for simulating sparse Hamiltonians},
  author={Berry, Dominic W. and Ahokas, Graeme and Cleve, Richard and Sanders, Barry C.},
  journal={Communications in Mathematical Physics},
  volume={270},
  number={2},
  pages={359--371},
  year={2007},
  publisher={Springer}
}

@book{dollard1984product,
    author = {Dollard, John D. and Friedman, Charles N.},
    title = {Product integration with applications to differential equations},
    publisher = {Cambridge University Press},
    year = {1984}
}

@article{childs2019optimalproduct,
  title = {Nearly Optimal Lattice Simulation by Product Formulas},
  author = {Childs, Andrew M. and Su, Yuan},
  journal = {Phys. Rev. Lett.},
  volume = {123},
  issue = {5},
  pages = {050503},
  numpages = {6},
  year = {2019},
  month = {08},
  publisher = {American Physical Society},
  doi = {10.1103/PhysRevLett.123.050503},
  url = {https://link.aps.org/doi/10.1103/PhysRevLett.123.050503}
}

@article{childs2015lcutaylor,
  title = {Simulating Hamiltonian Dynamics with a Truncated Taylor Series},
  author = {Berry, Dominic W. and Childs, Andrew M. and Cleve, Richard and Kothari, Robin and Somma, Rolando D.},
  journal = {Physical Review Letters},
  volume = {114},
  issue = {9},
  pages = {090502},
  numpages = {5},
  year = {2015},
  month = {3},
  publisher = {American Physical Society},
  doi = {10.1103/PhysRevLett.114.090502},
  url = {https://link.aps.org/doi/10.1103/PhysRevLett.114.090502}
}

@article{
lloyd1996quantumsimulation,
author = {Seth Lloyd },
title = {Universal Quantum Simulators},
journal = {Science},
volume = {273},
number = {5278},
pages = {1073-1078},
year = {1996},
doi = {10.1126/science.273.5278.1073},
URL = {https://www.science.org/doi/abs/10.1126/science.273.5278.1073},
}

@article{Childs2012lcu,
    author  = {Andrew M. Childs and Nathan Wiebe},
    title   = {Hamiltonian Simulation Using Linear Combinations of Unitary Operations},
    journal = {Quantum Information and Computation},
	doi = {10.26421/qic12.11-12},
	year = 2012,
	publisher = {Rinton Press},
	volume = {12},
    pages   = {901--924},
}

@article{suzuki1985decomposition,
  title={Decomposition formulas of exponential operators and Lie exponentials with some applications to quantum mechanics and statistical physics},
  author={Suzuki, Masuo},
  journal={Journal of mathematical physics},
  volume={26},
  number={4},
  pages={601--612},
  year={1985},
  publisher={AIP Publishing}
}

@article{suzuki1991general,
  title={General theory of fractal path integrals with applications to many-body theories and statistical physics},
  author={Suzuki, Masuo},
  journal={Journal of mathematical physics},
  volume={32},
  number={2},
  pages={400--407},
  year={1991},
  publisher={American Institute of Physics}
}

@article{fang2023time,
  title={Time-marching based quantum solvers for time-dependent linear differential equations},
  author={Fang, Di and Lin, Lin and Tong, Yu},
  journal={Quantum},
  volume={7},
  pages={955},
  year={2023},
}

@article{berry2024quantum,
  title={Quantum algorithm for time-dependent differential equations using Dyson series},
  author={Berry, Dominic W. and Costa, Pedro C. S.},
  journal={Quantum},
  volume={8},
  pages={1369},
  year={2024},
}

@article{childs2020quantum,
  title={Quantum spectral methods for differential equations},
  author={Childs, Andrew M. and Liu, Jin-Peng},
  journal={Communications in Mathematical Physics},
  volume={375},
  number={2},
  pages={1427--1457},
  year={2020},
  publisher={Springer}
}

@misc{aftab2024mpf,
      title={Multi-product Hamiltonian simulation with explicit commutator scaling}, 
      author={Aftab, Junaid and An, Dong  and Trivisa, Konstantina},
      year={2024},
      eprint={2403.08922},
      archivePrefix={arXiv},
      primaryClass={quant-ph},
      note={Submitted to \textit{Communications in Mathematical Physics}},
}

@article{jin2023quantum,
  title={Quantum simulation of partial differential equations: applications and detailed analysis},
  author={Jin, Shi and Liu, Nana and Yu, Yue},
  journal={Physical Review A},
  volume={108},
  number={3},
  pages={032603},
  year={2023},
  publisher={American Physical Society}
}

@article{feynman1982,
    author = {Richard P. Feynman},
    title = {Simulating physics with computers},
    journal = {International Journal of Theoretical Physics},
    year = {1982},
    volume = {21},
    pages = {467-488}
}

@article{feynman1986quantum,
  title={Quantum mechanical computers},
  author={Feynman, Richard P.},
  journal={Foundations of Physics},
  volume={16},
  number={6},
  pages={507--531},
  year={1986}
}

@article{an2023LCHS,
  title={Linear combination of Hamiltonian simulation for nonunitary dynamics with optimal state preparation cost},
  author={An, Dong and Liu, Jin-Peng and Lin, Lin},
  journal={Physical Review Letters},
  volume={131},
  number={15},
  year={2023},
  publisher={American Physical Society}
}

@article{an2023LCHS-Optimal,
  title={Quantum Algorithm for Linear Non-unitary Dynamics with Near-Optimal Dependence on All Parameters},
  author={An, Dong and Childs, Andrew M. and Lin, Lin},
  journal={Communications in Mathematical Physics},
  volume={407},
  number={1},
  pages={19},
  year={2026},
  publisher={Springer}
}

@article{wecker2015solving,
  title={Solving strongly correlated electron models on a quantum computer},
  author={Wecker, Dave and Hastings, Matthew B. and Wiebe, Nathan and Clark, Bryan K. and Nayak, Chetan and Troyer, Matthias},
  journal={Physical Review A},
  volume={92},
  number={6},
  pages={062318},
  year={2015},
  publisher={APS}
}

@article{babbush2015chemical,
  title={Chemical basis of Trotter-Suzuki errors in quantum chemistry simulation},
  author={Babbush, Ryan and McClean, Jarrod and Wecker, Dave and Aspuru-Guzik, Al{\'a}n and Wiebe, Nathan},
  journal={Physical Review A},
  volume={91},
  number={2},
  pages={022311},
  year={2015},
  publisher={APS}
}

@article{Costa2023QLSA,
  title = {Optimal Scaling Quantum Linear-Systems Solver via Discrete Adiabatic Theorem},
  author = {Costa, Pedro C. S. and An, Dong and Sanders, Yuval R. and Su, Yuan and Babbush, Ryan and Berry, Dominic W.},
  journal = {PRX Quantum},
  volume = {3},
  issue = {4},
  numpages = {54},
  year = {2022},
  month = {10},
  publisher = {American Physical Society},
  doi = {10.1103/PRXQuantum.3.040303},
  url = {https://link.aps.org/doi/10.1103/PRXQuantum.3.040303}
}

@article{childs2017QLSA,
  title={Quantum algorithm for systems of linear equations with exponentially improved dependence on precision},
  author={Childs, Andrew M. and Kothari, Robin and Somma, Rolando D.},
  journal={SIAM Journal on Computing},
  volume={46},
  number={6},
  pages={1920--1950},
  year={2017},
  publisher={SIAM}
}

@article{childs2018toward,
  title={Toward the first quantum simulation with quantum speedup},
  author={Childs, Andrew M. and Maslov, Dmitri and Nam, Yunseong and Ross, Neil J. and Su, Yuan},
  journal={Proceedings of the National Academy of Sciences},
  volume={115},
  number={38},
  pages={9456--9461},
  year={2018},
  publisher={National Academy of Sciences}
}

@article{ikeda2023minimum,
  title={Minimum Trotterization formulas for a time-dependent Hamiltonian},
  author={Ikeda, Tatsuhiko N. and Abrar, Asir and Chuang, Isaac L. and Sugiura, Sho},
  journal={Quantum},
  volume={7},
  pages={1168},
  year={2023},
}

@misc{jin2025quantumalgorithmsstochasticdifferential,
      title={Quantum Algorithms for Stochastic Differential Equations: A Schr\"odingerisation Approach}, 
      author={Jin, Shi and Liu, Nana and Wei, Wei},
      year={2025},
      eprint={2412.14868},
      archivePrefix={arXiv},
      primaryClass={quant-ph},
      url={https://arxiv.org/abs/2412.14868}, 
}

@article{mizuta2025commutatorscalinghamiltoniansimulation,
  title={On the commutator scaling in Hamiltonian simulation with multi-product formulas},
  author={Mizuta, Kaoru},
  journal={Quantum},
  volume={10},
  pages={1974},
  year={2026},
}

@misc{mizuta2024MPFtimedep,
      title={Explicit error bounds with commutator scaling for time-dependent product and multi-product formulas}, 
      author={Mizuta, Kaoru  and Ikeda, Tatsuhiko N.  and Fujii, Keisuke},
      year={2024},
      eprint={2410.14243},
      archivePrefix={arXiv},
      primaryClass={quant-ph},
      url={https://arxiv.org/abs/2410.14243}, 
}

@article{jin2024quantum,
  title={Quantum simulation of partial differential equations via Schr{\"o}dingerization},
  author={Jin, Shi and Liu, Nana and Yu, Yue},
  journal={Physical Review Letters},
  volume={133},
  number={23},
  pages={230602},
  year={2024},
  publisher={APS}
}

@article{trefethen2014exponentially,
  title={The exponentially convergent trapezoidal rule},
  author={Trefethen, Lloyd N. and Weideman, J.A.C.},
  journal={SIAM review},
  volume={56},
  number={3},
  pages={385--458},
  year={2014},
  publisher={SIAM}
}

@article{an2021time,
  title={Time-dependent unbounded Hamiltonian simulation with vector norm scaling},
  author={An, Dong and Fang, Di and Lin, Lin},
  journal={Quantum},
  volume={5},
  pages={459},
  year={2021},
}

@article{berry2020time,
  title        = {Time-dependent Hamiltonian simulation with $L^1$-norm scaling},
  author       = {Berry, Dominic W. and Childs, Andrew M. and Su, Yuan and Wang, Xin and Wiebe, Nathan},
  journal      = {Quantum},
  volume       = {4},
  pages        = {254},
  year         = {2020},
}

@article{wiebe2010higher,
  title={Higher order decompositions of ordered operator exponentials},
  author={Wiebe, Nathan and Berry, Dominic and H{\o}yer, Peter and Sanders, Barry C.},
  journal={Journal of Physics A: Mathematical and Theoretical},
  volume={43},
  number={6},
  pages={065203},
  year={2010},
  publisher={IOP Publishing}
}

@misc{low2019hamiltoniansimulationinteractionpicture,
      title={Hamiltonian Simulation in the Interaction Picture}, 
      author={Low, Guang Hao and Wiebe, Nathan},
      year={2019},
      eprint={1805.00675},
      archivePrefix={arXiv},
      primaryClass={quant-ph},
      url={https://arxiv.org/abs/1805.00675}, 
}

@article{anFang2022time,
  title={Time-dependent Hamiltonian simulation of highly oscillatory dynamics and superconvergence for Schr{\"o}dinger equation},
  author={An, Dong and Fang, Di and Lin, Lin},
  journal={Quantum},
  volume={6},
  pages={690},
  year={2022},
}

@article{AnFangLin2021,
	doi = {10.22331/q-2021-05-26-459},
  
	year = 2021,
  
	publisher = {Verein zur Forderung des Open Access Publizierens in den Quantenwissenschaften},
  
	volume = {5},
  
	pages = {459},
  
	author = {Dong An and Di Fang and Lin Lin},
  
	title = {Time-dependent unbounded Hamiltonian simulation with vector norm scaling},
  
	journal = {Quantum}
}

@Article{BerryChilds2012,
  author  = {Berry, Dominic W. and Childs, Andrew M.},
  title   = {{Black-box Hamiltonian simulation and unitary implementation}},
  journal = {Quantum Information \& Computation},
  year    = {2012},
  volume  = {12},
  number  = {1-2},
  pages   = {29--62}, 
  doi = {10.26421/QIC12.1-2}
}

@Article{BerryChildsCleveEtAl2015,
  Title                    = {Simulating {Hamiltonian} dynamics with a truncated {Taylor} series},
  Author                   = {Berry, Dominic W. and Childs, Andrew M. and Cleve, Richard and Kothari, Robin and Somma, Rolando D.},
  Journal                  = {Phys. Rev. Lett.},
  Year                     = {2015},
  Pages                    = {090502},
  Volume                   = {114}, 
  doi = {10.1103/PhysRevLett.114.090502}
}

@inproceedings{BerryChildsKothari2015,
   title={Hamiltonian Simulation with Nearly Optimal Dependence on all Parameters},
   DOI={10.1109/focs.2015.54},
   booktitle={2015 IEEE 56th Annual Symposium on Foundations of Computer Science},
   publisher={IEEE},
   author={Berry, Dominic W. and Childs, Andrew M. and Kothari, Robin},
   year={2015},
   month={10}
}

@Article{LowChuang2017,
   title={Optimal Hamiltonian Simulation by Quantum Signal Processing},
   volume={118},
   DOI={10.1103/physrevlett.118.010501},
   number={1},
   journal={Physical Review Letters},
   publisher={American Physical Society (APS)},
   author={Low, Guang Hao and Chuang, Isaac L.},
   year={2017},
   month={1}, 
}

@InProceedings{BerryChildsCleveEtAl2014,
  author    = {Berry, Dominic W. and Childs, Andrew M. and Cleve, Richard and Kothari, Robin and Somma, Rolando D.},
  title     = {Exponential improvement in precision for simulating sparse Hamiltonians},
  booktitle = {Proceedings of the forty-sixth annual ACM symposium on Theory of computing},
  year      = {2014},
  pages     = {283--292}, 
  doi = {10.1145/2591796.2591854}, 
}

@article{jordan2012quantum,
  title={Quantum algorithms for quantum field theories},
  author={Jordan, Stephen P. and Lee, Keith S.M. and Preskill, John},
  journal={Science},
  volume={336},
  number={6085},
  pages={1130--1133},
  year={2012},
  publisher={American Association for the Advancement of Science}
}

@article{bauer2020quantum,
  title={Quantum algorithms for quantum chemistry and quantum materials science},
  author={Bauer, Bela and Bravyi, Sergey and Motta, Mario and Chan, Garnet Kin-Lic},
  journal={Chemical Reviews},
  volume={120},
  number={22},
  pages={12685--12717},
  year={2020},
  publisher={ACS Publications}
}

@article{cao2019quantum,
  title={Quantum chemistry in the age of quantum computing},
  author={Cao, Yudong and Romero, Jonathan and Olson, Jonathan P. and Degroote, Matthias and Johnson, Peter D. and Kieferov{\'a}, M{\'a}ria and Kivlichan, Ian D. and Menke, Tim and Peropadre, Borja and Sawaya, Nicolas P.D. et al.},
  journal={Chemical Reviews},
  volume={119},
  number={19},
  pages={10856--10915},
  year={2019},
  publisher={ACS Publications}
}

@article{babbush2018low,
  title={Low-depth quantum simulation of materials},
  author={Babbush, Ryan and Wiebe, Nathan and McClean, Jarrod and McClain, James and Neven, Hartmut and Chan, Garnet Kin-Lic},
  journal={Physical Review X},
  volume={8},
  number={1},
  pages={011044},
  year={2018},
  publisher={APS}
}

@misc{WangZhouWangZhengZhangLi2026Lindbladian,
  title        = {Lindbladian Simulation with Commutator Bounds},
  author       = {Wang, Xinzhao and Zhou, Shuo and Wang, Xiaoyang and Zheng, Yi-Cong and Zhang, Shengyu and Li, Tongyang},
  year         = {2026},
  eprint       = {2603.28602},
  archivePrefix= {arXiv},
  primaryClass = {quant-ph}
}

@misc{AnChildsLinYing2024LaplaceLCHS,
  title        = {Laplace Transform Based Quantum Eigenvalue Transformation via Linear Combination of Hamiltonian Simulation},
  author       = {An, Dong and Childs, Andrew M. and Lin, Lin and Ying, Lexing},
  year         = {2024},
  eprint       = {2411.04010},
  archivePrefix= {arXiv},
  primaryClass = {quant-ph}
}

@misc{LowSu2024QEP,
  title        = {Quantum Eigenvalue Processing},
  author       = {Low, Guang Hao and Su, Yuan},
  year         = {2024},
  eprint       = {2401.06240},
  archivePrefix= {arXiv},
  primaryClass = {quant-ph}
}

@misc{TakahiraOhashiSogabeUsuda2021Contour,
  title        = {Quantum Algorithms based on the Block-Encoding Framework for Matrix Functions by Contour Integrals},
  author       = {Takahira, Souichi and Ohashi, Asuka and Sogabe, Tomohiro and Usuda, Tsuyoshi Sasaki},
  year         = {2021},
  eprint       = {2106.08076},
  archivePrefix= {arXiv},
  primaryClass = {quant-ph}
}

@misc{JiangAn2026ContourQET,
  title        = {Contour-integral Based Quantum Eigenvalue Transformation: Analysis and Applications},
  author       = {Jiang, Shan and An, Dong},
  year         = {2026},
  eprint       = {2601.11959},
  archivePrefix= {arXiv},
  primaryClass = {quant-ph}
}

@misc{WangLiuXueZhuangDouChenGuo2025CBMD,
  title        = {Quantum Simulation of Non-unitary Dynamics via Contour-based Matrix Decomposition},
  author       = {Wang, Chao and Liu, Huan-Yu and Xue, Cheng and Zhuang, Xi-Ning and Dou, Menghan and Chen, Zhao-Yun and Guo, Guo-Ping},
  year         = {2025},
  eprint       = {2511.10267},
  archivePrefix= {arXiv},
  primaryClass = {quant-ph}
}

@misc{JinLiuMaYu2025Schrodingerization,
  title        = {On the Schrödingerization Method for Linear Non-unitary Dynamics with Optimal Dependence on Matrix Queries},
  author       = {Jin, Shi and Liu, Nana and Ma, Chuwen and Yu, Yue},
  year         = {2025},
  eprint       = {2505.00370},
  archivePrefix= {arXiv},
  primaryClass = {quant-ph}
}

@misc{JinMaZuazua2026Transmutation,
  title        = {Transmutation Based Quantum Simulation for Non-unitary Dynamics},
  author       = {Jin, Shi and Ma, Chuwen and Zuazua, Enrique},
  year         = {2026},
  eprint       = {2601.03616},
  archivePrefix= {arXiv},
  primaryClass = {quant-ph}
}

@misc{WangZhuangDouChenGuo2026PSF,
  title        = {A Unified Poisson Summation Framework for Generalized Quantum Matrix Transformations},
  author       = {Wang, Chao and Zhuang, Xi-Ning and Dou, Menghan and Chen, Zhao-Yun and Guo, Guo-Ping},
  year         = {2026},
  eprint       = {2604.02874},
  archivePrefix= {arXiv},
  primaryClass = {quant-ph}
}

@article{Torrey1956,
  author  = {Torrey, H. C.},
  title   = {Bloch Equations with Diffusion Terms},
  journal = {Physical Review},
  volume  = {104},
  number  = {3},
  pages   = {563--565},
  year    = {1956},
  doi     = {10.1103/PhysRev.104.563}
}

@article{Magin2008,
  author  = {Magin, Richard L. and Abdullah, Omar and Baleanu, Dumitru and Zhou, Xiaohong Joe},
  title   = {Anomalous diffusion expressed through fractional order differential operators in the {Bloch--Torrey} equation},
  journal = {Journal of Magnetic Resonance},
  volume  = {190},
  number  = {2},
  pages   = {255--270},
  year    = {2008},
  doi     = {10.1016/j.jmr.2007.11.005}
}

@article{Yu2013,
  author  = {Yu, Qiang and Liu, Fawang and Turner, Ian and Burrage, Kevin},
  title   = {Numerical investigation of three types of space and time fractional {Bloch--Torrey} equations in 2D},
  journal = {Central European Journal of Physics},
  volume  = {11},
  number  = {6},
  pages   = {646--665},
  year    = {2013},
  doi     = {10.2478/s11534-013-0220-6}
}

@article{BuenoOrovioBurrage2017,
  author  = {Bueno-Orovio, Alfonso and Burrage, Kevin},
  title   = {Exact solutions to the fractional time-space {Bloch--Torrey} equation for magnetic resonance imaging},
  journal = {Communications in Nonlinear Science and Numerical Simulation},
  volume  = {52},
  pages   = {91--109},
  year    = {2017},
  doi     = {10.1016/j.cnsns.2017.04.013}
}

@article{Moutal2020,
  author  = {Moutal, Nicolas and Moutal, Antoine and Grebenkov, Denis S.},
  title   = {Diffusion {NMR} in periodic media: efficient computation and spectral properties},
  journal = {Journal of Physics A: Mathematical and Theoretical},
  volume  = {53},
  number  = {32},
  pages   = {325201},
  year    = {2020},
  doi     = {10.1088/1751-8121/ab977e},
  eprint  = {2005.06975},
  archivePrefix = {arXiv}
}

@article{an2026fast,
  title={Fast-forwarding quantum algorithms for linear dissipative differential equations},
  author={An, Dong and Onwunta, Akwum and Yang, Gengzhi},
  journal={Quantum},
  volume={10},
  pages={1986},
  year={2026},
}

@article{ParhiziEtAlCDRS,
  author  = {Parhizi, Mohammad and Kilaz, Gozdem and Ostanek, Jason K. and Jain, Ankur},
  title   = {Analytical solution of the convection-diffusion-reaction-source ({CDRS}) equation using Green's function technique},
  journal = {International Communications in Heat and Mass Transfer},
  volume  = {131},
  pages   = {105869},
  year    = {2022},
  doi     = {10.1016/j.icheatmasstransfer.2021.105869}
}

@article{AgudAlbesaEtAlCDR,
  author  = {Agud Albesa, Luc{\'i}a and Boix Garc{\'i}a, Marta and Pla-Ferrando, Laura and Cardona, Salvador C.},
  title   = {A study about the solution of convection-diffusion-reaction equation with {Danckwerts} boundary conditions by analytical, method of lines and {Crank--Nicholson} techniques},
  journal = {Mathematical Methods in the Applied Sciences},
  year    = {2023},
  doi     = {10.1002/mma.8633}
}

@article{VillotaCadenaEtAlCDRSolv,
  author  = {Villota-Cadena, {\'A}ngel P. and Sandoval-Palis, Iv{\'a}n P. and Grijalva-Villegas, Gabriel F. and Herrera-Granda, Erick P.},
  title   = {{CDR-Solv}: Solving the convection-diffusion-reaction equation with algebraic sub-grid scale stabilization using {Python}},
  journal = {Applied Sciences},
  volume  = {15},
  number  = {18},
  pages   = {10256},
  year    = {2025},
  doi     = {10.3390/app151810256}
}

@article{DalibardCastinMolmer1992,
  author  = {Dalibard, Jean and Castin, Yvan and M{\o}lmer, Klaus},
  title   = {Wave-function approach to dissipative processes in quantum optics},
  journal = {Physical Review Letters},
  volume  = {68},
  number  = {5},
  pages   = {580--583},
  year    = {1992},
  doi     = {10.1103/PhysRevLett.68.580}
}

@article{PlenioKnight1998,
  author  = {Plenio, M. B. and Knight, P. L.},
  title   = {The quantum-jump approach to dissipative dynamics in quantum optics},
  journal = {Reviews of Modern Physics},
  volume  = {70},
  number  = {1},
  pages   = {101--144},
  year    = {1998},
  doi     = {10.1103/RevModPhys.70.101}
}

@article{LeeChan2013,
  author  = {Lee, Tony E. and Chan, Ching-Kit},
  title   = {Dissipative transverse-field {I}sing model: Steady-state correlations and spin squeezing},
  journal = {Physical Review A},
  volume  = {88},
  number  = {6},
  pages   = {063811},
  year    = {2013},
  doi     = {10.1103/PhysRevA.88.063811}
}

@article{RobertsClerk2023,
  author  = {Roberts, David and Clerk, Aashish A.},
  title   = {Exact Solution of the Infinite-Range Dissipative Transverse-Field {I}sing Model},
  journal = {Physical Review Letters},
  volume  = {131},
  number  = {19},
  pages   = {190403},
  year    = {2023},
  doi     = {10.1103/PhysRevLett.131.190403}
}

@article{tranter2019ordering,
  title={Ordering of trotterization: Impact on errors in quantum simulation of electronic structure},
  author={Tranter, Andrew and Love, Peter J. and Mintert, Florian and Wiebe, Nathan and Coveney, Peter V},
  journal={Entropy},
  volume={21},
  number={12},
  pages={1218},
  year={2019},
  publisher={MDPI}
}
\appendix 
\addtocontents{toc}{\protect\setcounter{tocdepth}{1}}
\section{Omitted Proofs in \cref{sec:alg-err-analysis}}\label{app-alg-err}

\subsection{Proof of \cref{approx-simplify-integ}}\label{proof-of-approx-simplify-integ}

\begin{proof}
Set \(z = k-i y_0\). Since \(1-iz = (1-y_0)-ik\) and \(b+iz = (b+y_0)+ik\), we have $|1-iz| = \sqrt{(y_0-1)^2+k^2}$ and $\sqrt{(b+y_0)^2+k^2}$. We also have \(|e^{d(1-iz)}| = e^{d(1-y_0)}\). Moreover, since \(z^2 = (k-i y_0)^2 = k^2-2iky_0-y_0^2\), we have $|e^{-(z^2+1)/(4c^2)}| = e^{-(k^2-y_0^2+1)/(4c^2)} = e^{(y_0^2-1)/(4c^2)}e^{-k^2/(4c^2)}$. 
Therefore, we have
\begin{equation}
|\hat f_{a,b}(k-i y_0;c,d)| = \frac{(b+1)^{a-1}}{\sqrt{2\pi}}e^{\,d(1-y_0)+\frac{y_0^2-1}{4c^2}}\frac{e^{-k^2/(4c^2)}}{\sqrt{(y_0-1)^2+k^2}\,((b+y_0)^2+k^2)^{(a-1)/2}}.
\end{equation}
Integrating over \(\mathbb{R}\) and multiplying by \(1/\sqrt{2 \pi}\) gives \cref{simplified-eq}.
\end{proof}

\subsection{Proof of \cref{bessel-lemma}}\label{proof-of-bessel-lemma}

\begin{proof}
Substitute \(k=s\sinh t\). Then \(dk=s\cosh t\,dt\) and \(\sqrt{k^2+s^2}=s\cosh t\). Hence,
\begin{equation}
\int_{\mathbb{R}}\frac{e^{-k^2/(4\sigma^2)}}{\sqrt{k^2+s^2}}\,dk = \int_{\mathbb{R}}e^{-s^2\sinh^2 t/(4\sigma^2)}\,dt.
\end{equation}
Using \(\sinh^2 t=(\cosh(2t)-1)/2\), we obtain
\begin{equation}
\int_{\mathbb{R}}e^{-s^2\sinh^2 t/(4\sigma^2)}\,dt = e^{s^2/(8\sigma^2)}\int_{\mathbb{R}}e^{-\frac{s^2}{8\sigma^2}\cosh(2t)}\,dt.
\end{equation}
Setting \(u=2t\) gives
\begin{equation}
\int_{\mathbb{R}}e^{-\frac{s^2}{8\sigma^2}\cosh(2t)}\,dt = \frac{1}{2}\int_{\mathbb{R}}e^{-\frac{s^2}{8\sigma^2}\cosh u}\,du = \int_0^\infty e^{-\frac{s^2}{8\sigma^2}\cosh u}\,du = K_0\left(\frac{s^2}{8\sigma^2}\right),
\end{equation}
where we have used the standard representation \(K_0(x)=\int_0^\infty e^{-x\cosh u}\,du\) for \(x>0\). This proves \cref{bessel-eq}.
\end{proof}

\subsection{Proof of \cref{minimize-surrogate-approx}}\label{proof-of-minimize-surrogate-approx}

\begin{proof}
By definition,
\begin{equation}
\log B(y_0) = (a-1)\log(b+1)-\log(y_0-1)-(a-1)\log(b+y_0)+d(1-y_0)+\frac{y_0^2-1}{4c^2}.
\end{equation}
Differentiating with respect to \(y_0\) gives
\begin{equation}
\frac{d}{dy_0}\log B(y_0) = -d+\frac{y_0}{2c^2}-\frac{1}{y_0-1}-\frac{a-1}{b+y_0}.
\end{equation}
Hence, any stationary point \(y_0^*\) of \(\log B\) satisfies \(\frac{d}{dy_0}\log B(y_0^*)=0\), equivalently \cref{surrogate-stat-eq}. Moreover,
\begin{equation}\label{inc-phi}
\frac{d^2}{dy_0^2}\log B(y_0) = \frac{1}{2c^2}+\frac{1}{(y_0-1)^2}+\frac{a-1}{(b+y_0)^2}>0.
\end{equation}
Thus, \(\log B\) is strictly convex for \(y_0>1\), so any stationary point is its unique global minimizer. Since \(B(y_0)>0\) and \(\frac{d}{dy_0}B(y_0)=B(y_0)\frac{d}{dy_0}\log B(y_0)\), \(B\) and \(\log B\) have the same stationary points. They also have the same minimizers because \(\exp\) is strictly increasing. Such a \(y_0\) exists because
\begin{equation}
\frac{d}{dy_0}\log B(y_0) \to -\infty \quad \text{as } y_0 \to 1, \quad \frac{d}{dy_0}\log B(y_0) \to +\infty \quad \text{as } y_0 \to \infty.
\end{equation}
Hence, by the intermediate value theorem, there is a unique \(y_0^\ast\in(1,\infty)\) such that \(\frac{d}{dy_0}\log B(y_0^\ast)=0\). Thus, \(y_0^\ast\) is the unique stationary point of both \(\log B(y_0)\) and \(B(y_0)\).
\end{proof}

\subsection{Proof of \cref{feasible-kernel-profile-explicit}}\label{proof-of-feasible-kernel-profile-explicit}

\begin{proof}
We prove the three statements separately.
\begin{enumerate}
    \item By \cref{feasible-kernel-approx}, it suffices to prove \(d\geq \Phi(s_\epsilon)\), where \(s_\epsilon\) is the unique solution of \(\Psi(s_\epsilon)=\epsilon_{\operatorname{approx}}\). For \(s>1\), define
    \begin{equation}
    Q(s) := \frac{c}{\sqrt{\pi}(s-1)}\exp\left(a-\frac{(s-1)^2}{4c^2}\right).
    \end{equation}
    Note that \(\Psi(s)\le Q(s)\) for \(s>1\). Moreover,
    \begin{align}
    Q(\widetilde x_0) &= \frac{c}{\sqrt{\pi}(\widetilde x_0-1)}\exp\left(a-\frac{(\widetilde x_0-1)^2}{4c^2}\right) = \frac{1}{\sqrt{2\pi\,\omega_\epsilon}}\exp\left(a-\frac{\omega_\epsilon}{2}\right) = \epsilon_{\operatorname{approx}},
    \end{align}
    since \(\omega_\epsilon e^{\omega_\epsilon}=e^{2a}/(2\pi\epsilon_{\operatorname{approx}}^2)\). Hence, \(\Psi(\widetilde x_0)\leq Q(\widetilde x_0)=\epsilon_{\operatorname{approx}}\). Since \(\Psi\) is strictly decreasing, \(s_\epsilon\leq \widetilde x_0\). Since \(\Phi\) is strictly increasing, \cref{eq:lambertW-d-threshold} implies \(d\geq \Phi(\widetilde x_0)\geq \Phi(s_\epsilon)\). The conclusion follows from \cref{feasible-kernel-approx}.

    \item We have \(e^{2a}/(2\pi\epsilon_{\operatorname{approx}}^2)\le e^{2a}/\epsilon_{\operatorname{approx}}^2\). Since \(e^{2a}/\epsilon_{\operatorname{approx}}^2\ge e^2>e\) because \(\epsilon_{\operatorname{approx}}\in(0,1)\), we have
    \begin{equation}
    W\left(\frac{e^{2a}}{2\pi\epsilon_{\operatorname{approx}}^2}\right) \leq W\left(\frac{e^{2a}}{\epsilon_{\operatorname{approx}}^2}\right) \le \log\left(\frac{e^{2a}}{\epsilon_{\operatorname{approx}}^2}\right) = 2a+2\log(1/\epsilon_{\operatorname{approx}}).
    \end{equation}
    It follows that
    \begin{equation}
    \widetilde{x}_{0} = 1+c\sqrt{2W\left(\frac{e^{2a}}{2\pi\epsilon_{\operatorname{approx}}^2}\right)} \le 1+\sqrt{2}c\sqrt{2a+2\log(1/\epsilon_{\operatorname{approx}})} = \overline x_0.
    \end{equation}
    Since \(\Phi\) is strictly increasing, \(d\ge \Phi(\overline x_0)\) implies \(d\ge \Phi(\widetilde x_0)\). The claim follows from part~(1).

    \item Substituting \(\overline x_0\) into \(\Phi\) gives
    \begin{equation}
    \Phi(\overline x_0) = \frac{1}{2c^2}+\frac{1}{c}\sqrt{a+\log(1/\epsilon_{\operatorname{approx}})}-\frac{1}{\overline x_0-1}-\frac{a-1}{b+\overline x_0}.
    \end{equation}
    This gives the stated asymptotic scaling for fixed \(b\) and \(c\).
\end{enumerate}
This completes the proof.
\end{proof}

\subsection{Proof of \cref{trunc-error}}\label{proof-of-trunc-error}

\begin{proof}
It is clear that \(|\hat f_{a,b}(k;c,d)| = \frac{(b+1)^{a-1}e^{d-\frac{1}{4c^2}}}{\sqrt{2\pi}}\frac{e^{-k^2/(4c^2)}}{\sqrt{1+k^2}(b^2+k^2)^{(a-1)/2}}\). Hence, we have
\begin{align}
E_{\mathrm{trunc}}(R) &\leq \frac{1}{\sqrt{2\pi}}\int_{\mathbb R\setminus[-R,R]}|\hat f_{a,b}(k;c,d)|\,dk \\
&= \frac{(b+1)^{a-1}e^{d-\frac{1}{4c^2}}}{2\pi}\int_{|k|>R}\frac{e^{-k^2/(4c^2)}}{\sqrt{1+k^2}(b^2+k^2)^{(a-1)/2}}\,dk \\
&= \frac{(b+1)^{a-1}e^{d-\frac{1}{4c^2}}}{\pi}\int_R^\infty\frac{e^{-k^2/(4c^2)}}{\sqrt{1+k^2}(b^2+k^2)^{(a-1)/2}}\,dk.
\end{align}
For \(k\geq 0\), the elementary inequalities \(\sqrt{1+k^2}\geq 1\) and \((b^2+k^2)^{(a-1)/2}\geq b^{a-1}\) imply that
\begin{equation}
E_{\mathrm{trunc}}(R) \leq \frac{(b+1)^{a-1}e^{d-\frac{1}{4c^2}}}{\pi b^{a-1}}\int_R^\infty e^{-k^2/(4c^2)}\,dk = \frac{(b+1)^{a-1}e^{d-\frac{1}{4c^2}}}{\pi b^{a-1}}c\sqrt{\pi}\operatorname{erfc}\left(\frac{R}{2c}\right).
\end{equation}
This gives \cref{trunc-comp}.
\end{proof}

\section{Omitted Proofs in \cref{complexity-analysis}}

\subsection{Proof of \cref{short-time-step}}\label{proof-of-short-time-step}

\begin{proof}
Applying \cref{abstract-time-indep-mpf-revised} and the triangle inequality, we have
\begin{align}
\left\|\frac{1}{\sqrt{2 \pi}}  \int_{-R}^{R} \hat f(k)\big(U^{(2m)}_k(\Delta)-U_k(\Delta)\big)dk\right\|
&\le \frac{1}{\sqrt{2 \pi}} \int_{-R}^{R} |\hat f(k)| \|U^{(2m)}_k(\Delta)-U_k(\Delta) \| dk   \\
&\le \frac{1}{\sqrt{2 \pi}} \int_{-R}^{R} |\hat f(k)| (C_m\Delta^{2m+1}\Phi_m(k,N)+R_m(\delta) ) dk.
\end{align}
The first term is $ C_m \Delta^{2m+1}\Lambda_m(R,N)$ by~\cref{eq:Lambda-cont} and the second term is $\alpha_{\hat f, R} R_m(\delta)$.
\end{proof}

\subsection{Proof of \cref{series-profile-convergence-lemma}}\label{proof-of-series-profile-convergence-lemma}

\begin{proof}
Let \(\overline R_Q:=\max\{1,R_Q\}\). For \(0\leq r\leq \overline R_Q\), define \(S_n(r):=\sum_{\ell=0}^{n}r^\ell\mathcal C_{n,\ell}(H,L)\) for \(n\geq 1\). By definition of \(\mathcal C_n(H,L)\),
\begin{equation}
S_n(r) \leq (n+1)\overline R_Q^n\mathcal C_n(H,L)
\end{equation}
for $0\leq r\leq \overline R_Q$. For \(n\geq J\), the definition of \(\chi_J(H,L)\) gives \(\mathcal C_n(H,L)\leq \chi_J(H,L)^{-n}\). Since \(n\mapsto (n+1)^{1/n}\) is decreasing for \(n\geq 1\), we also have \(n+1\leq (J+1)^{n/J}\) for \(n\geq J\). Combining these bounds gives
\begin{equation}
\label{series-Sn-kappa-bound-large}
S_n(r) \leq \left( \frac{(J+1)^{1/J}\overline R_Q}{\chi_J(H,L)} \right)^n = \kappa_Q^{-n}
\end{equation}
for all \(n\geq J\) and \(0\leq r\leq \overline R_Q\). The estimate in \cref{series-Sn-kappa-bound-large} applies directly only for \(n\geq J\). To absorb the finitely many lower orders, define
\begin{equation}
\label{series-BQ-def}
B_Q := \max\left\{ 1,\, \max_{1\leq n<J} \kappa_Q^n \sup_{0\leq r\leq \overline R_Q}S_n(r) \right\}.
\end{equation}
If \(J=1\), the second maximum is interpreted as absent. Then, $S_n(r)\leq B_Q\kappa_Q^{-n}$ for every \(n\geq 1\) and every \(0\leq r\leq \overline R_Q\). Now fix \(m\geq 1\). For \(0\leq r\leq \overline R_Q\), define the pointwise profile
\begin{align}
\label{series-pointwise-profile-def}
\Phi_m(r) := \sum_{\substack{j\in 2\mathbb Z_+\\ j\geq 2m}} \sum_{l=1}^{m} \frac{\rho_Q^{j+l-(2m+1)}}{l!} \sum_{\substack{j_1,\dots,j_l\in 2\mathbb Z_+\\ j_1+\cdots+j_l=j}} \prod_{\kappa=1}^{l} S_{j_\kappa+1}(r).
\end{align}
Setting \(n_\kappa:=j_\kappa+1\), we have $n_1+\cdots+n_l=j+l$. We obtain $\prod_{\kappa=1}^{l}S_{j_\kappa+1}(r) \leq B_Q^l\kappa_Q^{-(j+l)}$. The number of ordered decompositions of an even integer \(j\) into \(l\) positive even parts is $\binom{j/2-1}{l-1}$. Therefore, we have
\begin{align}
\label{series-pointwise-profile-bound}
\Phi_m(r) &\leq \rho_Q^{-(2m+1)} \sum_{l=1}^{m} \frac{B_Q^l}{l!} \sum_{\substack{j\in 2\mathbb Z_+\\ j\geq 2m}} \binom{j/2-1}{l-1} \left(\frac{\rho_Q}{\kappa_Q}\right)^{j+l}.
\end{align}
Since \(0<\rho_Q<\kappa_Q\), the final series converges. Indeed, the binomial factor in \cref{series-pointwise-profile-bound} grows polynomially in \(j\), while \((\rho_Q/\kappa_Q)^j\) decays geometrically. Hence, we have
$\sup_{0\leq r\leq \overline R_Q}\Phi_m(r)<\infty$. 
We next prove convergence of the coefficient series defining \(\Phi_{m,\nu}(H,L)\). Expanding the product in \cref{series-pointwise-profile-def} gives $\Phi_m(r)=\sum_{\nu\geq 0} \Phi_{m,\nu}(H,L)r^\nu$ for \(0\leq r\leq \overline R_Q\). All coefficients in this expansion are non-negative. Since \(\overline R_Q\geq 1\), evaluating at \(r=1\) and using $\sup_{0\leq r\leq \overline R_Q}\Phi_m(r)<\infty$ gives $\sum_{\nu\geq 0} \Phi_{m,\nu}(H,L) = \Phi_m(1) <\infty$. Therefore, each \(\Phi_{m,\nu}(H,L)\) is finite, and its defining series is absolutely convergent. In particular,
\begin{equation}
\label{series-Phi-star-finite}
\Phi_{m,Q}^* = \max_{i\in\mathcal I_Q}\Phi_m(|k_i|) <\infty ,
\end{equation}
because \(|k_i|\leq R_Q\leq \overline R_Q\). It remains to verify convergence of \(\Lambda_{m,Q}\). Since all terms in \cref{series-Dmnu-def} are nonnegative, Tonelli's theorem permits rearranging the sums, giving
\begin{align}
\label{series-Lambda-Tonelli}
\Lambda_{m,Q} &= \sum_{\nu\geq 0}\Phi_{m,\nu}(H,L)M_\nu^Q = \sum_{i\in\mathcal I_Q}|v_i|\, \Phi_m(|k_i|).
\end{align}
Consequently, we have $\Lambda_{m,Q} \leq \alpha_Q\Phi_{m,Q}^* <\infty$. This proves absolute convergence of the series.
\end{proof}

\subsection{Proof of \cref{sm-lemma}}\label{proof-of-sm-lemma}

\begin{proof}
Writing \(j=2m+n\) and dropping the even restriction on \(n\) only enlarges the sum, so
\begin{equation}\label{eq:simplified-profile-Sm-first-bound}
S_m(x) \leq \sum_{n=0}^{\infty} \sum_{l=1}^{m} \frac{1}{l!} \binom{2m+n-1}{l-1} x^{n+l-1}.
\end{equation}
Choose \(\tau\in(0,\log(1/x))\), so that \(xe^\tau<1\), and set $ K_\tau:=\max\{1,\tau^{-1}\}$. For \(0\leq l-1\leq m-1\), the elementary inequality \(y^r/r!\leq e^{\tau y}\tau^{-r}\) gives
\begin{equation}\label{eq:simplified-profile-binomial-bound}
\binom{2m+n-1}{l-1} \leq \frac{(2m+n)^{l-1}}{(l-1)!} \leq K_\tau^m e^{\tau(2m+n)} .
\end{equation}
Substituting \cref{eq:simplified-profile-binomial-bound} into \cref{eq:simplified-profile-Sm-first-bound}, we obtain
\begin{align}
S_m(x) \leq K_\tau^m e^{2\tau m} \left( \sum_{l=1}^{m}\frac{x^{l-1}}{l!} \right) \left( \sum_{n=0}^{\infty}(xe^\tau)^n \right) \leq \frac{e}{1-xe^\tau} (K_\tau e^{2\tau} )^m .
\label{eq:simplified-profile-Sm-exponential-bound}
\end{align}
Consequently, we have
\begin{equation}\label{eq:simplified-profile-Sm-root-bound}
\sup_{m\geq1}S_m(x)^{1/(2m)} \leq \left(\frac{e}{1-xe^\tau}\right)^{1/2} K_\tau^{1/2}e^\tau <\infty .
\end{equation}
This completes the proof.
\end{proof}

\subsection{Proof of \cref{loc-mu-growth-envelope}}\label{proof-of-loc-mu-growth-envelope}

\begin{proof}
By \cite[Equation 8]{mizuta2025commutatorscalinghamiltoniansimulation}, we have the crude bound
\begin{equation}
\alpha_{\operatorname{comm},p}(H) \leq (p-1)!(2qg)^{p-1}Ng
\label{eq:mizuta-commutator-growth-input}
\end{equation}
for $p\geq 2$. Let \(j_1,\ldots,j_l\) be an admissible tuple in the definition of \(\mu_{m,p_0}(H)\). Thus, we have $2\leq j_\nu\leq p_0-1$, $j_1+\cdots+j_l=j$ and $1\leq l\leq \lfloor j/2\rfloor$. Since \(j_\nu+1\leq p_0\), the bound in \cref{eq:mizuta-commutator-growth-input} gives
\begin{align}
\alpha_{\operatorname{comm},j_\nu+1}(H) \leq j_\nu!(2qg)^{j_\nu}Ng \leq p_0^{j_\nu}(2qg)^{j_\nu}Ng = Ng(2qg p_0)^{j_\nu}.
\end{align}
Therefore, we have
\begin{align}
\prod_{\nu=1}^{l} \alpha_{\operatorname{comm},j_\nu+1}(H) \leq (Ng)^l(2qg p_0)^{j_1+\cdots+j_l} = (Ng)^l(2qg p_0)^j.
\end{align}
The number of admissible \(l\)-tuples is bounded by \((1+p_0)^l\). Taking the \(1/(j+l)\)-th root gives
\begin{align}
\left( \sum_{\substack{ 2\leq j_1,\ldots,j_l\leq p_0-1\\ j_1+\cdots+j_l=j}} \prod_{\nu=1}^{l} \alpha_{\operatorname{comm},j_\nu+1}(H) \right)^{1/(j+l)}  \leq (2qg p_0)^{j/(j+l)} \left((1+p_0)Ng\right)^{l/(j+l)} . \label{eq:root-alpha-local-growth-bound}
\end{align}
Since \(l\leq \lfloor j/2\rfloor\), we have $\frac{l}{j+l}\leq \frac13$ and $\frac{j}{j+l}\leq 1$. Using \(p_0\leq1+p_0\), we obtain
\begin{align}
(2qg p_0)^{j/(j+l)} \left((1+p_0)Ng\right)^{l/(j+l)}  \leq \max\{1,2qg\}\max\{1,Ng\}^{1/3}(1+p_0)^{4/3}.
\end{align}
Taking the supremum over all admissible \(j,l\) in the definition of \(\mu_{m,p_0}(H)\) proves \cref{eq:local-growth-envelope-from-mizuta}. If \(\{G_{k_i}:i\in \mathcal I_Q\}\) is uniformly \(q_Q\)-local and \(g_Q\)-extensive, then the same argument applies uniformly to each \(G_{k_i}\). Taking the maximum over \(i\in \mathcal I_Q\) gives \cref{eq:mubar-local-growth-envelope-from-mizuta}.
\end{proof}

\subsection{Proof of \cref{general-kernel-strip-bound}}\label{proof-of-general-kernel-strip-bound}

\begin{proof}
Let \(U_z(T):=e^{-i(H+zL)T}\). By~\cite[Lemma~3]{low2025optimallchs}, the map \(z\mapsto U_z(T)\) is entire. The poles of \(\hat f_{a,b}(z;c,d)\) occur at \(z=-i\) and, when \(a>1\), at \(z=ib\). Hence, if \(\rho<\rho_\ast\), then \(F_T\) is analytic on an open set containing \(\overline S_\rho\). Let \(z=x+i\beta\), with \(|\beta|\le \rho\). Since
\begin{equation}
1-iz=(1+\beta)-ix, \qquad b+iz=(b-\beta)+ix,
\end{equation}
we have
\begin{equation}
|1-iz|=\sqrt{(1+\beta)^2+x^2}, \qquad |b+iz|=\sqrt{(b-\beta)^2+x^2}.
\end{equation}
Moreover, \(|e^{d(1-iz)}|=e^{d(1+\beta)}\) and \(|e^{-(z^2+1)/(4c^2)}|=e^{(\beta^2-1)/(4c^2)}e^{-x^2/(4c^2)}\). By~\cite[Lemma~4]{low2025optimallchs}, applied in the time-independent case and specialized to \(L\succeq0\), we have \(\|U_{x+i\beta}(T)\|\le e^{\max\{\beta,0\}T\|L\|}\). Therefore,
\begin{align}
\|F_T(x+i\beta)\| &\le \frac{(b+1)^{a-1}}{2\pi}\frac{\exp\left(d(1+\beta)+\frac{\beta^2-1}{4c^2}+\max\{\beta,0\}T\|L\|\right)e^{-x^2/(4c^2)}}{\sqrt{(1+\beta)^2+x^2}\left((b-\beta)^2+x^2\right)^{(a-1)/2}}.
\end{align}
Since \(|\beta|\le \rho<1\), we have \(\sqrt{(1+\beta)^2+x^2}\ge 1-\rho\). If \(a>1\), then \(\rho<b\), and therefore \(\left((b-\beta)^2+x^2\right)^{(a-1)/2}\ge (b-\rho)^{a-1}\). Thus,
\begin{equation}
\|F_T(x+i\beta)\| \le \frac{(b+1)^{a-1}}{2\pi(1-\rho)D_{a,b}(\rho)}\exp\left(d(1+\beta)+\frac{\beta^2-1}{4c^2}+\max\{\beta,0\}T\|L\|\right)e^{-x^2/(4c^2)}.
\end{equation}
For \(\beta\in[-\rho,\rho]\), the exponent satisfies
\begin{align}
d(1+\beta)+\frac{\beta^2-1}{4c^2}+\max\{\beta,0\}T\|L\| &\le d+\frac{\rho^2-1}{4c^2}+\rho\max\{-d,d+T\|L\|\}.
\end{align}
Indeed, the left-hand side is convex in \(\beta\), so its maximum on \([-\rho,\rho]\) occurs at \(\beta=\pm\rho\). Integrating over \(\mathbb R\) and using \(\int_{\mathbb R} e^{-x^2/(4c^2)}\,dx = 2c\sqrt{\pi}\) gives \cref{eq:strip-L1-bound} with \(M_{a,b}(\rho)\). The same Gaussian upper bound also shows that \(F_T(x+i\beta)\) decays uniformly to zero on \(\overline S_\rho\) as \(|x|\to\infty\).
\end{proof}

\subsection{Proof of \cref{sinh-budget}}\label{proof-of-sinh-budget}

\begin{proof}
The choice of \(h\) controls the mesh error. Indeed, \cref{sinh-strip-bound} gives \(E_{\operatorname{sinh},\eta}(h,\beta)\le \epsilon_{\operatorname{mesh}}\). It remains to control the tail contribution. Since \(Y_h\ge Y\), the choice of \(Y\) gives
\begin{equation}
e^{-\lambda_\eta^2\sinh^2Y_h}\le e^{-\lambda_\eta^2\sinh^2Y}\le \frac{\epsilon_{\operatorname{tail}}}{2c\sqrt{\pi}C_{\operatorname{ker}}}.
\end{equation}
Moreover, the choice of \(Y\) also gives \(\lambda_\eta\sinh Y_h\ge1\), so \cref{sinh-lattice-quadrature} implies \(E_{\operatorname{tail},h}^{\operatorname{sinh},\eta}(Y_h)\le \epsilon_{\operatorname{tail}}\). Combining the mesh and tail estimates, we obtain
\begin{equation}
E_{\operatorname{quad}}(Q_{Y,h}^{\operatorname{sinh},\eta})\le E_{\operatorname{sinh},\eta}(h,\beta)+E_{\operatorname{tail},h}^{\operatorname{sinh},\eta}(Y_h)\le \epsilon_{\operatorname{mesh}}+\epsilon_{\operatorname{tail}}.
\end{equation}
Since \(n_h=\lceil Y/h\rceil\), we have \(|\mathcal I_{Q_{Y,h}^{\operatorname{sinh},\eta}}|=2n_h+1\le 2Y/h+3\). Hence,
\begin{equation}
\frac1h=\max\left\{1,\frac{\ell_{\operatorname{mesh}}^{\operatorname{sinh},\eta}}{2\pi\beta}\right\}=\mathcal O\left(1+\frac{\ell_{\operatorname{mesh}}^{\operatorname{sinh},\eta}}{\beta}\right),
\end{equation}
while \(Y=\mathcal O\left(\log\left(1+\frac{\sqrt{\ell_{\operatorname{tail}}^{\operatorname{sinh},\eta}}}{\lambda_\eta}\right)\right)\). Substituting these estimates gives the bound on \(|\mathcal I_{Q_{Y,h}^{\operatorname{sinh},\eta}}|\) in \cref{eq:sinh-cardinality-bound}. Moreover, \(Y_h=n_hh\le Y+h\le Y+1\), and therefore
\begin{equation}
\sinh(Y_h)\le \sinh(Y+1)=\mathcal O(\sinh Y+\cosh Y)=\mathcal O\left(1+\frac{\sqrt{\ell_{\operatorname{tail}}^{\operatorname{sinh},\eta}}}{\lambda_\eta}\right).
\end{equation}
This gives
\begin{equation}
R_{Q_{Y,h}^{\operatorname{sinh},\eta}}=\eta\sinh(Y_h)=\mathcal O\left(\eta+\frac{\eta}{\lambda_\eta}\sqrt{\ell_{\operatorname{tail}}^{\operatorname{sinh},\eta}}\right)=\mathcal O\left(\eta+2c\sqrt{\ell_{\operatorname{tail}}^{\operatorname{sinh},\eta}}\right),
\end{equation}
which proves the radius bound in \cref{eq:sinh-radius-bound}. The balanced-budget estimates follow from \(\ell_{\operatorname{mesh}}^{\operatorname{sinh},\eta}=\ell_{\operatorname{tail}}^{\operatorname{sinh},\eta}=\mathcal O(\log(1/\epsilon_q))\). It remains to bound the LCU normalization and the discrete moments. For real \(k\), \cref{eq:real-kernel-modulus} implies
\begin{equation}\label{eq:sinh-kernel-sharp-majorant}
\frac{1}{\sqrt{2\pi}}|\hat f_{a,b}(k;c,d)|=\widetilde C_{\operatorname{ker}}\frac{e^{-k^2/(4c^2)}}{\sqrt{1+k^2}(b^2+k^2)^{(a-1)/2}}.
\end{equation}
For the free-scale sinh rule, \(k_i=\eta\sinh(qh)\), and hence
\begin{align}
M_{\nu,Y,h}^{\operatorname{sinh},\eta} &\le \widetilde C_{\operatorname{ker}}h\sum_{q\in\mathbb Z}\eta\cosh(qh)\frac{e^{-\lambda_\eta^2\sinh^2(qh)}|\eta\sinh(qh)|^\nu}{\sqrt{1+\eta^2\sinh^2(qh)}\left(b^2+\eta^2\sinh^2(qh)\right)^{(a-1)/2}}.
\label{eq:sinh-moment-sharp-start}
\end{align}
For \(k>0\), we have
\begin{equation}\label{eq:sinh-denom-min-bound}
\frac{1}{\sqrt{1+k^2}(b^2+k^2)^{(a-1)/2}}\le \min\left\{\frac{1}{b^{a-1}},\frac{1}{k^a}\right\}.
\end{equation}
Indeed, the first bound follows from \(\sqrt{1+k^2}\ge1\) and \((b^2+k^2)^{(a-1)/2}\ge b^{a-1}\), while the second follows from \(\sqrt{1+k^2}\ge k\) and \((b^2+k^2)^{(a-1)/2}\ge k^{a-1}\). Let
\begin{equation}\label{eq:sinh-kappa-b-def}
\kappa_b:=\begin{cases}1, & a=1,\\ b^{(a-1)/a}, & a>1.\end{cases}
\end{equation}
We split the full-line sum in \cref{eq:sinh-moment-sharp-start} into the regions \(|\eta\sinh(qh)|\le\kappa_b\) and \(|\eta\sinh(qh)|>\kappa_b\). Since \(h\le1\), the contribution from \(|\eta\sinh(qh)|\le\kappa_b\) is bounded by a constant depending only on \(\nu,a,b\). More precisely,
\begin{align}
h\sum_{|\eta\sinh(qh)|\le\kappa_b}\eta\cosh(qh)\frac{e^{-\lambda_\eta^2\sinh^2(qh)}|\eta\sinh(qh)|^\nu}{\sqrt{1+\eta^2\sinh^2(qh)}\left(b^2+\eta^2\sinh^2(qh)\right)^{(a-1)/2}} \le C_{\nu,a,b}.
\end{align}
On the complementary region, \cref{eq:sinh-denom-min-bound} gives
\begin{align}
&h\sum_{|\eta\sinh(qh)|>\kappa_b}\eta\cosh(qh)\frac{e^{-\lambda_\eta^2\sinh^2(qh)}|\eta\sinh(qh)|^\nu}{\sqrt{1+\eta^2\sinh^2(qh)}\left(b^2+\eta^2\sinh^2(qh)\right)^{(a-1)/2}} \\
&\quad\le h\sum_{|\eta\sinh(qh)|>\kappa_b}\eta\cosh(qh)e^{-\lambda_\eta^2\sinh^2(qh)}|\eta\sinh(qh)|^{\nu-a}.
\end{align}
This transformed Riemann sum has the same three regimes as the corresponding continuous kernel moment. Indeed, using the change of variables \(k=\eta\sinh x\), \(dk=\eta\cosh x\,dx\), the continuous analogue is \(\int_{\kappa_b}^{\infty}e^{-k^2/(4c^2)}k^{\nu-a}\,dk\). The full-line Riemann sums with \(0<h\le1\) are controlled by the corresponding integrals plus a constant multiple of the maximal summand, and hence have the same scaling. Therefore, if \(\nu<a-1\), then
\begin{equation}
h\sum_{|\eta\sinh(qh)|>\kappa_b}\eta\cosh(qh)e^{-\lambda_\eta^2\sinh^2(qh)}|\eta\sinh(qh)|^{\nu-a}\le C_{\nu,a,b}.
\end{equation}
If \(\nu=a-1\), then
\begin{equation}
h\sum_{|\eta\sinh(qh)|>\kappa_b}\eta\cosh(qh)e^{-\lambda_\eta^2\sinh^2(qh)}|\eta\sinh(qh)|^{-1}\le C_{a,b}\bigl(1+\log(1+c)\bigr).
\end{equation}
If \(\nu>a-1\), then comparison with the corresponding Gaussian integral gives
\begin{align}
h\sum_{|\eta\sinh(qh)|>\kappa_b}\eta\cosh(qh)e^{-\lambda_\eta^2\sinh^2(qh)}|\eta\sinh(qh)|^{\nu-a} &\le C_{\nu,a,b}+C_\nu\int_0^\infty e^{-k^2/(4c^2)}k^{\nu-a}\,dk \le C_{\nu,a,b}+C_\nu c^{\nu-a+1}.
\end{align}
Combining these three cases with \cref{eq:sinh-moment-sharp-start} proves \cref{eq:sinh-sharp-moment-bound}. The normalization bound is the case \(\nu=0\), and the asymptotic statement follows directly from the definition of \(\mathfrak M_{\nu,a,b}(c)\).
\end{proof}

\section{Omitted Proofs in \cref{applications}}

\subsection{Proof of \cref{lem:fractional-symbol-difference-simple}}\label{proof-of-fractional-symbol-difference-simple}

\begin{proof}
Let $b(x):=4\sum_{\alpha=1}^d \sin^2(x_\alpha/2)$ and $a_s(x):=b(x)^s$ for  $x\in\mathbb T^d$. We claim that \(a_s\) satisfies the \(\beta_s\)-H\"older condition. Define $F(x):=(2|\sin(x_\alpha/2)|)_{\alpha=1}^d$. Then $b(x)^{1/2}=|F(x)|$.  Since each map \(x_\alpha\mapsto 2|\sin(x_\alpha/2)|\) is Lipschitz on \(\mathbb T\), we have
\begin{equation}
|F(x)-F(y)|\le C_d |x-y|_{\operatorname{per}}.
\end{equation}
for some $C_d > 0$. Using the reverse triangle inequality, we have $|b(x)^{1/2} - b(y)^{1/2}| \le C_d |x-y|_{\operatorname{per}}$. We consider two cases:
\begin{enumerate}
\item If \(0<s\le 1/2\), then \(0<2s\le 1\), and the map \(r\mapsto r^{2s}\) is \(2s\)-Hölder on \([0,\infty)\). Hence, we have
\begin{equation}
|a_s(x)-a_s(y)| = |(b(x)^{1/2})^{2s} - (b(y)^{1/2})^{2s}| \le |b(x)^{1/2}-b(y)^{1/2}|^{2s} \le C'_{s,d}|x-y|_{\operatorname{per}}^{2s},
\end{equation}
where $C'_{s,d} = C^{2s}_d$.

\item If \(1/2\le s<1\), then \(2s\ge 1\). Note that $b(x)^{1/2}\le 2\sqrt d$ and the map \(r\mapsto r^{2s}\) is Lipschitz on \([0,2\sqrt d]\). Therefore, we have
\begin{equation}
|a_s(x)-a_s(y)| = |(b(x)^{1/2})^{2s} - (b(y)^{1/2})^{2s}| \le C''_{s,d}|b(x)^{1/2}-b(y)^{1/2}| \le C'_{s,d}|x-y|_{\operatorname{per}},
\end{equation}
where $C'_{s,d} = C''_{s,d} C_d$
\end{enumerate}
Combining the two cases gives $|a_s(x)-a_s(y)| \le C'_{s,d}|x-y|_{\operatorname{per}}^{\beta_s}$, where $\beta_s=\min\{2s,1\}$ and $ C'_{s,d} > 0$ is independent of $N$. Taking \(x=2\pi p/M\) and \(y=2\pi q/M\) gives
\begin{align}
|\lambda_p^s-\lambda_q^s| &= M^{2s}\left|a_s\!\left(\frac{2\pi p}{M}\right)-a_s\!\left(\frac{2\pi q}{M}\right)\right|  \\ & \le (2\pi)^{\beta_s} C'_{s,d}M^{2s-\beta_s}|p-q|_{\mathrm{per}}^{\beta_s} \\ & := C_{s,d}M^{\theta_s}|p-q|_{\mathrm{per}}^{\beta_s} = C_{s,d}N^{\theta_s/d}|p-q|_{\mathrm{per}}^{\beta_s},
\end{align}
where $C_{s,d} = (2\pi)^{\beta_s} C'_{s,d}$. This completes the proof.
\end{proof}

\subsection{Proof of \cref{fractional-general-word-simple}}\label{proof-of-fractional-general-word-simple}

\begin{proof}
For $a\geq 0$, define the following weighted row-column norm in the Fourier basis:
\begin{equation}\label{eq:fourier-weighted-norm-current}
\|A\|_{a,\mathcal F} := \max\left\{ \max_{p\in\mathbb Z_M^d} \sum_{q\in\mathbb Z_M^d} \left(1+|p-q|_{\operatorname{per}}\right)^a |(F_NAF_N^\dagger)_{p,q}|, \max_{q\in\mathbb Z_M^d} \sum_{p\in\mathbb Z_M^d} \left(1+|p-q|_{\operatorname{per}}\right)^a |(F_NAF_N^\dagger)_{p,q}| \right\}.
\end{equation}
It can be checked that \cref{eq:fourier-weighted-norm-current} indeed defines a norm. For $a=0$, this norm controls the spectral norm. Indeed, since $F_N$ is unitary and $\|A\|\leq \sqrt{\|A\|_1\|A\|_\infty}$ for any matrix $A$, we have
\begin{equation}\label{eq:fourier-weighted-controls-spectral-current}
\|A\| = \|F_N A F_N^\dagger\| \leq \sqrt{\|F_N A F_N^\dagger\|_1\|F_N A F_N^\dagger\|_\infty} \leq \|A\|_{0,\mathcal F}.
\end{equation}
We first record the two commutator-type operator estimates used below.
\begin{enumerate}
\item Recall from \cref{eq:comm-with-Ls-fourier} that, for every matrix $A$, we have
\begin{equation}\label{eq:Ls-commutator-weighted-current}
(F_N\operatorname{ad}_{L_s}(A)F_N^\dagger )_{p,q} = (\lambda_p^s-\lambda_q^s)(F_NAF_N^\dagger)_{p,q}.
\end{equation}
Applying the weaker consequence of  \cref{lem:fractional-symbol-difference-simple} from \cref{eq:fractional-symbol-difference-weak}, we obtain
\begin{equation}\label{eq:Ls-weighted-bound-current}
\|\operatorname{ad}_{L_s}(A)\|_{a,\mathcal F} \leq C_{s,d}N^{\theta_s/d} \|A\|_{a+\beta_s,\mathcal F}.
\end{equation}
Therefore, $\operatorname{ad}_{L_s}$ maps the Fourier-weighted norm with exponent $a+\beta_s$ into the corresponding norm with exponent $a$, at the cost of the factor $C_{s,d}N^{\theta_s/d}$.

\item Next, since $F_NHF_N^\dagger$ has entries $\widehat V_{p-q}$, multiplication by $H$ becomes convolution in the Fourier basis. We use the submultiplicative weight estimate
\begin{equation}\label{eq:periodic-submultiplicative-weight}
1+|p-q|_{\operatorname{per}} \leq \bigl(1+|p-r|_{\operatorname{per}}\bigr) \bigl(1+|r-q|_{\operatorname{per}}\bigr),
\end{equation}
which follows from the triangle inequality for the periodic distance. We claim that
\begin{equation}\label{eq:H-product-weighted-current}
\|HA\|_{a,\mathcal F} \leq \mathcal V_a^{(N)}\|A\|_{a,\mathcal F}, \qquad \|AH\|_{a,\mathcal F} \leq \mathcal V_a^{(N)}\|A\|_{a,\mathcal F}.
\end{equation}
Indeed, we have
\begin{equation}
(F_NHAF_N^\dagger)_{p,q} = ((F_NHF_N^\dagger) (F_N AF_N^\dagger))_{p,q} = \sum_{r\in\mathbb Z_M^d}  \widehat V_{p-r}(F_NAF_N^\dagger)_{r,q}.
\end{equation}
Hence, by \cref{eq:periodic-submultiplicative-weight} we have
\begin{align}
&\sum_{q\in\mathbb Z_M^d} (1+|p-q|_{\operatorname{per}})^a |(F_NHAF_N^\dagger)_{p,q} | \\ &\leq \sum_{r\in\mathbb Z_M^d} (1+|p-r|_{\operatorname{per}})^a|\widehat V_{p-r}| \sum_{q\in\mathbb Z_M^d} (1+|r-q|_{\operatorname{per}})^a |(F_NAF_N^\dagger)_{r,q} | \leq \mathcal V_a^{(N)}\|A\|_{a,\mathcal F}.
\end{align}
Taking the maximum over $p$ gives the row estimate for $HA$. The column estimate is identical. The bound for $AH$ follows in the same way. Thus \cref{eq:H-product-weighted-current} holds. Consequently, we have
\begin{equation}\label{eq:H-weighted-bound-current}
\|\operatorname{ad}_{H}(A)\|_{a,\mathcal F} = \|[H,A]\|_{a,\mathcal F} \leq \|HA\|_{a,\mathcal F}+\|AH\|_{a,\mathcal F} \leq 2\mathcal V_a^{(N)}\|A\|_{a,\mathcal F}.
\end{equation}
Therefore, $\operatorname{ad}_H$ is bounded on each weighted Fourier norm and contributes no explicit $N^{\theta_s/d}$ factor.
\end{enumerate}
We now apply these two estimates to the nested word. Write $W_{j,\ell}(H,L_s) = [A_j,[A_{j-1},\dots,[A_2,A_1]\cdots]]$ for $A_i\in\{H,L_s\}$. For \(m=1,\dots,j-1\), define the partial commutator
\begin{equation}\label{eq:partial-word-def}
W^{(m)} := [A_{m+1},[A_m,\dots,[A_2,A_1]\cdots]].
\end{equation}
Equivalently, \(W^{(m)}=\operatorname{ad}_{A_{m+1}}(W^{(m-1)})\) for \(m=2,\dots,j-1\). Let \(n_m:=m+1\), \(\ell_m\) be the number of occurrences of \(L_s\) among \(A_1,\dots,A_{m+1}\), and let \(h_m:=n_m-\ell_m\) be the number of occurrences of \(H\) among the same operators. We prove by induction on \(m\) that, for every \(a\geq0\),
\begin{equation}\label{eq:induction-claim}
\|W^{(m)}\|_{a,\mathcal F} \leq 2^{h_m} C_{s,d}^{\ell_m} (\mathcal V_{a+n_m\beta_s}^{(N)} )^{h_m} N^{\ell_m\theta_s/d}.
\end{equation}
\begin{enumerate}
\item Let \(m=1\). We have $W^{(1)}=[A_2,A_1]$. If \(A_2=A_1\), then \(W^{(1)}=0\), so the estimate is immediate. Otherwise, \(W^{(1)} = \pm [L_s,H] = \pm \operatorname{ad}_{L_s}(H)\). Using \cref{eq:Ls-weighted-bound-current}, we obtain
\begin{equation}
\|W^{(1)}\|_{a,\mathcal F} \leq C_{s,d}N^{\theta_s/d} \|H\|_{a+\beta_s,\mathcal F}.
\end{equation}
Since \(F_NHF_N^\dagger\) has entries \(\widehat V_{p-q}\), we have
\begin{equation}
\|H\|_{a+\beta_s,\mathcal F} \leq \mathcal V_{a+\beta_s}^{(N)} \leq \mathcal V_{a+2\beta_s}^{(N)}.
\end{equation}
In the non-zero base case, \(n_1=2\), \(\ell_1=1\), and \(h_1=1\). Hence, we obtain \cref{eq:induction-claim}. This proves the base case.

\item Assume \cref{eq:induction-claim} holds for \(W^{(m-1)}\), where \(2\leq m\leq j-1\). Since $W^{(m)} = \operatorname{ad}_{A_{m+1}}(W^{(m-1)}),$ there are two cases.
\begin{enumerate}
\item If \(A_{m+1}=L_s\), then \cref{eq:Ls-weighted-bound-current} gives
\begin{align}
\|W^{(m)}\|_{a,\mathcal F} = \|\operatorname{ad}_{L_s}(W^{(m-1)})\|_{a,\mathcal F} \leq C_{s,d}N^{\theta_s/d} \|W^{(m-1)}\|_{a+\beta_s,\mathcal F}.
\end{align}
Applying the induction hypothesis to \(W^{(m-1)}\) with weight \(a+\beta_s\), we obtain
\begin{align}
\|W^{(m)}\|_{a,\mathcal F} &\leq C_{s,d}N^{\theta_s/d} 2^{h_{m-1}} C_{s,d}^{\ell_{m-1}} ( \mathcal V_{a+\beta_s+n_{m-1}\beta_s}^{(N)} )^{h_{m-1}} N^{\ell_{m-1}\theta_s/d}.
\end{align}
Because \(A_{m+1}=L_s\), we have $\ell_m=\ell_{m-1}+1, h_m=h_{m-1}$ and $n_m=n_{m-1}+1$. Thus, $a+\beta_s+n_{m-1}\beta_s=a+n_m\beta_s$. Hence, we have
\begin{equation}
\|W^{(m)}\|_{a,\mathcal F} \leq 2^{h_m} C_{s,d}^{\ell_m} ( \mathcal V_{a+n_m\beta_s}^{(N)})^{h_m} N^{\ell_m\theta_s/d}.
\end{equation}

\item If \(A_{m+1}=H\), then \cref{eq:H-weighted-bound-current} gives
\begin{align}
\|W^{(m)}\|_{a,\mathcal F} = \|\operatorname{ad}_{H}(W^{(m-1)})\|_{a,\mathcal F} \leq 2\mathcal V_a^{(N)} \|W^{(m-1)}\|_{a,\mathcal F}.
\end{align}
Applying the induction hypothesis gives
\begin{align}
\|W^{(m)}\|_{a,\mathcal F} &\leq 2\mathcal V_a^{(N)} 2^{h_{m-1}} C_{s,d}^{\ell_{m-1}} ( \mathcal V_{a+n_{m-1}\beta_s}^{(N)} )^{h_{m-1}} N^{\ell_{m-1}\theta_s/d}.
\end{align}
Because \(A_{m+1}=H\), we have $\ell_m=\ell_{m-1}$, $h_m=h_{m-1}+1$ and $n_m=n_{m-1}+1$. Moreover, \(\mathcal V_b^{(N)}\) is non-decreasing in \(b\). Therefore, we have $\mathcal V_a^{(N)} \leq \mathcal V_{a+n_m\beta_s}^{(N)}$ and $\mathcal V_{a+n_{m-1}\beta_s}^{(N)} \leq \mathcal V_{a+n_m\beta_s}^{(N)}$. Hence, it follows that
\begin{equation}
\|W^{(m)}\|_{a,\mathcal F} \leq 2^{h_m} C_{s,d}^{\ell_m} ( \mathcal V_{a+n_m\beta_s}^{(N)} )^{h_m} N^{\ell_m\theta_s/d}.
\end{equation}
This completes the induction.
\end{enumerate}
\end{enumerate}
Taking \(m=j-1\) and \(a=0\), we have $n_{j-1}=j$, $\ell_{j-1}=\ell$ and $h_{j-1}=j-\ell$. Thus \cref{eq:induction-claim} gives
\begin{equation}
\|W_{j,\ell}(H,L_s)\|_{0,\mathcal F} \leq 2^{j-\ell} C_{s,d}^{\ell} (\mathcal V_{j\beta_s}^{(N)} )^{j-\ell} N^{\ell\theta_s/d}.
\end{equation}
The claim follows since $\|W_{j,\ell}(H,L_s)\|\leq \|W_{j,\ell}(H,L_s)\|_{0,\mathcal F}$.
\end{proof}

\subsection{Proof of \cref{dirichlet-nested-commutator-bound-1d}}\label{proof-of-dirichlet-nested-commutator-bound-1d}

\begin{proof}
For \(0\leq p,q\leq r\), we have $(S^T)^pP_0S^q=e_pe_q^T$.
Indeed, since \(S=\sum_{j=0}^{M-1}e_je_{j+1}^T\), we have $Se_{j+1}=e_j$ and $S^T e_j=e_{j+1}$. Thus, $(S^T)^pe_0=e_p$. Also, we have $e_0^TS^q=e_q^T$. Since \(P_0=e_0e_0^T\), it follows that
\begin{equation}
(S^T)^pP_0S^q = (S^T)^p e_0 e_0^T S^q = e_pe_q^T.
\end{equation}
Similarly, we have $S^pP_M(S^T)^q=e_{M-p}e_{M-q}^T$. Each non-zero matrix of the form $e_pe_q^T$ has operator norm one. Therefore, for every representation \(X=\sum_\eta c_\eta E_\eta\) the triangle inequality gives $\|X\|\leq \sum_\eta |c_\eta|$. Taking the infimum over all such representations proves $\|X\|\leq \|X\|_{\mathcal B}$. We now show that for every \(X\in\mathcal B_r\), we have $[A_1,X]\in\mathcal B_{r+1}$, $[B_1,X]\in\mathcal B_{r+1}$. It suffices to check it on the generators of \(\mathcal B_r\). First consider $(S^T)^pP_0S^q=e_pe_q^T$ for some $0\leq p,q\leq r$. Since \(S e_k=e_{k-1}\) for \(1\leq k\leq M\) and \(S e_0=0\), while \(S^T e_k=e_{k+1}\) for \(0\leq k\leq M-1\) and \(S^T e_M=0\), we have
\begin{equation}
S e_p e_q^T = \begin{cases} e_{p-1}e_q^T, & p\geq1,\\ 0, & p=0, \end{cases} \quad \quad S^T e_p e_q^T = \begin{cases} e_{p+1}e_q^T, & p\leq M-1,\\ 0, & p=M. \end{cases}
\end{equation}
Similarly, using \(e_q^T S = e_{q+1}^T\) for \(q \leq M-1\), \(e_M^T S = 0\), \(e_q^T S^T = e_{q-1}^T\) for \(q \geq 1\), and \(e_0^T S^T = 0\), we obtain
\begin{equation}
e_p e_q^TS = \begin{cases} e_pe_{q+1}^T, & q\leq M-1,\\ 0, & q=M, \end{cases} \quad \quad e_p e_q^T S^T = \begin{cases} e_pe_{q-1}^T, & q\geq1,\\ 0, & q=0. \end{cases}
\end{equation}
Therefore each of $S e_pe_q^T, e_pe_q^TS,  S^Te_pe_q^T, e_pe_q^TS^T$ is either zero or a matrix unit \(e_{p'}e_{q'}^T\) with \(0 \leq p',q' \leq r+1\), and hence belongs to \(\mathcal B_{r+1}\). The case \(S^p P_M (S^T)^q = e_{M-p}e_{M-q}^T\) is similar. Since every \(X\in\mathcal B_r\) is a linear combination of the generators \((S^T)^pP_0S^q\) and \(S^p P_M (S^T)^q\), we have that $SX,\;XS,\;S^T X,\;XS^T\in\mathcal B_{r+1}$. Using the definitions of \(A_1\) and \(B_1\), we have
\begin{align}
[A_1,X] &= -SX - S^T X + XS + XS^T, \\ [B_1,X] &= SX - S^T X - XS + XS^T.
\end{align}
Hence, \([A_1,X]\) and \([B_1,X]\) belong to \(\mathcal B_{r+1}\) for each \(X \in \mathcal B_r\). If \(X=\sum_\eta c_\eta E_\eta\) is any representation of \(X\), then
\begin{equation}
SX=\sum_\eta c_\eta SE_\eta, \quad S^TX=\sum_\eta c_\eta S^TE_\eta, \quad XS=\sum_\eta c_\eta E_\eta S, \quad XS^T=\sum_\eta c_\eta E_\eta S^T.
\end{equation}
By the argument above, each \(SE_\eta,S^TE_\eta,E_\eta S,E_\eta S^T\) is either zero or in \(\mathcal B_{r+1}\). Hence each of the four shifted matrices admits a representation in $\mathcal{B}_{r+1}$ with coefficient cost at most \(\sum_\eta |c_\eta|\). Taking the infimum over all representations of \(X\) gives
\begin{equation}
\|SX\|_{\mathcal B},\ \|S^TX\|_{\mathcal B},\ \|XS\|_{\mathcal B},\ \|XS^T\|_{\mathcal B} \leq \|X\|_{\mathcal B}.
\end{equation}
Using the sub-additivity of \(\|\cdot\|_{\mathcal B}\) gives $\|[A_1,X]\|_{\mathcal B}\leq4\|X\|_{\mathcal B}$ and  $\|[B_1,X]\|_{\mathcal B}\leq4\|X\|_{\mathcal B}$. By \cref{lem:dirichlet-first-commutator-1d}, we have $[A_1,B_1]=2(P_0-P_M)$. Since \(P_0,P_M\in\mathcal B_0\), this implies $\|[A_1,B_1]\|_{\mathcal B}\leq4$ and  $\|[B_1,A_1]\|_{\mathcal B}\leq4$. Every non-zero right-nested mixed commutator of length \(j\geq2\) is obtained from \([A_1,B_1]\) or \([B_1,A_1]\) by \(j-2\) additional commutations with \(A_1\) or \(B_1\). Applying the argument above  inductively gives
\begin{equation}
\|W_j(A_1,B_1)\|_{\mathcal B}\leq4\cdot4^{j-2}=4^{j-1}.
\end{equation}
Hence, we have $\|W_j(A_1,B_1)\|\leq\|W_j(A_1,B_1)\|_{\mathcal B}\leq4^{j-1}$. This completes the proof.
\end{proof}

\section{Extension to Non-Homogeneous and Time-Dependent Cases}\label{sec:extensions}
In \cref{sec:alg-err-analysis} and \cref{complexity-analysis}, we analyzed the time-independent homogeneous case in detail. In this section, we explain how the same framework extends to more general cases, namely the non-homogeneous case in \cref{sec:non-homogeneous-extension} and the time-dependent case in \cref{sec:time-dependent-extension}.

\subsection{Non-Homogeneous Case}\label{sec:non-homogeneous-extension}
We first consider the time-independent non-homogeneous problem
\begin{equation}\label{eq:non-homogeneous-time-independent}
\frac{d}{dt}u(t)=-Au(t)+b(t).
\end{equation}
As before, write \(A=L+iH\), with \(L,H\) Hermitian and \(L\succeq 0\). By Duhamel's principle, the solution of \cref{eq:non-homogeneous-time-independent} is
\begin{equation}\label{eq:duhamel-time-independent}
u(T)=e^{-AT}u(0)+\int_0^T e^{-A(T-s)}b(s)\,ds.
\end{equation}
\cref{eq:duhamel-time-independent} shows that the non-homogeneous term is a superposition of homogeneous evolutions over times \(T-s\). Applying the LCHS representation to each homogeneous propagator in the Duhamel term gives the two-variable representation
\begin{equation}\label{eq:duhamel-LCHS-double-integral}
\int_0^T e^{-A(T-s)}b(s)\,ds \approx \frac{1}{\sqrt{2\pi}}\int_0^T\int_{\mathbb R}\hat f_{a,b}(k;c,d)U_k(T-s)b(s)\,dk\,ds.
\end{equation}
Thus, the non-homogeneous contribution involves two discretizations: the LCHS quadrature \(Q\) in the Fourier variable \(k\), and the time quadrature \(P\) in the Duhamel variable \(s\). This viewpoint is used below. For \(\tau\in[0,T]\), define the ideal and MPF-implemented post-quadrature operators by
\begin{equation}\label{eq:WQ-tau-ideal-mpf}
W_Q^{\operatorname{ideal}}(\tau):=\sum_{i\in\mathcal I_Q}v_iU_{k_i}(\tau), \qquad W_Q^{\operatorname{MPF}}(\tau):=\sum_{i\in\mathcal I_Q}v_iU^{(2m)}_{k_i}(\tau),
\end{equation}
The quantities \(\alpha_Q\), \(\Lambda_{m,Q}\), and \(\Phi_{m,Q}^{\ast}\) are defined as before. Let \(P=\{(s_p,\omega_p):p\in\mathcal I_P\}\) be a quadrature rule on \([0,T]\) for the integral in \cref{eq:duhamel-LCHS-double-integral}. We write
\begin{equation}
\tau_p:=T-s_p, \qquad B_P:=\sum_{p\in\mathcal I_P}|\omega_p|\,\|b(s_p)\|.
\end{equation}

We assume \(B_P<\infty\). The assumption \(b\in L^1([0,T])\) suffices for \cref{eq:duhamel-time-independent}, but does not define point values \(b(s_p)\). The quadrature implementation is therefore stated under the additional source-access assumption that the selected values \(b(s_p)\) are well-defined and accessible via a state-preparation oracles.   Discretizing \cref{eq:duhamel-LCHS-double-integral} by \(Q\) in \(k\) and \(P\) in \(s\), and then replacing each \(U_{k_i}(\tau_p)\) by its MPF approximation, gives
\begin{equation}\label{eq:duhamel-LCHS-MPF-double-sum}
\frac{1}{\sqrt{2\pi}}\int_0^T\int_{\mathbb R}\hat f_{a,b}(k;c,d)U_k(T-s)b(s)\,dk\,ds \approx \sum_{p\in\mathcal I_P}\sum_{i\in\mathcal I_Q}\omega_p v_iU^{(2m)}_{k_i}(\tau_p)b(s_p).
\end{equation}
Equivalently, the non-homogeneous LCHS--MPF approximation is
\begin{equation}\label{eq:non-homogeneous-LCHS-MPF-approx}
u_{Q,P}^{\operatorname{MPF}}(T):=W_Q^{\operatorname{MPF}}(T)u(0)+\sum_{p\in\mathcal I_P}\omega_pW_Q^{\operatorname{MPF}}(\tau_p)b(s_p).
\end{equation}
We state the error estimate in terms of two quadrature errors. First, define the uniform-in-time LCHS quadrature error by
\begin{equation}\label{eq:uniform-quadrature-error}
E_{\operatorname{quad}}^{[0,T]}(Q):=\sup_{0\leq \tau\leq T}\left\|\frac{1}{\sqrt{2\pi}}\int_{\mathbb R}\hat f_{a,b}(k;c,d)U_k(\tau)\,dk-W_Q^{\operatorname{ideal}}(\tau)\right\|.
\end{equation}
The quadrature estimates in \cref{quad-rules} also bound \(E_{\operatorname{quad}}^{[0,T]}(Q)\). Indeed, the strip estimates used there are monotone in the final time. Replacing \(T\) by any \(\tau\in[0,T]\) only decreases the factor involving \(T\|L\|\). Hence the same quadrature rule \(Q\), chosen using the endpoint time \(T\), satisfies
\begin{equation}
E_{\operatorname{quad}}^{[0,T]}(Q)\le E_{\operatorname{mesh}}(Q;T)+E_{\operatorname{tail}}(Q;T),
\end{equation}
with the right-hand side given by the corresponding estimates in \cref{trap-budget} or \cref{sinh-budget}. Second, define the time-quadrature error by
\begin{equation}\label{eq:time-quadrature-error}
E_{\operatorname{time}}(P):=\left\|\int_0^T e^{-A(T-s)}b(s)\,ds-\sum_{p\in\mathcal I_P}\omega_p e^{-A(T-s_p)}b(s_p)\right\|.
\end{equation}
The quantity \(E_{\operatorname{time}}(P)\) is kept abstract below. Once \(P\) and the regularity of the representative \(b\) are specified, it can be bounded by the corresponding quadrature estimate.

\begin{lem}\label{lem:uniform-propagator-error-nonhom}
Fix an admissible kernel profile \(\vec{\theta}=(a,b,c,d)\) and a quadrature rule \(Q\). Let \(T,\delta>0\), \(m,r,N\in\mathbb N\), and set \(\Delta=T/r\). Assume that \(0<\Delta\leq\Delta(R_Q,N,\delta)\). Then
\begin{align}\label{eq:uniform-propagator-error-nonhom}
\sup_{0\leq\tau\leq T}\left\|e^{-A\tau}-W_Q^{\operatorname{MPF}}(\tau)\right\| &\le E_{\operatorname{approx}}(y_0)+E_{\operatorname{quad}}^{[0,T]}(Q) \\ &\quad +(1+\eta_Q(T/r))^{r-1}\left(C_m\frac{T^{2m+1}}{r^{2m}}\Lambda_{m,Q}+r\alpha_QR_m(\delta)\right).
\label{eq:uniform-propagator-error-nonhom-second-eq}
\end{align}
\end{lem}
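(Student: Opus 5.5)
The plan is to redo the three-term triangle-inequality argument of \cref{post-quadrature-error-decomposition-general} for each fixed \(\tau\in[0,T]\), and then check that every bound used is uniform in \(\tau\). Fix \(\tau\in[0,T]\) and set \(W_\infty(\tau):=\frac{1}{\sqrt{2\pi}}\int_{\mathbb R}\hat f_{a,b}(k;c,d)U_k(\tau)\,dk\). The triangle inequality gives
\begin{equation}
\|e^{-A\tau}-W_Q^{\operatorname{MPF}}(\tau)\|\le \|e^{-A\tau}-W_\infty(\tau)\|+\|W_\infty(\tau)-W_Q^{\operatorname{ideal}}(\tau)\|+\|W_Q^{\operatorname{ideal}}(\tau)-W_Q^{\operatorname{MPF}}(\tau)\|.
\end{equation}
We bound each term separately and then take the supremum over \(\tau\).

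\emph{First term.} \cref{somma-lchs}, with \(R\to\infty\), bounds it by \(\frac{1}{\sqrt{2\pi}}\int_{\mathbb R}|\hat f(k-iy_0)|\,dk\). This quantity depends only on the kernel and \(y_0\), not on the time, so it is bounded by \(E_{\operatorname{approx}}(y_0)\) for every \(\tau\in[0,T]\).

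\emph{Second term.} This is at most \(E_{\operatorname{quad}}^{[0,T]}(Q)\) directly from the definition \cref{eq:uniform-quadrature-error}.

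\emph{Third term.} I would fix the MPF discretization for time \(\tau\) as follows: use the same number of steps \(r\) and take step size \(\Delta_\tau:=\tau/r\le\Delta\le\Delta(R_Q,N,\delta)\). This keeps the step-size hypothesis of \cref{abstract-time-indep-mpf-revised} valid. The telescoping argument of \cref{discrete-mpf-error} then gives
\begin{equation}
\|W_Q^{\operatorname{ideal}}(\tau)-W_Q^{\operatorname{MPF}}(\tau)\|\le r\sum_{i}|v_i|\,\eta_{k_i}(\Delta_\tau)\bigl(1+\eta_{k_i}(\Delta_\tau)\bigr)^{r-1}.
\end{equation}
Each \(\eta_{k_i}(\Delta_\tau)=C_m\Delta_\tau^{2m+1}\Phi_m(k_i,N)+R_m(\delta)\) is nondecreasing in \(\Delta_\tau\), so it can be replaced by \(\eta_{k_i}(\Delta)\) and then by \(\eta_Q(T/r)\). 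Using \(\tau\le T\), this yields exactly the bound \((1+\eta_Q(T/r))^{r-1}\bigl(C_mT^{2m+1}r^{-2m}\Lambda_{m,Q}+r\alpha_QR_m(\delta)\bigr)\).

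The step I expect to be the main obstacle is confirming that the MPF used at time \(\tau\) is the one just described, namely \(r\) steps of size \(\tau/r\), and not steps of the fixed size \(T/r\) run for a number of steps depending on \(\tau\). If the paper's construction instead uses the fixed size \(\Delta=T/r\) with \(\lceil\tau/\Delta\rceil\) steps, the argument still goes through. That step count is at most \(r\), and a possible final partial step of length at most \(\Delta\) is again admissible, so the same monotonicity argument applies. Since all three bounds are independent of \(\tau\), taking the supremum over \(\tau\in[0,T]\) proves \cref{eq:uniform-propagator-error-nonhom}.
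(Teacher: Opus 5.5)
Your proposal is correct and follows the paper's proof essentially verbatim. It applies the triangle inequality at each fixed $\tau$ and bounds the quadrature term by the uniform error $E_{\operatorname{quad}}^{[0,T]}(Q)$ by definition. For the MPF term it uses $r$ steps of size $\tau/r\le T/r$, together with monotonicity of $\eta_{k_i}$ in the step size and $\tau\le T$.

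The one thing to phrase carefully is the first term. What is uniform in $\tau$ is the Low--Somma integral $\frac{1}{\sqrt{2\pi}}\int_{\mathbb R}|\hat f(k-iy_0)|\,dk$. So $E_{\operatorname{approx}}(y_0)$ must be read as that bound, as the paper implicitly does. Read literally as the time-$T$ error, it is only bounded above by that integral, and your step would not follow.
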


\begin{proof}
The proof is the same as that of \cref{post-quadrature-error-decomposition-general}, applied uniformly for \(\tau\in[0,T]\). The approximation and quadrature errors are bounded by \(E_{\operatorname{approx}}(y_0)\) and \(E_{\operatorname{quad}}^{[0,T]}(Q)\), respectively. For the MPF term, the step size over time \(\tau\) is \(\tau/r\), so
\begin{equation}
\tau/r\leq T/r\leq \Delta(R_Q,N,\delta).
\end{equation}
Since \(\tau\leq T\), the principal MPF error term is bounded by the corresponding expression with \(T\) in place of \(\tau\). This proves \cref{eq:uniform-propagator-error-nonhom,eq:uniform-propagator-error-nonhom-second-eq}.
\end{proof}

We now combine the three error sources in the non-homogeneous construction. The homogeneous propagator error is needed uniformly for all \(\tau\in[0,T]\), since the source term contains propagators \(e^{-A(T-s)}\). \cref{prop:non-homogeneous-error-bound} separates this uniform LCHS--MPF error from the additional time-quadrature error used to discretize the source term.

\begin{prop}\label{prop:non-homogeneous-error-bound}
Let \(u(T)\) be the solution of \cref{eq:non-homogeneous-time-independent}, and let \(u_{Q,P}^{\operatorname{MPF}}(T)\) be defined by \cref{eq:non-homogeneous-LCHS-MPF-approx}. Then
\begin{equation}\label{eq:non-homogeneous-error-bound}
\left\|u(T)-u_{Q,P}^{\operatorname{MPF}}(T)\right\| \leq E_{\operatorname{time}}(P)+\mathcal E_Q^{[0,T]}(r,\delta)\left(\|u(0)\|+B_P\right),
\end{equation}
where
\begin{align}\label{eq:uniform-EQ-def}
\mathcal E_Q^{[0,T]}(r,\delta) := E_{\operatorname{approx}}(y_0)+E_{\operatorname{quad}}^{[0,T]}(Q)+(1+\eta_Q(T/r))^{r-1}\left(C_m\frac{T^{2m+1}}{r^{2m}}\Lambda_{m,Q}+r\alpha_QR_m(\delta)\right).
\end{align}
\end{prop}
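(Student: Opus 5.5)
The plan is a single triangle inequality built on the Duhamel formula \cref{eq:duhamel-time-independent}. I insert the intermediate quantity
\begin{equation}
u_P(T):=e^{-AT}u(0)+\sum_{p\in\mathcal I_P}\omega_p e^{-A\tau_p}b(s_p),
\end{equation}
which uses the exact propagators but the discretized source.

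The error then splits into two pieces. The first is \(u(T)-u_P(T)\). The homogeneous parts cancel, and what remains is exactly the difference bounded by \(E_{\operatorname{time}}(P)\) in \cref{eq:time-quadrature-error}. So its norm is at most \(E_{\operatorname{time}}(P)\).

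The second piece is \(u_P(T)-u_{Q,P}^{\operatorname{MPF}}(T)\). By \cref{eq:non-homogeneous-LCHS-MPF-approx}, it equals
\begin{equation}
\bigl(e^{-AT}-W_Q^{\operatorname{MPF}}(T)\bigr)u(0)+\sum_{p\in\mathcal I_P}\omega_p\bigl(e^{-A\tau_p}-W_Q^{\operatorname{MPF}}(\tau_p)\bigr)b(s_p).
\end{equation}
Each operator difference is evaluated at a time in \([0,T]\): either \(T\) itself or \(\tau_p=T-s_p\), which lies in \([0,T]\) because \(s_p\in[0,T]\). Each is therefore bounded by \(\sup_{0\le\tau\le T}\|e^{-A\tau}-W_Q^{\operatorname{MPF}}(\tau)\|\). \cref{lem:uniform-propagator-error-nonhom} bounds this supremum by \(\mathcal E_Q^{[0,T]}(r,\delta)\), under the step-size hypothesis \(T/r\le\Delta(R_Q,N,\delta)\), which is implicitly assumed in the proposition.

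Applying the triangle inequality and submultiplicativity to the second piece gives the bound \(\mathcal E_Q^{[0,T]}(r,\delta)\bigl(\|u(0)\|+\sum_p|\omega_p|\|b(s_p)\|\bigr)=\mathcal E_Q^{[0,T]}(r,\delta)(\|u(0)\|+B_P)\). Adding the first piece yields \cref{eq:non-homogeneous-error-bound}.

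The only delicate point is the uniformity in \(\tau\) inside \cref{lem:uniform-propagator-error-nonhom}. The MPF for time \(\tau\) uses \(r\) steps of size \(\tau/r\le T/r\). This keeps both the admissibility condition and the bound \(\tau^{2m+1}\le T^{2m+1}\) valid. For this reason I would state the hypotheses on \(r,\delta\) explicitly and check that the same \(r\) is used for every \(\tau_p\). Everything else is bookkeeping.
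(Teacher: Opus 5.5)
Your proposal is correct and follows essentially the same route as the paper: the paper writes $u(T)-u_{Q,P}^{\operatorname{MPF}}(T)$ as the same three terms (homogeneous MPF error, time-quadrature error, and the sum of propagator errors at the times $\tau_p$), bounds the middle term by $E_{\operatorname{time}}(P)$, and bounds the other two via \cref{lem:uniform-propagator-error-nonhom}. You are also right that the step-size condition $T/r\le\Delta(R_Q,N,\delta)$ is inherited implicitly from that lemma.
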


\begin{proof}
Using Duhamel's formula, \cref{eq:duhamel-time-independent}, and the definition of \(u_{Q,P}^{\operatorname{MPF}}(T)\), we write
\begin{align}
u(T)-u_{Q,P}^{\operatorname{MPF}}(T) &= \left(e^{-AT}-W_Q^{\operatorname{MPF}}(T)\right)u(0) \\ &\quad+\left(\int_0^T e^{-A(T-s)}b(s)\,ds-\sum_{p\in\mathcal I_P}\omega_p e^{-A(T-s_p)}b(s_p)\right) \\ &\quad+\sum_{p\in\mathcal I_P}\omega_p\left(e^{-A\tau_p}-W_Q^{\operatorname{MPF}}(\tau_p)\right)b(s_p).
\end{align}
Taking norms, using \cref{eq:time-quadrature-error}, and applying \cref{lem:uniform-propagator-error-nonhom} to \(T\) and to each \(\tau_p\in[0,T]\), we obtain
\begin{align}
\left\|u(T)-u_{Q,P}^{\operatorname{MPF}}(T)\right\| &\leq \mathcal E_Q^{[0,T]}(r,\delta)\|u(0)\|+E_{\operatorname{time}}(P)+\mathcal E_Q^{[0,T]}(r,\delta)\sum_{p\in\mathcal I_P}|\omega_p|\,\|b(s_p)\|.
\end{align}
This proves \cref{eq:non-homogeneous-error-bound}.
\end{proof}

\cref{prop:non-homogeneous-error-bound} gives an abstract error decomposition. We next choose the time-quadrature, approximation, and post-quadrature error budgets so that their total contribution is at most the target accuracy \(\epsilon\). This gives the fixed-order non-homogeneous analogue of \cref{abstract-fixed-order-arbitrary-quadrature}.

\begin{cor}\label{cor:non-homogeneous-fixed-order}
Let \(\epsilon>0\). Choose \(\epsilon_{\operatorname{time}},\epsilon_{\operatorname{approx}}\), and \(\epsilon_{\operatorname{comb}}>0\) such that
\begin{equation}\label{eq:non-homogeneous-budget}
\epsilon_{\operatorname{time}}+\left(\|u(0)\|+B_P\right)\left(\epsilon_{\operatorname{approx}}+\epsilon_{\operatorname{comb}}\right)\leq \epsilon.
\end{equation}
Fix \(P\) and assume that \(E_{\operatorname{time}}(P)\leq \epsilon_{\operatorname{time}}\). Choose an admissible kernel satisfying \(E_{\operatorname{approx}}(y_0)\leq\epsilon_{\operatorname{approx}}\). Suppose that \(Q\) satisfies \(E_{\operatorname{quad}}^{[0,T]}(Q)<\epsilon_{\operatorname{comb}}\), and define \(\epsilon_Q:=\epsilon_{\operatorname{comb}}-E_{\operatorname{quad}}^{[0,T]}(Q)>0\). If \(r=r_Q(\delta)\) is chosen as in \cref{abstract-rQ-delta-def}, with this value of \(\epsilon_Q\), and if the admissibility conditions
\begin{equation}\label{eq:nonhom-admissibility-conditions}
r_Q(\delta)R_m(\delta)\leq\frac12, \qquad r_Q(\delta)\alpha_QR_m(\delta)\leq\frac{\epsilon_Q}{2e}
\end{equation}
hold, then \(\|u(T)-u_{Q,P}^{\operatorname{MPF}}(T)\|\leq\epsilon\).
\end{cor}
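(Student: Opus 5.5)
The plan is to reduce the claim to \cref{prop:non-homogeneous-error-bound} by showing $\mathcal E_Q^{[0,T]}(r,\delta)\le \epsilon_{\operatorname{approx}}+\epsilon_{\operatorname{comb}}$ for the stated choice $r=r_Q(\delta)$. The final inequality then follows by combining this with $E_{\operatorname{time}}(P)\le\epsilon_{\operatorname{time}}$ and the budget \cref{eq:non-homogeneous-budget}.

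First, I verify the parameter bounds that come from the definition of $r_Q(\delta)$ in \cref{abstract-rQ-delta-def}, with $\epsilon_Q:=\epsilon_{\operatorname{comb}}-E_{\operatorname{quad}}^{[0,T]}(Q)$. Exactly as in the proof of \cref{abstract-fixed-order-arbitrary-quadrature}, the three entries of the maximum give
\[
T/r\le\Delta(R_Q,N,\delta),\qquad
C_mT^{2m+1}\Phi_{m,Q}^\ast/r^{2m}\le \tfrac12,\qquad
C_mT^{2m+1}\Lambda_{m,Q}/r^{2m}\le \epsilon_Q/(2e).
\]
The first bound is the step-size hypothesis of \cref{lem:uniform-propagator-error-nonhom}, so that lemma, and hence \cref{prop:non-homogeneous-error-bound}, applies.

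Second, I bound the MPF term. Combining the second bound with the first condition in \cref{eq:nonhom-admissibility-conditions} gives
\[
r\eta_Q(T/r)=C_mT^{2m+1}\Phi_{m,Q}^\ast/r^{2m}+rR_m(\delta)\le 1,
\]
so $(1+\eta_Q(T/r))^{r-1}\le e^{r\eta_Q}\le e$. The third bound and the second condition in \cref{eq:nonhom-admissibility-conditions} then give
\[
(1+\eta_Q(T/r))^{r-1}\Bigl(C_m\tfrac{T^{2m+1}}{r^{2m}}\Lambda_{m,Q}+r\alpha_QR_m(\delta)\Bigr)\le e\Bigl(\tfrac{\epsilon_Q}{2e}+\tfrac{\epsilon_Q}{2e}\Bigr)=\epsilon_Q .
\]
Adding $E_{\operatorname{approx}}(y_0)\le\epsilon_{\operatorname{approx}}$ and $E_{\operatorname{quad}}^{[0,T]}(Q)$ to this, the definition of $\epsilon_Q$ yields $\mathcal E_Q^{[0,T]}(r,\delta)\le\epsilon_{\operatorname{approx}}+\epsilon_{\operatorname{comb}}$.

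Finally, substituting into \cref{eq:non-homogeneous-error-bound} gives
\[
\|u(T)-u_{Q,P}^{\operatorname{MPF}}(T)\|\le\epsilon_{\operatorname{time}}+(\|u(0)\|+B_P)(\epsilon_{\operatorname{approx}}+\epsilon_{\operatorname{comb}})\le\epsilon .
\]

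The main point needing care is that one uniform bound must cover every intermediate time $\tau_p$, not just the endpoint $T$. This is already handled in \cref{lem:uniform-propagator-error-nonhom}: the step $\tau/r\le T/r$ satisfies the step-size constraint, and the MPF error is monotone in the evolution time, so the endpoint choice of $r$ controls all $\tau_p\in[0,T]$.
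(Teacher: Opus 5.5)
Your proposal is correct and is essentially the paper's proof. The paper also obtains $\mathcal E_Q^{[0,T]}(r_Q(\delta),\delta)\le\epsilon_{\operatorname{approx}}+\epsilon_{\operatorname{comb}}$ by repeating the argument of \cref{abstract-fixed-order-arbitrary-quadrature}, with $E_{\operatorname{quad}}^{[0,T]}(Q)$ in place of $E_{\operatorname{quad}}(Q)$, and then combines this with \cref{prop:non-homogeneous-error-bound} and the budget \cref{eq:non-homogeneous-budget}. You have written out the steps the paper compresses into ``the same argument,'' and you correctly note that uniformity over the intermediate times $\tau_p$ is already supplied by \cref{lem:uniform-propagator-error-nonhom}.
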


\begin{proof}
For \(r=r_Q(\delta)\), the same argument used in \cref{abstract-fixed-order-arbitrary-quadrature} gives \(\mathcal E_Q^{[0,T]}(r_Q(\delta),\delta)\leq \epsilon_{\operatorname{approx}}+\epsilon_{\operatorname{comb}}\). Combining this bound with \cref{prop:non-homogeneous-error-bound} and \cref{eq:non-homogeneous-budget} proves the claim.
\end{proof}

We now describe the implementation cost. Assume state-preparation oracles for \(\ket{\widetilde u(0)}:=u(0)/\|u(0)\|\), when \(u(0)\neq0\), and for \(\ket{\widetilde b(s_p)}:=b(s_p)/\|b(s_p)\|\), for each \(p\in\mathcal I_P\) with \(b(s_p)\neq0\). Terms with zero vector coefficients are omitted from the outer LCU, and the phases of \(\omega_p\) are included in the outer preparation. Let \(\beta_P:=\|u(0)\|+B_P\). Define the source-preparation oracle
\begin{equation}
\operatorname{PREP}^{P}\ket 0 = \frac{1}{\sqrt{\beta_P}}\left(\sqrt{\|u(0)\|}\ket{\star}+\sum_{p\in\mathcal I_P}\sqrt{|\omega_p|\,\|b(s_p)\|}\ket p\right),
\end{equation}
with the phases of \(\omega_p\) included in the corresponding left-preparation state. 
The vector in \cref{eq:non-homogeneous-LCHS-MPF-approx} is then implemented by an outer LCU over \(\{\star\}\cup\mathcal I_P\), with normalization \(\beta_P\). Since each term applies a block-encoding of \(W_Q^{\operatorname{MPF}}(\tau)\) with normalization \(\alpha_Q\), the total normalization is \(\alpha_Q\beta_P\). Thus, the amplitude amplification factor is $\chi_{Q,P}(u(0),b):=\frac{\alpha_Q\beta_P}{\|u_{Q,P}^{\operatorname{MPF}}(T)\|}$. \cref{cor:non-homogeneous-complexity} gives the complexity estimates.

\begin{cor}\label{cor:non-homogeneous-complexity}
Under the assumptions of \cref{cor:non-homogeneous-fixed-order}, the normalized state proportional to \(u_{Q,P}^{\operatorname{MPF}}(T)\) can be prepared using
\begin{equation}
\mathcal Q_{\operatorname{state}}=\mathcal O\!\left(\chi_{Q,P}(u(0),b)\,r_Q(\delta)K_m\right)
\end{equation}
controlled second-order product-formula queries, together with \(\mathcal O(\chi_{Q,P}(u(0),b))\) calls to the data-preparation oracles for \(\ket{\widetilde u(0)}\), \(\ket{\widetilde b(s_p)}\), and their inverses, as well as to \(\operatorname{PREP}^{P}\), \(\operatorname{PREP}_{\operatorname L}^{Q}\), and \(\operatorname{PREP}_{\operatorname R}^{Q}\). Moreover, under the optimized-order assumptions of \cref{abstract-optimized-order-arbitrary-quadrature}, we have
\begin{align}
\mathcal Q_{\operatorname{state}} = \mathcal O\left(\chi_{Q,P}(u(0),b)\left(1+\mu_QF(m)\max\{1,T\}\right)\left(\log\left(e+\frac{T}{\epsilon_Q}\right)\right)^2\left(\log\log\left(e^e+\frac{T}{\epsilon_Q}\right)\right)^2\right).
\label{eq:non-homogeneous-optimized-complexity}
\end{align}
\end{cor}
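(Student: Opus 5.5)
The first claim is a direct accounting over the outer LCU of \cref{eq:non-homogeneous-LCHS-MPF-approx}; the second is a substitution of the optimized-order bound on \(r_Q(\delta_m)K_m\) from \cref{abstract-optimized-order-arbitrary-quadrature}.

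\emph{Block-encoding of the combined operator.} Label terms by \(\{\star\}\cup\mathcal I_P\), with \(\tau_\star:=T\) and source vector \(u(0)\) for \(\star\), and \(\tau_p\), \(b(s_p)\) otherwise. Controlled on the source register, I apply the data-preparation oracle for \(\ket{\widetilde u(0)}\) or \(\ket{\widetilde b(s_p)}\). I then apply a block-encoding of \(W_Q^{\operatorname{MPF}}(\tau)/\alpha_Q\), with \(\tau\) chosen by the label.

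That block-encoding is the construction of \cref{implem} with \(T\) replaced by \(\tau\). Each branch runs \(r\) MPF steps of size \(\tau/r\le T/r\), so a controlled \(\operatorname{SELECT}_Q\) costs \(\mathcal O(r_Q(\delta)K_m)\) second-order queries uniformly in the label. Branches of different length can be padded to the same depth with identities.

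Sandwiching between \(\operatorname{PREP}^P\) and the phase-carrying left preparation yields, in the \(\ket{0}\) ancilla block, the vector \(u_{Q,P}^{\operatorname{MPF}}(T)/(\alpha_Q\beta_P)\). This follows because the amplitudes multiply to \(\omega_p\|b(s_p)\|\cdot v_i\) and \(\|u(0)\|\cdot v_i\). One round therefore costs \(\mathcal O(r_Q(\delta)K_m)\) product-formula queries and \(\mathcal O(1)\) calls to each preparation oracle.

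\emph{Amplification and the fixed-order bound.} Fixed-point (or standard) amplitude amplification on a success amplitude of \(\|u_{Q,P}^{\operatorname{MPF}}(T)\|/(\alpha_Q\beta_P)=1/\chi_{Q,P}\) needs \(\mathcal O(\chi_{Q,P})\) rounds. Multiplying the per-round cost by the number of rounds gives the first bound together with the oracle counts.

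\emph{Optimized order.} Here I take \(m\), \(\delta_m\) and \(r=r_Q(\delta_m)\) as in \cref{abstract-optimized-order-arbitrary-quadrature}. That proof showed \(r_Q(\delta_m)=\mathcal O(1+\mu_QF(m)\max\{1,T\})\) and \(K_m=\mathcal O(\log^2(e+T/\epsilon_Q)\,(\log\log(e^e+T/\epsilon_Q))^2)\). Error correctness still holds because \cref{cor:non-homogeneous-fixed-order} only needs the same admissibility conditions, which are assumed in \cref{abstract-optimized-admissibility}. Substituting these bounds into \(\chi_{Q,P}r_Q(\delta_m)K_m\) gives \cref{eq:non-homogeneous-optimized-complexity}.

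\emph{Main obstacle.} The delicate step is the first one: ensuring that time-dependent branch durations \(\tau_p\) do not raise the per-branch cost above \(r_Q(\delta)K_m\). A related point is that the inner robust oblivious amplitude amplification at each MPF step stays valid for shorter steps. Both follow because the step-size condition \(\tau/r\le\Delta(R_Q,N,\delta)\) and the error bounds were already verified uniformly in \cref{lem:uniform-propagator-error-nonhom}.
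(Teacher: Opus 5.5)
Your proposal is correct and follows the paper's own proof: an outer LCU over $\{\star\}\cup\mathcal I_P$ wrapped around the nested block-encoding of \cref{implem}, at $\mathcal O(r_Q(\delta)K_m)$ queries per application, followed by $\mathcal O(\chi_{Q,P}(u(0),b))$ rounds of amplitude amplification, and then substitution of the optimized bounds on $r_Q(\delta_m)$ and $K_m$ from \cref{abstract-optimized-order-arbitrary-quadrature}. Your remarks on the $\tau$-dependent branches make explicit what the paper leaves implicit: each branch still uses $r$ steps, and the step-size condition $\tau/r\le T/r$ is already covered by \cref{lem:uniform-propagator-error-nonhom}.
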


\begin{proof}
The implementation uses the nested LCU construction from \cref{implem}, with an additional LCU combining the initial-condition and discretized Duhamel terms. One block-encoding costs \(\mathcal O(r_Q(\delta)K_m)\) controlled second-order product-formula queries, and amplitude amplification uses \(\mathcal O(\chi_{Q,P}(u(0),b))\) such applications. The optimized estimates follow by substituting the optimized bound for \(r_Q(\delta_m)K_m\) from \cref{abstract-optimized-order-arbitrary-quadrature}. The factor \(F(m)\) is inherited from \cref{F-cond}.
\end{proof}

\begin{remark}\label{rem:nonhom-normalized-complexity-L1}
Let \(u(T)\) denote the exact solution of \cref{eq:non-homogeneous-time-independent}. If $E_{\operatorname{time}}(P)+\mathcal E_Q^{[0,T]}(r,\delta)\beta_P\leq \frac{\|u(T)\|}{2}$, then
\begin{equation}
\|u_{Q,P}^{\operatorname{MPF}}(T)\|\geq \|u(T)\|-E_{\operatorname{time}}(P)-\mathcal E_Q^{[0,T]}(r,\delta)\beta_P\geq \frac{\|u(T)\|}{2}.
\end{equation}
Consequently, \(\chi_{Q,P}(u(0),b)\leq 2\alpha_Q\beta_P/\|u(T)\|\), and the state query complexity is $\mathcal O\left(\frac{\alpha_Q\beta_P}{\|u(T)\|}\,r_Q(\delta)K_m\right)$.  If the time-quadrature rule satisfies the source-normalization estimate
\begin{equation}\label{eq:source-normalization-stability}
B_P=\sum_{p\in\mathcal I_P}|\omega_p|\,\|b(s_p)\|\leq C_{\operatorname{src}}\|b\|_{L^1[0,T]},
\end{equation}
with \(C_{\operatorname{src}} > 0\) independent of the target precision, then \(\beta_P\leq \|u(0)\|+C_{\operatorname{src}}\|b\|_{L^1[0,T]}\). In particular, when \(C_{\operatorname{src}}=\mathcal O(1)\), the normalized state-preparation cost has the usual non-homogeneous normalization factor \((\|u(0)\|+\|b\|_{L^1[0,T]})/\|u(T)\|\), up to the LCHS normalization \(\alpha_Q\) and the MPF simulation cost. This is the standard form in which normalized-state query complexity is presented in the literature.
\end{remark}

We see that no new commutator quantities arise. The source \(b(t)\) affects only source-state preparation, the time-quadrature error \(E_{\operatorname{time}}(P)\), and the normalization \(B_P\). The post-quadrature quantities \(R_Q\), \(\alpha_Q\), \(\Lambda_{m,Q}\), \(\Phi_{m,Q}^{\ast}\), \(r_Q\), \(\mu_Q\) are inherited from the homogeneous analysis.

\subsection{Time-Dependent Case}\label{sec:time-dependent-extension}
We now extend the analysis to the time-dependent case for the special case local and extensive Hamiltonians. For simplicity, we consider the homogeneous problem.
\begin{equation}\label{eq:td-homogeneous-problem}
\frac{d}{dt}u(t) = -A(t)u(t), \qquad A(t) = L(t) + iH(t),
\end{equation}
where, for every \(t \in [0,T]\), the matrices \(L(t)\) and \(H(t)\) are Hermitian and \(L(t) \succeq 0\). The extension to the non-homogeneous case follows as in \cref{sec:non-homogeneous-extension}. Let $U_A(s,t) := \mathcal T \exp (-\int_s^t A(\tau) d\tau )$ denote the non-unitary propagator from \(s\) to \(t\). As before, for \(k \in \mathbb R\), define $G_k(t) := H(t) + kL(t)$ and let
\begin{equation}
U_k(s,t) := \mathcal T \exp \left(-i\int_s^t G_k(\tau)d\tau \right).
\end{equation}
The approximation and truncation estimates from \cref{approx-error-time-indept,trunc-error} continue to apply after replacing \(e^{-AT}\) by \(U_A(0,T)\) and \(U_k(T)\) by \(U_k(0,T)\). In particular, for the kernel profile \(\vec{\theta}=(a,b,c,d)\), the same admissible choices of \(d\) and \(y_0\) from \cref{feasible-kernel-approx,feasible-kernel-profile-explicit} continue to hold. Motivated by \cref{prop:mizuta-time-dep-short}, define
\begin{equation}
C_m:=4^{2m+1}, \quad R_m(\delta):=\|\vec a\|_1\|\vec b\|_1\delta, \quad \Phi_m(k,N,\delta) := \mu_{k,p_0}^{2m+1}, \quad \mu_{R,p_0}:=\sup_{|k|\le R}\mu_{k,p_0},
\end{equation}
where \(\mu_{k,p_0}\) denotes the quantity \(\mu_{p_0}\) from \cref{prop:mizuta-time-dep-short} applied to the Hamiltonian \(G_k(t)\), \(p_0\) is defined as in \cref{prop:mizuta-time-dep-short}, and \(\|\vec a\|_1\) and \(\|\vec b\|_1\) are the $1$-norms of the MPF coefficient vectors discussed above. Throughout this subsection, we assume that the family \(\{G_k(t): |k|\le R\}\) admits a decomposition into \(q_R\)-local, \(g_R\)-extensive terms, and that the derivative bound in \cref{prop:mizuta-time-dep-short} holds with a constant \(f_R\), uniformly in \(k\) and in the initial time of the sub-interval.  Define
\begin{equation}\label{eq:td-admissible-step-size}
\Delta(R,N,\delta):=\min\left\{\frac{1}{8e^3p_0(2q_Rg_R+2\Gamma f_R)}, \frac{1}{8\mu_{R,p_0}} \right\}.
\end{equation}
We first record the pointwise short-time estimate obtained from \cref{prop:mizuta-time-dep-short}.

\begin{lem}\label{lem:td-pointwise-short-time-step}
Fix \(R>0\), \(m,N\ge1\), and \(\delta\in(0,1)\).  Let
\(0<\Delta\le \Delta(R,N,\delta)\).  For \(|k|\le R\) and
\(s\in[0,T-\Delta]\), let \(\widetilde U_{k,s}^{(2m)}\) denote the order-\(2m\)
time-dependent MPF approximation to \(U_k(s,s+\Delta)\).  Then
\begin{equation}\label{eq:td-pointwise-short-time-step}
\| \widetilde U_{k,s}^{(2m)}-U_k(s,s+\Delta) \| \le \|\vec a\|_1 C_m\Delta^{2m+1}\Phi_m(k,N,\delta) + R_m(\delta).
\end{equation}
\end{lem}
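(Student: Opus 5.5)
The plan is to obtain \cref{lem:td-pointwise-short-time-step} as a direct specialization of \cref{prop:mizuta-time-dep-short}. The step is applied to the shifted Hamiltonian $\tau\mapsto G_k(s+\tau)$ on $[0,\Delta]$, so that $U_k(s,s+\Delta)$ becomes the propagator $U(\Delta)$ of that proposition.

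Fix $|k|\le R$ and $s\in[0,T-\Delta]$, and set $\widetilde G(\tau):=G_k(s+\tau)=H(s+\tau)+kL(s+\tau)$ for $\tau\in[0,\Delta]$. By the standing assumption of this subsection, $\widetilde G$ is smooth. It is $q_R$-local and $g_R$-extensive, and it satisfies the derivative bound with constant $f_R$, uniformly in $k$ and in the initial time $s$. The time-dependent MPF $\widetilde U^{(2m)}_{k,s}$ is, by definition, the formula \cref{time-dep-2MPF} built from $\widetilde G$ on $[0,\Delta]$. Its exact counterpart is $\mathcal T\exp(-i\int_0^\Delta\widetilde G)=U_k(s,s+\Delta)$; this identity is just time-translation of the time-ordered exponential.

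Next I check the step-size hypothesis \cref{eq:td-mizuta-step-condition-p2}. The quantity $\mu_{p_0}$ for $\widetilde G$ is a supremum over $\tau\in[0,\Delta]$ of nested $D_\gamma$-commutator norms, so it equals the corresponding supremum of $G_k$ over $[s,s+\Delta]$. I interpret $\mu_{k,p_0}$ as a supremum over all sub-intervals, and this is precisely the uniformity assumed in the subsection. With that reading, $\mu_{p_0}(\widetilde G)\le\mu_{k,p_0}\le\mu_{R,p_0}$. Also $2q_Rg_R+2\Gamma f_R$ dominates the corresponding constant of $\widetilde G$. Hence $\Delta\le\Delta(R,N,\delta)$ from \cref{eq:td-admissible-step-size} implies \cref{eq:td-mizuta-step-condition-p2}, with $p_0=\lceil\log(2N/\delta)\rceil$ playing the role of $p_0(N,\epsilon)$ for $\epsilon=\delta\in(0,1)$.

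\cref{prop:mizuta-time-dep-short} then gives
\[
\|\widetilde U^{(2m)}_{k,s}-U_k(s,s+\Delta)\|\le\|\vec a\|_1(4\mu_{p_0}(\widetilde G)\Delta)^{2m+1}+\|\vec a\|_1\|\vec b\|_1\delta .
\]
Writing $(4\mu\Delta)^{2m+1}=4^{2m+1}\Delta^{2m+1}\mu^{2m+1}\le C_m\Delta^{2m+1}\mu_{k,p_0}^{2m+1}=C_m\Delta^{2m+1}\Phi_m(k,N,\delta)$, and noting that the last term is $R_m(\delta)$, we obtain \cref{eq:td-pointwise-short-time-step}.

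The only real obstacle is bookkeeping: making sure that $\mu_{k,p_0}$ is understood uniformly over the initial time $s$, so that the shifted problem's $\mu_{p_0}$ is bounded by it. I would state this explicitly as part of the uniform-in-initial-time assumption. Everything else is direct substitution.
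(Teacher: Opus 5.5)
Your proposal is correct and is essentially the paper's proof. The paper likewise applies \cref{prop:mizuta-time-dep-short} to the shifted Hamiltonian \(\tau\mapsto G_k(s+\tau)\) on \([0,\Delta]\) with \(\epsilon=\delta\), relying on \(q_R,g_R,f_R,\mu_{R,p_0}\) controlling the estimate uniformly in \(|k|\le R\) and in the initial time \(s\); this is exactly the reading of \(\mu_{k,p_0}\) that you make explicit (like the paper, you also tacitly need \(p_0=\lceil\log(2N/\delta)\rceil\ge 2\) for that proposition, which fails only in the degenerate case \(N=1\), \(\delta\ge 2/e\)).
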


\begin{proof}
By assumption, the constants \(q_R\), \(g_R\), \(f_R\), and \(\mu_{R,p_0}\) control the estimate uniformly for every \(|k|\le R\) and every initial time \(s\).  Now apply \cref{prop:mizuta-time-dep-short} to the shifted Hamiltonian $\tau\mapsto G_k(s+\tau)$ for $\tau\in[0,\Delta]$.  
\end{proof}

The passage from short-time to long-time simulation is not identical to the time-independent case. This is because the exact propagator on \([0,T]\) is an ordered product of generally different interval propagators \(U_k(t_j,t_{j+1})\), rather than a power of a single fixed short-time propagator. Let \(t_j=j\Delta\), where \(\Delta=T/r\).  Then
\begin{equation}\label{eq:td-exact-ordered-product}
U_k(0,T)=\overset{\longleftarrow}{\prod_{0 \leq j \leq r-1}} U_k(t_j,t_{j+1}).
\end{equation}
The factors in \cref{eq:td-exact-ordered-product} are generally distinct, so one cannot use the
identity \(U_k(0,T)=U_k(0,\Delta)^r\).  We instead use the following telescoping estimate for
non-stationary products.

\begin{lem}\label{lem:td-nonstationary-telescoping}
Let \(r \ge 1\), and set \(t_j = j\Delta\) for \(j=0,\ldots,r\), where \(\Delta = T/r\). For fixed \(k\), define $U_{k,j}:=U_k(t_j,t_{j+1})$ and $\widetilde U_{k,j}:=\widetilde U_{k,t_j}^{(2m)}$. Set $U_k^{(2m)}(0,T) := \overset{\longleftarrow}{\prod_{0 \leq j \leq r-1}} \widetilde U_{k,j}$. If $\|\widetilde U_{k,j}-U_{k,j}\|\le \eta_k(\Delta)$ for some $\eta_k(\Delta) > 0$
for every \(j=0,\ldots,r-1\), then
\begin{equation}\label{eq:td-nonstationary-telescoping-bound}
\| U_k^{(2m)}(0,T)-U_k(0,T) \| \le r\eta_k(\Delta)\left(1+\eta_k(\Delta)\right)^{r-1}.
\end{equation}
\end{lem}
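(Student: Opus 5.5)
The plan is to run the standard non-stationary telescoping identity. Write $A_j:=\widetilde U_{k,j}$ and $B_j:=U_{k,j}$, so that $U_k^{(2m)}(0,T)=A_{r-1}\cdots A_0$ and $U_k(0,T)=B_{r-1}\cdots B_0$ by \cref{eq:td-exact-ordered-product}. Then
\begin{equation*}
A_{r-1}\cdots A_0-B_{r-1}\cdots B_0=\sum_{j=0}^{r-1} A_{r-1}\cdots A_{j+1}\,(A_j-B_j)\,B_{j-1}\cdots B_0,
\end{equation*}
with empty products equal to the identity. To verify it, I will observe that consecutive summands cancel: the term for $j$ contains $-A_{r-1}\cdots A_{j+1}B_j\cdots B_0$, and the term for $j+1$ contains $+A_{r-1}\cdots A_{j+1}B_j\cdots B_0$. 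Only the two extreme products then survive.

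Next I take norms. Each $B_i$ is a time-ordered exponential of the Hermitian generator $G_k(t)$ with $k$ real, so it is unitary. Hence $\|B_{j-1}\cdots B_0\|=1$. By hypothesis $\|A_i-B_i\|\le\eta_k(\Delta)$, and therefore $\|A_i\|\le\|B_i\|+\|A_i-B_i\|\le 1+\eta_k(\Delta)$. Submultiplicativity then bounds the $j$-th summand by $(1+\eta_k(\Delta))^{r-1-j}\eta_k(\Delta)$.

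Finally I sum over $j$. Since $1+\eta_k(\Delta)\ge1$, each factor $(1+\eta_k(\Delta))^{r-1-j}$ is at most $(1+\eta_k(\Delta))^{r-1}$. This gives \cref{eq:td-nonstationary-telescoping-bound}.

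There is no real obstacle; this is the same argument as in \cref{long-time-step}. The only point needing care is the ordering of the factors. The left-ordered product places later intervals on the left, so the approximate factors must be kept on the left of $(A_j-B_j)$ and the exact unitaries on the right. Unitarity of $U_k(s,t)$ for real $k$ is immediate from Hermiticity of $H(t)$ and $L(t)$.
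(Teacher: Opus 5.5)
Your proposal is correct and follows the paper's proof essentially verbatim: the same telescoping identity with the approximate factors $\widetilde U_{k,j}$ on the left and the exact unitaries $U_{k,j}$ on the right, the same bound $\|\widetilde U_{k,j}\|\le 1+\eta_k(\Delta)$, and the same termwise estimate summed to $r\eta_k(\Delta)(1+\eta_k(\Delta))^{r-1}$. Your explicit check of the telescoping cancellation and of unitarity for real $k$ fills in details the paper leaves implicit.
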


\begin{proof}
Using the ordered-product convention from \cref{sec:notation}, we have
\begin{align}
\overset{\longleftarrow}{\prod_{0 \leq j \leq r-1}}\widetilde U_{k,j} - \overset{\longleftarrow}{\prod_{0 \leq j \leq r-1}}U_{k,j} &= \sum_{\ell=0}^{r-1} \left( \overset{\longleftarrow}{\prod_{\ell+1 \leq j \leq r-1}}\widetilde U_{k,j} \right) (\widetilde U_{k,\ell}-U_{k,\ell}) \left( \overset{\longleftarrow}{\prod_{0 \leq j \leq \ell -1}}U_{k,j} \right).
\end{align}
Empty products are interpreted as the identity.  Since \(U_{k,j}\) is unitary, \(\|U_{k,j}\|=1\).  Moreover, we have
\begin{equation}
\|\widetilde U_{k,j}\| \le \|U_{k,j}\| + \|\widetilde U_{k,j}-U_{k,j}\| \le 1+\eta_k(\Delta).
\end{equation}
Taking norms gives
\begin{equation}
\| U_k^{(2m)}(0,T)-U_k(0,T) \| \le \sum_{\ell=0}^{r-1} \eta_k(\Delta)\left(1+\eta_k(\Delta)\right)^{r-1-\ell}.
\end{equation}
The right-hand side is bounded by \(r\eta_k(\Delta)(1+\eta_k(\Delta))^{r-1}\). This proves \cref{eq:td-nonstationary-telescoping-bound}.
\end{proof}

We now pass to the post-quadrature setting. The construction remains the same as before, but we recall the relevant details for completeness. Let \(Q=\{(k_i,w_i)\}_{i\in\mathcal I_Q}\) be a quadrature rule, and define \(v_i\), \(\alpha_Q\), and \(R_Q\) as in \cref{eq:generic-quadrature-operator}. Similarly, define \(\Lambda_{m,Q}\) as in \cref{eq:Lambda-Q}, and set \(\Phi_{m,Q}^{\ast} := \max_{i\in\mathcal I_Q}\Phi_m(k_i,N,\delta)\). The ideal and MPF-implemented quadrature operators, \(W_Q^{\operatorname{ideal}}\) and \(W_Q^{\operatorname{MPF}}\), respectively, are defined as above.  As in \cref{discrete-mpf-error}, we have
\begin{align}\label{eq:td-postquad-inner-error}
\| W_Q^{\operatorname{ideal}}-W_Q^{\operatorname{MPF}} \| &\le \left(1+\eta_Q(T/r)\right)^{r-1} \left( \|\vec a\|_1C_m \frac{T^{2m+1}}{r^{2m}} \Lambda_{m,Q} + r\alpha_QR_m(\delta) \right),
\end{align}
where \(\eta_Q(\Delta) := \max_{i \in \mathcal I_Q}\left(\|\vec a\|_1 C_m\Delta^{2m+1}\Phi_m(k_i,N,\delta) + R_m(\delta)\right)\). Define the quadrature error $ E_{\operatorname{quad}}(Q)$ as before. Combining the approximation, quadrature, and inner simulation errors gives
\begin{align}\label{eq:td-total-error-bound}
\left\| U_A(0,T)-W_Q^{\operatorname{MPF}} \right\| &\le E_{\operatorname{approx}}(y_0) + E_{\operatorname{quad}}(Q) \\ &\quad+ \left(1+\eta_Q(T/r)\right)^{r-1} \left( \|\vec a\|_1C_m \frac{T^{2m+1}}{r^{2m}} \Lambda_{m,Q} + r\alpha_QR_m(\delta) \right).
\end{align}
The implementation is the same as in \cref{implem}, except that the controlled operation for a fixed quadrature node \(k_i\) implements the ordered product $ U_{k_i}^{(2m)}(0,T) = \overset{\longleftarrow}{\prod_{0\leq j \leq r-1}} \widetilde U_{k_i,j}^{(2m)}$ Thus one controlled implementation of \(U_{k_i}^{(2m)}(0,T)\) still uses \(\mathcal O(rK_m)\) controlled second-order product-formula queries, provided each time-dependent second-order product-formula segment is available with the same oracle cost model as in \cref{prop:mizuta-time-dep-short}. The complexity of the algorithm follows from the preceding analysis in \cref{abs-rel} and \cref{spec-mizuta}. We do not repeat those details here.

\end{document}